\documentclass[12pt,a4paper]{article}

\usepackage{amssymb,amsmath,geometry,ulem,graphicx,caption,color,setspace,natbib,array,hyperref,amsthm,float,bm,tikz,url,enumerate}

\usepackage{subcaption}

\usepackage[T1]{fontenc}
\usepackage{newpxmath}
\usepackage{newpxtext}
\usepackage{threeparttable}
\usepackage{booktabs}
\usetikzlibrary{patterns}

\usepackage{pgfplots} 
\pgfplotsset{compat=1.17}
\usepgfplotslibrary{fillbetween}
\usetikzlibrary{arrows.meta}
\usepackage[title]{appendix}

\allowdisplaybreaks

\theoremstyle{plain}\newtheorem{theorem}{Theorem}
\theoremstyle{plain}\newtheorem{proposition}{Proposition}
\theoremstyle{plain}\newtheorem{corollary}{Corollary}
\theoremstyle{plain}\newtheorem{lemma}{Lemma}
\theoremstyle{plain}

\theoremstyle{definition}
\newtheorem{definition}{Definition}

\newtheorem{example}{Example}

\newtheorem{observation}{Observation}

\newcommand{\supp}[1]{\text{supp}(#1)}

\newcommand{\proofpart}[1]{\textbf{#1.}\enspace\ignorespaces}

\usetikzlibrary{positioning}

\makeatletter
\let\@makefntextOring\@makefntext
\def\@makefntext#1{\@makefntextOring{\baselineskip=15pt#1}}
\makeatother

\definecolor{navyblue}{rgb}{0.0, 0.0, 0.5}
\definecolor{green}{rgb}{0.2, 0.5, 0.2}

\hypersetup{
    colorlinks=true,
    citecolor=navyblue,
    linkcolor=navyblue,
    urlcolor=navyblue,
}

\usepackage{titlesec}

\usepackage{etoolbox}
\patchcmd{\thanks}{#1}{\protect\doublespacing}{}{}

\DeclareFontShape{OT1}{cmr}{m}{n}{<->cmr10}{}
\usepackage{fix-cm}

\newcommand{\circledtext}[1]{%
  \tikz[baseline=(char.base)]{
    \node[
      shape=circle,
      draw,
      inner sep=0pt,
      minimum size=1.6em
    ] (char) {\scriptsize #1};
  }%
}

\title{\fontsize{19}{23}\selectfont Information Aggregation and Social Networks: Responsiveness and Overturning\thanks{\protect 
We thank Xiaoyu Chen, Benjamin Golub, Masaki Miyashita, Ryo Shirakawa, and Yuichi Yamamoto for their constructive comments.
All remaining errors are our own.
We acknowledge the financial support from the JSPS KAKENHI Grant 26KJ1194 (Noguchi) and 24KJ0100 (Sato).
}}
\author{%
\makebox[\textwidth][c]{%
\large
Shinpei Noguchi\thanks{%
Graduate School of Economics, Hitotsubashi University, 2-1 Naka, Kunitachi, Tokyo 186-8601, Japan.
Email: \url{ed225008@g.hit-u.ac.jp}.%
}
\hspace{1em}
Hiroto Sato\thanks{%
Department of Economics, Nagoya University, Furo-cho, Chikusa-ku, Nagoya 464-8601, Japan.
Email: \url{sato.hiroto.s9@f.mail.nagoya-u.ac.jp}.%
}
\hspace{1em}
Konan Shimizu\thanks{%
Faculty of Economics, Keio University, 2-15-45 Mita, Minato-ku, Tokyo 108-8345, Japan.
Email: \url{shimizu-konan@keio.jp}.%
}
\hspace{1em}
Tohya Sugano\thanks{%
Graduate School of Economics, The University of Tokyo, 7-3-1 Hongo, Bunkyo-ku, Tokyo 113-0033, Japan.
Email: \url{sugano-tohya1011@g.ecc.u-tokyo.ac.jp}.%
}%
}%
}
\date{\today}

\begin{document}

\begin{titlepage}
\maketitle
\begin{abstract}
This paper studies how network structure affects information aggregation in social learning. Agents sequentially choose actions based on private signals and observations of neighbors’ actions. Comparing a focal agent’s expected payoff across networks at a finite period, we show that a network is uniquely optimal for some informational environment if and only if the focal agent observes every predecessor. This characterization reveals that which network performs best depends critically on the informational environment. We then revisit two important implications of the characterization through transparent constructions: the {\it star network} can uniquely outperform every alternative under binary signals, while the {\it complete network} can do so with richer signals. These constructions highlight a trade-off between the {\it responsiveness effect} and the {\it overturning effect}: sparse networks facilitate information aggregation by preserving the responsiveness of actions to private signals, whereas dense networks facilitate information aggregation by revealing extreme information that overturns existing public beliefs.
\end{abstract}
Keywords: Social learning; Networks; Herding.

\noindent 
JEL Code: D83 \\
\bigskip

\setcounter{page}{0}
\thispagestyle{empty}
\end{titlepage}


\newcolumntype{C}[1]{>{\centering\arraybackslash}p{#1}}

\setlength{\abovedisplayskip}{5pt}
\setlength{\belowdisplayskip}{5pt}


\section{Introduction} \label{sec:introduction}
Many economic decisions are made under uncertainty. 
In making these decisions, individuals often learn from others.
For example, people deciding whether to adopt a new technology may infer its quality from earlier adoption choices.\footnote{This example is inspired by \citet{acemoglu2011bayesian}; see also \citet{foster1995learning} and \citet{munshi2004social} for studies of social learning in technology adoption. For an organizational application involving a manager who relies on sequential recommendations from experts, see \citet{slezak2000effect}.} 
What individuals learn from others depends on whom they can observe. 
Social networks shape these observation patterns and thus influence how dispersed information is transmitted and aggregated. 
How do network structures affect information aggregation? What kinds of information can different networks effectively aggregate? 

Understanding these questions has become increasingly important as the digitization of economic activity has made others’ behavior more widely observable and reshaped network structures. 
Experimental evidence further illustrates the practical relevance of social observation. 
On the one hand, social influence can weaken the wisdom of crowds or bias aggregate evaluations \citep{lorenz2011social,muchnik2013social}.
On the other hand, social observation can improve collective accuracy and consumer satisfaction \citep{becker2017network,cai2009observational}. 
These findings motivate assessing the consequences of recent transformations in social observation for information aggregation.

To investigate these questions, we consider a simple model of social learning on networks. 
A finite number of agents move sequentially and choose one of two actions, a high action and a low action. 
The unknown state is either high or low, and agents may hold different priors. Each agent receives payoff one if her action matches the state and zero otherwise.
Before choosing an action, each agent receives a private signal drawn from an information structure that depends on the underlying state. 
In addition, each agent observes the actions taken by a subset of preceding agents, as specified by a network structure. 

In this model, we evaluate the efficiency of information aggregation across network structures by comparing a focal agent's expected payoff at a given period. 
In a technology adoption setting, this payoff measures how reliably earlier adoption decisions reveal the technology's quality to a later decision maker. 
In an organization, the focal agent may instead be a manager whose decision, informed by experts' sequential recommendations, determines the organization's ultimate outcome. 

We first focus on which network structures can be most effective at aggregating information.
An \emph{informational environment} consists of profiles of priors and information structures.
We call a network \emph{rationalizable} if there exists an informational environment under which it gives the focal agent a strictly higher expected payoff than every alternative network.
Theorem \ref{thm:characterization} shows that a network is rationalizable if and only if the focal agent observes every predecessor.
Necessity follows because she can always ignore an additional observation; sufficiency implies that, once she observes all predecessors, any pattern of links among them can be uniquely optimal.
Thus, which network is most effective depends critically on the informational environment.
As an extension, Proposition \ref{prop:welfare-characterization} shows that the same characterization applies when networks are evaluated by social welfare, defined as any fixed positively weighted sum of agents' expected payoffs.

An implication is that greater network connectivity has an ambiguous effect on social information aggregation.
An additional link is beneficial to the observing agent, who can ignore the additional action.
However, by changing how she combines private and social information, the link also changes what her own action reveals to later agents.
Consequently, the broader access to others' behavior enabled by more connected networks can worsen social information aggregation: a more connected network need not transmit more information.
This gap between the private and social value of connectivity reflects an informational externality and highlights the potential importance of shaping information flows.

Theorem \ref{thm:characterization} shows that different networks can be optimal under different informational environments, but it does not reveal the economic mechanisms that connect an informational environment to the network that is optimal under it.
To make these mechanisms more transparent, we focus on informational environments in which agents share a symmetric common prior and a common information structure, and revisit two important networks at opposite ends of connectivity.

We first revisit the {\it star network}, one of the sparsest networks in terms of connections among predecessors: the focal agent observes the actions taken by all preceding agents, while the other agents do not observe anyone else and rely only on their own private signals.
Theorem \ref{thm: star} provides a simple binary information structure under which the expected payoff of the focal agent in the star network is strictly higher than that in any other network.

The intuition behind Theorem \ref{thm: star} is that disconnected networks can preserve the responsiveness of agents’ actions to their own private information.
When an agent observes predecessors’ actions, she may put more weight on these observations and ignore her own information.
Thus, connected networks may generate information cascades.
By limiting the informational interdependence generated through social learning, disconnected networks prevent society from placing excessive weight on early actions and thereby mitigate information cascades.
We refer to this mechanism as the {\it responsiveness effect}.\footnote{
This terminology is related to \citet{ali2018herding,ali2018role}, who use the term responsiveness as a property of the decision problem, capturing the extent to which optimal actions vary with beliefs. In our setting, by contrast, responsiveness is a property of the network structure, capturing the extent to which equilibrium actions remain sensitive to agents’ own private signals under a given network.
Note that because our model has binary actions, our decision problem is not responsive in Ali’s sense.}
This mechanism is related to mechanisms such as the "sacrificial lambs effect," which has been extensively discussed in the literature.
The star network performs well when maintaining responsiveness to dispersed private information is relatively important for information aggregation.

The responsiveness effect alone may suggest that the star network is always optimal, but Theorem \ref{thm:characterization} also implies that the opposite extreme can be uniquely optimal under a suitable informational environment.
We make this complementary implication transparent for the {\it complete network}, in which every agent is connected to all preceding agents.
This network has been widely studied in the social learning literature.
Theorem \ref{thm: complete} provides an interpretable information structure under which the expected payoff of the focal agent in the complete network is strictly higher than that in any other network. 
Proposition \ref{prop: pareto dominate} further shows that, under the same information structure, the complete network weakly improves every agent's expected payoff and strictly improves the focal agent's payoff relative to any other network.

The intuition behind Theorem \ref{thm: complete} is not simply that more connections provide agents with more observations, but that they make richer patterns of actions interpretable by later agents.
The key feature of the information structure constructed in Theorem \ref{thm: complete} is that, among the signals indicating the high state and among those indicating the low state, some are highly informative while others are less informative.
Consequently, an agent who receives a highly informative signal may overturn the inference generated by a long sequence of identical actions.
To see this concretely, consider a history in which all agents have taken the high action up to some point, but some agent then takes the low action.
This behavior suggests that the agent's signal is informative enough to overturn the history by choosing the low action.
Since this overturning becomes part of the public history under the complete network, subsequent agents can draw this inference.
Suppose that at some later point another agent overturns the history again by taking the high action.
This suggests that she received an even more informative signal indicating the high state.
Repeating this reasoning, successive overturnings enable later agents to infer increasingly informative signals.
The complete network is especially effective in making this inference possible because every action becomes part of the public history observed by all subsequent agents.
We call this mechanism the {\it overturning effect}.\footnote{This terminology is related to \citet{smith2000pathological}, who emphasize the role of sufficiently strong signals in overturning accumulated public beliefs.}
In environments where this effect is strong, dense observation can improve information aggregation by enabling a prevailing inference to be corrected by overturning rather than merely amplified.

Thus, Theorems \ref{thm: star} and \ref{thm: complete} make implications of Theorem \ref{thm:characterization} transparent and reveal how the relationship between informational environment and network structure operates through the responsiveness effect and the overturning effect.
These mechanisms also provide a theoretical lens for contrasting experimental phenomena documented in the literature, even if those phenomena arise through different underlying forces. 
The responsiveness effect offers one mechanism through which social observation can worsen information aggregation \citep[see, e.g.,][]{lorenz2011social,muchnik2013social}, whereas the overturning effect offers one mechanism through which social observation can improve information aggregation \citep[see, e.g.,][]{becker2017network,cai2009observational}. 
Our analysis thus shows how analogous patterns can arise from observational learning under suitable informational environments and network structures.

What features of the information structure determine which of these two effects dominates?
This question is central because the responsiveness effect favors sparse networks, whereas the overturning effect favors dense networks.
Understanding this trade-off is therefore essential for determining when sparse networks aggregate information more effectively than dense networks, and vice versa.
Section \ref{sec: star vs complete} addresses this question by comparing networks within a class that includes both the star and complete networks, under specific families of information structures.

In Section \ref{sec: mixture}, we focus on a tractable class of information structures in which, for each state, there are two types of signals indicating that state: ordinary signals and conclusive signals that perfectly reveal the state.
This environment clarifies the role of highly informative signals by focusing on the extreme one, and yields a sharp characterization of the trade-off between the responsiveness effect and the overturning effect.
We compare a class of networks in which an initial group of agents (guinea pigs) act independently and subsequent agents observe all preceding actions as in \citet{sgroi2002optimizing}.
Proposition \ref{prop: star vs complete} establishes that there exists a sequence of cutoffs in the probability of conclusive signals: for an odd number of agents, the optimal network moves from the star network through intermediate networks to the complete network as this probability increases.
It also shows that, as the accuracy of the ordinary signal falls, each cutoff decreases, meaning that networks with fewer guinea pigs become optimal at lower probabilities of conclusive signals.

In Section \ref{sec: numerical}, we next numerically examine whether the insight from the theoretical analysis in Section \ref{sec: mixture} extends beyond that specific environment.
We consider four parameterized families of private-belief distributions: truncated normal, symmetric Beta, Beta mixture, and contaminated log-likelihood-ratio normal.
The latter two are mixture distributions that generate W-shaped densities with substantial mass both near the prior and near the endpoints.
The numerical analysis points to the same insight as the theoretical results: networks with fewer guinea pigs tend to perform better when highly informative signals occur with larger probability and when ordinary signals are less accurate.

These analyses identify a key determinant of which side of the trade-off prevails.
When highly informative signals are rare and ordinary signals are accurate enough, the value of preserving responsiveness tends to dominate; when highly informative signals are more likely and ordinary signals are less accurate, the value of making highly informative signals identifiable through overturning histories becomes more important.
Thus, the relative performance of networks is governed not only by network connectivity itself, but also by how the information structure distributes probability between highly informative and less informative signals.
In particular, it depends crucially on the relative thickness of the tails of the private signal distribution.

\subsection{Literature}
Our model belongs to the classical literature on social learning pioneered by \citet{banerjee1992simple}, \citet{bikhchandani1992theory}, and \citet{smith2000pathological}.\footnote{See the recent survey by \citet{bikhchandani2024information} for a broad overview of the social learning literature.}
A central insight of this literature is that the long-run consequences of social learning, whether information cascades arise or asymptotic learning occurs, depend critically on the informativeness of private signals, particularly on the tail behavior of the signal distribution.\footnote{Related contributions on learning speed include \citet{hann2018speed} and \citet{rosenberg2019efficiency}.}

Beginning with \citet{sgroi2002optimizing}, \citet{gale2003bayesian}, and \citet{ccelen2004observational}, a large literature has developed on social learning in more general networks (see \citet{golub2016} for a survey).
Important contributions include \citet{acemoglu2011bayesian}, \citet{lobel2015information}, \citet{mossel2015strategic}, \citet{arieli2019multidimensional,arieli2021general}, \citet{dasaratha2023learning}, and \citet{kartik2024beyond}.
A central focus of this literature is whether agents eventually learn the relevant state and how information diffuses through networks in the long run, under different assumptions regarding the timing of actions and the evolution of the state.

Within this literature, our paper is most closely related to work that compares the efficiency of information aggregation across network structures.
\citet{sgroi2002optimizing} studies the optimal use of "guinea pigs" in social learning and shows that withholding some agents from observing predecessors can improve subsequent information aggregation.
Under Gaussian information structures, \citet{dasaratha2019aggregative} focus on generation networks in which agents observe multiple neighbors but not their common predecessors, thereby confounding the sources of information. 
They quantify how such information confounding reduces the efficiency of information aggregation under each network.
Complementing these studies, we show in a general framework that the relative performance of networks depends critically on the informational environment.
We clarify the underlying mechanism by identifying a trade-off between a responsiveness effect closely related to the sacrificial-lamb logic emphasized by \citet{sgroi2002optimizing} and an overturning effect related to the role of strong signals emphasized by \citet{smith2000pathological}.
Our contribution is to integrate these previously dispersed insights into a unified framework, showing that the apparent advantages of sparse and dense networks arise from a common underlying trade-off and that the informational environment determines which force dominates.

Our paper is also close in spirit to a broad literature that asks how network structure shapes information aggregation in models of non-Bayesian learning. 
Early contributions include \citet{ellison1993rules,ellison1995word}, \citet{bala1998learning,bala2001conformism}, and \citet{demarzo2003persuasion} followed by \citet{golub2010naive,golub2012homophily}, \citet{jadbabaie2012nonbayesian}, and \citet{molavi2018theory}, among others.\footnote{Related work examines misspecified inference in social learning \citep{eyster2010naive,bohren2016informational,frick2020misinterpreting,bohren2021learning,frick2023belief,arieli2025hazards}.} 
Most closely, \citet{dasaratha2020naive} show that dense networks can amplify the overcounting of correlated information; see \citet{dasaratha2021experiment} for experimental evidence. 
By contrast, our paper establishes that, even with Bayesian agents, network density has an ambiguous effect on information aggregation because of a trade-off between the responsiveness and overturning effects arising from an informational externality.

\section{Illustrative Example}
This section uses a simple three-agent environment to illustrate the paper's main results and the trade-off between the responsiveness and overturning effects. 
Readers who prefer to proceed directly to the formal analysis may skip this section and turn to Section \ref{sec: model}.

We compare agent $3$'s expected payoff across networks.
Since agents move sequentially, agent $1$ has no predecessor, agent $2$ may or may not observe agent $1$, and agent $3$ may observe any subset of agents $1$ and $2$.
There are therefore eight possible networks.
Among these, we focus in particular on two networks shown in Figure \ref{fig:N3network_comparison}.

\begin{figure}[ht]
\centering
\begin{subfigure}{0.45\textwidth}
\centering
\begin{tikzpicture}[scale=1]
\node (A1) at (0,0) {$\circledtext{1}$};
\node (A2) at (0,-1.6) {$\circledtext{2}$};
\node (A3) at (3,-0.8) {$\circledtext{3}$};

\draw[->, very thick] (A1) -- (A3);
\draw[->, very thick] (A2) -- (A3);
\end{tikzpicture}
\caption{Star network}
\label{fig:N3star}
\end{subfigure}
\hfill
\begin{subfigure}{0.45\textwidth}
\centering
\begin{tikzpicture}[scale=1]
\node (A1) at (0,0) {$\circledtext{1}$};
\node (A2) at (2,0) {$\circledtext{2}$};
\node (A3) at (4,0) {$\circledtext{3}$};

\draw[->, very thick] (A1) -- (A2);
\draw[->, very thick] (A1) to[bend left=60] (A3);
\draw[->, very thick] (A2) -- (A3);
\end{tikzpicture}
\caption{Complete network}
\label{fig:N3complete}
\end{subfigure}
\caption{Star and complete networks}
\label{fig:N3network_comparison}
\end{figure}

The first is the {\it star network}, in which agent $3$ observes both predecessors, while agent $2$ does not observe agent $1$.
In this network, agents $1$ and $2$ act without observing any previous action, so agent $3$ observes two actions that are not influenced by earlier social observations.
Its formal definition is provided in Definition \ref{def:star} in Section \ref{sec: results}.

The second is the {\it complete network}, in which every agent observes all predecessors.
In this network, agent $2$ observes agent $1$, and agent $3$ observes both previous actions.
Its formal definition is provided in Definition \ref{def:complete} in Section \ref{sec: results}.
This is the canonical network studied in the social learning literature.

The remaining six networks, shown in Figure \ref{fig:N3dominated}, do not allow agent $3$ to observe both predecessors. Each is weakly dominated under any information structure by the star or complete network, because agent $3$ can ignore any additional observations.
\begin{figure}[ht]
\centering
\begin{tikzpicture}[scale=0.75]
\node (A1) at (0,0) {$\circledtext{1}$};
\node (A2) at (1.8,0) {$\circledtext{2}$};
\node (A3) at (3.6,0) {$\circledtext{3}$};
\node (B1) at (6.5,0) {$\circledtext{1}$};
\node (B2) at (8.3,0) {$\circledtext{2}$};
\node (B3) at (10.1,0) {$\circledtext{3}$};
\draw[->, very thick] (B1) -- (B2);
\node (C1) at (13,0) {$\circledtext{1}$};
\node (C2) at (14.8,0) {$\circledtext{2}$};
\node (C3) at (16.6,0) {$\circledtext{3}$};
\draw[->, very thick] (C1) to[bend left=45] (C3);

\node (D1) at (0,-2.25) {$\circledtext{1}$};
\node (D2) at (1.8,-2.25) {$\circledtext{2}$};
\node (D3) at (3.6,-2.25) {$\circledtext{3}$};
\draw[->, very thick] (D2) -- (D3);
\node (E1) at (6.5,-2.25) {$\circledtext{1}$};
\node (E2) at (8.3,-2.25) {$\circledtext{2}$};
\node (E3) at (10.1,-2.25) {$\circledtext{3}$};
\draw[->, very thick] (E1) -- (E2);
\draw[->, very thick] (E1) to[bend left=45] (E3);
\node (F1) at (13,-2.25) {$\circledtext{1}$};
\node (F2) at (14.8,-2.25) {$\circledtext{2}$};
\node (F3) at (16.6,-2.25) {$\circledtext{3}$};
\draw[->, very thick] (F1) -- (F2);
\draw[->, very thick] (F2) -- (F3);
\end{tikzpicture}
\caption{Networks in which agent $3$ does not observe both predecessors.}
\label{fig:N3dominated}
\end{figure}

As a benchmark, consider a hypothetical setting in which agents observe their neighbors' signal realizations rather than their actions.
In that setting, the star network and the complete network provide the same information to agent $3$.
In what follows, we show that the comparison becomes nontrivial when agents observe actions rather than signals.
This highlights the fact that the informational content of an observed history depends on how earlier agents formed their actions, which is itself determined by the network.

We first focus on a binary information structure as follows:
\begin{example}\label{example: binary}
    Suppose that the state $\theta$ is either $L$ or $H$, each occurring with equal probability under the common prior, and that it is optimal for an agent to choose action $1$ if her posterior belief assigns probability strictly larger than $1/2$ to $\theta=H$, and action $0$ otherwise.\footnote{Here, we assume that the agent chooses action $0$ when she is indifferent between the two actions.}
    Consider a binary information structure $\Pi^{B}=(\{l,h\},\pi^B)$ with
    \[
    \pi^B(h\mid H)=\pi^B(l\mid L)=9/10, \qquad \pi^B(l\mid H)=\pi^B(h\mid L)=1/10.\]
     
    In the star network, agents $1$ and $2$ choose their actions solely based on their own private signals, so each of them chooses action $1$ if and only if her signal is $h$. 
    Thus, $a_1$ and $a_2$ perfectly reveal $s_1$ and $s_2$, respectively, and agent $3$ chooses her action as if she directly observed $s_1$, $s_2$, and $s_3$.
    
    In the complete network, agent $2$ observes $a_1$ and can infer $s_1$. 
    Thus, she chooses action $1$ if and only if both $s_1$ and $s_2$ are $h$, so agent $2$'s action pools $s_2=h$ together with $s_2=l$ into the same action $0$ when $s_1=l$. 
    Hence, with probability $1/2$, agent $3$ cannot recover $s_2$ from the history $(a_1,a_2)$ in the complete network.
    In particular, when $(s_1,s_2,s_3)=(l,h,h)$, action $1$ maximizes agent $3$'s expected payoff given this signal profile, and agent $3$ indeed chooses action $1$ under the star network, whereas she follows the predecessors and chooses action $0$ under the complete network.
    Thus, under $\Pi^B$, agent 3 is better off in the star network than in the complete network. \qed
\end{example}

In Example \ref{example: binary}, the star network perfectly reveals the predecessors' private signals because the private signals and the actions have a one-to-one relationship.
By contrast, under the complete network, the information cascade may immediately occur beginning from agent $2$.
Thus, there is informational loss in aggregation of private signals.
In this sense, compared to the complete network, the star network can preserve the responsiveness of predecessors' actions to their own private signals relatively well by eliminating the interdependence across the predecessors' actions.
We refer to this effect as the {\it responsiveness effect} throughout the paper.

By the above argument, one may conjecture that the star network is always better than the complete network.
This conjecture is false.
Now, we consider a slight modification of the previous binary information structure as follows:
\begin{example}\label{example: ternary}
    Consider the same decision problem as in Example \ref{example: binary}, and information structure $\Pi^\ast=(\{l,h,h^\ast\},\pi^\ast)$ defined by
   \begin{alignat*}{3}
    &\pi^\ast(l\mid H)=1/10,
    &&\qquad \pi^\ast(h\mid H)=4/5,
    &&\qquad \pi^\ast(h^\ast\mid H)=1/10,\\
    &\pi^\ast(l\mid L)=2/5,
    &&\qquad \pi^\ast(h\mid L)=3/5,
    &&\qquad \pi^\ast(h^\ast\mid L)=0.
    \end{alignat*}
    Under this information structure, $h^\ast$ perfectly reveals that $\theta=H$, so an agent chooses action $1$ if she knows that some agent receives $h^\ast$.
    Also, we can verify that agent $3$ chooses action $0$ if she receives $l$ herself and knows that both agents $1$ and $2$ receive either $h$ or $h^\ast$.\footnote{
    Since the prior assigns equal probability to each state, action 0 is optimal if the likelihood ratio of her observation does not exceed 1.
    The following calculation shows that this is indeed the case:
    \[
    \left( \frac{\pi^\ast(\{h,h^\ast \}\mid H)}{\pi^\ast(\{h,h^\ast \}\mid L)} \right)^2 
     \frac{\pi^\ast(l\mid H)}{\pi^\ast(l\mid L)} 
     =
     \left( \frac{9/10}{3/5} \right)^2 \frac{1/10}{2/5}
     <1.
    \]
    }

    In the star network, each of agents 1 and 2 chooses action 1 if and only if her private signal is $h$ or $h^\ast$. 
    Thus, the history reveals whether the agents receive $l$ and never reveals $h^\ast$. 
    If $(a_1, a_2)=(1,1)$, agent 3 chooses action 1 unless she receives $l$. If agent 3 observes at least one action 0, she follows and chooses action 0 unless she receives $h^\ast$.

    In the complete network, agent $2$ observes $a_1$ and can infer whether $s_1$ is $l$ or not.
    When $a_1 =1$, she chooses action 1 unless $s_2$ is $l$; when $a_1=0$, she chooses action 0 unless $s_2$ is $h^\ast$.
    Thus, $a_2$ reveals whether $s_2$ is $l$ or not in the former case, and whether it is $h^\ast$ or not in the latter case.
    Therefore, when $a_1=1$, the history $(a_1,a_2)$ provides the same information to agent 3 and she chooses the action in the same way as in the star network. 
    However, when $a_1=0$, the history provides different information from the star-network case. When $(a_1,a_2)=(0,1)$, agent 3 infers that $s_2=h^\ast$, which perfectly reveals that $\theta = H$, and thus she chooses the optimal action, action 1, regardless of her signal.
    When $(a_1,a_2)=(0,0)$, agent 3 chooses the action in the same way as in the star network.
    Thus, under $\Pi^\ast$, agent 3 is better off in the complete network than in the star network. \qed
\end{example}

Under the ternary information structure in Example \ref{example: ternary}, the complete network can outperform the star network because actions may reveal the strength of private information through overturning. 
Specifically, the switch in the observed history from the low action to the high action thus reveals strong evidence in favor of the high state, because the agent takes the high action despite observing a history that suggests the low state.
This inference is unavailable in the star network. 
Since agent $2$ does not observe agent $1$, the high action only indicates that her private signal is positive; it does not distinguish the ordinary signal $h$ from the strong signal $h^\ast$. 
Thus, while the star network preserves the responsiveness of predecessors' actions to their own signals, the complete network can make decisive signals identifiable through the pattern of actions. 
We refer to this force as the {\it overturning effect}.

These examples clarify why neither sparse nor dense observation is uniformly optimal. 
With binary signals, the main problem is that social observation makes later actions less responsive to private signals; the star network avoids this problem by keeping predecessors' actions independent and directly interpretable. 
With ternary signals, however, social observation can help identify strong private information: when a later action overturns an earlier one, the pattern of actions reveals that the later agent must have received a sufficiently strong signal. 
Thus, the same interdependence among actions that weakens responsiveness can also make strong information identifiable and transmissible. 
The comparison between the star and complete networks therefore reflects a trade-off between the responsiveness effect and the overturning effect, and which network performs better depends on the underlying information structure.


\section{Model} \label{sec: model}
There are $N$ agents, labeled $1,2,\dots,N$, where $N\geq 3$.
They choose their actions sequentially.
Let $\Theta=\{L,H\}$ be the state space.
Each agent $i$ assigns prior $\mu_i\in(0,1)$ to state $H$.
Let $A=\{0,1\}$ be the action set for each agent. 
The periods are discrete, indexed by $t=0,1,\dots, N$, and each agent $i$ takes an action at period $i$ from the action set $A$.
Each agent faces a common decision problem $\mathcal{D}=(A,u)$, where $u:A\times \Theta\to\mathbb{R}$ is a payoff function such that $u(a,\theta)=1$ if $(a,\theta)=(0,L)$ or $(1,H)$ and $u(a,\theta)=0$ otherwise.

A {\it network} $\mathcal{N}=(\mathcal{N}_i)_{i=1}^N$ specifies, for each agent, which predecessors' actions she observes.
Specifically, before choosing an action, agent $i$ observes the actions taken by agents in $\mathcal{N}_i\subseteq \{1,2,\dots,i-1\}$.\footnote{Thus, we focus on deterministic networks rather than stochastic networks.}

Each agent $i$ also receives a {\it private signal} $s_i\in S_i$, drawn from the {\it information structure} $\Pi_i=(S_i,\pi_i)$, where $S_i$ is the set of signals and $\pi_i:\Theta\to\Delta(S_i)$ is the disclosure rule.
Conditional on the state, private signals are independent: $\mathbb{P}(s_1,\ldots,s_N\mid\theta)=\prod_{i=1}^N\pi_i(s_i\mid\theta)$.
Write $\mu=(\mu_i)_{i=1}^N$ for the prior profile and $\Pi=(\Pi_i)_{i=1}^N$ for the profile of information structures. We refer to $(\mu,\Pi)$ as the {\it informational environment}. The prior profile, the information structures, and the network are common knowledge, and agents update from their respective priors according to Bayes' rule.

Given agent $i$'s information structure $\Pi_i=(S_i,\pi_i)$, for each $s_i\in S_i$, we define $LR_i(s_i)=\pi_i(s_i\mid H)/\pi_i(s_i\mid L)$ as the likelihood ratio induced by private signal $s_i$.

The timing of the game is as follows.
In period $0$, nature selects the true state, which then remains fixed for the duration of the game.
In each period $i$, agent $i$ first observes the (partial) history, consisting of the actions of agents in $\mathcal{N}_i$.
Furthermore, agent $i$ receives a private signal $s_i\in S_i$ drawn from $\pi_i$. Then, based on these observations, the agent chooses an action from $A$.

Given the network $\mathcal{N}$ and the informational environment $(\mu,\Pi)$, the strategy of agent $i$ is denoted by $\sigma_i: A^{|\mathcal{N}_i|}\times S_i \to \Delta(A)$. When $\sigma_i$ is a pure strategy, we identify it with the corresponding map $\sigma_i: A^{|\mathcal{N}_i|}\times S_i \to A$. By a slight abuse of notation, given a pure strategy profile $(\sigma_1, \dots, \sigma_i)$, we write agent $i$'s action under signal profile $(s_1, \dots, s_i) \in \prod_{j=1}^iS_j$ as $\sigma_i(s_1, \dots, s_i)$, which is defined, for each $i\geq 2$, recursively by,
\begin{align*}
    \sigma_i(s_1, \dots, s_i) := \sigma_i\bigl((\sigma_j(s_1, \dots, s_j))_{j \in \mathcal{N}_i},\, s_i\bigr).
\end{align*}

Given the network $\mathcal{N}$, informational environment $(\mu,\Pi)$, and the strategy profile $\bm{\sigma}=(\sigma_{i})_{i =1}^N$, let $\alpha_{\leq i}^{\theta}(\mathcal{N},\mu,\Pi,\bm{\sigma})\in \Delta(A^{i})$ denote the distribution of actions taken by agents in $1,2,\dots,i$ when the state is $\theta$, that is,
\begin{align*}
 \alpha_{\leq i}^{\theta}(\bm{a}\mid \mathcal{N},\mu,\Pi,\bm{\sigma})=\sum_{(s_{1},\dots,s_{i}) \in \prod_{k=1}^iS_k}\prod_{k=1}^{i}\sigma_{k}(a_{k}\mid (a_j)_{j\in \mathcal{N}_k},s_{k})\pi_k(s_{k}\mid \theta).
\end{align*}
Then, let $\alpha_{i}^{\theta}(\mathcal{N},\mu,\Pi,\bm{\sigma})\in \Delta(A)$ be the distribution of actions taken by agent $i$ when the state is $\theta$, that is,
\begin{align*}
 \alpha_{i}^{\theta}(a\mid \mathcal{N},\mu,\Pi,\bm{\sigma})=\sum_{(a_{1}',\dots,a_{i-1}')\in A^{i-1}}\alpha_{\leq i}^{\theta} (a_{1}',\dots,a_{i-1}',a \mid  \mathcal{N},\mu,\Pi,\bm{\sigma}),
\end{align*}
where $\alpha_{i}^{\theta}(\mathcal{N},\mu,\Pi,\bm{\sigma})$ does not depend on the strategies of agents after $i$.
We say that the strategy profile $\bm{\sigma}^{*}$ is a Bayes-Nash equilibrium (hereafter referred to simply as an equilibrium) under $(\mathcal{N},\mu,\Pi)$ if $\sigma_i$ is a map from $A^{| \mathcal{N}_i|}\times S_i$ to $\Delta(A)$ and
\begin{align*}
   \mathbb{E}_{\mu_i}\left[\sum_{a\in A} \alpha_{i}^{\theta}(a\mid\mathcal{N},\mu,\Pi,\bm{\sigma}^*)u(a,\theta)\right]\geq \mathbb{E}_{\mu_i}\left[\sum_{a\in A} \alpha_{i}^{\theta}(a\mid\mathcal{N},\mu,\Pi,\bm{\sigma}^*_{-i},\sigma_i)u(a,\theta)\right],
\end{align*}
for all $\sigma_{i}$ and $i$.

Since each agent’s payoff does not depend on the actions of subsequent agents, an equilibrium can be obtained by considering the expected payoff maximization problem in each period separately. Therefore, an equilibrium always exists. 
However, it is not necessarily unique. 
This is because the equilibrium depends on which action an agent chooses when both actions are optimal. 
To eliminate the issue of equilibrium selection, throughout this paper, we assume the following tie-breaking rule: whenever an agent is indifferent between the two actions in terms of expected payoff, the agent chooses action $0$.

Under this assumption, the equilibrium is uniquely determined and is pure strategy. 
We denote agent $i$'s equilibrium strategy under $(\mathcal{N},\mu,\Pi)$ by $\sigma_i^\ast(\cdot,s_i\mid \mathcal{N},\mu,\Pi)$ and her expected payoff by
\begin{align*}
V_i(\mathcal{N},\mu,\Pi):=(1-\mu_i)\alpha_i^L(0\mid\mathcal{N},\mu,\Pi,\bm{\sigma}^\ast)+\mu_i\alpha_i^H(1\mid\mathcal{N},\mu,\Pi,\bm{\sigma}^\ast).
\end{align*}

\section{Main Analysis} \label{sec: results}
This section analyzes how network structures affect the efficiency of information aggregation.
In particular, we evaluate each network by agent $N$'s expected payoff.
This criterion can be interpreted in two ways.
First, in environments with only a finite number of agents, agent $N$ represents the ultimate decision maker who can potentially benefit from all information generated by the preceding agents.
Thus, her payoff captures how effectively each network transmits and aggregates that information.
Second, in environments with infinitely many agents, agent $N$ can instead be viewed as a representative finite horizon.
Under this interpretation, we can compare the speed and quality of information aggregation across networks period by period.
This contrasts with the existing literature, which typically focuses on asymptotic learning outcomes.

\subsection{Benchmark}
For the benchmark, we now consider the hypothetical case in which agents observe their neighbors' private signal realizations directly.\footnote{Here, we assume that agents observe only the private signals of their neighbors, but not their actions. If agents could observe both private signals and past actions, the argument below would no longer apply.}
\begin{observation}
    Suppose that agents observe the private signal realizations of their neighbors rather than their past actions.
    If two networks $\mathcal{N}$ and $\mathcal{N}'$ satisfy $\mathcal{N}_{N}=\mathcal{N}'_{N}$, then they induce the same expected payoff for agent $N$.
\end{observation}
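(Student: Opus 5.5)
The plan is to show that, in the signal-observation benchmark, agent $N$'s decision problem depends on the network only through $\mathcal{N}_N$. The reason is that her information is the vector $(s_j)_{j\in\mathcal{N}_N}$ together with her own signal $s_N$, and neither object is affected by the other agents' links. Begin by fixing the informational environment $(\mu,\Pi)$. Under the benchmark, agent $N$ observes the realizations $(s_j)_{j\in\mathcal{N}_N}$ and her own $s_N$. These are primitive random variables drawn by nature. Their joint distribution conditional on $\theta$ is $\prod_{j\in\mathcal{N}_N\cup\{N\}}\pi_j(s_j\mid\theta)$, by conditional independence. This distribution does not involve any strategy or any link among agents $1,\dots,N-1$. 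In particular, it is unaffected by $\mathcal{N}_i$ for $i<N$, because predecessors' behavior never enters what agent $N$ sees.

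Next, compute agent $N$'s posterior given her information. By Bayes' rule from prior $\mu_N$, the posterior odds of $H$ equal
\[
\frac{\mu_N}{1-\mu_N}\prod_{j\in\mathcal{N}_N\cup\{N\}}LR_j(s_j).
\]
Her optimal action under the tie-breaking rule is therefore action $1$ exactly when these odds exceed $1$. This gives a deterministic map from $(s_j)_{j\in\mathcal{N}_N\cup\{N\}}$ to $A$, and the map depends only on $\mathcal{N}_N$, $\mu_N$, and $(\Pi_j)_j$. The likelihood ratio is undefined when a signal has zero probability under one state. I would handle this by working directly with the joint probabilities rather than ratios; this changes nothing in the argument.

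Finally, write $V_N$ as the sum over signal profiles of $\mu_N\prod\pi_j(s_j\mid H)\mathbf{1}[a=1]+(1-\mu_N)\prod\pi_j(s_j\mid L)\mathbf{1}[a=0]$, where the products and the profile range over $\mathcal{N}_N\cup\{N\}$. Every term is identical for $\mathcal{N}$ and $\mathcal{N}'$ whenever $\mathcal{N}_N=\mathcal{N}'_N$, so the two expected payoffs coincide. I expect no real obstacle. The only point needing care is the footnoted assumption that agents see neighbors' signals but not their actions, which is exactly what removes the dependence of agent $N$'s information on upstream equilibrium strategies.
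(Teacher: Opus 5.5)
Your proposal is correct and takes essentially the paper's approach. The paper supports the observation only with the remark that the signals agent $N$ observes do not depend on how her predecessors are connected, and your explicit posterior-odds computation and payoff sum over profiles in $\mathcal{N}_N\cup\{N\}$ simply make that remark rigorous.
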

Thus, in the observable-signal benchmark, once agent $N$'s neighborhood is fixed, the network among her predecessors is irrelevant for her payoff.
This is because the signals observed by agent $N$ do not depend on how her predecessors are connected to one another.
In contrast, in the observable-action setting, agent $N$ must infer private signals through actions chosen on the basis of each predecessor's private signal and the history she observes.
These actions are chosen for the predecessors' own payoffs but also transmit information to agent $N$.
The network among the predecessors therefore shapes this informational externality.

\subsection{Rationalizable Networks}
To investigate the implications of this informational externality, we begin by asking which networks can be effective under some informational environment.

\begin{definition}
A network $\widehat{\mathcal{N}}$ is \emph{rationalizable} if there exists an informational environment $(\mu,\Pi)$ such that $V_N(\widehat{\mathcal{N}},\mu,\Pi)>V_N(\mathcal{N},\mu,\Pi)$ for every $\mathcal{N}\neq\widehat{\mathcal{N}}$.
\end{definition}
We use \emph{rationalizable} in a strict sense throughout: the target must be the unique maximizer, rather than merely a weak maximizer, of agent $N$'s expected payoff.\footnote{
If rationalizability were instead defined using a weak maximizer, every network would be rationalizable. Consider an informational environment in which agent $N$'s private signal alone fully reveals the state. Under this informational environment, agent $N$ obtains payoff one under every network, and hence every network weakly maximizes her expected payoff.
}
Note also that the same informational environment $(\mu,\Pi)$ is used in every comparison with the fixed target network $\widehat{\mathcal{N}}$.

The following theorem provides the complete characterization.
\begin{theorem}
\label{thm:characterization}
A network $\widehat{\mathcal{N}}$ is rationalizable if and only if $\widehat{\mathcal{N}}_N=\{1,\ldots,N-1\}$.
\end{theorem}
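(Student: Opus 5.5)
The plan is to prove necessity by a free-disposal (ignoring) argument and sufficiency by an explicit construction of an informational environment tailored to the target network.

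\emph{Necessity.} Suppose $\widehat{\mathcal{N}}_N \subsetneq \{1,\dots,N-1\}$. Define $\mathcal{N}'$ by $\mathcal{N}'_i=\widehat{\mathcal{N}}_i$ for $i<N$ and $\mathcal{N}'_N=\{1,\dots,N-1\}$, so $\mathcal{N}'\neq\widehat{\mathcal{N}}$. The equilibrium strategies of agents $1,\dots,N-1$ depend only on their own neighborhoods, priors and information structures. Hence they coincide under $\widehat{\mathcal{N}}$ and $\mathcal{N}'$, and the joint distribution of $(\theta,a_1,\dots,a_{N-1},s_N)$ is the same under both networks. Under $\mathcal{N}'$, agent $N$ can mimic her $\widehat{\mathcal{N}}$-strategy by ignoring the extra actions. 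Since her equilibrium strategy is a best response, $V_N(\mathcal{N}',\mu,\Pi)\ge V_N(\widehat{\mathcal{N}},\mu,\Pi)$ for every $(\mu,\Pi)$. So $\widehat{\mathcal{N}}$ is never the strict maximizer.

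\emph{Sufficiency: setup.} Fix $\widehat{\mathcal{N}}$ with $\widehat{\mathcal{N}}_N=\{1,\dots,N-1\}$. We must build $(\mu,\Pi)$ under which every other network does strictly worse. There are two kinds of alternatives:
\begin{enumerate}[(i)]
\item networks in which agent $N$ misses some predecessor;
\item networks in which agent $N$ sees everyone but the links among $1,\dots,N-1$ differ from those in $\widehat{\mathcal{N}}$.
\end{enumerate}
Since there are finitely many alternatives, it suffices to find one environment in which, for each alternative, some positive-probability event makes agent $N$'s decision strictly worse. I would exploit heterogeneous priors and information structures and use ``separated scales'' of informativeness: signals with likelihood ratios of very different orders of magnitude across agents, so that the effect of each individual link can be isolated.

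\emph{Sufficiency: gadgets.} For each ordered pair $j<i<N$ I would use two gadgets.
\begin{enumerate}[(a)]
\item \emph{Present link $j\in\widehat{\mathcal{N}}_i$.} Choose $\mu_i$ and $\Pi_i$ so that agent $i$'s signal alone never moves her off a default action, but does so when combined with the specific information carried by $a_j$. Then $a_i$ reveals information about $s_i$ only if $i$ observes $j$; deleting the link makes $a_i$ constant on some region and destroys that information.
\item \emph{Absent link $j\notin\widehat{\mathcal{N}}_i$.} Make $a_j$, if observed by $i$, strong enough to push $i$ into a cascade, so that adding the link pools signals of $i$ that agent $N$ needs to distinguish.
\end{enumerate}
I would finally choose $s_N$ weak relative to the predecessors' signals, so that agent $N$'s optimal action genuinely depends on each recoverable piece of predecessor information. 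With that choice, any pooling strictly lowers $V_N$, and this also handles alternatives of type (i), since dropping any predecessor loses information with positive probability.

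\emph{Main obstacle.} The hardest step is making all these gadgets coexist. When one link is changed, the information lost must not be recoverable from other agents' actions, and the changed behavior of agent $i$ propagates to her successors. I would first try a lexicographic construction: agent $i$'s strong signals are an order of magnitude more informative than everything earlier. Equilibrium thresholds then depend only on the relevant neighbor action and own signal, which decouples the gadgets. Strict losses would then be verified by exhibiting a specific signal profile under which the target network reveals the state to agent $N$ but the alternative leads her to the wrong action.
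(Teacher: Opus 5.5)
Your necessity argument is correct and is essentially the paper's (the paper adds one missing predecessor rather than all of them, which makes no difference). The sufficiency part, however, is not yet a proof, and it rests on a criterion that does not work as stated. Finding, for each alternative, a positive-probability event on which agent $N$ does strictly worse does not yield $V_N(\widehat{\mathcal N})>V_N(\mathcal N)$. That conclusion also needs the alternative to be weakly worse everywhere else, i.e., the alternative's history must be a garbling of the target's. Your link-by-link gadgets do not give this.

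Agent $i$'s behavior depends on the joint likelihood ratio of everything she observes. So nothing in your construction prevents an alternative neighborhood from making her responsive at an observation where she pools under the target. For example, in $\widehat{\mathcal N}_i\cup\{k\}$ the cascade-inducing $a_k$ of gadget (b) may be offset by the $a_j$ that gadget (a) makes informative. The changed $a_i$ then alters what every successor reveals. So ``any pooling strictly lowers $V_N$'' has no basis, because the alternative may also un-pool elsewhere.

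The paper's key device avoids link-by-link reasoning. For each $i$ it gives agent $i$ one pair of weak probe signals with likelihood ratios $1\pm\varepsilon$, and sets her prior so that she is exactly indifferent after a single target observation $h^*$ of $\widehat{\mathcal N}_i$. A genericity condition, \emph{observation separation}, requires that all (observed set, action vector) pairs have distinct likelihood ratios; it is maintained by rare generic mixture components. It guarantees that after every other observation, under every other neighborhood, both probes induce the same action. Hence under any alternative $a_i$ is a deterministic function of $\bm a^{i-1}$, and the alternative is a garbling of the target through agent $i$, whatever links were added, deleted or swapped.

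You also leave open the two quantitative problems the proof is built around.

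\emph{Propagation.} The paper ranks alternatives by their first deviation $i$ and makes every later agent's signal nearly uninformative (total variation below $\eta$). Agents $i+1,\dots,N-1$ can then raise an alternative's value by at most $2\eta$, below the margin secured at agent $i$, while the target's full-history value cannot fall when they are added. Your lexicographic ordering of informativeness supplies no such bound. What is needed is that the network-dependent part of later behavior be small relative to the earlier margin.

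\emph{Decision relevance.} A generically weak $s_N$ does not make agent $N$'s action depend on a pooled piece of information. Under lexicographic scales her action would be driven by the strongest signal present. The paper instead adds rare test signals with $LR_N(t_i)=(1-\mu_N)/(\mu_N\kappa^*)$, where $\kappa^*$ is the likelihood ratio of a full history consistent with $h^*$; this puts $N$ at indifference there, so the probe revealed by $a_i$ tips her choice. For alternatives in which $N$ drops predecessors, it adds a signal whose likelihood ratio lies strictly between $(1-\mu_N)/(\mu_N\kappa')$ and $(1-\mu_N)/(\mu_N\kappa)$. Here $\kappa<\kappa'$ are the ratios of two full histories that coincide on the observed subset, which are distinct by observation separation. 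All such signals take their probability from conclusive signals, so earlier strict margins survive. Alternatives deviating both before $N$ and at $N$ are then handled by restoring $N$'s full neighborhood via your free-disposal argument.
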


Theorem \ref{thm:characterization} establishes a sharp boundary between links entering agent $N$ and links among her predecessors.
Every rationalizable network must contain all links entering agent $N$, since additional observations cannot reduce her payoff.
Beyond this requirement, however, rationalizability imposes no restriction on the predecessor network: every pattern of links among agents $1,\ldots,N-1$ can be made uniquely optimal by a suitable informational environment.
The following corollary sharpens this latter conclusion by isolating the effect of a single link among the predecessors.

\begin{corollary}
\label{cor:predecessor-link}
Fix $1\leq j<i\leq N-1$ and a network $\mathcal{N}^-$ with $j\notin\mathcal{N}^-_i$ and $\mathcal{N}^-_N=\{1,\ldots,N-1\}$.
Let $\mathcal{N}^{+}$ be defined by $\mathcal{N}^+_i=\mathcal{N}^{-}_i\cup\{j\}$ and $\mathcal{N}^+_{i^\prime}=\mathcal{N}^{-}_{i^\prime}$ for all $i^\prime \neq i$.
Then, there exist informational environments $(\mu^{+},\Pi^{+})$ and $(\mu^{-},\Pi^{-})$ such that $V_N(\mathcal{N}^{+},\mu^{+},\Pi^{+})>V_N(\mathcal{N}^{-},\mu^{+},\Pi^{+})$ and $V_N(\mathcal{N}^{-},\mu^{-},\Pi^{-})>V_N(\mathcal{N}^{+},\mu^{-},\Pi^{-})$.
\end{corollary}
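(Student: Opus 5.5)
Corollary \ref{cor:predecessor-link} follows directly from Theorem \ref{thm:characterization}, since both $\mathcal{N}^+$ and $\mathcal{N}^-$ satisfy its condition. Because $\mathcal{N}^-_N=\{1,\ldots,N-1\}$ and the modification only changes agent $i\leq N-1$'s neighborhood, we also have $\mathcal{N}^+_N=\mathcal{N}^-_N=\{1,\ldots,N-1\}$. Hence Theorem \ref{thm:characterization} implies that both $\mathcal{N}^+$ and $\mathcal{N}^-$ are rationalizable.

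For the first inequality, apply rationalizability of $\mathcal{N}^+$. There is an informational environment, which we name $(\mu^+,\Pi^+)$, under which $\mathcal{N}^+$ is the unique maximizer of $V_N$. Thus $V_N(\mathcal{N}^+,\mu^+,\Pi^+)>V_N(\mathcal{N},\mu^+,\Pi^+)$ for every $\mathcal{N}\neq\mathcal{N}^+$. The networks differ because $j\in\mathcal{N}^+_i$ while $j\notin\mathcal{N}^-_i$, so we may take $\mathcal{N}=\mathcal{N}^-$.

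The second inequality is symmetric. Rationalizability of $\mathcal{N}^-$ yields $(\mu^-,\Pi^-)$ with $V_N(\mathcal{N}^-,\mu^-,\Pi^-)>V_N(\mathcal{N},\mu^-,\Pi^-)$ for all $\mathcal{N}\neq\mathcal{N}^-$, and in particular for $\mathcal{N}=\mathcal{N}^+$.

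No step here is a real obstacle. All the content sits in the sufficiency direction of Theorem \ref{thm:characterization}, which we take as given: any predecessor pattern, with $N$ observing everyone, can be made uniquely optimal. The only points to check are that the link modification leaves $\mathcal{N}_N$ unchanged, which holds because $i\neq N$, and that $\mathcal{N}^+\neq\mathcal{N}^-$, so that the strict uniqueness in the definition of rationalizability applies.
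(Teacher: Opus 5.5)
Your proposal is correct and matches the paper: Corollary~\ref{cor:predecessor-link} is stated without a separate proof, as an immediate consequence of the sufficiency direction of Theorem~\ref{thm:characterization}, applied once to $\mathcal{N}^+$ and once to $\mathcal{N}^-$. Your two checks are exactly what is needed to invoke strict uniqueness in the definition of rationalizability: $\mathcal{N}^+_N=\mathcal{N}^-_N=\{1,\ldots,N-1\}$ because $i\leq N-1$, and $\mathcal{N}^+\neq\mathcal{N}^-$ because $j\in\mathcal{N}^+_i\setminus\mathcal{N}^-_i$ (with $j<i$ ensuring $\mathcal{N}^+$ is a valid network).
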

Corollary \ref{cor:predecessor-link} shows that the effect of a link among the predecessors on agent $N$'s payoff can be either positive or negative, depending on the informational environment.
Although the link is weakly valuable to the agent who gains the observation, it changes how her private information and observation are encoded in her action and may therefore make that action more or less informative to later agents.
Thus, the link's private value to the observing agent and its informational effect on later agents need not be aligned.

We next outline the proof of Theorem~\ref{thm:characterization}.
Necessity follows immediately because agent $N$ can ignore additional observations.
The substantive challenge is sufficiency, whose proof proceeds in two stages.
First, with agent $N$ observing all predecessors, we construct the informational environment inductively over agents $i=2,\ldots,N-1$.
At each step, we choose agent $i$'s prior and two weak \emph{probe signals} that induce different actions at a suitable history under the target neighborhood but are pooled under every alternative neighborhood.
Rare \emph{test signals} for agent $N$ make this distinction strictly valuable without reversing earlier payoff comparisons.
Second, additional rare signals make every link entering agent $N$ strictly valuable while preserving these comparisons.
The following example illustrates the main idea of the first stage in more detail.
\begin{example}
\label{ex:four-agent-shield}
Consider the four-agent network in Figure~\ref{fig:four-agent-shield}, which is referred to as a {\it shield network} by \citet{eyster2014extensive}.
We construct an informational environment under which this target network gives agent $4$ a strictly higher expected payoff than both the complete network and the star network.
\begin{figure}[ht]
\centering
\begin{tikzpicture}
\node (1) at (0,0) {$\circledtext{1}$};
\node (2) at (2,1) {$\circledtext{2}$};
\node (3) at (2,-1) {$\circledtext{3}$};
\node (4) at (4,0) {$\circledtext{4}$};

\draw[->, very thick] (1) -- (2);
\draw[->, very thick] (1) -- (3);
\draw[->, very thick] (1) -- (4);
\draw[->, very thick] (2) -- (4);
\draw[->, very thick] (3) -- (4);
\end{tikzpicture}
\caption{Four-agent target network}
\label{fig:four-agent-shield}
\end{figure}
We specify the informational environment $(\mu,\Pi)$ as follows.
Agent $1$ has prior $\mu_1$ and a binary information structure $\Pi_1=(\{l_1,h_1\},\pi_1)$ defined by
\begin{alignat*}{3}
&\mu_1=1/2,
&&\quad \pi_1(h_1\mid H)=\pi_1(l_1\mid L)=2/3,
&&\quad \pi_1(l_1\mid H)=\pi_1(h_1\mid L)=1/3.
\end{alignat*}
She therefore follows her own signal, choosing the low action after $l_1$ and the high action after $h_1$.

Agents $2$ and $3$ have priors $\mu_2$ and $\mu_3$ and binary information structures $\Pi_i=(\{p_i^-,p_i^+\},\pi_i)$ for $i=2,3$, defined by
\begin{alignat*}{4}
&\mu_2=1/3,
&&\quad \pi_2(p_2^-\mid H)=1/4,
&&\quad \pi_2(p_2^+\mid H)=3/4,
&&\quad \pi_2(p_2^-\mid L)=\pi_2(p_2^+\mid L)=1/2,\\
&\mu_3=1/3,
&&\quad \pi_3(p_3^-\mid H)=9/20,
&&\quad \pi_3(p_3^+\mid H)=11/20,
&&\quad \pi_3(p_3^-\mid L)=\pi_3(p_3^+\mid L)=1/2.
\end{alignat*}
In the terminology of the proof, these are probe signals.
They are deliberately weak: without observing any predecessor, each agent chooses the low action after either signal realization.
After observing agent $1$'s high action alone, however, each chooses different actions after her two probe signals.

Agent $4$ has prior $\mu_4$ and an information structure $\Pi_4=(\{t_2,t_3,t^{-},t^{+}\},\pi_4)$ with
\begin{alignat*}{4}
&\mu_4=1/2,
&&\quad \pi_4(t_2\mid H)=1/20,
&&\quad \pi_4(t_3\mid H)=1/30,
&&\quad \pi_4(t^{+}\mid H)=11/12,\\
&
&&\quad \pi_4(t_2\mid L)=1/10,
&&\quad \pi_4(t_3\mid L)=1/10,
&&\quad \pi_4(t^{-}\mid L)=4/5.
\end{alignat*}
Here $\pi_4(t^{+}\mid L)=\pi_4(t^{-}\mid H)=0$.
The conclusive signals $t^{+}$ and $t^{-}$ perfectly reveal the state, and thus agent $4$ chooses the correct action under every network whenever either signal is realized.
Thus, differences in her expected payoff across networks arise only when she receives one of the relatively rare test signals $t_2$ and $t_3$.
These test signals make identifying the corresponding probe signals valuable to agent $4$ at suitable histories, as illustrated below.

Under the star network, neither agent 2 nor agent 3 observes anyone, so both cascade to the low action. Thus, agent 4 cannot identify $s_2$ or $s_3$ under the star network.

Under the complete network, agent 2 observes agent 1's action. 
After observing agent 1's low action, she cascades to the low action, but after observing agent 1's high action, she follows her own signal. 
Thus, agent 4 can identify $s_2$ when $s_1=h_1$. 
However, agent 4 cannot identify $s_3$. 
This is because, as can be verified, when agent 3 observes the history $(0,0)$ or $(1,0)$, she cascades to the low action, and when she observes $(1,1)$, she cascades to the high action.\footnote{Agent 2 chooses the low action regardless of her signal whenever she observes $a_1=0$, so the history $(0,1)$ never arises.}

By contrast, under the target network, agents 2 and 3 each observe only agent 1's action, and both follow their own signal when $a_1=1$. 
Therefore, when agent 1 chooses the high action, agent 4 obtains strictly more information from the history under the target network than under either the star or the complete network: she can identify both probe signals under the target network, whereas she cannot identify at least agent 3's probe signal under either rival network. Suppose $(s_1,s_2,s_4)=(h_1,p_2^+,t_3)$. 
Agent 4 can then identify that $s_1=h_1$ and $s_2=p_2^+$, and her optimal action depends on whether $s_3=p_3^-$ or $p_3^+$: identifying agent 3's signal at this event therefore gives her a strictly higher expected payoff. 
Also, when agent 1 chooses the low action, agents 2 and 3 cascade to the low action under every network, so agent 4 obtains the same information under all three networks. 
Therefore, agent 4's expected payoff under the target network is strictly higher than under both rival networks.

Two general observations underlie this comparison. 
First, under the target network agent 4 obtains more information than under any other network. In particular, there is a history at which she can identify every probe signal, whereas no such history exists under any other network. 
Second, at such a history, identifying agent $i$'s probe signal strictly raises agent 4's expected payoff when she receives the corresponding test signal $t_i$. 
By the same argument using these two observations, we can verify that the target network achieves a strictly higher expected payoff for agent 4 than any network under which agent 4 observes all three predecessors.

Showing that the target network gives agent 4 a strictly higher expected payoff than a network in which she does not observe all three predecessors is more technically involved, and requires perturbing the information structure specified above. 
Appendix \ref{app:detailed-four-agent} constructs such an information structure explicitly and shows that, under it, the target network gives agent 4 a strictly higher expected payoff than every other network. 
The proof behind Theorem \ref{thm:characterization} generalizes the two observations above and this perturbation argument to establish the same result for an arbitrary number of agents and any target network under which agent $N$ observes all predecessors. \qed
\end{example}

Theorem \ref{thm:characterization} and Corollary \ref{cor:predecessor-link} evaluate network performance in terms of agent $N$'s expected payoff.
We now turn from agent $N$'s payoff to social welfare. Fix $\lambda=(\lambda_1,\ldots,\lambda_N)\in\mathbb{R}_{++}^N$ and define $W_\lambda(\mathcal{N},\mu,\Pi):=\sum_{i=1}^N\lambda_iV_i(\mathcal{N},\mu,\Pi)$.
This welfare criterion includes discounted welfare as a special case, with $\lambda_i=\delta^{i-1}$ for some $\delta\in(0,1)$.
We define the welfare counterpart of rationalizability as follows.
\begin{definition}
A network $\widehat{\mathcal{N}}$ is \emph{socially rationalizable} if there exists an informational environment $(\mu,\Pi)$ such that $W_\lambda(\widehat{\mathcal{N}},\mu,\Pi)>W_\lambda(\mathcal{N},\mu,\Pi)$ for every $\mathcal{N}\neq\widehat{\mathcal{N}}$.
\end{definition}
Social rationalizability is defined relative to the fixed welfare weights $\lambda$, although we suppress this dependence in the terminology.
The rationalizing informational environment may depend on $\lambda$; a single environment need not rationalize the network under all positive welfare weights.
The following proposition provides the welfare counterpart of Theorem \ref{thm:characterization} and shows that the characterization is independent of the choice of $\lambda$.

\begin{proposition}
\label{prop:welfare-characterization}
A network $\widehat{\mathcal{N}}$ is socially rationalizable if and only if $\widehat{\mathcal{N}}_N=\{1,\ldots,N-1\}$.
\end{proposition}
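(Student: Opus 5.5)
The necessity direction should go exactly as for Theorem~\ref{thm:characterization}, with one extra observation. Suppose $\widehat{\mathcal{N}}_N\neq\{1,\ldots,N-1\}$, and let $\mathcal{N}'$ agree with $\widehat{\mathcal{N}}$ except that $\mathcal{N}'_N=\{1,\ldots,N-1\}$.
\begin{itemize}
\item For each $i<N$, agent $i$'s neighborhood and the strategies of her predecessors are the same in both networks. Hence the distribution of $(a_1,\dots,a_i)$ and $V_i$ are the same in both networks.
\item Agent $N$ in $\mathcal{N}'$ can replicate her strategy from $\widehat{\mathcal{N}}$ by ignoring the extra observations. Hence $V_N(\mathcal{N}',\mu,\Pi)\ge V_N(\widehat{\mathcal{N}},\mu,\Pi)$.
\end{itemize}
Since $\lambda_N>0$, this gives $W_\lambda(\mathcal{N}',\mu,\Pi)\ge W_\lambda(\widehat{\mathcal{N}},\mu,\Pi)$ for every informational environment. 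So $\widehat{\mathcal{N}}$ is never the strict maximizer.

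For sufficiency, I would start from the environment $(\mu,\Pi)$ that rationalizes $\widehat{\mathcal{N}}$ in Theorem~\ref{thm:characterization}. Since there are finitely many networks, it gives a uniform gap $\Delta:=V_N(\widehat{\mathcal{N}})-\max_{\mathcal{N}\ne\widehat{\mathcal{N}}}V_N(\mathcal{N})>0$. The new difficulty is that the earlier agents' payoffs $V_i$, $i<N$, also vary with the predecessor network, and the target network need not maximize them. The plan is therefore to modify the environment so that, for every pair of networks, $\sum_{i<N}\lambda_i|V_i(\mathcal{N})-V_i(\mathcal{N}')|$ is small compared with $\lambda_N\Delta$. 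The modification must leave agent $N$'s strict ranking unchanged.

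My first attempt would exploit the fact that the construction behind Theorem~\ref{thm:characterization} uses deliberately weak probe signals, with agent $i$'s prior tuned to the history at which her two probe signals split her action. I would add a parameter $\eta>0$ and choose each $\mu_i$ and probe likelihood ratio so that, at every history reachable under any network, agent $i$'s posterior after either probe signal lies within $\eta$ of $1/2$, with the correct ordering relative to the threshold preserved. Agent $1$'s informative signal would need the same treatment, or $\mu_1$ could be tuned so that her payoff is network-independent, which it is automatically since she observes no one. Then for $i<N$, every action agent $i$ could take under any network earns expected payoff within $O(\eta)$ of her optimum. Hence $|V_i(\mathcal{N})-V_i(\mathcal{N}')|=O(\eta)$ uniformly across networks.

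Meanwhile agent $N$'s payoff depends only on which actions are taken, that is, on which probe signals are revealed at which histories. This pattern is fixed as long as the strict orderings hold. I would then rescale agent $N$'s test-signal likelihoods so that her gain from identifying each probe signal stays bounded below by a constant that does not depend on $\eta$. Choosing $\eta$ small enough then yields $W_\lambda(\widehat{\mathcal{N}})>W_\lambda(\mathcal{N})$ for all $\mathcal{N}\ne\widehat{\mathcal{N}}$. I would also reuse the second-stage rare signals from Theorem~\ref{thm:characterization}, which make every link into $N$ strictly valuable; those affect only $V_N$.

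I expect this last step to be the main obstacle. As probe signals become nearly uninformative, agent $N$'s value of learning them also tends to shrink. So the test signals must be re-tuned, with likelihood ratios placing agent $N$ exactly on the margin given the other revealed signals, so that $\Delta$ stays of order at least comparable to the $O(\eta)$ losses. If this fails, my fallback is to keep the probes fixed and instead let agent $i$'s prior put her nearly at indifference only at the splitting history, bounding her cross-network payoff differences by the probability of the histories where her action differs across networks. I would then make those histories rare for agents $i<N$ while keeping their informational value to agent $N$ large.
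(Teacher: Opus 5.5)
Your necessity argument is correct and matches the paper's. The sufficiency plan, however, has a genuine gap. The first attempt cannot be rescued by re-tuning test signals. You require agent $i$'s posterior after either probe to lie within $\eta$ of $1/2$ at every history reachable under every network, in particular under both $\mathcal N_i=\emptyset$ and $\mathcal N_i=\{1,\ldots,i-1\}$. That forces her probes, and every action history she can observe (including agent $1$'s action), to have likelihood ratios $1+O(\eta)$. By the data-processing bound in Lemma~\ref{lem:weak-appendage}, agent $i$'s action can add at most twice the total variation of her signal to agent $N$'s payoff. So every difference $V_N(\widehat{\mathcal N})-V_N(\mathcal N)$ is $O(\eta)$, up to the mass of rare auxiliary components, whatever agent $N$'s test signals are. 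Hence $\Delta$ cannot stay bounded below. ``Comparable order'' is also not enough: $\lambda$ is fixed but arbitrary, so with $\lambda_i/\lambda_N$ large you need agent $i$'s possible loss under the target to be of strictly smaller order than agent $N$'s gain. Shrinking $\eta$ leaves that ratio uncontrolled.

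The fallback keeps Theorem~\ref{thm:characterization}'s probes with agent $i$ at indifference at the splitting history, and that is precisely the problem. The splitting history is itself a history where agent $i$'s action differs across neighborhoods, and it is where agent $N$'s gain lives, so it cannot be made rare. Indifference there also makes any informative refinement of it valuable to agent $i$. In Example~\ref{ex:four-agent-shield}, the alternative $\mathcal N_3=\{1,2\}$ splits $a_1=1$ (likelihood ratio $2$, agent $3$ indifferent) into $(1,0)$ and $(1,1)$, with likelihood ratios $856/925$ and $392/125$. This gives agent $3$ an order-one gain, while agent $4$'s entire gain from agent $3$'s action is at most $\varepsilon_3+2\delta_3$. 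So even equal weights favor the alternative.

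The missing idea is an order-of-magnitude separation. At the splitting event, agent $i$ should be only marginally past indifference, and the extra predecessors any alternative lets her see there should be nearly uninformative. Then her stake is $o(\rho)$ while agent $N$ gains $\Theta(\rho)$. Every other event on which her action can differ across neighborhoods should have probability $o(\rho)$. The paper gets this from baseline actions with $\mathbb P(a_j=1\mid L)=t^{(M+1)\omega_j}$, $\mathbb P(a_j=1\mid H)=t^{M\omega_j}$, and $\omega_j=2^{j-1}$. Action $1$ is rare and highly informative, action $0$ is almost uninformative, and distinct sets of $1$-players have distinct orders of magnitude. So any disagreement away from the history in which exactly $\widehat{\mathcal N}_i$ play $1$ needs a heavier set of $1$'s (Lemma~\ref{lem:local-neighborhood-comparison}).

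You also still need to aggregate across agents. A deviation at agent $i$ changes the payoffs of agents $i<k<N$, and the components added for later agents perturb earlier comparisons. The paper mixes in each agent's component with weight $\delta_i$ and chooses these backward with $4(\sum_{j>i}\delta_j)\sum_{k>i}\lambda_k<\delta_i d_i/2$, so that the earliest deviation dominates.
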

Thus, even when every agent's payoff receives positive weight, social rationalizability imposes no restriction on links among predecessors, provided that agent $N$ observes all of them.

Beyond Theorem \ref{thm:characterization}, sufficiency requires an additional argument because changing an intermediate agent's neighborhood may improve that agent's own payoff.
For each intermediate agent, the construction identifies a rare history at which the target neighborhood reveals information to agent $N$ that every alternative neighborhood pools.
The possible payoff loss of the intermediate agent is of smaller order than agent $N$'s payoff gain.
These events are introduced at successively smaller scales, so the comparison associated with the earliest difference between the networks dominates all effects on later agents.

\subsection{Star and Complete Networks}
Theorem \ref{thm:characterization} highlights the relationship between informational environments and rationalizable networks, but its general construction is designed to accommodate arbitrary predecessor networks rather than to isolate particular economic mechanisms.
We therefore revisit two important implications for benchmark networks at opposite ends of connectivity.
Using comparatively transparent constructions, we show how the informational environment supports each network and identify the forces that favor sparse and dense observation structures, respectively.

In the analysis of these two benchmark networks, all agents share the prior $1/2$ and a common information structure $\Pi=(S,\pi)$, and their signals are conditionally independent across agents.
With a slight abuse of notation, we write
\[
    V_i(\mathcal{N},\Pi):=V_i\bigl(\mathcal{N},(1/2)_{j=1}^N,(\Pi)_{j=1}^N\bigr)
\]
and denote the likelihood ratio of signal $s\in S$ by $LR(s)=\pi(s\mid H)/\pi(s\mid L)$.

The two benchmark networks are the {\it star network} and the {\it complete network}.
We first focus on a sparse network, which we call a star network, defined as follows.
\begin{definition} \label{def:star}
A network $\mathcal{N}$ is called a {\it star network}, denoted by $\mathcal{N}^S$, if it satisfies
$\mathcal{N}_i=\emptyset$ for $i=1,2,\dots,N-1$ and
$\mathcal{N}_N=\{1,2,\dots,N-1\}$.
\end{definition}
A star network is the most dispersed network in the sense that no agent observes the actions of others except for agent $N$.
This network shares the spirit of the ``sacrificial lambs'' or ``guinea pigs'' mechanism, in that agents $1,\dots,N-1$ act solely on their private signals and thereby generate information for agent $N$.
Figure \ref{fig:star} shows the star network.

We next focus on the most connected network, which we call a complete network, defined as follows.
\begin{definition} \label{def:complete}
A network $\mathcal{N}$ is called a {\it complete network}, denoted by $\mathcal{N}^C$, if it satisfies
$\mathcal{N}_i=\{1,2,\dots,i-1\}$ for all $i=1,2,\dots,N$.
\end{definition}
In contrast to the star network, a complete network is the most connected network in the sense that each agent observes all preceding actions.
Consequently, information can be aggregated and transmitted throughout the entire sequence of agents, although information cascades may also arise, as suggested by the social learning literature.
Figure \ref{fig:complete} shows the complete network.

\begin{figure}[t]
\centering
\begin{subfigure}{0.48\textwidth}
\centering
\begin{tikzpicture}[node distance=1.0cm and 1.5cm]
\centering
\node (R1)  at (0,2)   {$\circledtext{1}$};
\node (R2)  at (0,1)   {$\circledtext{2}$};
\node (Dots) at (0,0) {$\vdots$};
\node (RN2) at (0,-1) {$\circledtext{N-2}$};
\node (RN1) at (0,-2) {$\circledtext{N-1}$};
\node (RN)  at (4,0) {$\circledtext{N}$};

\draw[->, very thick] (R1)  -- (3.5,0.3);
\draw[->, very thick] (R2)  -- (3.5,0.1);
\draw[->, very thick] (RN2) -- (3.5,-0.1);
\draw[->, very thick] (RN1) -- (3.5,-0.3);

\end{tikzpicture}
\caption{Star network}
\label{fig:star}
\end{subfigure}
\hfill
\begin{subfigure}{0.48\textwidth}
\centering
\begin{tikzpicture}[node distance=1.0cm and 1.6cm]
\centering

\node (1) at (0,0) {$\circledtext{1}$};
\node (2) at (2,0) {$\circledtext{2}$};
\node (dots) at (3,0) {$\dots$};
\node (Nm1) at (4,0) {$\circledtext{N-1}$};
\node (N) at (6,0) {$\circledtext{N}$};

\draw[->, very thick] (1) -- (2);
\draw[->, very thick] (Nm1) -- (N);

\draw[->, very thick] (1) to[bend left=-50] (Nm1);
\draw[->, very thick] (1) to[bend left=60] (N);
\draw[->, very thick] (2) to[bend left=-30] (3.5,-0.3);
\draw[->, very thick] (2) to[bend left=30] (5.5,0.3);

\end{tikzpicture}
\caption{Complete network}
\label{fig:complete}
\end{subfigure}

\caption{Star and complete networks}
\label{fig:network_comparison}

\end{figure}

Theorem \ref{thm:characterization} implies that the star network is rationalizable. The following result recovers this implication through a simple binary information structure that makes the underlying mechanism transparent.
\begin{theorem}\label{thm: star}
There exists a common binary information structure $\Pi^B$ such that
$V_N(\mathcal{N}^S,\Pi^B) > V_N(\mathcal{N},\Pi^B)$
for any $\mathcal{N}\neq\mathcal{N}^S$.
\end{theorem}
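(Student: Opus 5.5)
The plan is to take $\Pi^B$ to be a binary structure $(\{l,h\},\pi^B)$ with $LR(h)>1>LR(l)$. I will choose its precisions generically, so that agent $N$'s full-information posterior, $\log LR$ summed over all $N$ signals, is never exactly $0$. Under $\mathcal{N}^S$ every agent $i<N$ has prior $1/2$ and no observations, so she plays $1$ after $h$ and $0$ after $l$. Her action therefore reveals $s_i$, and agent $N$ decides as if she observed $(s_1,\dots,s_N)$.

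\textbf{Step 1 (weak dominance).} In any network $\mathcal{N}$, the history agent $N$ observes is a deterministic function of $(s_1,\dots,s_{N-1})$, since equilibrium strategies are pure. Her information is thus a garbling of the full signal profile. Hence $V_N(\mathcal{N},\Pi^B)\le V_N(\mathcal{N}^S,\Pi^B)$, which equals the full-information value.

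\textbf{Step 2 (strictness criterion).} Because the full-information optimal action is strict at every profile, equality in Step 1 holds only if agent $N$'s action equals the full-information action at every positive-probability profile. So for each $\mathcal{N}\neq\mathcal{N}^S$ it suffices to exhibit two profiles $s,s'$ with the following properties:
\begin{enumerate}[(a)]
\item $s$ and $s'$ differ only in some coordinate $s_i$ with $i<N$;
\item they generate the same history for agent $N$;
\item the full-information decisions at $s$ and $s'$ differ.
\end{enumerate}
Agent $N$'s action cannot then be correct at both, which yields a strict loss.

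\textbf{Step 3 (constructing the profiles).} There are two cases for $\mathcal{N}\neq\mathcal{N}^S$.

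Case 1: $\mathcal{N}_N\neq\{1,\dots,N-1\}$. Pick $i\notin\mathcal{N}_N$. Changing $s_i$ changes nothing agent $N$ sees unless it propagates through agents she does observe. To avoid this, I will take the earliest such $i$ and fix the other signals so that the downstream agents' actions do not depend on $s_i$. Alternatively, I will choose the other signals so that agent $N$'s total count sits exactly at the pivot and $s_i$ flips the decision. In both versions the point is that the decision must hinge on $s_i$ while $s_i$ does not reach agent $N$.

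Case 2: agent $N$ observes everyone, but some agent $i<N$ has $\mathcal{N}_i\neq\emptyset$. Take the smallest such $i$, and let every agent $j<i$ receive $l$. Then all agents $j<i$ play $0$, and agent $i$ sees only zeros from her neighbors, each of which is an isolated agent who follows her own signal. With the tie-breaking rule toward $0$ and a symmetric or suitably asymmetric precision, one observed $l$-revealing zero already offsets $h$: the posterior is $LR(h)LR(l)\le 1$. So agent $i$ plays $0$ after both $l$ and $h$, exactly as in Example \ref{example: binary}. Now fix the signals of agents $i+1,\dots,N$ so that the full-information decision flips with $s_i$. One must also check that the later agents' actions, which agent $N$ sees, are identical under $s_i=h$ and $s_i=l$. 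This holds because they observe only actions, and the action histories coincide up to $i$.

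\textbf{Main obstacle.} The hard part is the pivotality requirement in (c). With a symmetric binary structure, the full-information decision depends only on $\#h-\#l$. When $i$ is large, the $i-1$ forced $l$'s may be too many for the remaining $N-i$ signals to bring the count to the pivot. I will handle this by making $\pi^B$ asymmetric: $h$ is very strong and $l$ weak, that is, $LR(h)$ is large and $LR(l)$ is close to $1$. A few $h$'s can then offset many $l$'s, while one observed zero in Case 2 still produces a cascade to $0$ given the tie-breaking rule. If this parameter tension cannot be resolved, the fallback is to replace ``all predecessors $l$'' by a more targeted event in which only the ancestors of agent $i$ receive $l$, and the remaining predecessors receive $h$ so as to restore pivotality.
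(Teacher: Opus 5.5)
Your Steps 1--2 are sound and match the paper's logic. Agent $N$'s history is a garbling of $(s_1,\dots,s_{N-1})$, which gives weak dominance. A pair of positive-probability profiles that differ only in $s_i$, produce the same history, and have different full-information decisions then forces a strict loss.

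\textbf{The gap is in Case 2 of Step 3, and your proposed resolution runs in the wrong direction.} Write $x=\log LR(h)>0$ and $y=-\log LR(l)>0$.
\begin{itemize}
\item \emph{The asymmetric fix.} For agent $i$ to pool $h$ with $l$ after a single revealed $l$, you need $x\le y$, i.e.\ $LR(h)LR(l)\le 1$. Making $LR(h)$ large and $LR(l)$ close to $1$ gives $LR(h)LR(l)>1$, so agent $i$ simply follows her signal and (b) fails. The sentence ``$h$ very strong, $l$ weak \ldots\ while one observed zero still produces a cascade'' is self-contradictory.
\item \emph{The fallback.} Suppose all $m=|\mathcal N_i|$ neighbors of agent $i$ receive $l$. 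Pooling needs $x\le my$, and pivotality needs $(N-m)x>my$. A single $\Pi^B$ must serve every network, so $m=1$ (e.g.\ $\mathcal N_2=\{1\}$) forces $y\ge x$. But $m=N-2$ (e.g.\ $\mathcal N_{N-1}=\{1,\dots,N-2\}$) forces $y<2x/(N-2)\le x$. These are incompatible for $N\ge 4$.
\end{itemize}

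\textbf{The missing idea, which is what the paper uses.} Make $l$ strong, with $LR(l)LR(h)^{N-2}<1<LR(l)LR(h)^{N-1}$, so a single $l$ outweighs any $N-2$ $h$'s but not $N-1$. Put $l$ on exactly one neighbor $j\in\mathcal N_i$ and $h$ on every other agent, including $N$. With your minimal $i$, agent $i$ sees one revealed $l$ and at most $N-3$ revealed $h$'s. After her own $h$ her odds are at most $LR(l)LR(h)^{N-2}<1$, so she plays $0$ whether $s_i=h$ or $s_i=l$. The two profiles therefore generate the same full history. The full-information decision is $1$ at $s_i=h$ (one $l$, $N-1$ $h$'s) and $0$ at $s_i=l$.

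\textbf{Case 1 is also left open.} Your Case 1 is only a sketch whenever some predecessor observes someone. Note that the pair above yields the same \emph{full} history $(a_1,\dots,a_{N-1})$, hence the same history for every $\mathcal N_N$. So the clean split is the paper's:
\begin{itemize}
\item Some $\mathcal N_i\neq\emptyset$ with $i<N$: use the pair above.
\item All predecessors isolated and $i\notin\mathcal N_N$: set $s_N=l$ and all other signals $h$, and vary $s_i$. The same two inequalities make the full-information decision flip, while agent $N$'s history is unchanged.
\end{itemize}
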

The intuition behind Theorem \ref{thm: star} is as follows.
Under binary signals, an agent who does not observe others' actions chooses an action that perfectly reveals her private signal.
Consequently, in a star network, agent $N$ can infer the private signals of all preceding agents from their actions.
By contrast, under binary signals, when predecessors are connected, information cascades immediately arise, causing actions to become less responsive to private signals.
This phenomenon appears even in the canonical social learning environments of \citet{banerjee1992simple} and \citet{bikhchandani1992theory}.

This feature gives rise to what we call the {\it responsiveness effect}: as predecessors are less connected, informational interdependence becomes weaker, which preserves the responsiveness of predecessors' actions to their own private information.
This responsiveness prevents society from placing excessive weight on early actions and thereby mitigates information cascades.
As a result, actions remain informative about underlying signals, facilitating information aggregation.
Therefore, when preserving responsiveness to dispersed private information is sufficiently important, the star network outperforms more connected networks.

The responsiveness effect may suggest that the star network is optimal for all information structures. Theorem \ref{thm:characterization}, however, implies that the complete network is also rationalizable. The following result recovers this complementary implication through an information structure that makes a different mechanism transparent.

\begin{theorem}\label{thm: complete}
There exists a common information structure $\Pi^\ast$ such that
$V_N(\mathcal{N}^C,\Pi^\ast) > V_N(\mathcal{N},\Pi^\ast)$
for any $\mathcal{N}\neq\mathcal{N}^C$.
\end{theorem}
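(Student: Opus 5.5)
The plan is to generalize the ternary structure of Example \ref{example: ternary} to a \emph{ladder} of signals whose strengths are separated lexicographically, so that under $\mathcal{N}^C$ every overturning in the public history reveals a signal stronger than anything seen before. Concretely, I would take $S=\{s_1^-,s_1^+,\dots,s_K^-,s_K^+\}$ with $K$ of order $N$, and choose likelihood ratios $LR(s_k^+)=\lambda_k$ and $LR(s_k^-)=1/\lambda_k'$, with $\lambda_k,\lambda_k'$ satisfying two conditions. First, $\lambda_{k+1}>(\max\{\lambda_k,\lambda_k'\})^{N}$, and likewise for the primed values, so that one level-$(k+1)$ signal outweighs any combination of at most $N$ signals of lower levels. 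Second, $\lambda_k\neq\lambda_k'$, so that two opposite signals of the same level never exactly cancel. The probabilities of the levels would decrease geometrically, so that higher levels are rare. Under such a structure, the sign of the full-information posterior is determined lexicographically: it is the sign of the highest realized level, with same-level conflicts resolved by the asymmetry $\lambda_k\neq\lambda_k'$ and by counting.

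The first step is to characterize equilibrium play under $\mathcal{N}^C$ by induction on $i$. I want to show that the public history up to $i-1$ reveals the level and sign of the current ``leader'', meaning the strongest signal realized so far. The reason is that agent $i$ deviates from the previous action only if her own signal has the opposite sign and a level at least as high as the leader's, and this is forced by the lexicographic separation. Each switch therefore tells later agents that the switching agent's level exceeds the one previously inferred. This is exactly the overturning effect.

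From this characterization I would derive two things. The first is an explicit description of agent $N$'s information under $\mathcal{N}^C$, namely the sign of the leader together with a lower bound on its level. The second is that agent $N$'s decision under $\mathcal{N}^C$ coincides with the full-information decision except on events whose probability is of higher order in the small level probabilities.

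The second step handles every $\mathcal{N}\neq\mathcal{N}^C$. Networks with $\mathcal{N}_N\neq\{1,\dots,N-1\}$ lose whatever a missing agent reveals. For the remaining networks, some link $j\to i$ with $j<i\le N-1$ is missing. I would then construct a positive-probability signal profile, in the spirit of Example \ref{example: ternary}, on which the two networks give different results. Roughly, agent $j$ is the first to take an action against the prior trend, and agent $i$ holds a strong signal of the opposite sign. Under $\mathcal{N}^C$, agent $i$'s overturning reveals her strength to $N$. Under $\mathcal{N}$, agent $i$ does not see $a_j$, so her action pools the strong and the ordinary signal, and agent $N$, holding a suitable intermediate signal, errs.

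The key is to choose this event at the lowest order in the level probabilities. I would then show that all other differences between $\mathcal{N}$ and $\mathcal{N}^C$ are either nonnegative for $\mathcal{N}^C$ or of strictly smaller order. This is an ordering-of-scales argument analogous to the one used for Theorem \ref{thm:characterization}.

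I expect the main obstacle to be the first step. The claim that the history of $\mathcal{N}^C$ summarizes the leader correctly must survive same-level conflicts and multiple switches, and I must verify that agent $N$'s decision under $\mathcal{N}^C$ is never beaten on the relevant events by any sparser network. If the clean ``leader'' characterization fails, my first fallback is to shrink the structure toward Example \ref{example: ternary}: use $\{l,h,h^\ast\}$ together with a rare conclusive $l^\ast$ and weak ordinary signals. In that case the comparison reduces to a finite case analysis of first-switch events, and only leading-order terms in the probabilities of $h^\ast$ and $l^\ast$ need to be compared.
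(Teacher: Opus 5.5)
As specified, your signal structure fails, and the cause is the ``same-level conflicts resolved by counting'' that you flag as the main obstacle. When counts matter within a level, the responsiveness effect reappears on the \emph{most likely} signals. Take $N=3$, $LR(s_1^+)=2$, $LR(s_1^-)=1/3$ (so $\lambda_1\neq\lambda_1'$), and make the higher levels rare, as you intend. Consider the profile $(s_1^-,s_1^+,s_1^+)$. Under $\mathcal{N}^C$, agent $2$ cascades to $0$ after $a_1=0$ (posterior odds about $2/3$), and agent $3$ sees $(0,0)$ and plays $0$. Under the star network, agent $3$ sees $(0,1)$ and plays the full-information action $1$ (odds about $4/3$). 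On profiles with only level-$1$ signals the star network attains the full-information decision, so it beats $\mathcal{N}^C$ by an amount of order one. Every overturning gain of $\mathcal{N}^C$ needs a higher-level signal, so your ordering of scales works against $\mathcal{N}^C$.

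What is needed is a single chain of one-signed signals with alternating signs, each outweighing $N-1$ copies of every weaker one. This is the paper's $h<l<h^\ast<l^\ast$ with $LR(h)^{N-1}LR(l)<1$, $LR(l)^{N-1}LR(h^\ast)>1$, $LR(h^\ast)^{N-1}LR(l^\ast)<1$, so counts never matter. It must come with the probability inequalities \eqref{eq: prob_h_theta1}--\eqref{eq: prob_h*_theta1}, which make the likelihood ratio of any set of profiles governed by its member built from the most likely signals. Your fallback is close to this, but only if $l$ is strong relative to $h$ (the paper has $LR(h)\approx 1$ and $LR(l)=\delta^N$). With two comparably weak ordinary signals, the counterexample recurs once $N$ is large.

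Even with such a chain, two steps of your plan are missing.

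First, the claim that all other differences between $\mathcal{N}$ and $\mathcal{N}^C$ are ``nonnegative or of smaller order'' is the whole content of weak dominance. No order-of-scales heuristic delivers it uniformly over networks, because sparser networks gain precisely through responsiveness. The paper proves it exactly, without asymptotics:
\begin{itemize}
\item The $\mathcal{N}^C$ history sorts profiles into classes $\bm S_1^{N-1},\dots,\bm S_4^{N-1}$.
\item By Lemma \ref{lem: equilibrium characterization}, the optimal action given any set of profiles depends only on which of the nested sets $\{h\}^{N-1}\subset\{l,h\}^{N-1}\subset\{l,h,h^\ast\}^{N-1}$ it meets.
\item By Lemma \ref{lem: N^C detects rare signals the most}, for every $\mathcal{N}$ and every history $\bm a$, the set $\sigma_N^{\ast\mathcal{N}-1}(\bm a)\cap\bm S_k^{N-1}$ falls into the same case of that lemma as $\bm S_k^{N-1}$ itself.
\end{itemize}
Hence the optimal action given $\mathcal{N}$'s information, refined by the class, equals the $\mathcal{N}^C$ action. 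Iterated expectations then give $V_N(\mathcal{N}^C,\Pi^\ast)\geq V_N(\mathcal{N},\Pi^\ast)$ for every $\mathcal{N}$. Uniqueness of the optimal action reduces strictness to exhibiting one positive-probability profile on which the actions differ.

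Second, your distinguishing event fails for networks that contain every link $k-1\to k$ and have $\mathcal{N}_N=\{1,\dots,N-1\}$, but miss a longer link $i^\ast\to j^\ast$ with $j^\ast\geq i^\ast+2$. Intermediate agents with ordinary signals cascade on $i^\ast$'s switch and relay it to $j^\ast$. So $j^\ast$'s overturn is equally informative under both networks. The paper separates these networks (Lemma \ref{lem: lemma for strict inequality}(iii)) by a second overturn: $l$ at $i^\ast$, $h^\ast$ at every agent strictly between, and $l^\ast$ at $j^\ast$. Under $\mathcal{N}$, agent $j^\ast$ sees only $1$'s and pools $l^\ast$ with $l$. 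Under $\mathcal{N}^C$, her $0$ reveals $l^\ast$, so agent $N$ holding $h^\ast$ acts differently. This is the role of the fourth signal $l^\ast$, and four signals suffice for every $N$.
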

The construction of the information structure behind Theorem \ref{thm: complete} is based on four signals, $\{l^\ast,l,h,h^\ast\}$.
The signals $h$ and $h^\ast$ are positive signals that favor state $H$, while $l$ and $l^\ast$ are negative signals that favor state $L$.
We interpret $l$ and $h$ as ordinary signals, and $l^\ast$ and $h^\ast$ as strong signals.
The informativeness of these signals is ordered in a lexicographic manner, in the order of $l^\ast, h^\ast, l, h$, in the following sense.
Consider a hypothetical situation in which agent $N$ can observe all $N$ signals directly.
Even if $N-1$ of them are $h$ and only one is $l$, the agent puts more weight on $l$ and takes action $0$.
The same relationship holds between $l$ and $h^\ast$, and between $h^\ast$ and $l^\ast$: a single more informative signal outweighs any number of less informative ones.
The signals are ordered in the opposite way in terms of the probability of each signal: $h$, $l$, $h^\ast$, $l^\ast$.

Under such an information structure, since agents observe only actions, not predecessors' signals directly, particular action histories have sharp interpretations.
Consider the complete network.
A history consisting only of action $1$ is interpreted, with high probability, as being generated by signal $h$ alone.
Suppose that an agent overturns this history by choosing action $0$.
This overturning suggests that the agent receives signal $l$ with high probability.
Suppose that a subsequent agent then takes action $1$.
This overturning suggests that the agent receives good news more informative than signal $l$, namely signal $h^\ast$. 
Since all subsequent agents observe this overturning, they can draw this inference.
Finally, suppose that a further subsequent agent takes action $0$.
By the same reasoning, this overturning suggests that the agent receives signal $l^\ast$.
In this way, through the overturning of actions, the history successively reveals the presence of more informative signals.
Note that this overturning argument relies on the property of the complete network that every agent observes all past actions.

This feature generates what we call the {\it overturning effect}: as predecessors are more connected, later agents are more likely to detect the presence of informative signals through the inference that an agent must have received a signal informative enough to overturn past actions.
Such overturnings in actions correct histories that would otherwise induce incorrect inferences, leading to more efficient information aggregation.
This correction is possible only when an overturning in past actions becomes part of the public history.
Therefore, when detecting such informative signals is sufficiently important, the complete network outperforms less connected networks.

The advantage of the complete network is not limited to the particular agent on whom we have focused.
Under the same information structure, the complete network Pareto dominates any other network.
\begin{proposition}\label{prop: pareto dominate}
    There exists an information structure $\Pi^\ast$ such that $V_{j}(\mathcal{N}^C,\Pi^\ast) \geq  V_{j}(\mathcal{N},\Pi^\ast)$ for any $j \in \{1,\dots,N-1 \}$ and $V_{N}(\mathcal{N}^C,\Pi^\ast) >  V_{N}(\mathcal{N},\Pi^\ast)$ under any network $\mathcal{N}\neq \mathcal{N}^C$. 
\end{proposition}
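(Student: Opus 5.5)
The plan is to take the same four-signal structure $\Pi^\ast$ used for Theorem \ref{thm: complete}. That theorem already gives the strict inequality for agent $N$, so the only new work is the weak inequality for every intermediate agent $j\in\{1,\dots,N-1\}$. We will prove a stronger statement: under $\Pi^\ast$, the complete network weakly maximizes $V_j$ over all networks, for every $j$. This in turn follows from the claim that agent $j$'s payoff under $\mathcal{N}^C$ equals the full-information benchmark, or at least dominates any other network's payoff.

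\textbf{Step 1: monotonicity in own neighborhood.} Fix the behaviour of agents $1,\dots,j-1$. Enlarging $\mathcal{N}_j$ weakly raises $V_j$, because agent $j$ can ignore extra observations. This alone does not suffice, since the predecessors' strategies also change with the network.

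\textbf{Step 2: a sufficiency property of the complete history.} Exploit the lexicographic structure of $\Pi^\ast$. Order the signals so that one $l^\ast$ outweighs any number of $h^\ast,l,h$; one $h^\ast$ outweighs any number of $l,h$; and one $l$ outweighs any number of $h$. With these orderings, the optimal action given all signals $s_1,\dots,s_j$ depends only on the \emph{strongest} signal present: action $0$ if the strongest is $l^\ast$ or $l$, and action $1$ if it is $h^\ast$ or $h$.

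Then show by induction on $j$ that, in $\mathcal{N}^C$, agent $j$'s equilibrium action equals this full-information optimal action. Equivalently, the public action history encodes the rank of the strongest signal so far. The argument is the overturning logic: if agent $j-1$'s action reveals the current strongest rank $r$, agent $j$ switches actions exactly when her signal has rank above $r$ and points in the opposite direction. Ranks alternate in direction ($h<l<h^\ast<l^\ast$), so any signal stronger than the current maximum necessarily points the other way or in the same direction with the same action. Hence the action sequence, read as a count of overturnings, tracks the maximal rank.

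Care is needed with the initial phase. A history of repeated $1$s must be read as ``strongest is $h$,'' and a first $0$ as ``an $l$ appeared'' (possibly $l^\ast$, but both induce action $0$). It must be verified that the lexicographic likelihood-ratio conditions make these inferences decisive for the posterior given probabilities, not merely qualitatively. This step is where we invoke the specific parameters of $\Pi^\ast$ chosen in Theorem \ref{thm: complete}, possibly strengthening the separation of likelihood ratios if needed.

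\textbf{Step 3: conclude.} If agent $j$'s action in $\mathcal{N}^C$ coincides with the action that is optimal given \emph{all} signals $s_1,\dots,s_j$, then no network can give her a higher payoff. In any network her information is a garbling of $(s_1,\dots,s_j)$, so $V_j(\mathcal{N}^C,\Pi^\ast)\ge V_j(\mathcal{N},\Pi^\ast)$. Combined with Theorem \ref{thm: complete}, this yields the proposition.

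The main obstacle is Step 2: showing that the action history exactly tracks the strongest signal. The difficulty is that an inference like ``the first $0$ came from $l$ or $l^\ast$'' leaves residual uncertainty that might, with finitely many agents, tip a later agent's decision. If exact equivalence fails, the fallback is to pick the likelihood ratios and probabilities extremely separated (e.g.\ powers of a small $\varepsilon$). The residual uncertainty then never changes the sign of the log-likelihood, and we would restrict $\Pi^\ast$ accordingly while re-checking that the strict inequality of Theorem \ref{thm: complete} survives.
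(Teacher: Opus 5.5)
There is a genuine gap in Step 2, and sharpening the parameters cannot repair it.

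\textbf{Why Step 2 fails.} Under $\Pi^\ast$, agent $j$'s action in $\mathcal N^C$ is \emph{not} the action that is optimal given all of $s_1,\dots,s_j$. Take $s_1=h^\ast$ and $s_2=l$.
\begin{itemize}
\item Agent $1$ plays $1$ after both $h$ and $h^\ast$, so agent $2$ learns only $s_1\in\{h,h^\ast\}$. This event has likelihood ratio close to $1$ because $h^\ast$ is rare.
\item With $LR(l)=\delta^N$, agent $2$ therefore plays $0$ (Lemma~\ref{lem: equilibrium outcomes under particular signals}(i)).
\item But $LR(h^\ast)LR(l)=\delta^{N-N^2}>1$, so the full-information action is $1$.
\end{itemize}
The same loss recurs later. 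After $(h^\ast,l)$, agent $3$ plays $\sigma_{\{h^\ast\}}$ and so chooses $0$ after $h$, even though the strongest signal present is $h^\ast$.

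Your overturning sketch misses one point. A stronger signal that points in the direction of the current action causes no switch, and its information is lost for good. The public history therefore records the sequence of overturnings, i.e.\ the classes $\bm S_1^{j-1},\dots,\bm S_4^{j-1}$, not the maximal rank.

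This is structural, not a residual-uncertainty issue. Agent $1$'s binary action must pool $h$ with $h^\ast$ in every network, while the full-information rule separates them when $s_2=l$. So your fallback of more extreme separation cannot help. In fact $V_j(\mathcal N^C,\Pi^\ast)$ lies strictly below the full-information value, so the premise of Step 3 never holds.

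\textbf{What to compare against instead.} You need a comparison against what agent $j$ can know under an arbitrary $\mathcal N$, not against full information. The paper does this as follows.
\begin{itemize}
\item For each history $\bm a$ of agent $j$ under $\mathcal N$, split $\sigma^{\ast\mathcal N-1}_j(\bm a)$ into the cells $\sigma^{\ast\mathcal N-1}_j(\bm a)\cap\bm S_k^{j-1}$, $k=1,\dots,4$.
\item Lemma~\ref{lem: N^C detects rare signals the most} shows that every nonempty cell with $k=1,2,3$ contains a profile in $\{h\}^{j-1}$, $\{l,h\}^{j-1}$, $\{l,h,h^\ast\}^{j-1}$ respectively. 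The profile is obtained by downgrading strong signals, which leaves every action unchanged under any network. Cells with $k=4$ contain only profiles with an $l^\ast$.
\item The complementary emptiness conditions hold automatically from the definition of $\bm S_k^{j-1}$. Hence each cell satisfies the hypothesis of the matching case of Lemma~\ref{lem: equilibrium characterization}.
\item So the action optimal given the cell and $s_j$ is exactly the complete-network action on $\bm S_k^{j-1}$, namely $\sigma_{\{h,h^\ast\}}$, $\sigma_{\{h^\ast\}}$, $\sigma_{\{l,h,h^\ast\}}$, or $\sigma_\emptyset$.
\item Agent $j$'s action under $\mathcal N$ is constant on each cell given $s_j$, so it does weakly worse cell by cell. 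This gives $V_j(\mathcal N,\Pi^\ast)\le V_j(\mathcal N^C,\Pi^\ast)$.
\end{itemize}
The strict inequality for agent $N$ is Theorem~\ref{thm: complete}.
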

Proposition \ref{prop: pareto dominate} strengthens the implication of Theorem \ref{thm: complete} by extending the comparison from a single agent's payoff to a welfare comparison for the entire society. 
Under the same information structure, the complete network weakly improves the payoff of every agent and strictly improves the payoff of agent $N$ relative to any alternative network. 
Roughly speaking, this is because the subnetwork consisting of the first $N'$ agents in the complete network is itself the complete network of $N'$ agents.
Hence, under any utilitarian social welfare criterion that assigns a positive weight to agent $N$, the complete network yields strictly higher social welfare than any alternative network. 
In environments in which highly informative signals play a central role, extensive connectivity enables these signals to influence the decisions of many agents, thereby enhancing social welfare throughout the network.


\section{Trade-off between Responsiveness and Overturning} \label{sec: star vs complete}
In this section, we broaden our focus to transparent and tractable classes of information structures in which both the responsiveness effect and the overturning effect matter.
By comparing networks in a class that contains the star and complete networks as its endpoints, we identify which features of the information structures determine which effect dominates.

For $M\in\{1,\ldots,N-1\}$, the network is
\begin{align*}
    \mathcal N_i^M=\begin{cases}
        \emptyset, & i\leq M,\\
        \{1,\ldots,i-1\}, & i>M.
    \end{cases}
\end{align*}
The first $M$ agents (guinea pigs) act independently; subsequent agents observe all preceding actions.
Thus, $\mathcal N^1=\mathcal N^C$ and $\mathcal N^{N-1}=\mathcal N^S$.

Signals are independent conditional on $\theta$, and the network and information structure are common knowledge.
Throughout this section, we fix $\mu_i=1/2$ for all $i$.
The objective is $V_N(\mathcal N^M,\Pi)=(1/2)\alpha_N^L(0)+(1/2)\alpha_N^H(1)$.

\subsection{Mixtures of Binary and Conclusive Signals} \label{sec: mixture}
We first consider a simple and tractable class of information structures that captures the central intuition behind the trade-off.
The class starts from a symmetric binary information structure and enriches it with the possibility of conclusive signals.
The binary component provides the simplest environment in which the responsiveness effect is important: when agents' actions cease to reflect their own binary signals, information aggregation deteriorates.
The added conclusive signals represent an extreme form of highly informative signals that are central to the overturning effect: once such a signal is inferred from an action, it can overturn an incorrect inference generated by the preceding history.
By varying the frequency of conclusive signals and the accuracy of the original binary component, this class transparently captures the balance between the two effects and allows us to identify the optimal number of guinea pigs.

Formally, let $\Pi^{\varepsilon,p}=(\{l^{*},l,h,h^\ast\},\pi)$ be the information structure defined by
\[
\begin{alignedat}{2}
& \pi(l \mid L) = \pi(h \mid H) = (1-\varepsilon)p, &\qquad
& \pi(h \mid L) = \pi(l \mid H) = (1-\varepsilon)(1-p), \\
& \pi(l^{*} \mid L) = \pi(h^{*} \mid H) = \varepsilon, &\qquad
& \pi(h^{*} \mid L) = \pi(l^{*} \mid H) = 0,
\end{alignedat}
\]
where $\varepsilon\in(0,1)$ and $p\in(1/2,1)$.
Signals $l^{*}$ and $h^{*}$ are conclusive signals that perfectly reveal the state.
Under both states, a conclusive signal occurs with probability $\varepsilon$.
Otherwise, either of the two ordinary signals, $l$ or $h$, occurs.
These signals are symmetric with accuracy $p$.

Let $M^*\in\{1,\ldots,N-1\}$ denote an optimal number of guinea pigs, maximizing agent $N$'s expected payoff $V_N(\mathcal N^M,\Pi^{\varepsilon,p})$.
\begin{proposition} \label{prop:independent-optimum}\label{prop: star vs complete}
For each $p\in(1/2,1)$, there are unique cutoffs $\varepsilon_k(p)$, indexed by integers $k\geq0$, with $1>\varepsilon_0(p)>\varepsilon_1(p)>\cdots>0$.
For every odd $N\geq3$, let $K=(N-1)/2$.\footnote{For even $N$, the same classification holds with $K=(N-2)/2$, and the conclusion in (ii) uses $M=N-2$ instead of $M=N-1$; all other conclusions remain unchanged. The star network, $M=N-1$, is then strictly dominated by $M=N-2$ by Lemma \ref{lem:independent-payoff}.}
Except at $\varepsilon_0(p),\ldots,\varepsilon_{K-1}(p)$, the optimum is unique and given by
\begin{align*}
    M^*=\begin{cases}
        1, & \varepsilon>\varepsilon_0(p),\\
        2k, & \varepsilon_k(p)<\varepsilon<\varepsilon_{k-1}(p),\quad 1\leq k<K,\\
        2K, & 0<\varepsilon<\varepsilon_{K-1}(p).
    \end{cases}
\end{align*}
The cases $M=1$ and $M=2K=N-1$ correspond to the complete and star networks, respectively.
At $\varepsilon=\varepsilon_0(p)$, precisely $M=1,2$ are optimal; at $\varepsilon=\varepsilon_k(p)$ for $1\leq k<K$, precisely $M=2k,2k+2$ are optimal.
Moreover,
\begin{enumerate}[(i)]
    \item Each $\varepsilon_k(p)$ is strictly increasing in $p$, with $\lim_{p\searrow1/2}\varepsilon_k(p)=0$ and $\lim_{p\nearrow1}\varepsilon_k(p)=1$.
    In particular, $\varepsilon_0(p)=p(2p-1)/[p+2(1-p)^2]$.
    \item For fixed $p$, $\lim_{k\to\infty}\varepsilon_k(p)=(2p-1)^2/[1+4(1-p)^2]$.
    If $\varepsilon$ is at or below this limit, $M=N-1$ is uniquely optimal for every odd $N\geq3$.
    Above this limit, the optimal set remains unchanged as $N$ increases through odd integers beyond a finite threshold.
\end{enumerate}
\end{proposition}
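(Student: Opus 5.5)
The plan is to obtain a closed-form expression for $V_N(\mathcal N^M,\Pi^{\varepsilon,p})$ (the formula referred to as Lemma \ref{lem:independent-payoff}) and then reduce the optimization over $M$ to a sequence of pairwise comparisons between consecutive even values.

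\textbf{Step 1: equilibrium behavior.} Each guinea pig $i\le M$ takes action $1$ if and only if $s_i\in\{h,h^*\}$, so her action is a binary signal with accuracy $q:=\varepsilon+(1-\varepsilon)p$. For agents $i>M$, I will use the public log-likelihood state $d$, defined as the number of uninformative $1$'s minus the number of uninformative $0$'s. Once any action is known to reveal a conclusive signal, all later agents copy the state. Otherwise:
\begin{itemize}
\item With $|d|\le 1$, an ordinary signal decides the action. The tie-breaking rule at $d=0$ or $d=\pm1$ must be tracked carefully.
\item With $|d|\ge2$, a cascade starts, and only a conclusive signal opposing the cascade changes the action. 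Such an overturning action perfectly reveals the state to everyone afterwards.
\end{itemize}
During the cascade, actions consistent with the cascade carry no information. Agent $N$'s payoff therefore decomposes into three parts:
\begin{itemize}
\item the event that a conclusive signal is revealed by some action, in which case she is correct;
\item the event that the cascade is correct, or her own conclusive signal corrects a wrong one;
\item the residual event where $|d|\le1$ persists.
\end{itemize}

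\textbf{Step 2: parity.} I will show that an odd number of guinea pigs $2k+1$ (with $k\ge1$) is dominated by $2k$. The reason is that the extra guinea pig's action could instead be produced by a herd-observing agent, who at $|d|\le1$ follows her signal in exactly the same way, but who in a cascade reveals conclusive signals. This also rules out $M=N-2$ versus $M=N-1$ for odd $N$ in the appropriate direction, and yields the even-$N$ footnote.

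\textbf{Step 3: consecutive comparisons.} I will compute $\Delta_k:=V_N(\mathcal N^{2k+2})-V_N(\mathcal N^{2k})$ (and $V_N(\mathcal N^2)-V_N(\mathcal N^1)$ for $k=0$). The two extra guinea pigs:
\begin{itemize}
\item gain the responsiveness value, since a would-be cascade from $d=\pm2$ is refined by two more independent signals;
\item lose the overturning value, namely the chance that one of these two agents, being in a cascade, reveals a conclusive signal that corrects a wrong cascade for all later agents.
\end{itemize}
After dividing by common positive factors, I expect $\Delta_k$ to reduce to a function of $\varepsilon$ and $p$ that crosses zero exactly once. For $k=0$ this should give $\varepsilon_0(p)=p(2p-1)/[p+2(1-p)^2]$. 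This comparison is independent of $N$ except through a factor that depends on whether later agents exist, which is why the cutoffs are indexed by $k$ alone.

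\textbf{Step 4: assembling the optimum.} I will then prove that the sign of $\Delta_k$ has the single-crossing form, $\Delta_k>0$ if and only if $\varepsilon<\varepsilon_k(p)$, and that $\varepsilon_k(p)$ is strictly decreasing in $k$. Together these make $V_N(\mathcal N^{2k})$ unimodal in $k$, which yields the displayed optimal $M^*$ and the exact ties at the cutoffs.

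Part (i) follows by implicit differentiation of $\Delta_k(\varepsilon_k(p),p)=0$ together with the boundary behavior as $p\to1/2$ (where ordinary signals become worthless) and as $p\to1$. Part (ii) follows by letting $k\to\infty$ in the defining equation, where geometric tail terms vanish, giving $(2p-1)^2/[1+4(1-p)^2]$. Below this limit every $\Delta_k>0$, so the star network is optimal. Above it, only finitely many $\Delta_k$ are positive, so the optimal set stabilizes for large $N$.

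\textbf{Main obstacle.} The hardest step is Step 3, specifically getting a tractable closed form for $\Delta_k$ in which the random walk of $d$ under the $2k$ guinea pigs (a ballot-type distribution of the difference) interacts with the subsequent cascade-and-overturn phase. I would first try conditioning on the state and on the guinea pigs' net count being $0$ versus $\pm2$ after $2k$ draws. Only those histories differ between $M=2k$ and $M=2k+2$, and this should make the common factors cancel, leaving a low-degree polynomial inequality in $\varepsilon$.
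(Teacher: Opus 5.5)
\textbf{Gap 1: the equilibrium dynamics in Step~1 are wrong, and everything downstream inherits the error.} A guinea pig, or any agent acting at a public tie, chooses $1$ iff $s_i\in\{h,h^\ast\}$. Her action therefore has likelihood ratio $\frac{\varepsilon+(1-\varepsilon)p}{(1-\varepsilon)(1-p)}$, which is \emph{strictly larger} than $LR(h)=p/(1-p)$ for every $\varepsilon>0$: pooling the conclusive signal into the action makes one action worth more than one ordinary signal.

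So at $|d|=1$ an agent with a contrary ordinary signal neither follows it nor is indifferent. She strictly follows the lead, and only a contrary conclusive signal changes her action, which then reveals the state. Cascades start at $|d|\ge 1$, not $|d|\ge 2$.

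Your rule fails already at $N=3$. Under it, agent $2$ in $\mathcal N^1$ acts on her own signal exactly like a guinea pig, so $\mathcal N^1$ and $\mathcal N^2$ would give the same payoff for every $\varepsilon$, contradicting the strict ranking in the proposition. In fact agent $2$ cascades on $a_1$, and $1-V_3(\mathcal N^1)=(1-\varepsilon)^3(1-p)$.

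The post-guinea-pig random walk of $d$ that you call the main obstacle does not exist. After $M$ guinea pigs there is either a strict majority, which starts a cascade immediately, or, for even $M$, a tie. After a tie, exactly one more signal-driven action occurs and then a cascade begins. This gives the closed form of Lemma~\ref{lem:independent-payoff} directly, with no ballot-type computation. Also, cascade-consistent actions are not uninformative: each rules out a contrary conclusive signal and multiplies the public likelihood ratio by $1/(1-\varepsilon)$ or $1-\varepsilon$.

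\textbf{Gap 2: Step~4 names the crux but gives no route to it.} Even with the correct payoffs, single crossing of each $\Delta_k$ in $\varepsilon$ does not by itself order the cutoffs. You need $\varepsilon_{k+1}(p)<\varepsilon_k(p)$, that is, that the adjacent ratios $J_{k+1}/J_k$ of the errors $1-V_N(\mathcal N^{2k})=(1-\varepsilon)^N J_k$ increase in $k$ (with $J_0$ standing for $M=1$).

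The paper gets this from strict log-convexity of $k\mapsto J_k$, as follows.
\begin{itemize}
\item Let $q=(1-\varepsilon)(1-p)$ denote the guinea-pig \emph{error} probability, let $T_k$ be the probability that at least $k+1$ of $2k+1$ independent Bernoulli$(q)$ draws occur, and let $C_j(q)=\binom{2j}{j}[q(1-q)]^j$.
\item The telescoping identity $T_k-T_{k+1}=\frac{1-2q}{2}C_{k+1}$ gives $(1-\varepsilon)^{2k}J_k=\sum_{r\ge 0}w_rC_{k+r}(q)$, with positive weights $w_r$ that do not depend on $k$.
\item Cauchy--Schwarz then yields $J_{k+1}^2<J_kJ_{k+2}$.
\item The same representation gives $(4-\frac{2}{k+1})(1-p)z\le J_{k+1}/J_k<4(1-p)z$ with $z=1/(1-\varepsilon)-(1-p)$, which is what delivers the limit in (ii).
\item Single crossing in $\varepsilon$ comes from monotonicity of $J_{k+1}/J_k$ in $z$, shown via ratios of polynomial coefficients.
\end{itemize}
Your remarks about implicit differentiation and vanishing geometric tails would need an argument of this kind to go through.
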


Proposition \ref{prop:independent-optimum} characterizes the optimal number of guinea pigs by a sequence of cutoffs.
This cutoff characterization highlights the key trade-off between the responsiveness effect and the overturning effect.
When conclusive signals are rare, preserving independent responses to ordinary signals is more valuable, and hence the star network is optimal.
As conclusive signals become more frequent, allowing more agents to observe preceding actions becomes more valuable, and the optimum passes through intermediate networks before reaching the complete network.

Part (i) shows how accuracy of ordinary signals shifts this balance.
Increasing $p$ raises each cutoff: more accurate ordinary signals make preserving independent responses more valuable, and thus a larger frequency of conclusive signals is required to favor fewer guinea pigs.
Figure \ref{fig:independent-optimum(parameter_region)} illustrates the cutoffs for $N=5$.
\begin{figure}[ht]
\centering
\begin{tikzpicture}
\begin{axis}[
    width=9cm, height=7cm,
    xmin=1/2, xmax=1, ymin=0, ymax=1,
    xtick=\empty,
    ytick=\empty,
    axis x line=bottom,
    axis y line=left,
    axis line style={draw=none},
    clip=false,
    every axis plot/.append style={line width=1pt},
    declare function={
        cutoffzero(\prob)=\prob*(2*\prob-1)/(\prob+2*(1-\prob)^2);
        cutoffone(\prob)=1-12*(1-\prob)/(2-4*(1-\prob)+sqrt((4*(1-\prob)-2)^2+24*(1-\prob)*(1-(1-\prob)^2))+12*(1-\prob)^2);
    },
]

\addplot[draw=none, name path=cutzero, domain=0.5:1, samples=201] {cutoffzero(x)};
\addplot[draw=none, name path=cutone, domain=0.5:1, samples=201] {cutoffone(x)};
\path[name path=top] (axis cs:0.5,1) -- (axis cs:1,1);
\path[name path=bottom] (axis cs:0.5,0) -- (axis cs:1,0);
\addplot[draw=none, fill=gray!50] fill between[of=cutzero and top];
\addplot[draw=none, fill=gray!80] fill between[of=cutzero and cutone];
\addplot[draw=none, fill=gray!10] fill between[of=cutone and bottom];

\addplot[black, very thick, domain=0.5:1, samples=201] {cutoffzero(x)};
\addplot[black, very thick, domain=0.5:1, samples=201] {cutoffone(x)};

\draw[very thick] (axis cs:1.0,0) -- (axis cs:1.0,1.0);
\draw[very thick] (axis cs:0.5,1.0) -- (axis cs:1.0,1.0);
\draw[->, very thick] (axis cs:0.5,0) -- (axis cs:1.05,0);
\draw[->, very thick] (axis cs:0.5,0) -- (axis cs:0.5,1.1);
\node[right] at (axis cs:1.05,0) {$p$};
\node[above] at (axis cs:0.5,1.1) {$\varepsilon$};
\node[below] at (axis cs:0.52,0) {$1/2$};
\node[left] at (axis cs:0.5,0.025) {$0$};
\node[below] at (axis cs:1,0) {$1$};
\node[left] at (axis cs:0.5,1) {$1$};

\node[align=center] at (axis cs:0.66,0.80) {Complete network\\$M=1$ is optimal};
\node[font=\small, rotate=40, inner sep=1pt] at (axis cs:0.78,{(cutoffzero(0.78)+cutoffone(0.78))/2}) {$M=2$};
\node[align=center] at (axis cs:0.87,0.18) {Star network\\$M=4$ is optimal};

\end{axis}
\end{tikzpicture}
\caption{Parameter regions for the optimal number of guinea pigs}
\label{fig:independent-optimum(parameter_region)}\label{fig:star_vs_complete(parameter_region)}
\begin{threeparttable}
\begin{tablenotes}
    \footnotesize
    \item[] \textit{Notes:}
    The figure shows the optimal network within $\{\mathcal N^M:1\leq M\leq4\}$ for $N=5$.
    The horizontal axis is the accuracy of ordinary signals $p$; the vertical axis is the frequency of conclusive signals $\varepsilon$.
    The solid curves are $\varepsilon_0(p)$ and $\varepsilon_1(p)$, defined by $J_1=J_0$ and $J_2=J_1$, respectively.
    The dark-gray region above $\varepsilon_0(p)$ has the complete network ($M=1$) as its unique optimizer.
    The diagonally hatched, medium-gray region between $\varepsilon_1(p)$ and $\varepsilon_0(p)$ has $M=2$ as its unique optimizer.
    The light-gray region below $\varepsilon_1(p)$ has the star network ($M=4$) as its unique optimizer.
    On either solid curve, the two adjacent choices of $M$ are both optimal.
    The endpoints are shown as limits; the parameter domain is $p\in(1/2,1)$ and $\varepsilon\in(0,1)$.
\end{tablenotes}
\end{threeparttable}
\end{figure}

Part (ii) shows that the trade-off does not vanish as the number of agents grows.
At or below the limiting cutoff, the star network is uniquely optimal for every odd $N$.
Above the limit, the optimal set of guinea-pig counts eventually remains unchanged as $N$ increases.

\subsection{Numerical Exercises} \label{sec: numerical}\label{sec:guinea-pigs-numerical}
Section \ref{sec: mixture} provides a sharp cutoff result within a deliberately simple information structure consisting of mixtures of binary signals with conclusive signals.
We now ask whether a similar pattern appears in richer, smooth families of information structures.
In this subsection, we numerically compare the networks in $\{\mathcal N^M:1\leq M\leq N-1\}$ under four families of private-belief distributions.
The truncated normal family provides unimodal densities, while the symmetric Beta family ranges from unimodal to U-shaped densities. The other two families are mixture distributions that generate W-shaped densities with substantial mass near both the prior and the endpoints.

With a slight abuse of notation, we relabel the private signal $s$ as the private belief induced by $s$, that is, $s=\pi(s\mid H)/[\pi(s\mid H)+\pi(s\mid L)]$.
The numerical exercises use four unconditional densities of private beliefs on $[0,1]$.
For each unconditional density $f$, Bayes plausibility and the common prior $1/2$ determine the corresponding conditional densities as $f_H(s)=2s f(s)$ and $f_L(s)=2(1-s)f(s)$.
First, the truncated normal density is
\[
f^{\text{N}}_{\sigma}(s)\coloneq\frac{\exp\!\left(-\frac{(s-1/2)^2}{2\sigma^2}\right)}{\int_0^1 \exp\!\left(-\frac{(t-1/2)^2}{2\sigma^2}\right)dt}.
\]
Second, the symmetric Beta density is
\[
f^{\text{B}}_{\alpha}(s)\coloneq\frac{s^{\frac{1}{\alpha}-1}(1-s)^{\frac{1}{\alpha}-1}}{\int_0^1 t^{\frac{1}{\alpha}-1}(1-t)^{\frac{1}{\alpha}-1}dt}.
\]
Third, the Beta-mixture density is
\[
f^{\text{BM}}_{\varepsilon}(s)\coloneq (1-\varepsilon)\,\mathrm{Beta}(s;10,10)+\frac{\varepsilon}{2}\,\mathrm{Beta}(s;0.3,5)+\frac{\varepsilon}{2}\,\mathrm{Beta}(s;5,0.3).
\]
Fourth, the contaminated log-likelihood-ratio normal density is 
\[
f^{\text{CLN}}_{\varepsilon}(s)\coloneq\frac{(1-\varepsilon)\phi(\log(s/(1-s));0,0.5^2)+\varepsilon\phi(\log(s/(1-s));0,4^2)}{s(1-s)},
\]
where $\phi(\cdot;\mu,\tau^2)$ denotes the normal density with mean $\mu$ and variance $\tau^2$.
In the truncated normal family, a larger $\sigma$ makes private beliefs more dispersed.
In the symmetric Beta family, a larger $\alpha$ makes private beliefs more dispersed, with a U-shaped density when $\alpha>1$.
In the two mixture families, a larger $\varepsilon$ increases the probability of private beliefs close to $0$ or $1$.
The shapes of these densities are summarized in Figure \ref{fig:density-shapes}.
\begin{figure}[ht]
    \centering
    \captionsetup{skip=4pt}
    \IfFileExists{guinea_pigs_densities.png}{
        \includegraphics[width=\linewidth,trim=0 14bp 0 0,clip]{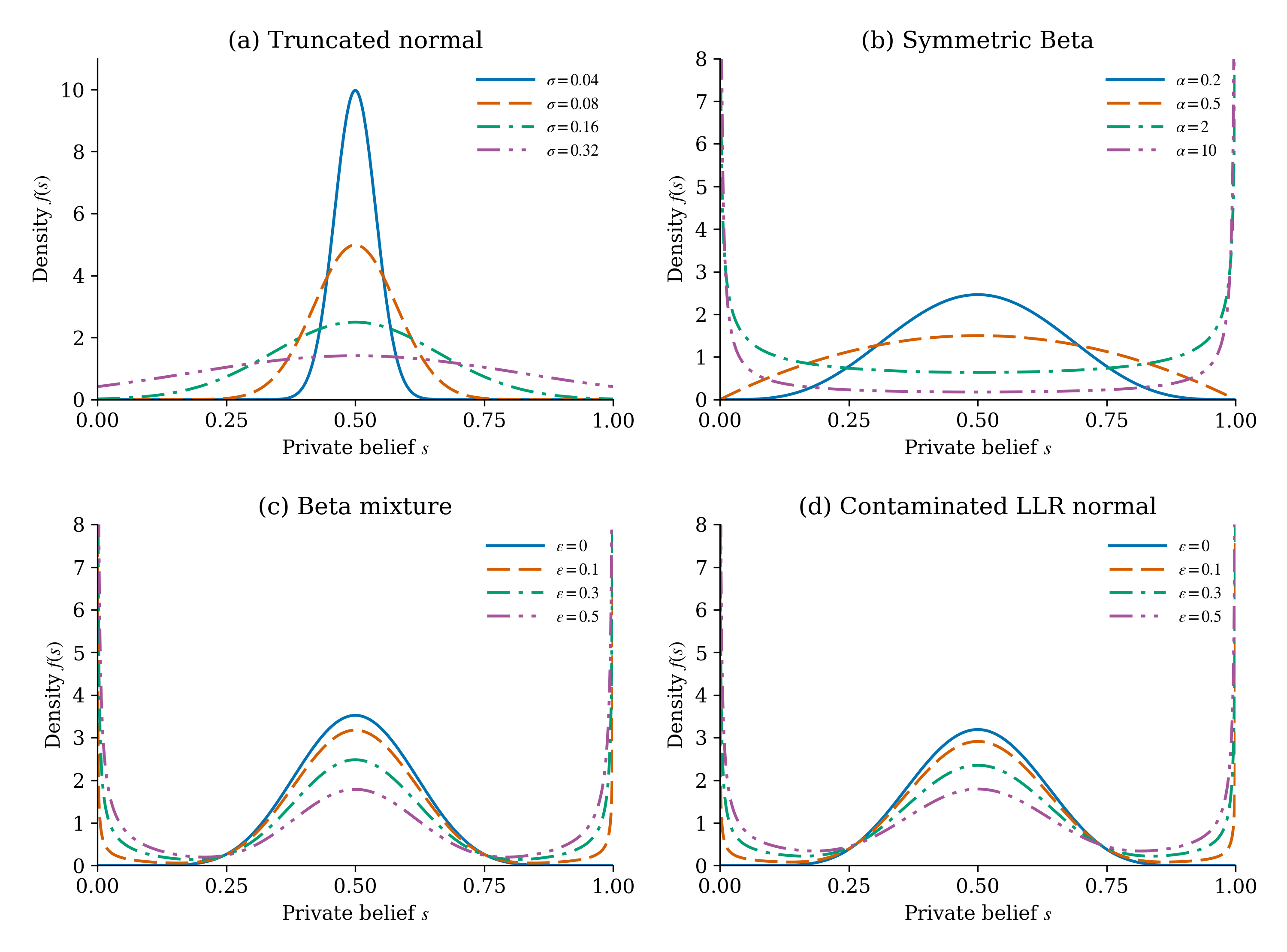}
    }{
        \fbox{\texttt{Upload guinea\_pigs\_densities.png}}
    }
    \caption{Private-belief densities used in the numerical exercises}
    \begin{tablenotes}
    \footnotesize
    \item[] \textit{Notes:} These figures plot representative unconditional densities of private beliefs. The parameter values are $\sigma\in\{0.04,0.08,0.16,0.32\}$ for the truncated normal distribution, $\alpha\in\{0.2,0.5,2,10\}$ for the symmetric Beta distribution, and $\varepsilon\in\{0,0.1,0.3,0.5\}$ for the two mixture distributions.
    \end{tablenotes}
    \label{fig:density-shapes}\label{fig:guinea-pigs-density}
\end{figure}

To see when each of the responsiveness and overturning effects dominates, we numerically compare all networks in $\{\mathcal N^M:1\leq M\leq N-1\}$.
Figure \ref{fig:guinea-pigs-optimal-M} plots the number of guinea pigs that maximizes agent $N$'s expected payoff for $N=3,5,7,9$.
The comparison reveals distinct patterns across these families.
In the truncated normal family, the star network is optimal at low dispersion and $M=N-2$ is optimal at higher dispersion on the reported grid.
Increasing $\sigma$ does not make the complete network optimal for all four horizons: in the uniform-density limit, the maximizing sets are $\{1,2\}$, $\{3,4\}$, $\{5,6\}$, and $\{7,8\}$, respectively.
In the symmetric Beta family, the optimum moves from the star network through intermediate networks to the complete network.
In the mixture families with W-shaped distributions, the optimum likewise shifts from the star network through intermediate networks to the complete network as the extreme-belief component becomes more frequent.

\begin{figure}[ht]
    \centering
    \captionsetup{skip=4pt}
    \IfFileExists{guinea_pigs_optimal_M.png}{
        \includegraphics[width=\linewidth,trim=0 17bp 0 0,clip]{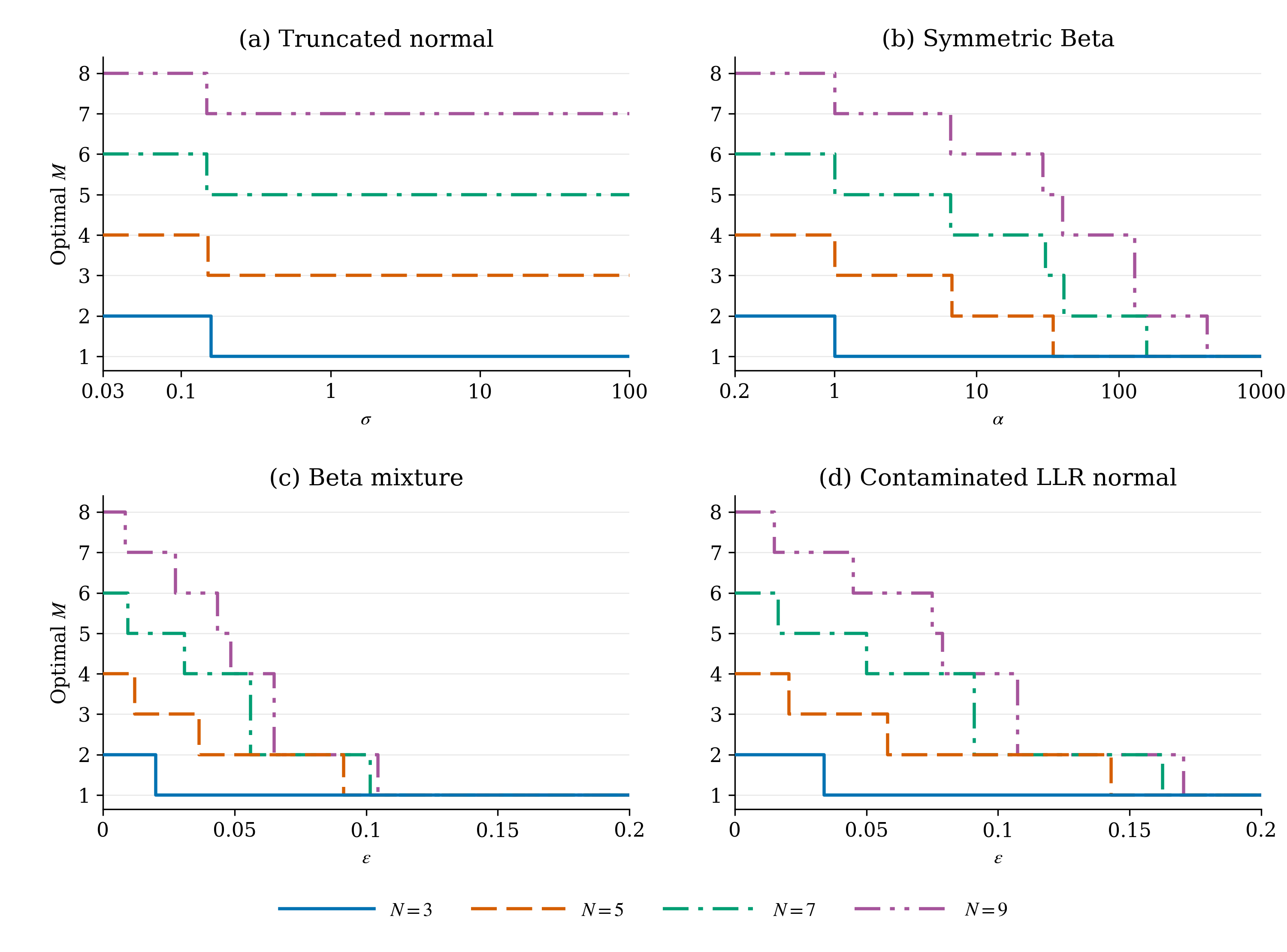}
    }{
        \fbox{\texttt{Upload guinea\_pigs\_optimal\_M.png}}
    }
    \caption{The optimal number of guinea pigs across private-belief distributions}
    \label{fig:guinea-pigs-optimal-M}\label{fig:comparison}
    \begin{tablenotes}
        \footnotesize
        \item[] \textit{Notes:} Each line plots the maximizing $M\in\{1,\ldots,N-1\}$ for one of $N=3,5,7,9$.
        There are 401 parameter values per family: logarithmically spaced on $\sigma\in[0.03,100]$ and $\alpha\in[0.2,1000]$, and linearly spaced on $\varepsilon\in[0,0.2]$.
        The first two horizontal axes are logarithmic; the mixture axes are linear.
        Colors and line patterns distinguish horizons: solid, long-dashed, dash-dot, and dash-dot-dot for $N=3,5,7,9$, respectively.
    \end{tablenotes}
\end{figure}

This pattern can be interpreted as follows.
Under the W-shaped distributions, endpoint realizations are highly informative and occur with non-negligible probability.
Thus, the benefit of the overturning effect is large.
In addition, an agent who observes no history pools private beliefs near the prior with those near an endpoint, and hence her action cannot distinguish between them.
Therefore, the benefit of the responsiveness effect is small.

Overall, the numerical exercise highlights the same trade-off as the theoretical analysis.
When information is primarily driven by moderately informative signals, preserving responsiveness is more important, and many guinea pigs are optimal.
By contrast, when highly informative signals occur with large probability, as in the W-shaped mixture distributions and the symmetric Beta distribution with high parameters, the ability to overturn incorrect public inferences becomes valuable, and the complete network becomes optimal on the reported grids.
The extension to intermediate networks qualifies the endpoint comparison: an intermediate network can strictly outperform both endpoints, and odd values $M>1$ can be optimal.
These exercises describe the reported finite horizons and grids; they do not establish a general monotonicity or asymptotic result for the smooth families.

\section{Conclusion} \label{sec: conclusion} 
Information inferred from others’ behavior is a fundamental source of information in society.
In social learning environments, the network structure, that is, who observes whose past actions, plays a central role in determining how dispersed private information is aggregated.
This paper studies this role by comparing networks in terms of the expected payoff of a given agent, or equivalently, at a given period.

Theorem \ref{thm:characterization} highlights the relationship between informational environments and network structure: a network can be uniquely optimal for some informational environment if and only if the focal agent observes every predecessor.
Theorems \ref{thm: star} and \ref{thm: complete} revisit two important implications of this characterization through transparent constructions for the star and complete networks.

The key mechanism underlying these results is a trade-off between the {\it responsiveness effect} and the {\it overturning effect}.
Sparse networks generate less informational interdependence among agents’ actions and thereby preserve the responsiveness of actions to private signals.
This responsiveness effect mitigates information cascades and improves aggregation especially when moderately informative signals are the main source of information.
By contrast, dense networks allow agents to observe richer histories and make it possible for highly informative signals to be identified and disseminated to subsequent agents through overturnings in the history.
This overturning effect helps correct incorrect inferences induced by early histories, leading to effective information aggregation especially when sufficiently informative signals occur with high probability and ordinary signals are less informative.

Our analysis also has implications for the network design in organizations and other social environments, where institutions, communication protocols, or platform rules may shape who observes whose actions. 
Although we treat the network structure as exogenously given, our results suggest that the value of an observation link cannot be evaluated independently of the underlying information structure. 
Thus, designing an observation network to improve the efficiency of information aggregation requires understanding what kinds of private information those actions are likely to transmit.

These findings naturally raise a broader design question: given an informational environment, which observation network best facilitates information aggregation?
A general characterization of the optimal network remains an important direction for future research.

\titleformat{\section}
		{\Large\bfseries}     
         {Appendix \thesection:}
        {0.5em}
        {}
        []

\renewcommand{\thetheorem}{A.\arabic{theorem}}
\setcounter{theorem}{0}

 \appendix 


\section{Proofs of Theorem \ref{thm:characterization} and Proposition \ref{prop:welfare-characterization}}

\subsection{Proof of Theorem \ref{thm:characterization}}
The sufficiency proceeds in two stages.
First, holding fixed that agent $N$ observes every predecessor, we construct priors and information structures inductively for agents $i=2,\ldots,N-1$ so that the target network strictly outperforms every network that first differs from it in agent $i$'s neighborhood.
Second, rare signals make observing every predecessor strictly better for agent $N$ than observing any proper subset of $\{1,\ldots,N-1\}$.
For any network $\mathcal N$ that differs from the target network before agent $N$, define its first deviation as the smallest $i<N$ such that $\mathcal N_i\neq\widehat{\mathcal N}_i$.
Throughout this subsection, all information structures have finite signal spaces.

For a network $\mathcal N$ and a profile of information structures $\Pi$, write $\mathcal N_{\leq k}:=(\mathcal N_1,\ldots,\mathcal N_k)$ and $\Pi_{\leq k}:=(\Pi_1,\ldots,\Pi_k)$.
Write $\alpha_{\leq k}^\theta(\cdot\mid\mathcal N_{\leq k},\Pi_{\leq k})$ for the equilibrium action distribution, suppressing the fixed prior profile and equilibrium strategy, and set
\[
 \boldsymbol\alpha_{\leq k}(\cdot\mid\mathcal N_{\leq k},\Pi_{\leq k}):=\bigl(\alpha_{\leq k}^L(\cdot\mid\mathcal N_{\leq k},\Pi_{\leq k}),\alpha_{\leq k}^H(\cdot\mid\mathcal N_{\leq k},\Pi_{\leq k})\bigr).
\]
For $\mathcal M\subseteq\{1,\ldots,k\}$, write $A^{\mathcal M}$ for the set of action vectors indexed by the agents in $\mathcal M$.
Given a pair $\boldsymbol\alpha_{\leq k}:=(\alpha_{\leq k}^L,\alpha_{\leq k}^H)$ of action distributions with full support on $A^k$, define the likelihood ratio associated with $\mathcal M\subseteq\{1,\ldots,k\}$ and $h\in A^{\mathcal M}$ by
\[
 \Lambda_k(\mathcal M,h\mid\boldsymbol\alpha_{\leq k})
 :=\frac{\mathbb P(a_{\mathcal M}=h\mid H)}{\mathbb P(a_{\mathcal M}=h\mid L)}
 =\frac{\sum_{\bm a\in A^k:\bm a_{\mathcal M}=h}\alpha_{\leq k}^H(\bm a)}{\sum_{\bm a\in A^k:\bm a_{\mathcal M}=h}\alpha_{\leq k}^L(\bm a)}.
\]
When $\mathcal M=\emptyset$, $\Lambda_k(\emptyset,\emptyset\mid\boldsymbol\alpha_{\leq k})=1$.
\begin{definition}
A pair $\boldsymbol\alpha_{\leq k}:=(\alpha_{\leq k}^L,\alpha_{\leq k}^H)$ of action distributions on $A^k$ is \emph{observation separating} if both distributions have full support and $\Lambda_k(\mathcal M,h\mid\boldsymbol\alpha_{\leq k})\neq\Lambda_k(\mathcal M',h'\mid\boldsymbol\alpha_{\leq k})$ for all $\mathcal M,\mathcal M'\subseteq\{1,\ldots,k\}$, $h\in A^{\mathcal M}$, and $h'\in A^{\mathcal M'}$ such that $(\mathcal M,h)\neq(\mathcal M',h')$.
\end{definition}

The following implementation result provides an open set of action probabilities that can avoid the conditions under which two likelihood ratios are equal.
For any $(\mu_{i},\Pi_i)$, define the induced probability of action $1$ under state $\theta$ after an observation with likelihood ratio $q>0$ by
\[
 \varphi_i^\theta(q;\mu_i,\Pi_i):=\pi_i\bigl(\{s_i\in S_i:\mu_i/(1-\mu_i)qLR_i(s_i)>1\}\mid\theta\bigr).
\]
\begin{lemma}
\label{lem:threshold-open-set}
Fix a prior $\mu_i\in(0,1)$ and likelihood ratios $0<q_1<\cdots<q_m$.
The image of the map $\Pi_i\longmapsto\bigl(\varphi_i^\theta(q_j;\mu_i,\Pi_i)\bigr)_{\theta\in\Theta,\,j\in\{1,\ldots,m\}}$ has nonempty interior in $\mathbb R^{2m}$.
\end{lemma}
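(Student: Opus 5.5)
The plan is to construct, for any target vector close enough to a well-chosen base point, a single finite information structure that realizes it exactly. Set $c:=(1-\mu_i)/\mu_i$ and $t_j:=c/q_j$, so that $t_1>t_2>\cdots>t_m>0$. By definition, $\varphi_i^\theta(q_j;\mu_i,\Pi_i)=\pi_i(\{s_i:LR_i(s_i)>t_j\}\mid\theta)$. As $j$ increases the threshold falls, so these upper-contour sets are nested and increasing in $j$.

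\textbf{The information structure.} Use a signal space $S_i=\{r_0,r_1,\ldots,r_m\}$ with one signal per gap between consecutive thresholds. We require
\[
LR_i(r_0)>t_1,\qquad t_{k+1}<LR_i(r_k)<t_k \ (1\le k\le m-1),\qquad 0<LR_i(r_m)<t_m .
\]
All inequalities are strict, so ties at a threshold never occur. Under these requirements, $LR_i(r_k)>t_j$ holds if and only if $k<j$. Hence
\[
\varphi_i^\theta(q_j)=\sum_{k<j}\pi_i(r_k\mid\theta),\qquad \theta\in\{L,H\}.
\]

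\textbf{The inverse map.} Take a target vector $(x^\theta_j)$ with $0<x^\theta_1<\cdots<x^\theta_m<1$ for each $\theta$, and set $x^\theta_0:=0$, $x^\theta_{m+1}:=1$. Define
\[
\pi_i(r_k\mid\theta):=x^\theta_{k+1}-x^\theta_k \quad (k=0,\ldots,m).
\]
These are positive and sum to one in each state, so they form a valid disclosure rule. The induced likelihood ratios are
\[
LR_i(r_k)=\frac{x^H_{k+1}-x^H_k}{x^L_{k+1}-x^L_k}.
\]
Whenever these ratios satisfy the required inequalities, the structure maps exactly to $(x^\theta_j)$. The required inequalities, together with the strict monotonicity constraints on $x^\theta$, are finitely many strict inequalities among continuous functions of $x$. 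They therefore define an open subset of $\mathbb R^{2m}$, and every point of that subset lies in the image. It remains only to show this open set is nonempty.

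\textbf{Nonemptiness (the main step).} Choose target values $\ell_0>t_1$, $\ell_k\in(t_{k+1},t_k)$ for $1\le k\le m-1$, and $0<\ell_m<t_m$. Take $\ell_0>\max\{t_1,1\}$ and $\ell_m<\min\{t_m,1\}$. Then $\min_k\ell_k<1<\max_k\ell_k$, so $1$ lies in the interior of the convex hull of $\{\ell_0,\ldots,\ell_m\}$. Consequently there are strictly positive weights $w_k$ with
\[
\sum_k w_k=1,\qquad \sum_k \ell_k w_k=1.
\]
For example, mix a small uniform weight with a suitable combination of $\ell_0$ and $\ell_m$. Set $\pi_i(r_k\mid L)=w_k$ and $\pi_i(r_k\mid H)=\ell_k w_k$. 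These define a valid information structure realizing the likelihood ratios $\ell_k$. Its image point $x^\theta_j=\sum_{k<j}\pi_i(r_k\mid\theta)$ is strictly increasing in $j$ and lies in $(0,1)$, so it belongs to the open set above. A neighborhood of this point is therefore contained in the image, which proves that the image has nonempty interior.
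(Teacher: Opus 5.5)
Your proof is correct and follows essentially the paper's argument. The paper also uses one signal category per cutoff interval $I_0,\ldots,I_m$, carves out the feasible category probabilities by finitely many strict inequalities, and proves nonemptiness with the same mean-one mixture of an endpoint likelihood ratio above one and one below one; it then maps these probabilities to $\varphi$-values by the cumulative-sum map $F$, which has unit Jacobian determinant. The only difference is cosmetic: you parameterize directly by the target vector and write $F^{-1}$ explicitly, so openness of the image is immediate without the open-map step.
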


\begin{proof}[Proof of Lemma~\ref{lem:threshold-open-set}]
Partition private likelihood ratios into the $m+1$ open cutoff intervals
\begin{align*}
 I_0=\left(\frac{1-\mu_i}{\mu_i q_1},\infty\right), \qquad
 I_k=\left(\frac{1-\mu_i}{\mu_i q_{k+1}},
             \frac{1-\mu_i}{\mu_i q_k}\right)
        \;\; (k=1,\ldots,m-1), \qquad
 I_m=\left(0,\frac{1-\mu_i}{\mu_i q_m}\right).
\end{align*}
A signal in $I_k$ induces action $1$ exactly at $q_{k+1},\ldots,q_m$.
Let $x_k^L$ and $x_k^H$ be the total probabilities of signals in category $I_k$ under states $L$ and $H$, respectively.
Consider the set
\begin{align*}
X=\left\{
\left((x_k^L)_{k=1}^m,(x_k^H)_{k=1}^m\right)\in\mathbb{R}^{2m}
\;\middle|\;
\begin{aligned}
&x_k^L>0,\quad x_k^H>0,\quad
  \frac{x_k^H}{x_k^L}\in I_k \quad (k=1,\dots,m),\\
&\sum_{k=1}^m x_k^L<1,\quad
  \sum_{k=1}^m x_k^H<1,\quad
  \frac{1-\sum_{k=1}^m x_k^H}{1-\sum_{k=1}^m x_k^L}\in I_0
\end{aligned}
\right\}.
\end{align*}
This is a nonempty open subset of $\mathbb R^{2m}$.
To see nonemptiness, choose $r_k\in I_k$ for every $k$, with $r_0>1>r_m$.
A positive mixture of $r_0$ and $r_m$ has mean one; assigning sufficiently small positive probabilities to the remaining $r_k$'s and adjusting the two endpoint probabilities preserves this equality and strict positivity.

Define $F:X\to\mathbb R^{2m}$ by $F((x_k^L)_{k=1}^m,(x_k^H)_{k=1}^m)=((1-\sum_{k=j}^m x_k^L)_{j=1}^m, (1-\sum_{k=j}^m x_k^H)_{j=1}^m)$.
For every point in $X$, there is a corresponding information structure such that $\varphi_i^L(q_j;\mu_i,\Pi_i)=1-\sum_{k\geq j}x_k^L$ and $\varphi_i^H(q_j;\mu_i,\Pi_i)=1-\sum_{k\geq j}x_k^H$ for every $j$.
Thus, $F(X)$ is contained in the image of the map in the statement.
The Jacobian of $F$ is nonsingular, with determinant $(-1)^{2m}=1$.
Hence, $F$ is an open map, and $F(X)$ is a nonempty open subset of $\mathbb R^{2m}$.
\end{proof}

Observation separation allows us to choose agent $i$'s prior so that she is indifferent after a target observation, making two nearby probe signals induce different actions there and the same action after every other observation.
The next lemma combines this local probe construction with a generic perturbation that preserves observation separation.
\begin{lemma}
\label{lem:generic-extension}
Take any $\mathcal N$ and $(\mu_j,\Pi_j)_{j<i}$ such that $\boldsymbol\alpha_{\leq i-1}(\cdot\mid\mathcal N_{\leq i-1},\Pi_{\leq i-1})$ is observation separating.
For any $h^*\in A^{\mathcal N_i}$ and $\eta>0$, there exists $(\mu_i,\Pi_i)$ such that (i) the signal space $S_i$ contains two distinct signals $p^-$ and $p^+$ such that $p^-$ induces action $0$ and $p^+$ induces action $1$ after $h^*$, while $p^-$ and $p^+$ induce the same action after every other observation, (ii) $(1/2)\sum_{s\in S_i}|\pi_i(s\mid L)-\pi_i(s\mid H)|<\eta$, and (iii) $\boldsymbol\alpha_{\leq i}(\cdot\mid\mathcal N_{\leq i},\Pi_{\leq i})$ is observation separating.
\end{lemma}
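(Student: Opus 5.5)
Let $m=|A^{\mathcal N_i}|$. For each observation $h\in A^{\mathcal N_i}$, let $q_h:=\Lambda_{i-1}(\mathcal N_i,h\mid\boldsymbol\alpha_{\leq i-1})$ be the public likelihood ratio it induces. Observation separation of $\boldsymbol\alpha_{\leq i-1}$ makes these $m$ numbers distinct, so we may order them as $q_1<\cdots<q_m$. The proof has three steps: a probe step giving (i), a closeness step giving (ii), and a genericity step giving (iii).

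\textbf{Step 1: the probe signals.} Choose the prior so that agent $i$ is exactly indifferent after $h^*$, i.e.\ $\mu_i/(1-\mu_i)\,q_{h^*}=1$. Put $LR_i(p^-)=1$, so that by the tie-breaking rule $p^-$ induces action $0$ after $h^*$. Put $LR_i(p^+)=1+\delta$, so that $p^+$ induces action $1$ after $h^*$. Every other observation $h$ has $q_h\neq q_{h^*}$. Hence for $\delta$ small enough, $q_hLR_i(p^-)$ and $q_hLR_i(p^+)$ lie strictly on the same side of the cutoff $(1-\mu_i)/\mu_i$, and $p^-,p^+$ induce the same action after $h$. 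This gives (i).

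\textbf{Step 2: closeness and full support.} Put almost all mass on $p^\pm$, whose likelihood ratios are close to one. Add a few extra signals of total probability at most $\kappa$ under each state. These include one signal with very large likelihood ratio, which induces action $1$ after every observation, and one with very small likelihood ratio, which induces action $0$ after every observation. The total variation distance is then $O(\delta+\kappa)<\eta$, which gives (ii). Because after every observation both actions have positive probability under both states, and $\boldsymbol\alpha_{\leq i-1}$ has full support, $\boldsymbol\alpha_{\leq i}$ also has full support. The remaining freedom lies in the masses $x_k^\theta$ of the small extra signals, with one signal placed in each cutoff interval $I_k$ as in the proof of Lemma~\ref{lem:threshold-open-set}. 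Repeating that Jacobian argument with the probe masses held fixed shows that the induced vector $\bigl(\varphi^\theta_i(q_j)\bigr)_{\theta,j}$ ranges over a nonempty open set $U\subset\mathbb R^{2m}$, while (i) and (ii) continue to hold throughout $U$.

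\textbf{Step 3: genericity.} Write $\phi^\theta_h:=\varphi_i^\theta(q_h)$. For $\mathcal M\ni i$, the ratio $\Lambda_i(\mathcal M,(g,a_i))$ is a ratio of polynomials in $\phi$ whose denominators are positive. Its numerator is $\sum\alpha^H_{\leq i-1}(\bm a)\,(\phi^H_{\bm a_{\mathcal N_i}})^{a_i}(1-\phi^H_{\bm a_{\mathcal N_i}})^{1-a_i}$, and its denominator is the analogous sum under $L$. Each forbidden equality $\Lambda_i(\mathcal M,h)=\Lambda_i(\mathcal M',h')$ becomes, after clearing denominators, a polynomial equation $P=0$. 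A nonzero polynomial cannot vanish on an open set, so it suffices to show each $P$ is not identically zero; then a point of $U$ avoiding the finitely many zero sets exists. If neither $\mathcal M$ nor $\mathcal M'$ contains $i$, the equality already fails by separation at stage $i-1$. To show nontriviality in the remaining cases, evaluate at the formal point $\phi^H_h=cx$, $\phi^L_h=x$ for all $h$. There the ratio factorizes as
\[
\Lambda_i(\mathcal M,(g,a_i))=\Lambda_{i-1}(\mathcal M\setminus\{i\},g)\cdot\rho(a_i),\qquad \rho(1)=c,\quad \rho(0)=\frac{1-cx}{1-x}.
\]
\begin{itemize}
\item At $c=1$ we have $\rho\equiv1$, so the comparison reduces to stage-$(i-1)$ ratios; separation handles every case with $(\mathcal M\setminus\{i\},g)\neq(\mathcal M'\setminus\{i\},g')$.
\item If the two sides differ only in $a_i$, the factor $c\neq(1-cx)/(1-x)$ for $c\neq1$ separates them.
\item If only one of $\mathcal M,\mathcal M'$ contains $i$, the side containing $i$ carries the factor $\rho(a_i)$, which moves with $c$ while the other side does not.
\end{itemize}
By continuity in $c$, a single $c$ close to $1$ handles all finitely many cases simultaneously.

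The main obstacle I expect is Step 2: verifying that the open-image argument of Lemma~\ref{lem:threshold-open-set} survives when the probe masses are fixed and all perturbations are confined to total mass $\kappa$. If this is delicate, I would first fall back on scaling the set $X$ from that proof by $\kappa$ and directly checking that the Jacobian stays nonsingular.
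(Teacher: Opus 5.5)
Your proposal is correct and follows essentially the paper's route. You use an indifference prior with two nearby probe signals, then a small-weight auxiliary component whose conditional action probabilities fill an open set as in Lemma~\ref{lem:threshold-open-set}, and finally genericity via nonvanishing of finitely many likelihood-ratio polynomials at a formal point. Your family $\phi^H=cx$, $\phi^L=x$ replaces the paper's two evaluations at $(\phi^H,\phi^L)=(1/2,1/2)$ and $(2/3,1/3)$.

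Two small remarks. First, the Step~2 worry disappears if, as in the paper, the probe component is itself Bayes-plausible, with $LR_i(p^\pm)=1\pm\varepsilon$ and equal $L$-masses. With your choice $LR_i(p^-)=1$, $LR_i(p^+)=1+\delta$, the extra signals must absorb the mass imbalance. The Bayes-plausible component is then mixed as $(1-\kappa)\Pi_i^p+\kappa\Pi_i$ on disjoint signal sets. This preserves every likelihood ratio, so the attainable set contains $(1-\kappa)\boldsymbol\varphi(\Pi_i^p)+\kappa\,\operatorname{Im}\boldsymbol\varphi$ and hence an open ball.

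Second, ``every other observation'' in (i) is meant over all pairs $(\mathcal M,h)$ through agent $i-1$, not only $h\in A^{\mathcal N_i}$, because Lemma~\ref{lem:one-step} uses it for alternative neighborhoods. Your Step~1 covers this verbatim, since observation separation makes $\Lambda_{i-1}(\mathcal M,h)\neq q_{h^*}$ for every such pair.
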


\begin{proof}[Proof of Lemma~\ref{lem:generic-extension}]
Given $\mathcal M\subseteq\{1,\ldots,i-1\}$ and $h\in A^{\mathcal M}$, let $q(\mathcal M,h):=\Lambda_{i-1}(\mathcal M,h\mid\boldsymbol\alpha_{\leq i-1}(\cdot\mid\mathcal N_{\leq i-1},\Pi_{\leq i-1}))$. Define $q^*:=q(\mathcal N_i,h^*)$.
Set $\mu_i=1/(1+q^*)$, $LR_i(p^-)=1-\varepsilon$, and $LR_i(p^+)=1+\varepsilon$.
Since $q^*\neq q(\mathcal M,h)$ for all $(\mathcal M,h)\neq(\mathcal N_i,h^*)$, for sufficiently small $\varepsilon\in(0,\eta)$,
\[
 \frac{\mu_i}{1-\mu_i}q^*LR_i(p^-)<1<
 \frac{\mu_i}{1-\mu_i}q^*LR_i(p^+),
\]
while the two posterior odds lie strictly on the same side of one after every other observation.
Fix such $\varepsilon$ and let $\pi_i^p(p^-\mid L)=\pi_i^p(p^+\mid L)=1/2$, $\pi_i^p(p^-\mid H)=(1-\varepsilon)/2$, and $\pi_i^p(p^+\mid H)=(1+\varepsilon)/2$. Define $\Pi_i^p=(\{p^-,p^+\},\pi_i^p)$.

Let $q_1<\cdots<q_m$ be the values of $q(\mathcal M,h)$ over all observations $(\mathcal M,h)$ through agent $i-1$.
By observation separation, these values are all distinct, so $q^*=q_{j^*}$ for some $j^*\in\{1,\ldots,m\}$.
By Lemma~\ref{lem:threshold-open-set}, there exist an information structure $\bar\Pi_i$, a vector $\bar{\boldsymbol\varphi}:=\boldsymbol\varphi(\bar\Pi_i)\in\mathbb R^{2m}$, and $r>0$ such that $B_r(\bar{\boldsymbol\varphi})\subseteq\operatorname{Im}\boldsymbol\varphi\cap(0,1)^{2m}$, where $\boldsymbol\varphi(\Pi_i):=(\varphi_i^\theta(q_j;\mu_i,\Pi_i))_{\theta\in\{L,H\},j=1,\ldots,m}$.

Fix $\delta>0$ sufficiently small that $\varepsilon/2+\delta<\eta$.
Consider the information structure $\Pi_i^\delta$ obtained by selecting $\Pi_i^p$ with probability $1-\delta$ and $\Pi_i$ with probability $\delta$, independently of the state, where the two components use disjoint signal spaces. 
Then, $\boldsymbol\varphi(\Pi_i^\delta)=(1-\delta)\boldsymbol\varphi(\Pi_i^p)+\delta\boldsymbol\varphi(\Pi_i)$.
Hence, as $\Pi_i$ varies over information structures, the set of attainable $\boldsymbol\varphi$-vectors contains the open ball $B_{\delta r}((1-\delta)\boldsymbol\varphi(\Pi_i^p)+\delta\bar{\boldsymbol\varphi})$.

The total variation distance between the two state-contingent signal distributions satisfies
\begin{align*}
\frac12\sum_{s\in S_i^\delta}
|\pi_i^\delta(s\mid L)-\pi_i^\delta(s\mid H)|
\leq\frac{\varepsilon}{2}+\delta<\eta.
\end{align*}

It remains to show that generic conditional action probabilities in this open set give observation separation on $A^i$.
Fix two distinct observations $(\mathcal N_{i+1},h)$ and $(\mathcal N'_{i+1},h')$ through agent $i$.
Equality of their likelihood ratios is equivalent to
\[
 \mathbb P(a_{\mathcal N_{i+1}}=h\mid H)\mathbb P(a_{\mathcal N'_{i+1}}=h'\mid L)
 -\mathbb P(a_{\mathcal N_{i+1}}=h\mid L)\mathbb P(a_{\mathcal N'_{i+1}}=h'\mid H)=0.
\]
For fixed probabilities of action histories through agent $i-1$, the left-hand side is a polynomial in the conditional action probabilities of agent $i$.
Let $(\mathcal N_{i+1}^-,h^-)=(\mathcal N_{i+1}\setminus\{i\},h_{\mathcal N_{i+1}\setminus\{i\}})$ and define $(\mathcal N_{i+1}^{\prime-},h^{\prime-})$ analogously.
If $(\mathcal N_{i+1}^-,h^-)\neq(\mathcal N_{i+1}^{\prime-},h^{\prime-})$, make $a_i$ state independent with probability $1/2$ of each action; the two likelihood ratios then reduce to the distinct $\Lambda_{i-1}(\mathcal N_{i+1}^-,h^-\mid\boldsymbol\alpha_{\leq i-1}(\cdot\mid\mathcal N_{\leq i-1},\Pi_{\leq i-1}))$ and $\Lambda_{i-1}(\mathcal N_{i+1}^{\prime-},h^{\prime-}\mid\boldsymbol\alpha_{\leq i-1}(\cdot\mid\mathcal N_{\leq i-1},\Pi_{\leq i-1}))$.
If $(\mathcal N_{i+1}^-,h^-)=(\mathcal N_{i+1}^{\prime-},h^{\prime-})$, evaluate the polynomial at the formal assignment $\mathbb P(a_i=1\mid L)=1/3$ and $\mathbb P(a_i=1\mid H)=2/3$ for every observation.
Omitting $a_i$, observing $a_i=1$, and observing $a_i=0$ then multiply the likelihood ratio by $1$, $2$, and $1/2$, respectively, so the two observations again have different likelihood ratios.
Thus, the polynomial in the above equation is not identically zero.

There are only finitely many pairs of observations.
A finite union of zero sets of nonzero polynomials has empty interior.
Hence, the implementable open set contains a point with full support at which every observation has a distinct likelihood ratio.
\end{proof}

We use the following uniform bound when agent $i$ is assigned an arbitrarily weak private signal.
Define
\[
 V_N^{\leq i}(\mathcal N_{\leq i},\Pi_N\mid\Pi_{\leq i})
 :=\sum_{\bm a^i\in A^i}\sum_{s_N\in S_N}
 \max\left\{
 \begin{aligned}
 &(1-\mu_N)\alpha_{\leq i}^L(\bm a^i\mid\mathcal N_{\leq i},\Pi_{\leq i})\pi_N(s_N\mid L),\\
 &\mu_N\alpha_{\leq i}^H(\bm a^i\mid\mathcal N_{\leq i},\Pi_{\leq i})\pi_N(s_N\mid H)
 \end{aligned}
 \right\}.
\]
Thus, $V_N^{\leq i}$ is agent $N$'s expected payoff when she observes the complete history through agent $i$ and her current private signal.

The next two lemmas bound changes in this value directly from $\ell_1$ differences.
\begin{lemma}
\label{lem:weak-appendage}
Fix $i\in\{2,\ldots,N-1\}$, $\mathcal N_{\leq i}$, and $\Pi_{\leq i}$.
If $(1/2)\sum_{s_i\in S_i}|\pi_i(s_i\mid L)-\pi_i(s_i\mid H)|\leq\eta$, then, for every $\Pi_N$, $V_N^{\leq i}(\mathcal N_{\leq i},\Pi_N\mid\Pi_{\leq i})-V_N^{\leq i-1}(\mathcal N_{\leq i-1},\Pi_N\mid\Pi_{\leq i-1})\leq2\eta$.
\end{lemma}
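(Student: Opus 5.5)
We will show that appending agent $i$'s action to the history raises the value of observing the complete history by at most twice the total-variation distance between agent $i$'s two state-contingent signal distributions. Fix $\bm a^{i-1}\in A^{i-1}$ and $s_N$. We write $\alpha^\theta_{\leq i}(\bm a^{i-1},a_i)=\alpha^\theta_{\leq i-1}(\bm a^{i-1})\,\beta^\theta(a_i\mid\bm a^{i-1})$. Here $\beta^\theta(a_i\mid\bm a^{i-1})=\pi_i(\{s_i:\sigma_i((a_j)_{j\in\mathcal N_i},s_i)=a_i\}\mid\theta)$ is the conditional probability of agent $i$'s action given the history, which depends on the history only through $(a_j)_{j\in\mathcal N_i}$. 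Using $\max\{x,y\}\leq\max\{x',y'\}+|x-x'|+|y-y'|$, we will compare each term $\max\{(1-\mu_N)\alpha^L_{\leq i-1}\beta^L\pi_N(s_N\mid L),\ \mu_N\alpha^H_{\leq i-1}\beta^H\pi_N(s_N\mid H)\}$ with the version in which $\beta^H(a_i\mid\bm a^{i-1})$ is replaced by $\beta^L(a_i\mid\bm a^{i-1})$.

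After this replacement, both arguments of the max share the factor $\beta^L(a_i\mid\bm a^{i-1})$. Summing over $a_i\in A$ then gives exactly the corresponding term of $V_N^{\leq i-1}$, since $\sum_{a_i}\beta^L(a_i\mid\bm a^{i-1})=1$. The error from the replacement is at most $\mu_N\alpha^H_{\leq i-1}(\bm a^{i-1})\pi_N(s_N\mid H)\,|\beta^H(a_i\mid\bm a^{i-1})-\beta^L(a_i\mid\bm a^{i-1})|$. Only one argument changed, so only one absolute term appears; alternatively one may bound the max by the max of the modified terms plus the nonnegative difference.

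Next we bound the error. For each fixed observation, $\sum_{a_i}|\beta^H(a_i)-\beta^L(a_i)|=2|\beta^H(1)-\beta^L(1)|$. Because $\beta^\theta(1)$ is the $\pi_i(\cdot\mid\theta)$-probability of a single set of signals, this quantity is at most $2\cdot\frac12\sum_{s_i}|\pi_i(s_i\mid L)-\pi_i(s_i\mid H)|\leq2\eta$. Summing the error over $a_i$, then $s_N$, then $\bm a^{i-1}$, and using $\sum_{s_N}\pi_N(s_N\mid H)=1$, $\sum_{\bm a^{i-1}}\alpha^H_{\leq i-1}(\bm a^{i-1})=1$ and $\mu_N\leq1$, bounds the total by $2\eta\mu_N\leq2\eta$. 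This yields $V_N^{\leq i}-V_N^{\leq i-1}\leq2\eta$.

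The main point of care is that agent $i$'s strategy depends on the history. The total-variation bound must therefore be applied separately for each observation $(a_j)_{j\in\mathcal N_i}$, and this works because the bound is uniform over events in $S_i$. No further use is made of the equilibrium structure, so the bound holds for any strategy of agent $i$ and any $\Pi_N$. The remaining steps are routine bookkeeping.
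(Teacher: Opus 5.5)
Your proposal is correct and follows essentially the same route as the paper: replace agent $i$'s state-$H$ action kernel by the state-$L$ kernel so that $a_i$ becomes uninformative given $\bm a^{i-1}$ (recovering $V_N^{\leq i-1}$), control the resulting change with $|\max\{x,y\}-\max\{x,y'\}|\leq|y-y'|$, and bound the kernel difference history by history by the total-variation distance of $\pi_i$. Your explicit binary-action bound $\sum_{a_i}|\beta^H(a_i\mid\bm a^{i-1})-\beta^L(a_i\mid\bm a^{i-1})|=2|\beta^H(1\mid\bm a^{i-1})-\beta^L(1\mid\bm a^{i-1})|\leq2\eta$ is just a direct verification of the data-processing step that the paper invokes.
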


\begin{proof}[Proof of Lemma~\ref{lem:weak-appendage}]
For each $\bm a^{i-1}$, let $K_\theta(\cdot\mid\bm a^{i-1})$ be the distribution of $a_i$ induced by agent $i$'s decision rule under state $\theta$.
Replace $K_H$ with $K_L$.
Under the resulting experiment, $a_i$ is conditionally independent of the state given $\bm a^{i-1}$, so its value equals $V_N^{\leq i-1}(\mathcal N_{\leq i-1},\Pi_N\mid\Pi_{\leq i-1})$.
Using $|\max\{x,y\}-\max\{x,y'\}|\leq|y-y'|$, the resulting change in value is at most
\[
\mu_N\sum_{\bm a^{i-1}}
\alpha_{\leq i-1}^H(\bm a^{i-1}\mid\mathcal N_{\leq i-1},\Pi_{\leq i-1})
\sum_{a_i}|K_H(a_i\mid\bm a^{i-1})-K_L(a_i\mid\bm a^{i-1})|.
\]
Since the conditional action is a garbling of agent $i$'s private signal, data processing gives
\[
\sum_{a_i}|K_H(a_i\mid\bm a^{i-1})-K_L(a_i\mid\bm a^{i-1})|
\leq\sum_{s_i\in S_i}|\pi_i(s_i\mid H)-\pi_i(s_i\mid L)|\leq2\eta.
\]
The result follows because $\mu_N\leq1$ and $\sum_{\bm a^{i-1}}\alpha_{\leq i-1}^H(\bm a^{i-1}\mid\mathcal N_{\leq i-1},\Pi_{\leq i-1})=1$.
\end{proof}
We also bound rare perturbations of $\Pi_i$ in comparisons through agent $i$.
\begin{lemma}
\label{lem:rare-mixture}
Fix $\Pi_{\leq i}$, and let $\Pi_i^\delta$ be obtained by selecting $\Pi_i$ with probability $1-\delta$ and an arbitrary information structure on a disjoint signal set with probability $\delta$, independently of the state.
Set $\Pi_{\leq i}^\delta:=(\Pi_{\leq i-1},\Pi_i^\delta)$.
Then, for every $\mathcal N_{\leq i}$, $\left|V_N^{\leq i}(\mathcal N_{\leq i},\Pi_N\mid\Pi_{\leq i}^\delta)-V_N^{\leq i}(\mathcal N_{\leq i},\Pi_N\mid\Pi_{\leq i})\right|\leq2\delta$.
\end{lemma}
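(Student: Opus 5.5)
The plan is a coupling argument that mirrors the proof of Lemma~\ref{lem:weak-appendage}. First, fix $\mathcal N_{\leq i}$ and note that agents $1,\ldots,i-1$ are unaffected by the change in $\Pi_i$. Their equilibrium strategies are the same under $\Pi_{\leq i}$ and $\Pi^\delta_{\leq i}$, and so is $\alpha^\theta_{\leq i-1}(\cdot\mid\mathcal N_{\leq i-1},\Pi_{\leq i-1})$. For each $\bm a^{i-1}$ and $\theta$, let $K_\theta(\cdot\mid\bm a^{i-1})$ and $K^\delta_\theta(\cdot\mid\bm a^{i-1})$ denote the conditional distribution of $a_i$ induced by agent $i$'s equilibrium decision rule under $\Pi_i$ and $\Pi_i^\delta$, respectively. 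Here $\bm a^{i-1}$ enters only through $a_{\mathcal N_i}$.

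The key observation is that agent $i$'s optimal action after a signal $s_i$ drawn from the $\Pi_i$-component of $\Pi_i^\delta$ is the same as under $\Pi_i$ itself. The mixture is state independent and the two components use disjoint signal sets. Hence $LR_i(s_i)$ is unchanged, since both numerator and denominator are scaled by $1-\delta$. Agent $i$'s prior and her observation's likelihood ratio are also unchanged. Therefore the posterior odds, and by the tie-breaking rule the action, coincide. Consequently
\[
K^\delta_\theta(\cdot\mid\bm a^{i-1})=(1-\delta)K_\theta(\cdot\mid\bm a^{i-1})+\delta G_\theta(\cdot\mid\bm a^{i-1})
\]
for some distribution $G_\theta$ generated by the rare component. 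This gives $\sum_{a_i}|K^\delta_\theta(a_i\mid\bm a^{i-1})-K_\theta(a_i\mid\bm a^{i-1})|\leq2\delta$ for each $\theta$ and $\bm a^{i-1}$.

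Next, write
\[
V_N^{\leq i}=\sum_{\bm a^{i-1}}\sum_{a_i}\sum_{s_N}\max\bigl\{(1-\mu_N)\alpha^L_{\leq i-1}(\bm a^{i-1})K_L(a_i\mid\bm a^{i-1})\pi_N(s_N\mid L),\ \mu_N\alpha^H_{\leq i-1}(\bm a^{i-1})K_H(a_i\mid\bm a^{i-1})\pi_N(s_N\mid H)\bigr\}.
\]
Use $|\max\{x,y\}-\max\{x',y'\}|\leq|x-x'|+|y-y'|$ and sum over $s_N$ and $\bm a^{i-1}$. The difference is then at most $(1-\mu_N)\cdot2\delta+\mu_N\cdot2\delta=2\delta$, because $\sum_{\bm a^{i-1}}\alpha^\theta_{\leq i-1}(\bm a^{i-1})=1$ and $\sum_{s_N}\pi_N(s_N\mid\theta)=1$.

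The only delicate point is the claim that actions after the original-component signals are unchanged. This requires checking that the observation likelihood ratio faced by agent $i$ does not depend on $\Pi_i$, which holds because agent $i$ observes only predecessors' actions. It also requires that the tie-breaking rule is applied to identical posterior odds. Everything else is the same bookkeeping as in Lemma~\ref{lem:weak-appendage}, and no full-support assumption is needed.
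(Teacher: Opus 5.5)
Your proof is correct and follows essentially the same route as the paper. The paper bounds the $\ell_1$ distance between the state-contingent distributions of $\bm a^i$ by $2\delta$ and then applies the same $\max$-inequality. You instead factor that bound through the conditional kernels of $a_i$, and you make explicit what the paper leaves implicit: signals from the original component induce unchanged actions, because $LR_i$, $\mu_i$, and the predecessors' strategies are all unaffected by the mixture.
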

\begin{proof}[Proof of Lemma~\ref{lem:rare-mixture}]
For $\theta\in\{L,H\}$, let $\alpha_\theta^\delta$ and $\alpha_\theta$ denote the distributions of $\bm a^i$ under $\Pi_{\leq i}^\delta$ and $\Pi_{\leq i}$, respectively. Since $\Pi_i^\delta$ selects the original component with probability $1-\delta$, we have $\sum_{\bm a^i} |\alpha_\theta^\delta(\bm a^i)-\alpha_\theta(\bm a^i)|\leq 2\delta$.
Then,
\begin{align*}
& \left|V_N^{\leq i}(\mathcal N_{\leq i},\Pi_N\mid\Pi_{\leq i}^\delta)- V_N^{\leq i}(\mathcal N_{\leq i},\Pi_N\mid\Pi_{\leq i})\right|\\
& \leq (1-\mu_N)\sum_{\bm a^i} |\alpha_L^\delta(\bm a^i)-\alpha_L(\bm a^i)| + \mu_N\sum_{\bm a^i} |\alpha_H^\delta(\bm a^i)-\alpha_H(\bm a^i)|\\
& \leq 2(1-\mu_N)\delta+2\mu_N\delta  = 2\delta.
\end{align*}
\end{proof}

The following lemma extends the strict advantage of $\widehat{\mathcal N}_{\leq i-1}$ to $\widehat{\mathcal N}_{\leq i}$ while preserving observation separation.
\begin{lemma}
\label{lem:one-step}
Fix $i\in\{2,\ldots,N-1\}$.
Suppose that, under $(\mu_j,\Pi_j)_{j<i}$, there exists an information structure $\Pi_N^0$ containing conclusive signals for both states such that (i) $\boldsymbol\alpha_{\leq i-1}(\cdot\mid\widehat{\mathcal N}_{\leq i-1},\Pi_{\leq i-1})$ is observation separating and (ii), for every $\mathcal N$ whose first deviation from $\widehat{\mathcal N}$ is at most $i-1$, $V_N^{\leq i-1}(\widehat{\mathcal N}_{\leq i-1},\Pi_N^0\mid\Pi_{\leq i-1})>V_N^{\leq i-1}(\mathcal N_{\leq i-1},\Pi_N^0\mid\Pi_{\leq i-1})$. 
Then, there exist $(\mu_i,\Pi_i)$ and a $\Pi_N$, which is a modification of $\Pi_N^0$, retaining conclusive signals for both states, such that (i) $\boldsymbol\alpha_{\leq i}(\cdot\mid\widehat{\mathcal N}_{\leq i},\Pi_{\leq i})$ is observation separating and (ii), for every $\mathcal N$ whose first deviation from $\widehat{\mathcal N}$ is at most $i$, $V_N^{\leq i}(\widehat{\mathcal N}_{\leq i},\Pi_N\mid\Pi_{\leq i})
 >V_N^{\leq i}(\mathcal N_{\leq i},\Pi_N\mid\Pi_{\leq i})$.
\end{lemma}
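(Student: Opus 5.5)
The plan is to build agent $i$'s environment from Lemma~\ref{lem:generic-extension} and to modify $\Pi_N^0$ by mixing in a rare \emph{test signal} $t_i$. The parameters are fixed in this order: first the test-signal weight $\kappa$, then the probe strength $\eta$ (equivalently the $\varepsilon$ of the probe), then the shape of $t_i$, and finally the mixing weight $\delta$ of the generic component. The central point is that $V_N^{\leq i}$ is affine in $\Pi_N$ under state-independent mixing. That is, if $\Pi_N=(1-\kappa)\Pi_N^0\oplus\kappa\,T$ with disjoint signal sets, then $V_N^{\leq i}(\cdot,\Pi_N\mid\cdot)=(1-\kappa)V_N^{\leq i}(\cdot,\Pi_N^0\mid\cdot)+\kappa V_N^{\leq i}(\cdot,T\mid\cdot)$. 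Consequently $\Pi_N$ keeps the conclusive signals of $\Pi_N^0$, and the finitely many strict gaps from hypothesis (ii) survive for all sufficiently small $\kappa$.

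\textbf{Step 1 (choice of probe and test history).} Fix $h^*\in A^{\widehat{\mathcal N}_i}$ with positive probability under both states; this is possible because $\boldsymbol\alpha_{\leq i-1}$ has full support. Apply Lemma~\ref{lem:generic-extension} with this $h^*$ to obtain $(\mu_i,\Pi_i)$. Its structure is: with probability $1-\delta$ it draws from the probe component $\{p^-,p^+\}$ with $LR_i(p^\pm)=1\pm\varepsilon$, and with probability $\delta$ from a generic component. This yields observation separation through agent $i$, which is conclusion (i).

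\textbf{Step 2 (preserving earlier comparisons).} Take any $\mathcal N$ whose first deviation is at most $i-1$. Since $\widehat{\mathcal N}_{\leq i}$ reveals $a_i$ in addition to the history through $i-1$, we have $V_N^{\leq i}(\widehat{\mathcal N}_{\leq i})\geq V_N^{\leq i-1}(\widehat{\mathcal N}_{\leq i-1})$. Lemma~\ref{lem:weak-appendage} gives $V_N^{\leq i}(\mathcal N_{\leq i})\leq V_N^{\leq i-1}(\mathcal N_{\leq i-1})+2\eta$. Choosing $\eta$ below half of the minimal (post-$\kappa$) gap therefore preserves every strict inequality for first deviations before $i$.

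\textbf{Step 3 (networks first deviating at $i$).} Here $\mathcal N_{\leq i-1}=\widehat{\mathcal N}_{\leq i-1}$ and $\mathcal N_i=\mathcal M\neq\widehat{\mathcal N}_i$.

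On the probe component, agent $i$ under $\mathcal M$ sees an observation $(\mathcal M,h)\neq(\widehat{\mathcal N}_i,h^*)$. By part (i) of Lemma~\ref{lem:generic-extension}, $p^-$ and $p^+$ therefore induce the same action, so $a_i$ is a deterministic function of the history through $i-1$ and carries no extra information. Lemma~\ref{lem:rare-mixture} then gives $V_N^{\leq i}(\mathcal N_{\leq i},\Pi_N)\leq V_N^{\leq i-1}(\widehat{\mathcal N}_{\leq i-1},\Pi_N)+2\delta$.

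Under the target, on the event $a_{\widehat{\mathcal N}_i}=h^*$ the action $a_i$ reveals whether $s_i=p^+$ or $p^-$. I would design $T$ to consist of a single signal $t_i$. Its likelihood ratio is chosen so that agent $N$, after the positive-probability full history $\bm a^{i-1}$ consistent with $h^*$ and with $s_i$ unknown within $\{p^-,p^+\}$, is exactly at the indifference point. Knowing $s_i$ then flips her optimal action on one side, which gives a strict gain $c(\varepsilon)>0$ of order $\varepsilon\,\mathbb P(\bm a^{i-1})$. Observation separation ensures that the other histories carry only weak gains or losses, which are nonnegative in any case because the target observes more.

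Hence the target gains at least $\kappa c(\varepsilon)-2\delta$ over $V_N^{\leq i-1}$, while every such alternative gains at most $2\delta$. Choosing $\delta<\kappa c(\varepsilon)/4$ closes the comparison. Since $\delta$ is chosen last, this does not disturb Steps 1 and 2: the perturbation in Lemma~\ref{lem:generic-extension} can be taken with arbitrarily small $\delta$.

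\textbf{Main obstacle.} The main obstacle is the order of quantifiers. The gain $c(\varepsilon)$ shrinks with the probe strength, while $t_i$ must be tuned to agent $i$'s probe and $\delta$ must be small relative to $\kappa c(\varepsilon)$. I would verify that the indifference-tuned $t_i$ can be chosen after $\varepsilon$ but before $\delta$, and that observation separation, which is stated for generic $\delta$-components, remains attainable for arbitrarily small $\delta$. The lemma's proof gives this, since the open ball has radius $\delta r>0$ for every $\delta$.
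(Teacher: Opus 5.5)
Your proposal is correct and follows essentially the paper's route. It uses probes from Lemma~\ref{lem:generic-extension}, a rare test signal $t_i$ tuned to make agent $N$ indifferent at a full history $\bm a^*$ consistent with $h^*$, Lemma~\ref{lem:weak-appendage} for deviations before $i$, and Lemma~\ref{lem:rare-mixture} to absorb the generic component with $\delta$ chosen last; your explicit $\pm2\delta$ bookkeeping in Step 3 is tighter than the paper's informal ``not offset elsewhere'' remark.

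One small fix is needed. Under state-independent mixing, a one-signal $T$ necessarily has $LR_N(t_i)=1$. Either let $T$ also contain conclusive signals for both states, or, as the paper does, take $t_i$'s state-dependent mass from the conclusive signals of $\Pi_N^0$. Conclusive signals contribute identically under every network, so the rest of your argument goes through unchanged.
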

\begin{proof}[Proof of Lemma~\ref{lem:one-step}]
If there exists a network whose first deviation from $\widehat{\mathcal N}$ is at most $i-1$, let $3\eta>0$ be the minimum of
\begin{align*}
V_N^{\leq i-1}(\widehat{\mathcal N}_{\leq i-1},\Pi_N^0\mid\Pi_{\leq i-1})
-
V_N^{\leq i-1}(\mathcal N_{\leq i-1},\Pi_N^0\mid\Pi_{\leq i-1})
\end{align*}
over all such $\mathcal N_{\leq i-1}$; otherwise, fix any $\eta>0$.
In the former case, $\eta>0$ by the assumption.

Fix any $h^*\in A^{\widehat{\mathcal N}_i}$. 
Use the probe construction in the proof of Lemma~\ref{lem:generic-extension} to choose $\mu_i$ and a probe component $\Pi_i^p$ with signals $p^-$ and $p^+$ such that, for some sufficiently small $\varepsilon>0$, $LR_i(p^-)=1-\varepsilon$, $LR_i(p^+)=1+\varepsilon$, and $(1/2)\sum_{s_i\in S_i}|\pi_i^p(s_i|L)-\pi_i^p(s_i|H)|<\eta/2.$

Write $\bm a^{i-1}=(a_1,\ldots,a_{i-1})$. 
Since the action distribution under $\widehat{\mathcal N}_{\leq i-1}$ is full support, there exists $\bm a^*\in A^{i-1}$ such that $a^*_{\widehat{\mathcal N}_i}=h^*.$
Let
\begin{align*}
\kappa^*=\Lambda_{i-1}\left(\{1,\ldots,i-1\},\bm a^*\mid\boldsymbol\alpha_{\leq i-1}(\cdot\mid\widehat{\mathcal N}_{\leq i-1},\Pi_{\leq i-1})\right).
\end{align*}
We add a test signal $t_i$ to agent $N$ satisfying $LR_N(t_i)=(1-\mu_N)/(\mu_N\kappa^*)$.

Consider now a network $\mathcal N$ whose first deviation from $\widehat{\mathcal N}$ is exactly $i$. 
The distribution of $\bm a^{i-1}$ is then the same under $\mathcal N_{\leq i}$ and $\widehat{\mathcal N}_{\leq i}$. 
Conditional on $(\bm a^*,t_i)$, the two probe signals $p^-$ and $p^+$ therefore induce different actions under $\widehat{\mathcal N}_i$, while they induce the same action under $\mathcal N_i$. Under $\mathcal N_i$, the conditional expected payoff from the two probe signals is $1/2$, whereas under $\widehat{\mathcal N}_i$, each signal induces the action that is more likely to be correct, and hence the conditional expected payoff is strictly greater than $1/2$. Since under $\mathcal N_i$ the action $a_i$ is a state-independent garbling of $\bm a^{i-1}$, this gain is not offset elsewhere. Since $(\bm a^*,t_i)$ occurs with positive probability, this yields a strictly positive gain for $\widehat{\mathcal N}_i$ relative to $\mathcal N_i$.

We now give $t_i$ a small positive probability, taking the corresponding probability from the conclusive signals of $\Pi_N^0$. 
Denote the resulting information structure by $\Pi_N$. 
This preserves conclusive signals for both states. 
Since there are only finitely many networks whose first deviation from $\widehat{\mathcal N}$ is at most $i-1$, the probability of $t_i$ can be chosen sufficiently small so that
\begin{align*}
V_N^{\leq i-1}(\widehat{\mathcal N}_{\leq i-1},\Pi_N \mid\Pi_{\leq i-1})
-
V_N^{\leq i-1}(\mathcal N_{\leq i-1},\Pi_N \mid\Pi_{\leq i-1})
> 2\eta
\end{align*}
for every such $\mathcal N$.

We next add a generic perturbation to the probe component. 
Let $\Pi_i=(1-\delta)\Pi_i^p+\delta\Pi_i^g$, where $\Pi_i^g$ is supported on a signal space disjoint from that of $\Pi_i^p$. 
By the open-set argument in the proof of Lemma~\ref{lem:generic-extension} and Lemma~\ref{lem:rare-mixture}, $\Pi_i^g$ and $\delta>0$ can be chosen so that $\boldsymbol\alpha_{\leq i}(\cdot\mid\widehat{\mathcal N}_{\leq i},\Pi_{\leq i})$ is observation separating, $(1/2)\sum_{s_i\in S_i}|\pi_i(s_i|L)-\pi_i(s_i|H)|<\eta,$ and 
\begin{align*}
V_N^{\leq i}(\widehat{\mathcal N}_{\leq i},\Pi_N \mid\Pi_{\leq i})
-
V_N^{\leq i}(\mathcal N_{\leq i},\Pi_N \mid\Pi_{\leq i})
>0,
\end{align*}
for every $\mathcal N_{\leq i}$ whose first deviation from $\widehat{\mathcal N}$ is exactly $i$. 
Take $\Pi_N$ and $\Pi_i$ as above.

It remains to consider networks whose first deviation from $\widehat{\mathcal N}$ is at most $i-1$. 
In this case, it follows from Lemma~\ref{lem:weak-appendage} that
\begin{align*}
& V_N^{\leq i}(\mathcal N_{\leq i},\Pi_N\mid\Pi_{\leq i})- V_N^{\leq i}(\widehat{\mathcal N}_{\leq i},\Pi_N\mid\Pi_{\leq i}) \\
& \leq  V_N^{\leq i}(\mathcal N_{\leq i},\Pi_N\mid\Pi_{\leq i}) - V_N^{\leq i-1}(\mathcal N_{\leq i-1},\Pi_N\mid\Pi_{\leq i-1}) \\
& \qquad \qquad \qquad \qquad + V_N^{\leq i-1}(\mathcal N_{\leq i-1},\Pi_N\mid\Pi_{\leq i-1})-V_N^{\leq i-1}(\widehat{\mathcal N}_{\leq i-1},\Pi_N\mid\Pi_{\leq i-1}) \\
& \leq \sum_{s_i\in S_i}|\pi_i(s_i|L)-\pi_i(s_i|H)|+ V_N^{\leq i-1}(\mathcal N_{\leq i-1},\Pi_N\mid\Pi_{\leq i-1})-V_N^{\leq i-1}(\widehat{\mathcal N}_{\leq i-1},\Pi_N\mid\Pi_{\leq i-1}) \\
& < \sum_{s_i\in S_i}|\pi_i(s_i|L)-\pi_i(s_i|H)|-2\eta\\
& < 0.
\end{align*}
Thus, the desired strict inequality holds for every $\mathcal N_{\leq i}$ whose first deviation from $\widehat{\mathcal N}$ is at most $i$.
\end{proof}

The following lemma uses observation separation to make full observation by agent $N$ strictly better than every proper subset of predecessors while preserving the previously established comparisons.
\begin{lemma}
\label{lem:full-observation}
Let $\widehat{\mathcal N}$ be a network with $\widehat{\mathcal N}_N=\{1,2,\dots,N-1\}$. 
For given $(\mu_j,\Pi_j)_{j<N}$, $\mu_N$, and $\Pi_N^0$ which contains a conclusive signal for both states, suppose that (i) $\boldsymbol\alpha_{\leq N-1}(\cdot\mid\widehat{\mathcal N}_{\leq N-1},\Pi_{\leq N-1})$ is observation separating and (ii) for every $\mathcal N$ whose first deviation from $\widehat{\mathcal N}$ is at most $N-1$, $V_N^{\leq N-1}(\widehat{\mathcal N}_{\leq N-1},\Pi_N^0\mid\Pi_{\leq N-1})>V_N^{\leq N-1}(\mathcal N_{\leq N-1},\Pi_N^0\mid\Pi_{\leq N-1})$. 
Then, there exists $\Pi_N$ such that for every $\mathcal N \neq \widehat{\mathcal N}$, $V_N(\widehat{\mathcal N},\mu,\Pi)>V_N({\mathcal N},\mu,\Pi)$, where $(\mu,\Pi)=(\mu_1,\dots,\mu_N,\Pi_1,\dots,\Pi_N)$.
\end{lemma}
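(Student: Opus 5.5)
The plan is to split the rivals $\mathcal N\neq\widehat{\mathcal N}$ into two classes and handle them with separate ingredients. Class (a) consists of networks with $\mathcal N_N=\{1,\ldots,N-1\}$; these must differ from $\widehat{\mathcal N}$ before agent $N$. Class (b) consists of networks with $\mathcal N_N\subsetneq\{1,\ldots,N-1\}$.

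For class (a), agent $N$ observes the whole history, so $V_N(\mathcal N,\mu,\Pi)=V_N^{\leq N-1}(\mathcal N_{\leq N-1},\Pi_N\mid\Pi_{\leq N-1})$. Hypothesis (ii) then gives a strict gap of at least some $3\eta>0$ under $\Pi_N^0$. There are finitely many such networks, and $V_N^{\leq N-1}$ changes by at most $2\delta$ in $\ell_1$ when $\Pi_N^0$ is replaced by a $\delta$-mixture. This is the same bound as in Lemma~\ref{lem:rare-mixture}, applied now to agent $N$'s signal: $\max$ is $1$-Lipschitz in each argument. Hence any modification of $\Pi_N^0$ that moves total mass $\delta<\eta$ preserves all class-(a) inequalities.

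For class (b), I construct $\Pi_N$ by taking small mass from the conclusive signals of $\Pi_N^0$ and assigning it to new test signals. Using $\alpha=\boldsymbol\alpha_{\leq N-1}(\cdot\mid\widehat{\mathcal N}_{\leq N-1},\Pi_{\leq N-1})$, full support and observation separation make all likelihood ratios $\Lambda_{N-1}(\mathcal M,h\mid\alpha)$ distinct. Fix a full history $\bm a^*\in A^{N-1}$ and set $\kappa^*=\Lambda_{N-1}(\{1,\ldots,N-1\},\bm a^*\mid\alpha)$. Add a test signal $t$ with $LR_N(t)=(1-\mu_N)/(\mu_N\kappa^*)$, so that after $(\bm a^*,t)$ agent $N$ is exactly indifferent.

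This is not quite the right event, so I instead use two test signals $t^\pm$ with likelihood ratios slightly above and below this value. Separation implies that the full-history likelihood ratios adjacent to $\kappa^*$ differ from it, so $t^\pm$ can be placed so that the optimal action after observing the full history together with $t^\pm$ depends nontrivially on the history. The cleanest way is to choose the test ratio strictly between two consecutive values $\kappa_{(k)}<\kappa_{(k+1)}$ of the full-history likelihood ratio. Then the full-information decision is a cutoff in $\kappa$ that separates some histories with positive probability.

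The key step, and the main obstacle, is showing that every proper subset $\mathcal M$ strictly loses. Even restricted to the event $s_N=t$, the rival network may have a different action distribution, not just a coarser observation. I would handle this by comparing the rival against $\widehat{\mathcal N}$ in two steps:
\begin{align*}
V_N(\mathcal N)\leq V_N(\widehat{\mathcal N}_{\leq N-1}\text{ with }\mathcal M\text{ observed})<V_N(\widehat{\mathcal N}).
\end{align*}
The first inequality is not automatic either. So I would instead choose, for each proper $\mathcal M$ and each predecessor network, enough test signals, and bound the rival by $V_N^{\leq N-1}(\mathcal N_{\leq N-1},\cdot)$. The latter is below $\widehat{\mathcal N}$'s value by the class-(a) gap whenever $\mathcal N_{\leq N-1}\neq\widehat{\mathcal N}_{\leq N-1}$, since observing a subset is a garbling.

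This leaves only the rivals with $\mathcal N_{\leq N-1}=\widehat{\mathcal N}_{\leq N-1}$ and $\mathcal N_N=\mathcal M$ proper. For these, under $\alpha$ the event $a_{\mathcal M}=h$ pools full histories with distinct ratios by separation. I would pick a test ratio $LR_N(t_{\mathcal M})$ equal to the reciprocal-odds threshold of $\Lambda(\mathcal M,h\mid\alpha)$, perturbed slightly, so that the pooled histories straddle it. Observing all of them then strictly improves agent $N$'s decision on a positive-probability event. Finitely many test signals, each with small mass taken from the conclusive signals, keep the class-(a) gaps intact and give strict gains over every $\mathcal M$.
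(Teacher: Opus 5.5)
Your proposal is correct and follows essentially the paper's proof. Rivals that deviate before agent $N$ are handled by the chain $V_N(\mathcal N,\mu,\Pi)\leq V_N^{\leq N-1}(\mathcal N_{\leq N-1},\Pi_N\mid\Pi_{\leq N-1})<V_N^{\leq N-1}(\widehat{\mathcal N}_{\leq N-1},\Pi_N\mid\Pi_{\leq N-1})=V_N(\widehat{\mathcal N},\mu,\Pi)$, with hypothesis (ii) preserved because the new test signals take only small mass from the conclusive signals. Rivals differing only in $\mathcal N_N$ are handled by one rare test signal per proper subset, chosen so that two pooled full histories fall on opposite sides of agent $N$'s cutoff.

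The only difference is where the test likelihood ratio sits. The paper places it between the thresholds of two specific pooled histories $\bm a,\bm a'$. You place it at the threshold of the pooled observation $\Lambda_{N-1}(\mathcal M,h\mid\boldsymbol\alpha_{\leq N-1})$, which also works because that ratio is a strictly positive weighted average of the distinct full-history ratios. Your extra test signals ``for each predecessor network'' are unnecessary, since the garbling bound needs none.
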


\begin{proof}[Proof of Lemma~\ref{lem:full-observation}]
Fix a network $\mathcal{N}$ with $\mathcal{N}_i=\widehat{\mathcal N}_i$ for all $i<N$ and $\mathcal N_N\subsetneq\widehat{\mathcal N}_N$. 
Note that, conditional on each signal received by agent $N$, agent $N$'s expected payoff is weakly higher under $\widehat{\mathcal N}$ than under $\mathcal N$, because agent $N$ can (partially) ignore the information received under $\widehat{\mathcal N}$.

Take distinct full histories $\bm a,\bm a'\in A^{N-1}$ with $\bm a_{\mathcal N_N}=(\bm a')_{\mathcal N_N}$.
Full support and observation separation give
\[
 \Lambda_{N-1}\left(\widehat{\mathcal N}_N,\bm a \mid
 \boldsymbol\alpha_{\leq N-1}(\cdot\mid\widehat{\mathcal N}_{\leq N-1},\Pi_{\leq N-1})\right)
 \neq
 \Lambda_{N-1}\left(\widehat{\mathcal N}_N,\bm a'\mid
 \boldsymbol\alpha_{\leq N-1}(\cdot\mid\widehat{\mathcal N}_{\leq N-1},\Pi_{\leq N-1})\right).
\]
Relabel the histories so that, denoting their likelihood ratios by $\kappa$ and $\kappa'$ respectively, we have $\kappa<\kappa'$, and introduce a new signal $s_{\mathcal N_N}\in S_N$ so that $(1-\mu_N)/(\mu_N\kappa')<LR_N(s_{\mathcal N_N})<(1-\mu_N)/(\mu_N\kappa)$.
Conditional on $s_{\mathcal N_N}$, full observation network $\widehat{\mathcal N}_N$ induces action $0$ at $\bm a$ and action $1$ at $\bm a'$, whereas observation of $\mathcal N_N$ pools these histories.
Thus, $\widehat{\mathcal N}_N$ gives a strictly higher expected payoff conditional on $s_{\mathcal N_N}$.

Add one such signal to $\Pi_N^0$ for each $\mathcal{N}$ with $\mathcal{N}_i=\widehat{\mathcal N}_i$ for all $i<N$ and  $\mathcal N_N\subsetneq\widehat{\mathcal N}_N$, assigning it sufficiently small probabilities under both states and subtracting these probabilities from the corresponding conclusive signals.
Denote the information structure obtained by modifying $\Pi_N^0$ in this way by $\Pi_N$. 
By choosing the probabilities of the added signals sufficiently small, we can ensure that, for every $\mathcal N$ whose first deviation from $\widehat{\mathcal N}$ is at most $N-1$,
\begin{align*}
    V_N^{\leq N-1}(\widehat{\mathcal N}_{\leq N-1},\Pi_N\mid\Pi_{\leq N-1})
    >
    V_N^{\leq N-1}(\mathcal N_{\leq N-1},\Pi_N\mid\Pi_{\leq N-1}).
\end{align*}

Finally, show that $V_N(\widehat{\mathcal N},\mu,\Pi)>V_N({\mathcal N},\mu,\Pi)$ for all $\mathcal N \neq \widehat{\mathcal N}$. If the first deviation from $\widehat{\mathcal N}$ is $N$, this strict inequality holds by the above construction of $\Pi_N$. If the first deviation from $\widehat{\mathcal N}$ is less than $N$, define $\mathcal N^+:=(\mathcal N_{\leq N-1},\widehat{\mathcal N}_N)$. Then,
\begin{align*}
    V_N(\mathcal N,\mu,\Pi)
    & \leq V_N(\mathcal N^+,\mu,\Pi) \\
    & = V_N^{\leq N-1}(\mathcal N_{\leq N-1},\Pi_N\mid\Pi_{\leq N-1}) \\
    & < V_N^{\leq N-1}(\widehat{\mathcal N}_{\leq N-1},\Pi_N\mid\Pi_{\leq N-1}) = V_N(\widehat{\mathcal N},\mu,\Pi).
\end{align*}
This completes the proof.
\end{proof}

\begin{proof}[Proof of Theorem~\ref{thm:characterization}]
Suppose first that $\widehat{\mathcal N}_N=\{1,\ldots,N-1\}$.
Fix any $\mu_N\in(0,1)$.
Initialize $\Pi_N$ with one signal conclusive for each state, each having probability one in that state.
Set $\mu_1=1/2$ and give agent $1$ an arbitrarily weak binary information structure with likelihood ratios $1-\varepsilon_1$ and $1+\varepsilon_1$.
Then, it follows that $\Lambda_1(\emptyset,\emptyset\mid\boldsymbol\alpha_{\leq 1})=1$, $\Lambda_1(\{1\},0\mid\boldsymbol\alpha_{\leq 1})=1-\varepsilon_1$, and $\Lambda_1(\{1\},1\mid\boldsymbol\alpha_{\leq 1})=1+\varepsilon_1$. Hence, $\boldsymbol\alpha_{\leq1}(\cdot\mid\widehat{\mathcal N}_{\leq1},\Pi_{\leq1})$ is observation separating.

Apply Lemma~\ref{lem:one-step} for $i=2,\ldots,N-1$.
The test signals can be assigned sufficiently small probabilities that the conclusive signals retain positive probabilities and $\Pi_N$ remains finite.
At $i=N-1$, $V_N^{\leq N-1}$ equals agent $N$'s actual payoff whenever she observes every predecessor.

Then, we can apply Lemma~\ref{lem:full-observation} while preserving these strict comparisons. We have  $V_N(\widehat{\mathcal N},\mu,\Pi)>V_N({\mathcal N},\mu,\Pi)$ for every $\mathcal N \neq \widehat{\mathcal N}$.

Conversely, suppose that $j\notin\widehat{\mathcal N}_N$ for some $j<N$, and define
$\mathcal N^+:=(\widehat{\mathcal N}_{\leq N-1},\widehat{\mathcal N}_N\cup\{j\})$.
For every informational environment $(\mu,\Pi)$, $V_N(\widehat{\mathcal N},\mu,\Pi)\leq V_N(\mathcal N^+,\mu,\Pi)$, because agent $N$ can ignore the additional action.
Hence, $\widehat{\mathcal N}$ cannot be a unique maximizer.
This proves necessity and sufficiency.
\end{proof}

\subsection{Details for Example~\ref{ex:four-agent-shield}}
\label{app:detailed-four-agent}

\begin{proof}[Proof of Example~\ref{ex:four-agent-shield}]
The target network and priors are
\[
 (\mathcal N_1,\mathcal N_2,\mathcal N_3,\mathcal N_4)
 =\bigl(\emptyset,\{1\},\{1\},\{1,2,3\}\bigr),
 \qquad
 (\mu_1,\mu_2,\mu_3,\mu_4)
 =\left(1/2,1/3,1/3,1/2\right).
\]
The heterogeneous priors place agents $2$ and $3$ at selected action cutoffs, so their probe signals induce different actions under the target but the same action under the alternatives; the test signals $t_2$ and $t_3$ make these distinctions valuable to agent $4$.
The auxiliary components $\pi_2^T$ and $\pi_3^T$ create full support and distinct likelihood ratios without overturning these comparisons, after which the signals $u_C$ make full observation strictly better than every proper subset.

Give agent $1$ the binary information structure $\Pi_1=(S_1,\pi_1)$, where $S_1=\{l_1,h_1\}$, $\pi_1(l_1\mid L)=\pi_1(h_1\mid H)=2/3$, and $\pi_1(h_1\mid L)=\pi_1(l_1\mid H)=1/3$.
Agent $1$ chooses action $0$ after $l_1$ and action $1$ after $h_1$.
Thus, the likelihood ratios associated with observing $a_1=0$, no predecessor, and $a_1=1$ are $1/2$, $1$, and $2$, respectively.

Set $\delta_2=1/10$.
Give agent $2$ the information structure $\Pi_2=(S_2,\pi_2)$ with $S_2=\{p_2^-,p_2^+,d_2,m_2,c_2\}$ where the disclosure rule is the mixture $\pi_2=(1-\delta_2)\pi_2^P+\delta_2\pi_2^T$ of the following component disclosure rules:
\[
\begin{array}{c|cc}
\pi_2^P&p_2^-&p_2^+\\ \hline
\pi_2^P(\cdot\mid L)&1/2&1/2\\
\pi_2^P(\cdot\mid H)&1/4&3/4\\
\end{array}
\qquad
\begin{array}{c|ccc}
\pi_2^T&d_2&m_2&c_2\\ \hline
\pi_2^T(\cdot\mid L)&23/36&14/45&1/20\\
\pi_2^T(\cdot\mid H)&23/180&28/45&1/4\\
\end{array}
\]
Since agent $2$'s prior odds are $1/2$, her action table is
\begin{center}
\begin{tabular}{c@{\qquad}ccccc}
\toprule
 & $d_2$ & $p_2^-$ & $p_2^+$ & $m_2$ & $c_2$ \\
\midrule
$\mathcal N_2=\{1\},\ a_1=0$ & $0$ & $0$ & $0$ & $0$ & $1$ \\
$\mathcal N_2=\{1\},\ a_1=1$ & $0$ & $0$ & $1$ & $1$ & $1$ \\
$\mathcal N_2=\emptyset$     & $0$ & $0$ & $0$ & $0$ & $1$ \\
\bottomrule
\end{tabular}
\end{center}
The action induced by a given private signal differs across the two networks only after $a_1=1$.

Give agent $4$ a test signal $t_2$ satisfying $\pi_4(t_2\mid L)=w_2=1/10$ and $LR_4(t_2)=1/2$.
The action table implies that the difference between agent $4$'s expected payoff under the target neighborhood $\mathcal N_2=\{1\}$ and the alternative $\mathcal N_2=\emptyset$ is
\[
G_2:=\frac{w_2(45+11\delta_2)}{1080}=\frac{461}{108000}>0.
\]
This expression accounts both for the benefit of making agent $2$'s action distinguish the two probe signals under the target and for the possible informational advantage generated by the strong signal $c_2$ under the alternative.

Under the target network, the state-contingent distributions of $(a_1,a_2)$ are
\begin{center}
\begin{tabular}{c@{\qquad}cc}
\toprule
$(a_1,a_2)$ & $\mathbb P(a_1,a_2\mid L)$ & $\mathbb P(a_1,a_2\mid H)$ \\
\midrule
$(0,0)$ & $199/300$ & $13/40$ \\
$(0,1)$ & $1/300$ & $1/120$ \\
$(1,0)$ & $37/216$ & $107/675$ \\
$(1,1)$ & $35/216$ & $343/675$ \\
\bottomrule
\end{tabular}
\end{center}
The likelihood ratio under the target neighborhood $\{1\}$ equals $2$ after $a_1=1$ and $1/2$ after $a_1=0$.
Using the displayed probabilities, the likelihood ratio after every possible observation under an alternative neighborhood $\emptyset$, $\{2\}$, or $\{1,2\}$ is at most one or at least $5/2$.

Let $0<\varepsilon_3<1/5$ and $\delta_3\in(0,1)$.
Give agent $3$ the information structure $\Pi_3=(S_3,\pi_3)$, where $S_3=\{p_3^-,p_3^+,d_3,m_3,c_3\}$ and the disclosure rule is the mixture $\pi_3=(1-\delta_3)\pi_3^P+\delta_3\pi_3^T$ of the following component disclosure rules with disjoint supports:
\[
\begin{array}{c|cc}
\pi_3^P&p_3^-&p_3^+\\ \hline
\pi_3^P(\cdot\mid L)&1/2&1/2\\
\pi_3^P(\cdot\mid H)&(1-\varepsilon_3)/2&(1+\varepsilon_3)/2\\
\end{array}
\qquad
\begin{array}{c|ccc}
\pi_3^T&d_3&m_3&c_3\\ \hline
\pi_3^T(\cdot\mid L)&2/3&17/60&1/20\\
\pi_3^T(\cdot\mid H)&2/15&17/30&3/10\\
\end{array}
\]
After the target observation $a_1=1$, the two probe signals induce different actions by agent $3$.
Under each alternative neighborhood $\emptyset$, $\{2\}$, or $\{1,2\}$, they induce the same action after every possible observed action vector.
This follows from the likelihood ratios above and the restriction $\varepsilon_3<1/5$.

Consider the history $(a_1,a_2)=(1,0)$ under the target network.
Its conditional probability under state $L$ is $\mathbb P((a_1,a_2)=(1,0)\mid L)=37/216$, and its likelihood ratio is $856/925$.
Give agent $4$ a test signal $t_3$ satisfying $\pi_4(t_3\mid L)=w_3$ and $LR_4(t_3)=r_3:=925/856$.
Before adding the auxiliary component $\pi_3^T$, the history $(a_1,a_2)=(1,0)$ and signal $t_3$ generate a payoff advantage of $(37/216)\varepsilon_3w_3/4=(37/864)\varepsilon_3w_3$ for the target relative to each of agent $3$'s three alternative neighborhoods.
Indeed, with $\delta_3=0$, every alternative pools the two probe signals, so its information is a garbling of the target's.

The component $\pi_3^T$ is selected with probability $\delta_3$ under either state.
The information structures before and after this perturbation can be coupled to coincide whenever $\pi_3^P$ is selected, so a fixed network's payoff changes by at most $2\delta_3$ and the difference in expected payoff between the target and an alternative changes by at most $4\delta_3$.
Hence, every comparison with $\mathcal N_2=\{1\}$ and $\mathcal N_3\neq\{1\}$ yields a payoff difference in favor of the target of at least $\Delta_3:=(37/864)\varepsilon_3w_3-4\delta_3$.

We must also preserve the payoff comparison between the two neighborhoods of agent $2$.
Agent $3$'s signal distributions satisfy
\[
 \frac12\sum_{s_3\in S_3}|\pi_3(s_3\mid L)-\pi_3(s_3\mid H)|
 \leq\frac{\varepsilon_3}{2}+\delta_3.
\]
The additional observation of $a_3$ can increase agent $4$'s expected payoff by at most twice the left-hand side under either neighborhood of agent $2$, so the target's payoff advantage can fall by at most $\varepsilon_3+2\delta_3$.
Moreover, the entire payoff contribution of $t_3$ is at most $w_3(1+r_3)/2$.
Thus, every network with $\mathcal N_2=\emptyset$ yields a payoff difference in favor of the target of at least $\Delta_2:=G_2-\varepsilon_3-2\delta_3-(1+r_3)w_3/2$.
It is therefore enough to choose the parameters so that
\[
 \varepsilon_3+2\delta_3+\frac{1+r_3}{2}w_3<G_2,
 \qquad
 4\delta_3<\frac{37}{864}\varepsilon_3w_3.
\]
These inequalities require the probe signals to be sufficiently weak to preserve $G_2$, the test signal $t_3$ to be sufficiently rare, and the mixing probability $\delta_3$ to be small relative to the payoff gain generated by $t_3$.

For concreteness, take $\varepsilon_3=1/1000$, $w_3=1/500$, and $\delta_3=10^{-9}$.
Indeed, $G_2-\varepsilon_3-(1+r_3)w_3/2=5491/4622400>0$ and $2\delta_3<5491/4622400$, while $\delta_3<37/1728000000$, so the above inequalities hold.
Direct substitution also gives distinct likelihood ratios for all 27 possible observations through agent $3$.

Let $\Delta:=\min\{\Delta_2,\Delta_3\}$.
For the displayed values, $\Delta=\Delta_3=4409/54000000000>0$.
Among the $2\cdot4=8$ networks in which agent $4$ observes all three predecessors, four have $\mathcal N_2=\emptyset$ and yield an advantage of at least $\Delta_2$, while three of the four with $\mathcal N_2=\{1\}$ have $\mathcal N_3\neq\{1\}$ and yield an advantage of at least $\Delta_3$; the remaining network is the target.
Thus, the target is strictly best among these networks.

It remains to compare full observation with each of the seven strict subsets $C\subsetneq\{1,2,3\}$ that agent $4$ may observe.
Write $\lambda_x$ for the likelihood ratio of target history $x$.
For each $C$, choose two action histories and a signal likelihood ratio as follows:
\[
\begin{array}{c|ccc|c}
C&x_C&y_C&LR_4(u_C)&
LR_4(u_C)\lambda_{x_C}<1<LR_4(u_C)\lambda_{y_C}\\ \hline
\emptyset&(0,0,0)&(0,1,1)&1&\lambda_{000}<1<\lambda_{011}=15\\
\{1\}&(0,0,0)&(0,1,1)&1&\lambda_{000}<1<\lambda_{011}=15\\
\{2\}&(0,0,0)&(0,0,1)&1&\lambda_{000}<1<\lambda_{001}=585/199\\
\{3\}&(1,0,1)&(0,1,1)&1&\lambda_{101}<1<\lambda_{011}=15\\
\{1,2\}&(0,0,0)&(0,0,1)&1&\lambda_{000}<1<\lambda_{001}=585/199\\
\{1,3\}&(0,0,0)&(0,1,0)&1&\lambda_{000}<1<\lambda_{010}\\
\{2,3\}&(1,1,1)&(0,1,1)&1/5&\lambda_{111}/5<1<\lambda_{011}/5=3
\end{array}
\]
The two histories in each row induce the same observation when agent $4$ observes only $C$, and the inequalities in the last column follow from the action distributions and parameter values above.
Conditional on signal $u_C$, the posterior odds at $x_C$ and $y_C$ lie on opposite sides of one.
Agent $4$ therefore chooses different actions at these histories under full observation, whereas she cannot distinguish them when she observes only $C$.
Because both histories occur with positive probability under both states, full observation yields a strictly higher expected payoff conditional on $u_C$.

Give agent $4$ the information structure $\Pi_4=(S_4,\pi_4)$, where $S_4=\{t_2,t_3,t^-,t^+\}\cup \{u_C:C\subsetneq\{1,2,3\}\}$ and the disclosure rule $\pi_4$ is
\begingroup
\setlength{\arraycolsep}{2.5pt}
\[
\begin{array}{c|ccccc}
\pi_4&t_2&t_3&u_C&t^-&t^+\\ \hline
\pi_4(\cdot\mid L)
&w_2&w_3&\eta&1-w_2-w_3-7\eta&0\\
\pi_4(\cdot\mid H)
&w_2/2&r_3w_3&\eta LR_4(u_C)&0&1-w_2/2-r_3w_3-(31/5)\eta
\end{array}
\]
\endgroup
For each strict subset $C$, the $u_C$ column specifies the probability of the corresponding signal $u_C$.
Since six of these signals have likelihood ratio one and one has likelihood ratio $1/5$, the probabilities in the table sum to one under each state.
The maximum reduction in any payoff difference established above due to the seven signals $u_C$ is $(33/5)\eta$.
Hence any $0<\eta<(5/33)\Delta$ preserves all of these strict comparisons; we take $\eta=10^{-9}$, which is below $(5/33)\Delta=4409/356400000000$.
The remaining probabilities assigned to $t^-$ and $t^+$ are positive, and these signals are conclusive for states $L$ and $H$, respectively.
If a competing network differs from the target network in the neighborhood of some agent $i<4$, giving agent $4$ all predecessor actions weakly increases her expected payoff, and the resulting full-observation network is already strictly worse than the target network by the preceding comparison.
Together with the signals $u_C$ and the restriction $\eta<(5/33)\Delta$, this shows that the target strictly dominates every competitor in which agent $4$ observes a strict subset of her predecessors.

We have therefore constructed one informational environment under which the target network strictly outperforms every other network.
\end{proof}

\subsection{Proof of Proposition~\ref{prop:welfare-characterization}}
Fix a network $\widehat{\mathcal N}$ satisfying $\widehat{\mathcal N}_N=\{1,\ldots,N-1\}$. 
We follow the two-stage proof of Theorem~\ref{thm:characterization}, except that the first stage must control weighted welfare. 
The construction behind Theorem~\ref{thm:characterization} gives agent $N$ a strict gain under $\widehat{\mathcal N}$ relative to any alternative but does not control the other agents' payoffs: relative to an alternative neighborhood for agent $i$, $\widehat{\mathcal N}$ may impose a payoff loss on agent $i$ that may offset agent $N$'s gain. 
We therefore need a local comparison in which agent $N$'s gain is of larger order than agent $i$'s loss. 
The following lemma provides this comparison and is the welfare counterpart of Lemma~\ref{lem:one-step}. 
Throughout this subsection, all information structures have finite signal spaces.

Set $\mu_N=1/2$ and, for every $j<N$, let $\omega_j:=2^{j-1}$; for $\mathcal J\subseteq\{1,\ldots,N-1\}$, define $\omega(\mathcal J):=\sum_{j\in\mathcal J}\omega_j$ and $\Omega_i:=\sum_{j<i}\omega_j=\omega_i-1$.
The construction is indexed by $t\in(0,1)$, and we consider the limit as
$t\searrow0$.
For fixed $i$ and $k\in\{i,N\}$, let $V_k(\mathcal N_i;t)$ denote agent $k$'s expected payoff under $\widehat{\mathcal N}$ when only agent $i$'s neighborhood is replaced by $\mathcal N_i$, with the information environments held fixed.
Similarly, let $\bar V_k(\mathcal N_i;t)$ denote agent $k$'s expected payoff when agent $i$'s neighborhood is $\mathcal N_i$ and agent $N$'s neighborhood is $\{1,\ldots,i\}$, with all other neighborhoods and information environments held fixed at those under $\widehat{\mathcal N}$.
For any $i\in\{2,\ldots,N-1\}$ and $\bm a^{i-1}\in A^{i-1}$, define $\mathcal J(\bm a^{i-1}):=\{j<i:a_j=1\}$ and $w(\bm a^{i-1}):=\omega(\mathcal J(\bm a^{i-1}))$; when the history is clear, write simply $\mathcal J$ and $w$. 
Also, define $b_i:=\omega(\widehat{\mathcal N}_i)$ and $\bm a^{i-1,*}=(a_1^*,\ldots,a_{i-1}^*)$ by $a_j^*=1$ if $j\in\widehat{\mathcal N}_i$ and $a_j^*=0$ otherwise. 
Thus, $b_i=w(\bm a^{i-1,*})$. 
For any fixed $M\geq2$, define $\rho_i(t):=t^{\Omega_i+1/2+Mb_i}$.

\begin{lemma}
\label{lem:local-neighborhood-comparison}
Fix $i\in\{2,\ldots,N-1\}$ and $M\geq2$.
For each $t\in(0,1)$, let $(a_j)_{j<i}$ be conditionally
independent random variables taking values in $\{0,1\}$, with $\mathbb P(a_j=1\mid L)=t^{(M+1)\omega_j}$ and $\mathbb P(a_j=1\mid H)=t^{M\omega_j}$.
Then, there exists $\bar t\in(0,1)$ such that, for every
$t\in(0,\bar t)$, if the actions of agents $1,\ldots,i-1$
have the joint distribution $(a_j)_{j<i}$ specified above under $(\mu_j,\Pi_j)_{j<i}$, there exist
$\mu_i(t)$, $\Pi_i(t)$, and $\Pi_N(t)$ such that
\[
 \lambda_i\bigl[\bar{V}_i(\widehat{\mathcal N}_i;t)-\bar{V}_i(\mathcal N_i;t)\bigr]
 +\lambda_N\bigl[\bar{V}_N(\widehat{\mathcal N}_i;t)-\bar{V}_N(\mathcal N_i;t)\bigr]>0
\]
for every $\mathcal N_i\neq\widehat{\mathcal N}_i$.\footnote{Note that $\bar{V}_i$ and $\bar{V}_N$ do not depend on $\mu_{i+1},\dots,\mu_{N-1}$ and $\Pi_{i+1},\dots,\Pi_{N-1}.$}
\end{lemma}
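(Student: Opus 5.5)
The plan is to build $\mu_i(t)$, $\Pi_i(t)$ and $\Pi_N(t)$ explicitly and compare orders of magnitude as $t\searrow0$. The target is a gain for agent $N$ of exact order $\rho_i(t)=t^{\Omega_i+1/2+Mb_i}$, and a possible loss for agent $i$ of strictly smaller order, uniformly over the finitely many $\mathcal N_i\neq\widehat{\mathcal N}_i$.

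\textbf{Step 1 (likelihood ratios of observations).} First I would record the likelihood ratios generated by the product distribution. Observing $a_j=1$ multiplies the likelihood ratio by $t^{-\omega_j}$. Observing $a_j=0$ multiplies it by $(1-t^{M\omega_j})/(1-t^{(M+1)\omega_j})=1-t^{M\omega_j}(1+o(1))$. Because $\omega_j=2^{j-1}$, distinct sets $\mathcal J$ of observed ones have distinct exponents $\omega(\mathcal J)$. So an observation $(\mathcal N_i,h)$ has likelihood ratio $t^{-w}(1+O(t^M))$, where $w$ is the weight of its observed ones. Two observations with different $w$ differ by a factor at least $t^{-1}$. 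Observations with the same $w$ differ only through observed zeros, by factors $1+\Theta(t^{M\omega_j})$. In particular, $\mathbb P(\bm a^{i-1,*}\mid H)\asymp t^{Mb_i}$, and every full history has $w\le\Omega_i$.

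\textbf{Step 2 (probe and test construction).} Let $q^*$ be the likelihood ratio of the target observation $h^*=a^*_{\widehat{\mathcal N}_i}$, and set $\mu_i/(1-\mu_i)=1/q^*$, so that agent $i$ is exactly indifferent after $h^*$. Give agent $i$ a main binary probe with $LR_i(p^\pm)=1\pm t^{\gamma}$, where $\gamma$ is chosen large enough that $t^\gamma$ is below every $\Theta(t^{M\omega_j})$ and $t^{\pm1}$ gap from Step 1. Then, as in the proof of Lemma~\ref{lem:generic-extension}, $p^-$ and $p^+$ split only at $h^*$ and are pooled after every other observation under every neighborhood. I would then recalibrate so that the effective probe strength entering agent $N$'s gain is $t^{1/2}$, either by taking $\gamma=1/2$ where the gaps allow it, or by a rare mixture component.

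Give agent $N$ a test signal $\tau_i$ with probability of order $t^{\Omega_i}$, taken from her conclusive signals, with $LR_N(\tau_i)$ tuned to the full-history likelihood ratio at $\bm a^{i-1,*}$. Conditional on $(\bm a^{i-1,*},\tau_i)$, distinguishing $p^\pm$ is then worth order $t^{1/2}$. Under any $\mathcal N_i$ the two probe signals are pooled, and by Step 1 $a_i$ is a garbling of $\bm a^{i-1}$ for agent $N$, who observes $\{1,\ldots,i\}$. Hence
\[
\bar V_N(\widehat{\mathcal N}_i;t)-\bar V_N(\mathcal N_i;t)\ge c\,\rho_i(t)
\]
for some $c>0$.

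\textbf{Step 3 (agent $i$'s loss; main obstacle).} It remains to show that $\bar V_i(\mathcal N_i;t)-\bar V_i(\widehat{\mathcal N}_i;t)=o(\rho_i(t))$. Since agent $i$ is exactly indifferent at $h^*$, the probe split costs her nothing there. Her loss therefore comes only from observations at which $\mathcal N_i$ lets her choose a different, better action than $\widehat{\mathcal N}_i$ does. Such a gain requires her posterior under $\mathcal N_i$ to fall on the other side of one. By Step 1 and the choice of prior, this happens only on histories carrying extra ones or extra informative zeros relative to $h^*$. The value of the correction on those histories is bounded by the $t$-power gaps in likelihood ratio times the history probabilities.

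The hard part is checking that this bound is $o(t^{\Omega_i+1/2+Mb_i})$ for every alternative. This is most delicate when $\mathcal N_i\supsetneq\widehat{\mathcal N}_i$ adds only zero observations, where the ratio shift is merely $1+\Theta(t^{M\omega_j})$. I would handle it by expanding the loss to leading order: the loss is of order $\mathbb P(\text{history})\cdot|\text{posterior}-1/2|$, restricted to the crossing region. I would then use $M\ge2$ together with the slack $1/2$ in the exponent to dominate it, and shrink $\bar t$ until the inequality holds for every $\mathcal N_i$, weighted by $\lambda_i$ and $\lambda_N$.
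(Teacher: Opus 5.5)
\textbf{There is a genuine gap.} A weak symmetric probe placed at an exactly indifferent prior cannot make agent $N$'s gain dominate agent $i$'s loss. The gain asserted in Step 2 and the loss bound asserted in Step 3 both fail.

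\emph{The gain is too small.} Let $E=\{a_{\widehat{\mathcal N}_i}=h^*\}$. To pool the probe under $\mathcal N_i=\widehat{\mathcal N}_i\cup\{j\}$ at the history with $a_j=0$, you need $t^{\gamma}<1-l_j^0(t)\approx t^{M\omega_j}$. So whenever $\widehat{\mathcal N}_i\neq\{1,\ldots,i-1\}$ you are forced to take $\gamma>M\omega_j\geq 2$, and ``$\gamma=1/2$'' is unavailable. Once the probe is pooled under every alternative, $a_i$ is a function of $\bm a^{i-1}$ under each alternative. Under the target it adds only the realization of $p^\pm$ on $E$. Likelihood ratios $1\pm t^{\gamma}$ can flip agent $N$'s action only when her odds are within a factor $1\pm t^{\gamma}$ of one. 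Hence for every $\Pi_N$ her gain is $O\bigl(t^{\gamma}\,\mathbb P_N(L,E)\bigr)=O\bigl(t^{\gamma+(M+1)b_i}\bigr)$, not $c\rho_i$.

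\emph{The loss is too large.} The issue is not the cost of the probe split but the alternative's informational advantage, and exact indifference on $E$ is where agent $i$ values extra observations most. Against $\widehat{\mathcal N}_i\cup\{j_0\}$ she plays $1$ after $a_{j_0}=1$ and $0$ after $a_{j_0}=0$. Her payoff on $E$ then exceeds the target's by $\mathbb P_i(L,E)\bigl(t^{M\omega_{j_0}}-t^{(M+1)\omega_{j_0}}-t^{\gamma}/2\bigr)=\Theta\bigl(t^{(M+1)b_i+M\omega_{j_0}}\bigr)$. Since $\gamma>M\omega_{j_0}$, this loss dominates the gain, and the weighted sum is negative for small $t$.

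The loss is not even $o(\rho_i)$. For $\widehat{\mathcal N}_i=\emptyset$, $j_0=1$, $M=2$ and $i\geq 3$, it is of order $t^2$, while $\rho_i\leq t^{7/2}$. So ``$M\geq2$ plus the slack $1/2$'' does not close Step 3. The unspecified ``rare mixture component'' does not fix this either. It would have to separate $\widehat{\mathcal N}_i$ from every alternative without raising agent $i$'s value of extra observations, and that is the substance of the lemma.

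\textbf{The missing idea.} Agent $i$ should be nearly indifferent only on a rare event produced by a \emph{strongly} informative signal. Then her own stakes there are small, while her action is highly informative for agent $N$. The paper does this as follows.
\begin{enumerate}
\item It sets agent $i$'s prior odds to $t^{\Omega_i+1/2}$.
\item It gives $p^+$ likelihood ratio $t^{b_i-\Omega_i-1/2}(1+\varepsilon_i(t))$, with $\varepsilon_i(t)=t^{M\bar\omega_i}/4$ below every zero-observation factor. Using uniqueness of subset sums, every $\mathcal N_i\neq\widehat{\mathcal N}_i$ then maps $p^+$ to action $0$ at $\bm a^{i-1,*}$.
\item It adds $p^-$ and $p^0$, which induce action $0$ after every observation.
\end{enumerate}
Differences in agent $i$'s action across neighborhoods then arise only after $p^+$. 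They occur either on histories with $w>b_i$, which have mass $o(\rho_i)$, or at $\bm a^{i-1,*}$, where her posterior odds are $1+o(1)$. Hence her loss is $o(\rho_i)$.

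On agent $N$'s side, the test signal has likelihood ratio $2t^{\Omega_i+1/2}$. When $a_i=1$ reveals $p^+$, it moves her odds at $\bm a^{i-1,*}$ from $O(t^{\Omega_i+1/2-b_i})$ to about $2$. Her gain is therefore $\Theta(\rho_i)$, with no discount by a small probe strength.
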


\begin{proof}[Proof of Lemma~\ref{lem:local-neighborhood-comparison}]
The likelihood ratios associated with agent $j$'s actions are
\[
 \frac{\mathbb P(a_j=1\mid H)}{\mathbb P(a_j=1\mid L)}=t^{-\omega_j},
 \qquad
 l_j^0(t):=\frac{\mathbb P(a_j=0\mid H)}{\mathbb P(a_j=0\mid L)}=1-t^{M\omega_j}+o(t^{M\omega_j}).
\]
Moreover, $\omega(\mathcal J)=\omega(\mathcal J')$ if and only if $\mathcal J=\mathcal J'$.

Choose $\mu_i(t)\in(0,1)$ so that $(1-\mu_i(t))/\mu_i(t)=t^{-(\Omega_i+1/2)}$.
If $\widehat{\mathcal N}_i\neq\{1,\dots,i-1\}$, let $\bar\omega_i:=\max_{j\notin\widehat{\mathcal N}_i,\,j<i}\omega_j$ and $\varepsilon_i(t):=t^{M\bar\omega_i}/4$; otherwise, let $\varepsilon_i(t):=t^M/4$.
Fix $c\in(0,1/4)$ and define $\Pi_i(t)=( S_i,\pi_i(t))$, where $S_i=\{p^+,p^-,p^0\}$, and
\[
\begin{alignedat}{3}
&\pi_i(p^+\mid H;t)=c,\quad
&&\pi_i(p^-\mid H;t)=\frac{ct^{1/2}}{2},\quad
&&\pi_i(p^0\mid H;t)=1-c-\frac{ct^{1/2}}{2},\\
&\pi_i(p^+\mid L;t)=\frac{ct^{\Omega_i+1/2-b_i}}{1+\varepsilon_i(t)},\quad
&&\pi_i(p^-\mid L;t)=c,\quad
&&\pi_i(p^0\mid L;t)=1-c-\frac{ct^{\Omega_i+1/2-b_i}}{1+\varepsilon_i(t)}.
\end{alignedat}
\]
Note that $\pi_i(p^0\mid H;t)>0$ and $\pi_i(p^0\mid L;t)>0$ since $t\in (0,1)$.
Then, $LR_i(p^+)=t^{b_i-\Omega_i-1/2}(1+\varepsilon_i(t))$, $LR_i(p^-)=t^{1/2}/2$, and $LR_i(p^0)=1+o(1)$. 
Since the likelihood ratio of any observed history is at most $t^{-\Omega_i}$, the posterior odds for agent $i$ when she observes $p^-$ are at most $t/2$, and those when she observes $p^0$ are at most $t^{1/2}(1+o(1))$.
Since there are only finitely many possible neighborhoods, both signals induce action $0$ for every neighborhood for all sufficiently small $t$.

Given a history $\bm{a}^{i-1}$, write $\mathcal{J}(\bm{a}^{i-1})$ as $\mathcal{J}$.  Then, agent $i$'s posterior odds when she receives signal $p^+$ are
\begin{align*}
& \frac{\mu_i(t)}{1-\mu_i(t)}\left(\prod_{j\in \mathcal{N}_i\cap \mathcal{J}}t^{-\omega_j}\prod_{j\in \mathcal{N}_i \backslash \mathcal{J}}l_j^0(t)\right)t^{b_i-\Omega_i-1/2}(1+\varepsilon_i(t))\\
& =t^{b_i-\omega( \mathcal{N}_i\cap \mathcal{J})}(1+\varepsilon_i(t))\prod_{j\in  \mathcal{N}_i \backslash \mathcal{J}}l_j^0(t).
\end{align*}
We now characterize the action induced by $p^+$.
If $\mathcal N_i=\widehat{\mathcal N}_i$ and $\widehat{\mathcal N}_i\not\subseteq\mathcal J$, then $b_i-\omega(\widehat{\mathcal N}_i\cap\mathcal J)\geq 1$.
Hence, the posterior odds are bounded above by $t(1+\varepsilon_i(t))\to 0$ as $t\searrow0$.
Thus, for all sufficiently small $t$, $p^+$ induces action $0$.
If $\mathcal N_i=\widehat{\mathcal N}_i$ and $\widehat{\mathcal N}_i\subseteq\mathcal J$, then $b_i-\omega(\widehat{\mathcal N}_i\cap\mathcal J)=0$ and $\mathcal{N}_i \backslash \mathcal{J}=\emptyset$.
Hence, the posterior odds are $1+\varepsilon_i(t)>1$. 
Thus, $p^+$ induces action $1$.

Now suppose that $\mathcal N_i\neq\widehat{\mathcal N}_i$.
If $\omega(\mathcal N_i\cap\mathcal J)>b_i$, the posterior odds tend to infinity as $t\searrow0$. 
Since there are only finitely many possible neighborhoods of agent $i$, for all sufficiently small $t$, $p^+$ induces action $1$ for all such $\mathcal N_i$. 
If $\omega(\mathcal N_i\cap\mathcal J)<b_i$, the posterior odds tend to zero, and hence $p^+$ induces action $0$.

It remains to consider the case $\mathcal N_i\neq\widehat{\mathcal N}_i$ and $\omega(\mathcal N_i\cap\mathcal J)=b_i=\omega(\widehat{\mathcal N}_i)$.
By uniqueness of subset sums, $\mathcal N_i\cap\mathcal J=\widehat{\mathcal N}_i$.
Thus, $\widehat{\mathcal N}_i\subseteq\mathcal N_i$ and, since $\mathcal N_i\neq\widehat{\mathcal N}_i$, the set $\mathcal D:=\mathcal N_i\setminus\widehat{\mathcal N}_i$ is nonempty. 
Moreover, $\mathcal D\cap\mathcal J=\emptyset$.
Therefore, the posterior odds reduce to $(1+\varepsilon_i(t))\prod_{j\in\mathcal D}l_j^0(t)$.
Since $l_j^0(t)=1-t^{M\omega_j}+o(t^{M\omega_j})$ and $t^{M\omega_j}\geq 4\varepsilon_i(t)$ for every $j\in\mathcal D$, we have, for all sufficiently small $t$, $l_j^0(t)<1-2\varepsilon_i(t)$ for all such $\mathcal{N}_i$ and $j$.
Hence,
\[
(1+\varepsilon_i(t))\prod_{j\in\mathcal D}l_j^0(t) \leq (1+\varepsilon_i(t))(1-2\varepsilon_i(t)) <1.
\]
Thus, $p^+$ also induces action $0$ in this case.

Consider a history $\bm{a}^{i-1}$ such that $\widehat{\mathcal N}_i\neq\mathcal J(\bm{a}^{i-1})$.
Since $p^-$ and $p^0$ always induce action $0$, a difference in agent $i$'s action between $\widehat{\mathcal N}_i$ and an alternative neighborhood $\mathcal N_i\neq\widehat{\mathcal N}_i$ can then occur only after $p^+$. 
We show that $w=\omega(\mathcal{J})>b_i$ is necessary for a difference to occur.
If $\widehat{\mathcal N}_i\subsetneq\mathcal J(\bm{a}^{i-1})$, we have $b_i=\omega(\widehat{\mathcal N}_i)<w$.
If $\widehat{\mathcal N}_i\not\subseteq\mathcal J(\bm{a}^{i-1})$, agent $i$ chooses action $0$ under $\widehat{\mathcal N}_i$. 
Thus, for a difference to occur, agent $i$ must choose action $1$ under $\mathcal N_i$, which requires $w\geq\omega(\mathcal N_i\cap\mathcal J)>b_i$.
For every $w>b_i$, the joint probability of state $L$, a history with $\omega(\mathcal J)=w$, and signal $p^+$ is $O(t^{\Omega_i+1/2-b_i+(M+1)w})$, since the history occurs with probability $O(t^{(M+1)w})$ under $L$, while $p^+$ occurs with probability $O(t^{\Omega_i+1/2-b_i})$, and $\Pr(L)=O(1)$.
Under $H$, the corresponding probability is $O(t^{\Omega_i+1/2+Mw})$, since $\Pr(H)=O(t^{\Omega_i+1/2})$, the history occurs with probability $O(t^{Mw})$, and $\Pr(p^+\mid H)=O(1)$.
Since $w>b_i$,
\[
\frac{t^{\Omega_i+1/2-b_i+(M+1)w}}{\rho_i(t)} =t^{(M+1)(w-b_i)}\to 0,
\qquad
\frac{t^{\Omega_i+1/2+Mw}}{\rho_i(t)} =t^{M(w-b_i)}\to 0.
\]
Thus, both probabilities are $o(\rho_i(t))$.

Finally, consider the remaining history $\bm{a}^{i-1}$ with $\widehat{\mathcal N}_i=\mathcal J(\bm{a}^{i-1})$. Even in this case, a difference in agent $i$'s action can occur only after $p^+$. 
The joint probability of this history and $p^+$ is $O(\rho_i(t))$, by the same calculation.
Moreover, the posterior odds after observing the full history and $p^+$ are $(1+\varepsilon_i(t))\prod_{j\notin\widehat{\mathcal N}_i,\,j<i}l_j^0(t)=1+o(1)$.
Thus, the difference in the conditional expected payoff between the two actions is $o(1)$, and hence the contribution of this history and $p^+$ to the payoff difference is $O(\rho_i(t))\cdot o(1)=o(\rho_i(t))$.
Therefore, uniformly over $\mathcal N_i\neq\widehat{\mathcal N}_i$, we have $\bar{V}_i(\widehat{\mathcal N}_i;t)-\bar{V}_i(\mathcal N_i;t)\geq-o(\rho_i(t))$.

At $\bm a^{i-1,*}$, $\widehat{\mathcal N}_i$ separates $p^+$ from $p^-$ and $p^0$, whereas every $\mathcal N_i\neq\widehat{\mathcal N}_i$ pools them; give agent $N$ a test signal $s_N$ with likelihood ratio $2t^{\Omega_i+1/2}$.
Under $\widehat{\mathcal N}_i$, the event $(\bm a^{i-1,*},a_i=1)$ has likelihood ratio $t^{-(\Omega_i+1/2)}(1+o(1))$, whereas under every $\mathcal N_i\neq\widehat{\mathcal N}_i$, the event $(\bm a^{i-1,*},a_i=0)$ has likelihood ratio $O(t^{-b_i})$.
Thus, after receiving $s_N$, agent $N$ chooses action $1$ in the former event and action $0$ in the latter.
Assign $s_N$ a fixed sufficiently small probability under state $L$ and set $\pi_N(s_N\mid H)=2t^{\Omega_i+1/2}\pi_N(s_N\mid L)$.
The joint probabilities of states $L$ and $H$ with $(\bm a^{i-1,*},p^+,s_N)$ are both $\Theta(\rho_i(t))$ and $\Pr(H \mid \bm a^{i-1,*},p^+,s_N)/\Pr(L\mid \bm a^{i-1,*},p^+,s_N) \to 2$, so this event gives agent $N$ a gain of order $\rho_i(t)$.

Decompose agent $N$'s payoff difference according to agent $i$'s private signal. 
When agent $i$ receives $p^-$ or $p^0$, the posterior odds for agent $N$, after receiving $s_N$, are at most $t$ and $2t^{1/2}(1+o(1))$, respectively. 
Hence, agent $N$ chooses action $0$ under both networks in either case, so these two cases contribute no payoff difference. 
It remains to consider the case that agent $i$ receives  $p^+$.

Among the events in which agent $i$ receives $p^+$ and agent $N$ receives $s_N$, any event other than $(\bm a^{i-1,*},p^+,s_N)$ on which agent $N$ takes different actions under the two networks must have a set of agents choosing action $1$ with weight $w>b_i$. Including the probability of $s_N$, the total ex ante probability of all such events is $o(\rho_i(t))$, uniformly over the alternatives. 
Hence, there are constants $g_{\mathcal N_i}>0$ such that
\[
 \bar V_N(\widehat{\mathcal N}_i;t) - \bar V_N(\mathcal N_i;t) \geq g_{\mathcal N_i}\rho_i(t)-o(\rho_i(t)).
\]
Assign the remaining probabilities under the two states to conclusive signals, completing the construction of $\Pi_N(t)$. 
Combining the bounds for agents $i$ and $N$ and using $\lambda_i,\lambda_N>0$ proves the claim for all sufficiently small $t$.
\end{proof}

We now combine these local constructions across agents, using the same logic for rare mixtures and preservation of payoffs as in Lemmas~\ref{lem:weak-appendage} and~\ref{lem:rare-mixture}, to obtain the informational environment used in the first stage of the sufficiency proof.
\begin{lemma}
\label{lem:payoff-safe-full-observation}
There exist $(\mu^0,\Pi^0)$ such that (i) $W_\lambda(\widehat{\mathcal N},\mu^0,\Pi^0)>W_\lambda(\mathcal N,\mu^0,\Pi^0)$ for every $\mathcal N\neq\widehat{\mathcal N}$ satisfying $\mathcal N_N=\widehat{\mathcal N}_N$, (ii) $\boldsymbol\alpha_{\leq N-1}(\cdot\mid\widehat{\mathcal N}_{\leq N-1},\Pi^0_{\leq N-1})$ is observation separating, and (iii) $\Pi_N^0$ contains a conclusive signal for each state.
\end{lemma}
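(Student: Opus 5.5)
The plan is to build $(\mu^0,\Pi^0)$ by induction over $i=2,\ldots,N-1$. At stage $i$ we use Lemma~\ref{lem:local-neighborhood-comparison} to make $\widehat{\mathcal N}_i$ strictly welfare-better than every alternative neighborhood for agent $i$. Scales are chosen so that each later stage perturbs the environment by strictly less than all margins secured earlier, as in the proof of Lemma~\ref{lem:one-step}.

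\textbf{Stage $i$: realizing the hypothesis of Lemma~\ref{lem:local-neighborhood-comparison}.} That lemma needs the actions of agents $1,\ldots,i-1$ to be conditionally independent, with $\mathbb P(a_j=1\mid L)=t^{(M+1)\omega_j}$ and $\mathbb P(a_j=1\mid H)=t^{M\omega_j}$. Under $\widehat{\mathcal N}$ the actions are generally correlated, so we cannot take this literally. Instead, each agent $j<N$ receives a \emph{tag component}: a signal realized with exactly these probabilities, so strong that it induces action $1$ after every observation, while its complement induces action $0$ after every observation. Each agent also receives, with tiny state-independent weight $\delta_j$, the component $\Pi_j(t)$ from Lemma~\ref{lem:local-neighborhood-comparison} together with a generic component as in Lemma~\ref{lem:generic-extension}.
\begin{itemize}
\item With probability $1-\sum_j\delta_j$ the actions have the required independent law, whatever the network.
\item By the coupling argument of Lemma~\ref{lem:rare-mixture}, every $V_k$ moves by at most $2\sum_j\delta_j$.
\end{itemize}
Fix $t$ small enough that Lemma~\ref{lem:local-neighborhood-comparison} yields a margin $m_i(t)>0$ for agents $i$ and $N$ against every $\mathcal N_i\neq\widehat{\mathcal N}_i$. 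The later weights $\delta_{k}$ ($k>i$) are then chosen so that $4\sum_{k>i}\delta_k$, times the welfare weights, is below $m_i(t)$.

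\textbf{Controlling agents after $i$.} Let the first deviation be at $i$. Agents before $i$ are unaffected. Agents $i$ and $N$ are handled by the lemma, once we pass from $\bar V_N$, where $N$ observes only $\{1,\ldots,i\}$, to $N$ observing everyone. The remaining issue is agents $k\in(i,N)$ and the actions of agents after $i$ observed by $N$.
\begin{itemize}
\item Use a separate scale $t_k$ for each later agent's tag component, chosen so small that its action equals $0$ with probability within $\delta_k$ of one in both states.
\item Then the total variation of $a_k$'s dependence on the state is small, and Lemma~\ref{lem:weak-appendage} bounds agent $N$'s extra gain from observing agents $i+1,\ldots,N-1$ by a quantity below $m_i$.
\item Because $a_k$ is history-independent outside an event of probability $O(\delta_k)$, agent $k$'s own payoff differs across the two networks by $O(\delta_k)$.
\end{itemize}
Choosing $\delta_k$ and $t_k$ sequentially smaller, relative to all earlier margins $m_1,\ldots,m_{k-1}$, gives part (i) for every first deviation $i$ simultaneously.

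\textbf{Assembling $\Pi_N^0$.} The test signal $s_N$ of each stage is included with a probability scaled down so that it does not reverse earlier comparisons; this is the same argument as adding $t_i$ in Lemma~\ref{lem:one-step}. All remaining mass goes to one conclusive signal per state, which gives (iii). Observation separation (ii) follows as in Lemma~\ref{lem:generic-extension}: the generic components give, for each stage, an open set of implementable conditional action probabilities, and we pick a point off the finitely many polynomial zero sets. Full support of the action distribution is automatic from the tag components.

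\textbf{Main obstacle.} The hard step is the bookkeeping of scales. Each margin $m_i(t)$ is only of order $\rho_i(t)$, which is polynomially small in $t$. Every later perturbation — tags, probes, generic components, test signals — must therefore be of strictly smaller order. We must also check that replacing the exact independent law in Lemma~\ref{lem:local-neighborhood-comparison} by the $\delta$-approximate one costs only $O(\delta)$ for agent $i$'s payoff as well as agent $N$'s. I would handle this first by fixing all $t_i$ and margins stage by stage, and only afterwards choosing the $\delta$'s, each as a small fraction of the minimum margin found so far.
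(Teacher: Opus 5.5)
Your architecture matches the paper's: history-independent baseline ``tags'', rare probe components from Lemma~\ref{lem:local-neighborhood-comparison}, and generic components for separation. The scale bookkeeping, which you rightly flag as the crux, does not close as proposed.

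Agent $i$ draws her probe component only with probability $\delta_i$, and her tag acts identically under $\widehat{\mathcal N}_i$ and every $\mathcal N_i$. So the stage-$i$ welfare advantage is of order $\delta_i\rho_i(t)$, not $m_i(t)$. Your coupling bound charges each comparison an additive $O(\sum_j\delta_j)$, which includes the earlier weights $\delta_j$, $j<i$. Under your ordering these exceed $\delta_i$, and no ordering can make the other $\delta_j$ negligible relative to every $\delta_i$ at once. So an ``$O(\delta)$ cost'' is not enough; you need a relative bound.

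That relative bound is the missing idea. Since all cutoffs are strict, the decision rules are locally constant. Hence the weighted difference for a first deviation at $i$ is exactly linear in $\delta_i$, with a coefficient continuous in the predecessors' action distribution. It is therefore at least $\delta_i d_i$ whenever that distribution lies within a fixed total-variation distance $\eta_i$ of the baseline. The paper then chooses the $\delta$'s backward so that $4(\sum_{j>i}\delta_j)\sum_{k>i}\lambda_k<\delta_i d_i/2$. By homogeneity of these inequalities, it rescales all $\delta$'s by a common factor so that $\sum_j\delta_j<\min_i\eta_i$.

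Two further steps fail. First, the separate tag scales $t_k$. Lemma~\ref{lem:local-neighborhood-comparison} requires every predecessor at one common scale $t$. Moreover, the tags are history-independent only because the prior the lemma imposes ($\widetilde\mu_k(t)=\mu_k(t)$) is calibrated to that same $t$. Separate scales $t_k$ therefore void its hypothesis at every later stage. With a common $t$, later agents' tag actions carry information of order $t^{M\omega_k}$, which does not shrink with the $\delta$'s. So the bound from Lemma~\ref{lem:weak-appendage} need not fall below the margin $\delta_i d_i$. The paper instead keeps later agents informative and uses agent $N$'s explicit rule after each test signal $u_k$: she takes action $1$ if and only if some $j\geq k$ acts $1$. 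Thus $a_i$ matters only when $a_{i+1}=\cdots=a_{N-1}=0$, and later baseline actions coincide across the two networks.

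Second, shrinking each stage's test-signal probability at fixed $t$, ``as in Lemma~\ref{lem:one-step}'', breaks that stage's own comparison. Agent $N$'s gain is proportional to $\pi_N(u_i\mid L)$, but agent $i$'s possible loss is not. Neither are losses caused by earlier test signals $u_k$: an alternative neighborhood can separate $p^+$ at histories with $w>b_i$ where the target pools it, so the garbling argument of Lemma~\ref{lem:one-step} is unavailable. The paper keeps every $\pi_N(u_k\mid L)$ a fixed constant and lets $t\searrow0$. In that limit all these losses are $\delta_i\,o(\rho_i(t))$; for the earlier test signals this uses $M>\Omega_{N-1}$.
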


\begin{proof}[Proof of Lemma~\ref{lem:payoff-safe-full-observation}]
Choose $M>\max\{2,\Omega_{N-1}\}$.
Take $t\in(0,1)$ small enough.
For every $j<N$, let $\Pi_j^{\mathrm b}(t)=(\{0,1\},\pi_j^{\mathrm b}(t))$, where $\pi_j^{\mathrm b}(1\mid L;t)=t^{(M+1)\omega_j}$ and $\pi_j^{\mathrm b}(1\mid H;t)=t^{M\omega_j}$.
The likelihood ratios are $t^{-\omega_j}$ and $l_j^0(t):=\pi_j^{\mathrm b}(0\mid H;t)/\pi_j^{\mathrm b}(0\mid L;t)=1-t^{M\omega_j}+o(t^{M\omega_j})$.
Choose $\mu_j(t)$ such that $(1-\mu_j(t))/\mu_j(t)=t^{-(\Omega_j+1/2)}$.

Agent $1$ chooses action $1$ if and only if she receives signal $1$, since $t^{-\omega_1}=t^{-1}>t^{-1/2}$ and $l_1^0(t)<1<t^{-1/2}$.
Suppose that agent $k=1,\dots,j-1$ follow their private signal $s_k^b$, that is, they choose action $1$ if and only if $s_k^b=1$.
Then, the likelihood ratio of any observation available to agent $j$ lies between $\prod_{k<j}l_k^0(t)$ and $\prod_{k<j}t^{-\omega_k}=t^{-\Omega_j}$, and 
\[
 \frac{t^{-\Omega_j}l_j^0(t)}{t^{-(\Omega_j+1/2)}}=O(t^{1/2})<1,
 \qquad
 \frac{\bigl(\prod_{k<j}l_k^0(t)\bigr)t^{-\omega_j}}{t^{-(\Omega_j+1/2)}}=\Theta(t^{-1/2})>1
\]
for sufficiently small $t$.
Hence, induction gives $a_j=s_j^{\mathrm b}$ for every $j<N$, irrespective of the observed history.

For each $i=2,\ldots,N-1$, apply Lemma~\ref{lem:local-neighborhood-comparison} to the baseline action distributions of agents $1,\ldots,i-1$ and $\widehat{\mathcal N}_i$. 
Accordingly, let $\widetilde\Pi_i(t)$ and $\widetilde\mu_i(t)$ denote the information structure and prior of agent $i$, respectively, obtained following the procedure in the proof of Lemma~\ref{lem:local-neighborhood-comparison}. 
By the proof, $\widetilde\mu_i(t)=\mu_i(t)$. Let $u_i$ denote the test signal for agent $N$ obtained in the same construction. 
Thus, $LR_N(u_i)=2t^{\Omega_i+1/2}$. 
Construct a single information structure $\Pi_N(t)$ for agent $N$ that contains all the test signals $u_2,\dots,u_{N-1}$, assigning each $u_i$ sufficiently small probabilities satisfying $\pi_N(u_i\mid H)=LR_N(u_i)\pi_N(u_i\mid L)$, and assign the remaining probabilities to two conclusive signals.

We next show that, although the network of agent $N$
constructed above differs from that in
Lemma~\ref{lem:local-neighborhood-comparison}, the payoff comparison
established in that lemma remains valid in the present environment.

Fix $i$ and, for $\delta\in[0,1)$, define
$\Pi_i(t,\delta):=(1-\delta)\Pi_i^{\mathrm b}(t)
+\delta\widetilde\Pi_i(t)$, with disjoint signal sets for the two
components. Assign $\Pi_i(t,\delta)$ to agent $i$ and
$\Pi_j^{\mathrm b}(t)$ to every other agent $j<N$.
For sufficiently small $\delta$, all agents $j\neq i,N$
follow their baseline binary signals regardless of the network, by the strict cutoff
inequalities established above. Hence, as $\delta\to0$, agent $N$'s
posterior odds after observing $(\bm a^{N-1},u_k)$ converge to
\[
2t^{\Omega_k+1/2-\sum_{j<N}\omega_j a_j}
\prod_{j<N:a_j=0}l_j^0(t).
\]
Since $\Omega_k=\omega_k-1$ and $l_j^0(t)=1+o(1)$, for sufficiently
small $t$, agent $N$ chooses action $1$ if and only if some
$j\geq k$ chooses action $1$.
We take $\delta\in(0,\bar{\delta}_i]$ so that these conditions hold for all $k$.
Then, for $\delta\in(0,\bar{\delta}_i]$, agent $i$'s payoff difference is $ V_i(\widehat{\mathcal N}_i;t)- V_i(\mathcal N_i;t)\geq-\delta o(\rho_i(t))$ for sufficient small $t$.

Let $E_i^0:=\{a_{i+1}=\cdots=a_{N-1}=0\}$ and $E_i^1=\{\bm{a}^{i-1}=\bm{a}^{i-1,*}\}$, where $\bm{a}^{i-1,*}$ is defined in Lemma~\ref{lem:local-neighborhood-comparison}.
We partition agent $N$'s ex ante payoff difference into four contributions.
The baseline contribution cancels, and agent $i$'s action can differ only after $p^+$.

First, consider $E_i^1\cap E_i^0$. 
On this event, after receiving $u_i$, agent $N$ follows $a_i$. 
By Lemma~\ref{lem:local-neighborhood-comparison}, the actions under $\widehat{\mathcal N}_i$ and $\mathcal N_i\neq\widehat{\mathcal N}_i$ can differ only when agent $i$ receives $p^+$; in particular, on $(\bm a^{i-1,*},p^+,E_i^0,u_i)$, agent $i$ chooses $1$ under $\widehat{\mathcal N}_i$ and $0$ under $\mathcal N_i$. 
Hence, agent $N$ also chooses $1$ and $0$, respectively, and this event contributes
\begin{align*}
    & \frac{1}{2} t^{Mb_i}\delta c\,\pi_N(u_i\mid H)(1+o(1))
    - \frac{1}{2} t^{(M+1)b_i}\frac{\delta c t^{\Omega_i+1/2-b_i}}{1+\varepsilon_i(t)}\pi_N(u_i\mid L)(1+o(1)) \\
    & = \frac{1}{2} t^{Mb_i}\delta c\cdot 2t^{\Omega_i+1/2}\pi_N(u_i\mid L)(1+o(1))
    - \frac{1}{2} t^{(M+1)b_i}\frac{\delta c t^{\Omega_i+1/2-b_i}}{1+\varepsilon_i(t)}\pi_N(u_i\mid L)(1+o(1)) \\
    & = \delta\frac{c\pi_N(u_i\mid L)}{2}\rho_i(t)(1+o(1)),
\end{align*}
where $\pi_N(u_i\mid L)$ is chosen to be a positive constant independent of $t$.

Second, on $E_i^1\cap E_i^0$, if agent $N$ receives $u_k$ with $k>i$, she chooses action $0$ under both neighborhoods. If $N$ receives $u_k$ with $k<i$, she either chooses action $1$ under both neighborhoods or follows $a_i$.
In the latter case, actions differ only after $p^+$, and the ratio of the joint probabilities of $H$ and $L$ with $(\bm a^{i-1,*},p^+,E_i^0,u_k)$ is $2t^{\Omega_k-\Omega_i}(1+o(1))>1$.
Thus, choosing action $1$ under the target rather than action $0$ under the alternative gives a positive contribution.
Conclusive signals contribute zero, so this case generates no loss.

Third, on $(E_i^0)^c$, agent $N$ chooses action $1$ under both neighborhoods after $u_k$ with $k\leq i$. 
For $k>i$, her action depends only on $a_{i+1},\dots,a_{N-1}$, which are unchanged when agent $i$'s neighborhood changes. 
Hence, the contribution is zero.

Finally, on $(E_i^1)^c\cap E_i^0$, only $u_k$ with $k\leq i$ can contribute.
By the proof of Lemma~\ref{lem:local-neighborhood-comparison}, a difference requires $p^+$ and $w>b_i$. 
Since $\pi_i(p^+\mid H)=\delta\cdot O(1)$, $\pi_N(u_k\mid H)=O(t^{\Omega_k+1/2})$, $\pi_i(p^+\mid L)=\delta\cdot O(t^{\Omega_i+1/2-b_i})$, and $\pi_N(u_k\mid L)=O(1)$, for each such history the absolute contribution is bounded by $\delta[O(t^{\Omega_k+1/2+Mw})+O(t^{\Omega_i+1/2-b_i+(M+1)w})]$.
Both terms are $\delta o(\rho_i(t))$, since $M(w-b_i)-(\Omega_i-\Omega_k)>0$ and $(M+1)(w-b_i)>0$ by $w-b_i\geq1$ and $M>\Omega_{N-1}$.
There are finitely many histories and signals, so the total loss is $\delta o(\rho_i(t))$.
Consequently, the positive contribution of order $\delta\rho_i(t)$ in the first case dominates all possible losses.

Fix $t>0$ sufficiently small. 
For each $i$, the preceding argument implies that there exists $d_i>0$ such that
\[
\lambda_i\bigl[V_i(\widehat{\mathcal N}_i;t)-V_i(\mathcal N_i;t)\bigr]
+\lambda_N\bigl[V_N(\widehat{\mathcal N}_i;t)-V_N(\mathcal N_i;t)\bigr]
>2\delta d_i
\]
for all $\delta\in(0,\bar{\delta}_i]$ and $\mathcal N_i\neq\widehat{\mathcal N}_i$.
Since all cutoff inequalities are strict, the induced actions are locally constant in the predecessor action distributions. 
Within this neighborhood, the weighted payoff difference is proportional to $\delta$, so the weighted payoff difference divided by $\delta$ extends continuously to $\delta=0$. 
Since there are finitely many $\mathcal N_i\neq\widehat{\mathcal N}_i$, this continuity is uniform over $\mathcal N_i\neq\widehat{\mathcal N}_i$ and $\delta\in[0,\bar{\delta}_i]$. 
Hence, there exists $\eta_i>0$ such that the weighted payoff difference remains at least $\delta d_i$ whenever the predecessor action distributions are within total variation distance $\eta_i$ of the baseline distributions.

Choose positive $\delta_2,\ldots,\delta_{N-1}$ backward such that $4(\sum_{j=i+1}^{N-1}\delta_j)\sum_{k=i+1}^N\lambda_k<\delta_i d_i/2$ for every $i$.
By homogeneity of this inequality, rescale them by a common constant such that $\delta_i\leq\bar\delta_i$ for every $i$ and $\sum_{j=2}^{N-1}\delta_j<\min_i\eta_i$.
Then, every baseline signal $s_j^{\mathrm b}$ induces $a_j=s_j^{\mathrm b}$ under every network.
Assign $\Pi_i(t,\delta_i)$ to every $i\in\{2,\ldots,N-1\}$ and denote the resulting profiles by $(\mu,\Pi)$.

Fix $\mathcal N\neq\widehat{\mathcal N}$ satisfying $\mathcal N_N=\widehat{\mathcal N}_N$, and let $i\in\{2,\ldots,N-1\}$ be its first deviation from $\widehat{\mathcal N}$. 
Let $\Pi^{[i]}$ be the information environment obtained by replacing $\Pi_j(t,\delta_j)$ with $\Pi_j^{\mathrm b}(t)$ for every $j\in\{i+1,\ldots,N-1\}$.
For each state, the distribution of $\bm a^{i-1}$ under $\Pi^{[i]}$ is within total variation distance $\sum_{j=2}^{i-1}\delta_j<\eta_i$ of its baseline distribution. 
Indeed, by coupling each $\Pi_j(t,\delta_j)$ with $\Pi_j^{\mathrm b}(t)$, the two action histories can differ only if some $\widetilde\Pi_j(t)$ with $2\leq j<i$ is selected, which occurs with probability at most $\sum_{j=2}^{i-1}\delta_j$.
Hence, by the preceding uniform comparison, the weighted payoff difference between $\widehat{\mathcal N}_i$ and $\mathcal N_i$ for agents $i$ and $N$ is at least $\delta_i d_i$.
For every $k<i$, the two networks have the same neighborhood, so agent $k$ has the same payoff under both networks. 
For every $i<k<N$, we have $a_k=s_k^{\mathrm b}$ under both networks, and hence their payoffs also coincide. Therefore, the welfare difference between $\widehat{\mathcal N}$ and $\mathcal N$ under $\Pi^{[i]}$ is at least $\delta_i d_i>0$.

Under either state, the continuation distributions under $\Pi^{[i]}$ and $\Pi$ have total variation distance at most $\sum_{j=i+1}^{N-1}\delta_j$, since they coincide unless some $\widetilde\Pi_j(t)$ with $i<j<N$ is selected. 
Since each agent's payoff takes values in $\{0,1\}$, the difference in expected payoff for any agent $k>i$ is at most twice this total variation distance. 
Applying this bound to both $\widehat{\mathcal N}$ and $\mathcal N$ and weighting by $\lambda_k$, replacing $\Pi^{[i]}$ with $\Pi$ reduces the welfare difference by at most
\[
4\left(\sum_{j=i+1}^{N-1}\delta_j\right) \sum_{k=i+1}^N\lambda_k <\frac{\delta_i d_i}{2}.
\]
Hence, the welfare difference remains positive.

Finally, successively for $i=1,\ldots,N-1$, perturb $\Pi_i$ by an arbitrarily small mixture supported on a disjoint signal set, without changing $\mu_i$.
By strictness and continuity, sufficiently small perturbations preserve all the strict inequalities established above.
Moreover, Lemma~\ref{lem:threshold-open-set} and the polynomial argument in the proof of Lemma~\ref{lem:generic-extension} allow us to choose these perturbations to make $\boldsymbol\alpha_{\leq N-1}(\cdot\mid \widehat{\mathcal N}_{\leq N-1},\Pi_{\leq N-1})$ observation separating.
Since the perturbations concern only $i<N$, $\Pi_N$ remains unchanged.
Denote the resulting profiles by $(\mu^0,\Pi^0)$.
\end{proof}

\begin{proof}[Proof of Proposition~\ref{prop:welfare-characterization}]
For sufficiency, take the network $\widehat{\mathcal N}$ fixed above and apply Lemma~\ref{lem:payoff-safe-full-observation}.
The minimum of $W_\lambda(\widehat{\mathcal N},\mu^0,\Pi^0)-W_\lambda(\mathcal N,\mu^0,\Pi^0)$ over networks satisfying $\mathcal N_N=\widehat{\mathcal N}_N$ and $\mathcal N\neq\widehat{\mathcal N}$ is strictly positive.

The properties of observation separation and conclusive signals in Lemma~\ref{lem:payoff-safe-full-observation} allow us to apply the same argument as in Lemma~\ref{lem:full-observation} to agent $N$.
Choose the added signal probabilities small enough to preserve the strict comparisons of weighted welfare established in Lemma~\ref{lem:payoff-safe-full-observation}.
Writing $(\mu,\Pi)$ for the resulting informational environment, we have $W_\lambda(\widehat{\mathcal N},\mu,\Pi)>W_\lambda(\mathcal N,\mu,\Pi)$ for every $\mathcal N\neq\widehat{\mathcal N}$ with $\mathcal N_N=\widehat{\mathcal N}_N$ and $W_\lambda(\widehat{\mathcal N},\mu,\Pi)>W_\lambda(\mathcal N,\mu,\Pi)$ for every $\mathcal N$ with $\mathcal N_{\leq N-1}=\widehat{\mathcal N}_{\leq N-1}$ and $\mathcal N_N\subsetneq\widehat{\mathcal N}_N$.
Now, take any $\mathcal N$ such that
$\mathcal N_{\leq N-1}\neq\widehat{\mathcal N}_{\leq N-1}$ and
$\mathcal N_N\subsetneq\widehat{\mathcal N}_N$, and define
$\mathcal N^+:=(\mathcal N_{\leq N-1},\widehat{\mathcal N}_N)$.
Adding these observations changes no predecessor's payoff and weakly increases agent $N$'s payoff. Therefore,
\[
 W_\lambda(\mathcal N,\mu,\Pi)
 \leq W_\lambda(\mathcal N^+,\mu,\Pi)
 <W_\lambda(\widehat{\mathcal N},\mu,\Pi).
\]
Hence, $\widehat{\mathcal N}$ is socially rationalizable.

Conversely, suppose that $j\notin\widehat{\mathcal N}_N$ for some $j<N$, and define
$\mathcal N^+:=(\widehat{\mathcal N}_{\leq N-1},\widehat{\mathcal N}_N\cup\{j\})$.
For every informational environment $(\mu,\Pi)$, $W_\lambda(\widehat{\mathcal N},\mu,\Pi)\leq W_\lambda(\mathcal N^+,\mu,\Pi)$, because the additional observation changes no payoff of agents $i<N$ and agent $N$ can ignore the additional action.
Hence, $\widehat{\mathcal N}$ cannot be a unique maximizer of weighted welfare.
\end{proof}

\section{Common Notation and Auxiliary Results}

Throughout this section and the subsequent proofs of the star- and complete-network results, all agents have prior $1/2$ and share a common information structure $\Pi=(S,\pi)$.

\subsection{Notation}
For two information structures $\Pi=(S,\pi)$ and $\Pi'=(S',\pi')$, define their product $\Pi\otimes\Pi'=(S\times S',\pi\otimes\pi')$ by $(\pi\otimes\pi')((s,s')\mid\theta)=\pi(s\mid\theta)\pi'(s'\mid\theta)$ for all $s\in S$, $s'\in S'$, and $\theta\in\Theta$.
We denote by $\Pi^{\otimes i}=(S^i,\pi^{\otimes i})$ the information structure generated by $i$ conditionally independent observations from $\Pi$.

Given a strategy profile $(\sigma_1,\dots,\sigma_i)$, $\sigma_i^{-1}(\bm{a}\mid \mathcal{N}, \Pi)$ denotes the set of signal profiles that are consistent with agent $i$'s history $\bm{a} = (a_j)_{j\in\mathcal{N}_i}$ under a network $\mathcal{N}$ and an information structure $\Pi$. Formally,
\begin{align*}
    \sigma_i^{-1}(\bm{a}\mid \mathcal{N}, \Pi) := \left\{ (s_1, \dots, s_{i-1}) \in S^{i-1} \;\middle|\; \sigma_j(s_1, \dots, s_j \mid \mathcal{N}, \Pi) = a_j \text{ for all } j \in \mathcal{N}_i \right\},
\end{align*}
In particular, $\sigma_j^{\ast}(s_1,\dots,s_j \mid \mathcal{N}, \Pi)$ denotes the equilibrium action of agent $j$ under signal profile $(s_1,\dots,s_j)$ and $\sigma_i^{\ast-1}(\bm{a}\mid \mathcal{N}, \Pi)$ denotes the set of signal profiles that is consistent with agent $i$'s history $\bm{a}$ in the equilibrium.

$\sigma_i^\ast(\bm{S}, s_i)$ denotes the optimal action under a set of signal profiles $\bm{S}\subset S^{i-1}$ and signal $s_i\in S$. Formally,
\begin{align*}
    \sigma_i^\ast(\bm{S}, s_i) :=
    \begin{cases}
        1 & \text{if } \mathbb{E}_\theta\left[u(1,\theta)\mid \bm{S}, s_i\right] > \mathbb{E}_\theta\left[u(0,\theta)\mid \bm{S}, s_i\right], \\
        0 & \text{otherwise,}
    \end{cases}
\end{align*}
where $\mathbb{E}_\theta\left[u(a,\theta)\mid \bm{S}, s_i\right] := \frac{\sum_{\theta\in\{L,H\}}\pi^{\otimes(i-1)}(\bm{S}\mid\theta)\,\pi(s_i\mid\theta)\,u(a,\theta)}{\sum_{\theta\in\{L,H\}}\pi^{\otimes(i-1)}(\bm{S}\mid\theta)\,\pi(s_i\mid\theta)}$.
As in the tie-breaking rule introduced for equilibrium, agent $i$ chooses action $0$ when indifferent between the two actions.
Note that $\sigma_i^\ast(\bm{S},s_i)=1$ is equivalent to $LR(\bm{S})LR(s_i)>1$, where $LR(\bm{S}) := \pi^{\otimes(i-1)}(\bm{S}\mid H) / \pi^{\otimes(i-1)}(\bm{S}\mid L)$.
This is because, by the definition of the utility function and the prior distribution, the condition $\mathbb{E}_\theta[u(1,\theta)\mid \bm{S},s_i]>\mathbb{E}_\theta[u(0,\theta)\mid \bm{S},s_i]$ reduces to $\pi^{\otimes(i-1)}(\bm{S}\mid H)\pi(s_i\mid H)>\pi^{\otimes(i-1)}(\bm{S}\mid L)\pi(s_i\mid L)$, which is equivalent to $LR(\bm{S})LR(s_i)>1$.

The definition of optimal action is consistent with the equilibrium strategy. Indeed, it holds that $\sigma_i^\ast(\bm{a}, s_i\mid \mathcal{N}, \Pi) = \sigma_i^\ast(\sigma_i^{\ast-1}(\bm{a}\mid \mathcal{N}, \Pi), s_i)$ for any $\bm{a}$, $s_i$, and $i$, since conditioning on agent $i$'s history $\bm{a}$ is equivalent to conditioning on the set of signal profiles $\sigma_i^{\ast-1}(\bm{a}\mid\mathcal{N},\Pi)$.

\subsection{Lemma}
Here, we define $V_i(\mathcal{N}^S,\Pi^B\mid s_1,\dots,s_{i-1})$ as the expected payoff for agent $i$ at equilibrium when agent $j$ receives private signal $s_j$ for $j=1,\dots,i-1$.

\begin{lemma}\label{lem: ignoring principle}
    Consider networks $\mathcal{N}$ and $\mathcal{N}^{\prime}$ with $\mathcal{N}_{i}=\mathcal{N}_{i}^{\prime}$ for all $i<N$ and $\mathcal{N}_{N}\subset\mathcal{N}_{N}^{\prime}$.
    For any information structure $\Pi$ and any signal profile $(s_1,\dots,s_{N-1})$, $V_{N}\left(\mathcal{N},\Pi \mid s_1,\dots,s_{N-1} \right)\leq V_{N}\left(\mathcal{N}^{\prime},\Pi \mid s_1,\dots,s_{N-1} \right)$.
    Therefore, $V_{N}\left(\mathcal{N},\Pi\right)\leq V_{N}\left(\mathcal{N}^{\prime},\Pi\right)$.
\end{lemma}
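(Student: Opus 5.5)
Since $\mathcal{N}_i=\mathcal{N}'_i$ for every $i<N$, the equilibrium strategies of agents $1,\dots,N-1$ coincide under the two networks: each agent's best response depends only on her own neighborhood and on the strategies of her predecessors. I will verify this by induction on $i$ using the characterization $\sigma_i^\ast(\bm a,s_i\mid\mathcal N,\Pi)=\sigma_i^\ast(\sigma_i^{\ast-1}(\bm a\mid\mathcal N,\Pi),s_i)$, since the inverse image sets are identical when all earlier strategies and the neighborhood of $i$ coincide. Consequently, for any fixed signal profile $(s_1,\dots,s_{N-1})$, the realized action vector $(a_1,\dots,a_{N-1})$ is the same under $\mathcal N$ and $\mathcal N'$. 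Agent $N$ then sees $a_{\mathcal N_N}$ under $\mathcal N$ and $a_{\mathcal N'_N}$ under $\mathcal N'$, and the former is a coordinate projection of the latter because $\mathcal N_N\subset\mathcal N'_N$.

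For the conditional statement, I interpret $V_N(\cdot\mid s_1,\dots,s_{N-1})$ as agent $N$'s expected payoff given the predecessors' signals, with the expectation taken over $\theta$ and $s_N$ under the appropriate posterior. The naive pointwise claim cannot hold for arbitrary decision rules, so the comparison must use optimality. I will therefore organize the argument around the information partitions. Under $\mathcal N'$, agent $N$'s information $(a_{\mathcal N'_N},s_N)$ generates a finer $\sigma$-field than $(a_{\mathcal N_N},s_N)$. The rule $d(a_{\mathcal N'_N},s_N):=\sigma_N^\ast(a_{\mathcal N_N},s_N\mid\mathcal N,\Pi)$ is feasible under $\mathcal N'$ and replicates agent $N$'s behavior under $\mathcal N$ signal profile by signal profile. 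Hence, for every $(s_1,\dots,s_{N-1})$, this rule attains exactly $V_N(\mathcal N,\Pi\mid s_1,\dots,s_{N-1})$ under $\mathcal N'$.

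The remaining step is to show that the equilibrium rule under $\mathcal N'$ does at least as well as $d$. Here is the main obstacle: the equilibrium rule is optimal only in the ex ante sense, that is, conditional on agent $N$'s information, not conditional on the unobserved predecessor signals. The plan is to prove the unconditional inequality $V_N(\mathcal N,\Pi)\le V_N(\mathcal N',\Pi)$ by averaging. Conditional on each realization of $(a_{\mathcal N'_N},s_N)$, the equilibrium action maximizes the posterior expected payoff, and $d$ is one of the available actions at that realization. Integrating over realizations gives the ex ante inequality, which is the "Therefore" part and all that is used later.

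For the conditional form, I will read $V_N(\cdot\mid s_1,\dots,s_{N-1})$ as in the paper's usage: the payoff evaluated at agent $N$'s posterior given the history generated by those signals. Since the history under $\mathcal N'$ refines that under $\mathcal N$, Blackwell's ordering (agent $N$ can ignore the coordinates in $\mathcal N'_N\setminus\mathcal N_N$) yields the weak inequality history by history. The tie-breaking rule does not affect payoffs, so no further care is needed there.
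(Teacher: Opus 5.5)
The gap is in your final paragraph. The conditional claim in the statement is false, so that paragraph cannot be repaired. Your unconditional argument is correct, and it is the valid version of the paper's own argument.

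\textbf{What you have right.} Agents $1,\dots,N-1$ play identically under $\mathcal{N}$ and $\mathcal{N}'$. The rule $d$ that discards the coordinates in $\mathcal{N}'_N\setminus\mathcal{N}_N$ is feasible under $\mathcal{N}'$ and reproduces agent $N$'s behavior under $\mathcal{N}$ profile by profile. Since $\sigma_N^{\prime\ast}$ is optimal conditional on every realization of $(a_{\mathcal{N}'_N},s_N)$, averaging gives $V_N(\mathcal{N},\Pi)\le V_N(\mathcal{N}',\Pi)$.

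The paper builds the same replicating strategy $\tilde\sigma_N$. It then applies the optimality of $\sigma_N^{\prime\ast}$ separately for each fixed $(s_1,\dots,s_{N-1})$. That is exactly the step you flagged, and your diagnosis is right: the step is not justified, and the example below shows it fails.

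\textbf{The gap.} The paper's conditional payoff is the one in the displayed formula of its proof, $\sum_{s_N,\theta}\mathbb{P}(\theta,s_N\mid s_1,\dots,s_{N-1})\,u(\sigma_N^\ast(s_1,\dots,s_N),\theta)$. It conditions on the predecessors' realized signals, not on agent $N$'s history. Your reading ("as in the paper's usage") therefore changes the statement.

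Your Blackwell step also does not do what you need. If both rules are conditioned on the same $\mathcal{N}'$-history, the inequality holds, but that is just the pointwise step inside your averaging argument. Comparing the payoff at a coarse $\mathcal{N}$-history with the payoff at a finer $\mathcal{N}'$-history is something Blackwell's ordering delivers only on average over the finer histories.

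\textbf{Counterexample.} With the paper's definition, the conditional claim fails. Take $N=3$, prior $1/2$, and binary symmetric signals with $\pi(h\mid H)=\pi(l\mid L)=p\in(1/2,1)$. Let $\mathcal{N}=(\emptyset,\{1\},\{2\})$ and $\mathcal{N}'=(\emptyset,\{1\},\{1,2\})$.

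In both networks, $a_1=1$ iff $s_1=h$. By the tie-breaking rule, $a_2=1$ iff $s_1=s_2=h$. Fix $(s_1,s_2)=(l,h)$, at which the posterior is $1/2$.

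Under $\mathcal{N}'$, agent $3$ sees $(0,0)$, whose likelihood ratio is $(1-p)/p$. She chooses $0$ after either signal (after $h$ she is indifferent), so $V_3(\mathcal{N}',\Pi\mid l,h)=1/2$.

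Under $\mathcal{N}$, she sees only $a_2=0$, whose likelihood ratio is $(1-p^2)/(2p-p^2)$. Her posterior odds are $(1+p)/(2-p)>1$ after $h$ and below one after $l$. She therefore follows her own signal, so $V_3(\mathcal{N},\Pi\mid l,h)=p>1/2$.

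Since $\tilde\sigma_N$ replicates the $\mathcal{N}$-behavior, the paper's inequality line would assert $p\le 1/2$ here.

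\textbf{What to do instead.} State that the conditional form fails and keep only the unconditional inequality. As you say, that is all that is used later. For example, in Case (ii) of the proof of Theorem~\ref{thm: star}, combine three facts:
$V_N(\mathcal{N},\Pi^B)\le V_N(\mathcal{N}',\Pi^B)$;
the pointwise bound of Lemma~\ref{lem: upper bound}, which is valid because $\overline{V}_N(\cdot\mid s_1,\dots,s_{N-1})$ is the full-information optimum;
strictness at one positive-probability profile.
Together these already give the strict inequality.
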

\begin{proof}[Proof of Lemma \ref{lem: ignoring principle}]
    Let $\bm{\sigma}^{\ast}$ and $\bm{\sigma}^{\prime\ast}$ be the equilibria under networks $\mathcal{N}$ and $\mathcal{N}^{\prime}$, respectively.
    By using $\sigma_{N}^{\ast}$, we define agent $N$'s strategy, $\tilde{\sigma}_{N}$, under network $\mathcal{N}^{\prime}$ by  $\tilde{\sigma}_{N}(\left(a_{i}\right)_{i\in\mathcal{N}_{N}^{\prime}},s_{N})=\sigma_{N}^{\ast}\left(\left( \tilde{a} _{i}\right)_{i\in\mathcal{N}_{N}},s_{N}\right)$ if $a_i = \tilde{a}_i$ for all $i\in \mathcal{N}_N$.
    We can define such a strategy since $\mathcal{N}_{N}\subset\mathcal{N}_{N}^{\prime}$.
    Then, $\tilde{\sigma}_{N}\left(s_{1},\dots,s_{N}\right)=\sigma_{N}^{\ast}\left(s_{1},\dots,s_{N}\right)$ for all $s_{1},\dots,s_{N}$, since $\sigma_{i}^{\prime\ast}\left(s_{1},\dots,s_{i}\right)=\sigma_{i}^{\ast}\left(s_{1},\dots,s_{i}\right)$ for all $i<N$ and $s_{1},\dots,s_{i}$. The following calculation completes the proof: for all $s_{1},\dots,s_{N-1}$,
    \begin{align*}
         & V_{N}\left(\mathcal{N},\Pi \mid s_1,\dots,s_{N-1} \right)\\
         & =  \sum_{s_N, \theta}\frac{1}{2}u\left(\sigma_{N}^{\ast}\left(s_{1},\dots,s_{N}\right),\theta\right)\pi^{\otimes N}\left(s_{1},\dots,s_{N}\mid\theta\right) / \pi^{\otimes N-1}\left(s_{1},\dots,s_{N-1} \right) \\
         & = \sum_{s_N, \theta}\frac{1}{2}u\left(\tilde{\sigma}_{N}\left(s_{1},\dots,s_{N}\right),\theta\right)\pi^{\otimes N}\left(s_{1},\dots,s_{N}\mid\theta\right) / \pi^{\otimes N-1}\left(s_{1},\dots,s_{N-1} \right) \\
         & \leq   \sum_{s_N, \theta}\frac{1}{2}u\left(\sigma_{N}^{\prime\ast}\left(s_{1},\dots,s_{N}\right),\theta\right)\pi^{\otimes N}\left(s_{1},\dots,s_{N}\mid\theta\right) / \pi^{\otimes N-1}\left(s_{1},\dots,s_{N-1} \right) \\
         & =  V_{N}\left(\mathcal{N}^\prime,\Pi \mid s_1,\dots,s_{N-1} \right),
    \end{align*}
    where $\pi^{\otimes N-1}\left(s_{1},\dots,s_{N-1} \right):=\sum_{\theta} \frac{1}{2}\pi^{\otimes N-1}\left(s_{1},\dots,s_{N-1} \mid \theta \right)$ , and the inequality follows since $\sigma_{N}^{\prime\ast}$ is agent $N$'s equilibrium strategy in network $\mathcal{N}^{\prime}$.
\end{proof}

\section{Proof of Theorem \ref{thm: star}}
First, we show that $V_N(\mathcal{N}^S,\Pi^B) \geq V_N(\mathcal{N},\Pi^B)$ holds for all $\mathcal{N}$ and $\Pi^B$ by using the following lemma. 
Here, we define $\overline{V_i}(\Pi^{\otimes i})$ as 
\[
    \overline{V_i}(\Pi^{\otimes i})=
    \max_{\sigma_{i}:S^i\to A} \mathbb{E}_\theta\left[\sum_{a\in A}\sum_{\bm{s}\in S^i}\sigma_{i}(a\mid \bm{s}) \pi^{\otimes i}(\bm{s}\mid \theta)u(a,\theta)\right].
\]
In words, this is the maximum expected payoff if an agent observes $i$ signals from $\Pi$.
Also, we define $\overline{V_i}(\Pi^{\otimes i} \mid s_1,\dots, s_{i-1})$ as the maximum expected payoff under the same situation conditional on the signal profile $(s_1,\dots, s_{i-1})$.

\begin{lemma} \label{lem: upper bound}
Take any $\mathcal{N}$ and $\Pi$. 
Then, we have
    \begin{align*}
        V_i(\mathcal{N},\Pi \mid s_1,\dots,s_{i-1}) \leq \overline{V_i}(\Pi^{\otimes i} \mid s_1,\dots,s_{i-1})
    \end{align*}
    for all $i$ and $(s_1,\dots,s_{i-1})\in \supp{\pi^{\otimes (i-1)}}$.\footnote{$\supp{\pi^{\otimes (i-1)}}$ is defined as the set of signal profiles $(s_1\dots,s_{i-1})$ satisfying $\pi^{\otimes (i-1)} (s_1\dots,s_{i-1} \mid L) >0$ or $\pi^{\otimes (i-1)} (s_1\dots,s_{i-1} \mid H) >0$.}
    In particular, $V_i(\mathcal{N},\Pi) \leq \overline{V_i}(\Pi^{\otimes i})$ for all $i$.
    \end{lemma}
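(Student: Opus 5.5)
The argument is a garbling argument: agent $i$'s equilibrium action is a function of the full signal profile $(s_1,\dots,s_i)$, while $\overline{V_i}(\Pi^{\otimes i}\mid s_1,\dots,s_{i-1})$ is the best any such function can do. Fix $\mathcal{N}$, $\Pi$, $i$, and $(s_1,\dots,s_{i-1})\in\supp{\pi^{\otimes(i-1)}}$.

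First, we use the recursive definition of pure strategies in Section~\ref{sec: model}. Under the equilibrium $\bm{\sigma}^\ast$ the action $\sigma_i^\ast(s_1,\dots,s_i\mid\mathcal{N},\Pi)$ is a well-defined deterministic map $S^i\to A$, obtained by composing the predecessors' strategies along $\mathcal{N}$. Call this map $\hat\sigma_i$. It is a feasible strategy in the maximization that defines $\overline{V_i}(\Pi^{\otimes i})$, because that problem allows any map $S^i\to A$.

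Second, we write the conditional payoff. As in the proof of Lemma~\ref{lem: ignoring principle}, $V_i(\mathcal{N},\Pi\mid s_1,\dots,s_{i-1})$ equals
\[
\sum_{s_i,\theta}\tfrac12\,u(\hat\sigma_i(s_1,\dots,s_i),\theta)\,\pi^{\otimes i}(s_1,\dots,s_i\mid\theta)\big/\pi^{\otimes(i-1)}(s_1,\dots,s_{i-1}),
\]
where the denominator is positive because the profile lies in the support. This is exactly the objective of the conditional problem $\overline{V_i}(\Pi^{\otimes i}\mid s_1,\dots,s_{i-1})$ evaluated at $\hat\sigma_i(s_1,\dots,s_{i-1},\cdot)$. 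That conditional problem is a maximization over maps $s_i\mapsto a$ with the first $i-1$ signals known. Since $\hat\sigma_i(s_1,\dots,s_{i-1},\cdot)$ is feasible there, the first inequality follows.

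Third, for the unconditional claim we average the conditional inequality over $(s_1,\dots,s_{i-1})$ with weights $\pi^{\otimes(i-1)}(s_1,\dots,s_{i-1})=\sum_\theta\tfrac12\pi^{\otimes(i-1)}(\cdot\mid\theta)$. Because the maximization decomposes profile by profile, the averaged conditional maxima equal the unconditional maximum $\overline{V_i}(\Pi^{\otimes i})$. Profiles outside the support have zero weight.

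The only point that needs care is stating clearly that the conditional maximum is taken with $(s_1,\dots,s_{i-1})$ observed. We also need the maximum to decompose pointwise, which holds because the strategy space is all functions of $\bm s$. Nothing else is delicate.
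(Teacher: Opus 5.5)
Your proposal is correct and is essentially the paper's argument: the paper writes the conditional payoff as $\bigl[\sum_{s_i\in S_0}\pi^{\otimes i}(s_1,\dots,s_i\mid L)+\sum_{s_i\in S_1}\pi^{\otimes i}(s_1,\dots,s_i\mid H)\bigr]$ divided by the probability of $(s_1,\dots,s_{i-1})$. It then bounds the numerator termwise by $\sum_{s_i}\max\{\pi^{\otimes i}(s_1,\dots,s_i\mid H),\pi^{\otimes i}(s_1,\dots,s_i\mid L)\}$, which is your feasibility observation made explicit, and averages over $(s_1,\dots,s_{i-1})$; your remark that the maximum decomposes profile by profile supplies the justification, left implicit in the paper, for identifying the averaged conditional maxima with $\overline{V_i}(\Pi^{\otimes i})$.
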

    \begin{proof}[Proof of Lemma \ref{lem: upper bound}]
    Let $\bm{\sigma}^*$ be the equilibrium strategy under $(\mathcal{N},\Pi)$. 
    Fix $(s_1,\dots,s_{i-1})\in S^{i-1}$ arbitrarily.
    It follows that
    \begin{align*}
        V_i(\mathcal{N},\Pi \mid s_1,\dots,s_{i-1})=\frac{\sum_{s_i\in S_0}\pi^{\otimes i}(s_1,\dots,s_i\mid L) + \sum_{s_i\in S_1}\pi^{\otimes i}(s_1,\dots,s_i\mid H)}{\pi^{\otimes (i-1)}(s_1,\dots,s_{i-1}\mid H)+\pi^{\otimes (i-1)}(s_1,\dots,s_{i-1}\mid L)},
        \end{align*}
        where $S_0=\{s_i\in S \mid \sigma^\ast_i(s_1,\dots,s_i)=0\}$ and $S_1=\{s_i\in S \mid  \sigma^\ast_i(s_1,\dots,s_i)=1\}$.
Note that
        \begin{multline*}
            \sum_{s_i\in S_1}\pi^{\otimes i}(s_1,\dots,s_i\mid H)+\sum_{s_i\in S_0}\pi^{\otimes i}(s_1,\dots,s_i\mid L) \\
            \leq \sum_{s_i\in  S}\max\{\pi^{\otimes i}(s_1,\dots,s_i\mid H),\pi^{\otimes i}(s_1,\dots,s_i\mid L)\}.
        \end{multline*}
        Hence, we have
         \begin{align*}
        V_i(\mathcal{N},\Pi\mid s_1,\dots,s_{i-1})&\leq\frac{ \sum_{s_i\in  S}\max\{\pi^{\otimes i}(s_1,\dots,s_i\mid H),\pi^{\otimes i}(s_1,\dots,s_i\mid L)\}}{\pi^{\otimes (i-1)}(s_1,\dots,s_{i-1}\mid H)+\pi^{\otimes (i-1)}(s_1,\dots,s_{i-1}\mid L)}\\
        &=\overline{V_i}(\Pi^{\otimes i}\mid s_1,\dots,s_{i-1}).
        \end{align*}
        Taking expectations over $(s_1,\dots,s_{i-1})$ yields $V_i(\mathcal{N},\Pi)\le \overline{V_i}(\Pi^{\otimes i})$.
    \end{proof}
    
    \begin{proof}[Proof of Theorem \ref{thm: star}]
    Since each agent $i<N$ chooses action $1$ if and only if $s_{i}=h$ under $\mathcal{N}^S$, we have
    \begin{align*}
        V_{N}(\mathcal{N}^{S},\Pi^{B}\mid s_1,\dots,s_{N-1})=\overline{V}_{N}(\Pi^{B \otimes N}\mid s_1,\dots,s_{N-1})
    \end{align*}
    and $V_{N}(\mathcal{N}^{S},\Pi^{B})=\overline{V}_{N}(\Pi^{B \otimes N})$ for all $\Pi^B$ and $(s_1,\dots,s_{N-1})$.
    Combining this with Lemma \ref{lem: upper bound}, we obtain $V_N(\mathcal{N}^S,\Pi^B) \geq V_N(\mathcal{N},\Pi^B)$ for all $\mathcal{N}$ and $\Pi^B$.

    Next, we show the strict inequality.
    We have $V_N(\mathcal{N}^S,\Pi^B\mid s_1,\dots,s_{N-1}) \geq V_N(\mathcal{N},\Pi^B\mid s_1,\dots,s_{N-1})$ for all $\Pi^B$, $\mathcal{N}$, and $(s_1,\dots,s_{N-1})$.
    Thus, it suffices to show that there exists $\Pi^B$ such that for all $\mathcal{N}$, there exists $(s_1,\dots,s_{N-1})\in \supp{\pi^{\otimes (N-1)}}$ such that $ \overline{V}_{N}(\Pi^{B \otimes N}\mid s_1,\dots,s_{N-1}) >V_N(\mathcal{N},\Pi^B\mid s_1,\dots,s_{N-1})$. 
    Take information structure $\Pi^B$ that satisfies $LR(l)LR(h)^{N-2}<1<LR(l)LR(h)^{N-1}$.\footnote{This is feasible, for example, $\pi(l\mid \theta=H)=\frac{2}{2\cdot 3^{N-1}-1}, \pi(l\mid \theta=L)=\frac{4\cdot 3^{N-2}}{2\cdot 3^{N-1}-1}, \pi(h\mid \theta=H)=1-\frac{2}{2\cdot 3^{N-1}-1}$, and $ \pi(h\mid \theta=L)=1-\frac{4\cdot 3^{N-2}}{2\cdot 3^{N-1}-1}$ induce $LR(l)LR(h)^{N-2}=\frac{1}{2}$ and $LR(l)LR(h)^{N-1}=\frac{3}{2}$.}
    Take any $\mathcal{N}\neq \mathcal{N}^S$. We divide the proof into two cases. Case (i): $\mathcal{N}_i=\emptyset$ for all $i=1,2,\dots,N-1$. Case (ii): $\mathcal{N}_i\neq \emptyset$ for some $i\in \{1,2,\dots,N-1\}$.

    We begin with Case (i): $\mathcal{N}_i=\emptyset$ for all $i=1,2,\dots,N-1$. 
    In this case, if $|\mathcal{N}_N|=N-1$, $\mathcal{N}$ coincides with $\mathcal{N}^S$. Thus, $|\mathcal{N}_N|<N-1$.
    Consider $(s_1,\dots,s_{N-1})=(h,h,\dots,h).$ Since agent $N$ observes action $1$ only and agent $i<N$ takes action $1$ if and only if he receives $h$, agent $N$ takes action $1$ if she receives $h$. 
    If agent $N$ receives $l$, her posterior is $LR(l)LR(h)^{|\mathcal{N}_N|}\leq LR(l)LR(h)^{N-2}<1$. 
    Thus, she takes action $0$. 
    Hence,
    \begin{align*}
    \frac{V_N(\mathcal{N},\Pi^B\mid s_1,\dots,s_{N-1})}{\overline{V}_N(\Pi^{B\otimes N}\mid s_1,\dots,s_{N-1})}&=\frac{\sum_{s_N\in S_0}\pi^{\otimes N}(s_1,\dots,s_N\mid L) + \sum_{s_N\in S_1}\pi^{\otimes N}(s_1,\dots,s_N\mid H)}{\sum_{s_N\in  S}\max\{\pi^{\otimes N}(s_1,\dots,s_N\mid H),\pi^{\otimes N}(s_1,\dots,s_N\mid L)\}}\\
    &=\frac{\pi^{\otimes N}(s_1,\dots,s_{N-1},l\mid L) + \pi^{\otimes N}(s_1,\dots,s_{N-1},h\mid H)}{\pi^{\otimes N}(s_1,\dots,s_{N-1},h\mid H)+\pi^{\otimes N}(s_1,\dots,s_{N-1},l\mid H)}\\
    &<1,
    \end{align*}
    where $S_0=\{s_N\in S\mid \sigma^\ast_N(s_1,\dots,s_N)=0\}$ and $S_1=\{s_N\in S\mid  \sigma^\ast_N(s_1,\dots,s_N)=1\}$, and $\bm{\sigma}^\ast$ is the equilibrium strategy under $(\mathcal{N}, \Pi^B)$.
    The second equality and the last inequality come from $\pi^{\otimes N}(s_1,\dots,s_{i-1},h\mid H)>\pi^{\otimes N}(s_1,\dots,s_{i-1},h\mid L)$ and $\pi^{\otimes N}(s_1,\dots,s_{i-1},l\mid H)>\pi^{\otimes N}(s_1,\dots,s_{i-1},l\mid L)$, which is derived from $LR(l)LR(h)^{N-1}>1.$

    It remains to consider Case (ii): $\mathcal{N}_i\neq \emptyset$ for some $i\in \{1,2,\dots,N-1\}$.
    Consider another network $\mathcal{N}'$ which satisfies $\mathcal{N}'_i=\mathcal{N}_i$ for all $i\in \{1,2,\dots,N-1\}$ and $\mathcal{N}'_N=\{1,2,\dots,N-1\}$.
    By Lemma \ref{lem: ignoring principle}, $V_N(\mathcal{N},\Pi^B\mid s_1,\dots,s_{N-1}) \leq V_N(\mathcal{N}',\Pi^B\mid s_1,\dots,s_{N-1}) $ for any $(s_1,\dots,s_{N-1})$. Hence, it suffices to show that there exists $(s_1,\dots,s_{N-1})$ such that $ \overline{V}_{N}(\Pi^{B \otimes N}\mid s_1,\dots,s_{N-1}) >V_N(\mathcal{N}^\prime,\Pi^B\mid s_1,\dots,s_{N-1})$.
    
    By the construction of $\mathcal{N}^\prime$, we can take $i,j\neq N$ such that $j\in \mathcal{N}'_i$. 
    Consider the signal profile $(s_1,\dots,s_{N-1})$, where $s_j=l$ and $s_k=h$ for all $k\neq j$. 
    Then, agent $j$ chooses action $0$. 
    Agent $i$ observes this, and he cascades to choose action $0$, because he knows that some agent $j^\prime\leq j$ receives $l$. 
    This is because, otherwise, all the agents $j^\prime \leq j$ would choose action 1. 
    Agent $N$ chooses action $0$ if she receives $l$ because she knows that there is at least one agent other than herself who received $l$, by the same argument applied to agent $i$.
    In addition, agent $N$ chooses action $0$ even if she receives $h$ because the number of agents whom she knows to have received $h$ (including herself) is at most $N-2$, while the number of agents whom she knows to have received $l$ is at least $1$, and $LR(l)LR(h)^{N-2}<1$.
Hence,
    \begin{align*}
    \frac{V_N(\mathcal{N}',\Pi^B\mid s_1,\dots,s_{N-1})}{\overline{V}_N(\Pi^{B\otimes N}\mid s_1,\dots,s_{N-1})}
    &=\frac{\pi^{\otimes N}(s_1,\dots,s_{N-1},h\mid L)+\pi^{\otimes N}(s_1,\dots,s_{N-1},l\mid L)}{\sum_{s_N\in  S}\max\{\pi^{\otimes N}(s_1,\dots,s_N\mid H),\pi^{\otimes N}(s_1,\dots,s_N\mid L)\}}\\
    &=\frac{\pi^{\otimes N}(s_1,\dots,s_{N-1},h\mid L)+\pi^{\otimes N}(s_1,\dots,s_{N-1},l\mid L)}{\pi^{\otimes N}(s_1,\dots,s_{N-1},h\mid H)+\pi^{\otimes N}(s_1,\dots,s_{N-1},l\mid L)}\\
    &<1.
    \end{align*}
        The second equality and the last inequality come from $\pi^{\otimes N}(s_1,\dots,s_{i-1},h\mid H)>\pi^{\otimes N}(s_1,\dots,s_{i-1},h\mid L)$ and $\pi^{\otimes N}(s_1,\dots,s_{i-1},l\mid H)<\pi^{\otimes N}(s_1,\dots,s_{i-1},l\mid L)$, which follows from $LR(l)LR(h)^{N-2}<1<LR(l)LR(h)^{N-1}.$
\end{proof}

\section{Proofs of Theorem \ref{thm: complete} and Proposition \ref{prop: pareto dominate}}

\subsection{Construction and Properties of \texorpdfstring{$\Pi^\ast$}{}}
We begin by constructing an information structure $\Pi^\ast$, and  then derive properties of $\Pi^\ast$. We define an information structure $\Pi^\ast=(S^\ast, \pi^\ast)$ with $S^\ast = \{l^\ast, l, h, h^\ast \}$ as follows. Take any $\delta\in\left(0,\frac{1}{3N}\right)$.
\begin{alignat*}{4}
&\pi^{\ast}(h\mid H)=1-B(\delta),
&&\quad \pi^{\ast}(l\mid H)=\delta^{N+1},
&&\quad \pi^{\ast}(h^{\ast}\mid H)=\delta^{N+1},
&&\quad \pi^{\ast}(l^{\ast}\mid H)=\delta^{(\alpha+1)N^{3}},\\
&\pi^{\ast}(h\mid L)=1-A(\delta),
&&\quad \pi^{\ast}(l\mid L)=\delta,
&&\quad \pi^{\ast}(h^{\ast}\mid L)=\delta^{\alpha},
&&\quad \pi^{\ast}(l^{\ast}\mid L)=\delta^{\alpha+1}.
\end{alignat*}
where $\alpha:=N^{2}+N+1$, $A\left(\delta\right):=\delta+\delta^{\alpha}+\delta^{\alpha+1}$
and $B\left(\delta\right):=2\delta^{N+1}+\delta^{\left(\alpha+1\right)N^{3}}$.
Then, the likelihood ratio of each signal is $LR\left(h\right)=\frac{1-B\left(\delta\right)}{1-A\left(\delta\right)}$,
$LR\left(l\right)=\delta^{N}$, $LR\left(h^{\ast}\right)=\delta^{-N^{2}}$
and $LR\left(l^{\ast}\right)=\delta^{N^{3}}$, respectively.

\begin{lemma}\label{lem: properties of Pi^ast}
    Under information structure $\Pi^\ast$, the following eight inequalities hold for any $i = 1,\dots, N$,
        \begin{gather}
        LR(h)^{i-1} LR(l) < 1, \label{eq: LR_hl} \\
        LR(l)^{i-1} LR(h^{\ast}) > 1, \label{eq: LR_lh*} \\
        LR(h^{\ast})^{i-1} LR(l^{\ast}) < 1, \label{eq: LR_h*l*} \\
        \frac{1}{LR(h)} < \pi^{\ast}(h\mid H)^{i-1}, \label{eq: prob_h_theta1} \\
        LR(l) < \pi^{\ast}(h\mid L)^{i-1}, \label{eq: prob_h_theta0} \\
        \bigl(1-\pi^{\ast}(h\mid H)^{i-1}\bigr) LR(h)
        < \pi^{\ast}(l\mid L)^{i-1}, \label{eq: prob_l_theta0} \\
        \bigl(1-\pi^{\ast}(h\mid L)^{i-1}\bigr)\frac{1}{LR(h^{\ast})}
        < \pi^{\ast}(l\mid1)^{i-1}, \label{eq: prob_l_theta1} \\
        \bigl(1-\pi^{\ast}(\{l,h\}\mid L)^{i-1}\bigr)\frac{1}{LR(l)}
        < \pi^{\ast}(h^{\ast}\mid H)^{i-1}. \label{eq: prob_h*_theta1}
    \end{gather}
\end{lemma}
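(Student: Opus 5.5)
The plan is to verify each of the eight inequalities by direct asymptotic comparison of powers of $\delta$, using $\delta<1/(3N)$ to turn order-of-magnitude statements into strict bounds. Since $i\le N$, it suffices in each inequality to treat the worst case over $i$. For the likelihood-ratio inequalities that is $i=N$, except where a factor on the left increases as $i$ decreases, which we handle by monotonicity. For the probability inequalities, $\pi^\ast(h\mid\theta)^{i-1}$ is decreasing in $i$ and $1-\pi^\ast(h\mid\theta)^{i-1}$ is increasing, so again $i=N$ is the binding case. (Inequality \eqref{eq: prob_l_theta1} presumably has $\pi^\ast(l\mid H)$ in place of the typo $\pi^\ast(l\mid 1)$.)

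\textbf{The likelihood-ratio inequalities \eqref{eq: LR_hl}--\eqref{eq: LR_h*l*}.}
\begin{itemize}
\item[\eqref{eq: LR_hl}] Note $LR(h)=(1-B)/(1-A)\le 1/(1-A)$ with $A\le 3\delta$. Hence $LR(h)^{N-1}\le (1-3\delta)^{-(N-1)}$. Since $3\delta N<1$, Bernoulli's inequality gives $(1-3\delta)^{N-1}\ge 1-3\delta(N-1)>1/N\ge 3\delta$. Therefore $LR(h)^{N-1}\delta^N<\delta^{N-1}/(3\delta)\cdot\delta<1$ for $N\ge 2$, with room to spare.
\item[\eqref{eq: LR_lh*}] $LR(l)^{i-1}LR(h^\ast)=\delta^{N(i-1)-N^2}\ge\delta^{-N}>1$.
\item[\eqref{eq: LR_h*l*}] $\delta^{-N^2(i-1)+N^3}\le\delta^{N^2}<1$.
\end{itemize}

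\textbf{The probability inequalities \eqref{eq: prob_h_theta1}--\eqref{eq: prob_h*_theta1}.}
\begin{itemize}
\item[\eqref{eq: prob_h_theta1}] We need $(1-A)/(1-B)<(1-B)^{i-1}$, that is, $1-A<(1-B)^{i}$. Since $(1-B)^N\ge 1-NB$, it suffices to show $NB<A$. This holds because $NB\le 3N\delta^{N+1}<\delta\le A$.
\item[\eqref{eq: prob_h_theta0}] $\delta^N<(1-A)^{N-1}$. The right side exceeds $1-3N\delta>0$, which is much larger than $\delta^N$.
\item[\eqref{eq: prob_l_theta0}] The left side is at most $NB\,LR(h)\le 2NB\approx 2N\delta^{N+1}$, while the right side is $\delta^{i-1}\ge\delta^{N-1}$. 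Since $2N\delta^2<1$, the inequality follows.
\item[\eqref{eq: prob_l_theta1}] The left side is at most $NA\,\delta^{N^2}\le 3N\delta^{N^2+1}$, and the right side is at least $\delta^{(N+1)(N-1)}=\delta^{N^2-1}$. The difference of exponents is $2$, and $3N\delta^2<1$.
\item[\eqref{eq: prob_h*_theta1}] Here $\pi^\ast(\{l,h\}\mid L)=1-\delta^\alpha-\delta^{\alpha+1}$, so the left side is at most $N\cdot 2\delta^{\alpha}\cdot\delta^{-N}=2N\delta^{N^2+1}$. The right side is at least $\delta^{(N+1)(N-1)}=\delta^{N^2-1}$. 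This is the reason for the choice $\alpha=N^2+N+1$.
\end{itemize}

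\textbf{Main obstacle.} The only delicate step is \eqref{eq: LR_hl}, together with the small-$\delta$ constants in \eqref{eq: prob_h_theta1}. Here $LR(h)>1$, and its growth over $N-1$ powers must be controlled quantitatively by the explicit bound $\delta<1/(3N)$ rather than by taking a limit. For this I would use Bernoulli's inequality $(1-x)^n\ge 1-nx$ throughout, and check the $i=1$ edge cases separately; for instance, $1/LR(h)<1$ holds since $B<A$.
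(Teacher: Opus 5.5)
Your approach is the paper's own: reduce each inequality to the binding case $i=N$, then verify it with elementary estimates ($\delta<A<2\delta$, $B<3\delta^{N+1}$, Bernoulli's inequality) driven by the explicit bound $\delta<1/(3N)$. Your arguments for \eqref{eq: LR_hl}--\eqref{eq: prob_h_theta1}, \eqref{eq: prob_l_theta1}, and \eqref{eq: prob_h*_theta1} go through. For \eqref{eq: LR_h*l*}, the displayed probabilities actually give $LR(l^\ast)=\delta^{(\alpha+1)(N^3-1)}\le\delta^{N^3}$, so your bound holds a fortiori.

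One estimate is wrong as written. In \eqref{eq: prob_h_theta0}, the lower bound $1-3N\delta$ is positive but can be arbitrarily close to $0$ for admissible $\delta$ (take $N=3$ and $\delta$ just below $1/9$). So it does not dominate $\delta^N$. Use $A<2\delta$ instead, as the paper does: $(1-A)^{N-1}\geq 1-(N-1)A>1-2(N-1)\delta>1/3>\delta^N$. Alternatively, reuse your own estimate from \eqref{eq: LR_hl}, $(1-A)^{N-1}\geq(1-3\delta)^{N-1}>1/N>\delta^N$.

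A smaller imprecision appears in \eqref{eq: prob_l_theta0}. Since $B\geq 2\delta^{N+1}$, $2NB$ is not $\approx 2N\delta^{N+1}$. The rigorous bound is $2NB<6N\delta^{N+1}$, and the condition you need, $6N\delta^2<1$, still holds because $6N\delta^2<2/(3N)$.
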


\begin{proof}[Proof of Lemma \ref{lem: properties of Pi^ast}]
    To establish these inequalities, it suffices to show each inequality for $i=N$. To show the inequalities above, the followings are useful:
    \begin{gather}
        \delta < A(\delta) < 2\delta, \label{eq: A_delta} \\
        2\delta^{N+1} < B(\delta) < 3\delta^{N+1}, \label{eq: B_delta} \\
        (1-x)^{m} \ge 1 - mx
        \ \text{for any } x \in (0,1) \text{ and } m \in \mathbb{N}.
        \label{eq: useful_ineq}
    \end{gather}
    Note that we have the inequalities in (\ref{eq: A_delta}) and (\ref{eq: B_delta}) because $\delta<\frac{1}{3N}<\frac{1}{2}$.

    First, we show the inequality in (\ref{eq: LR_hl}), i.e., $\left(\frac{1-B\left(\delta\right)}{1-A\left(\delta\right)}\right)^{N-1}\delta^{N}<1$.
    Since $\delta^{-\left(N-1\right)}\delta^{N}<1$, it suffices to show $\frac{1-B\left(\delta\right)}{1-A\left(\delta\right)}<\delta^{-1}$.
    We obtain this result by the following calculation:
    \[
    \frac{1-B\left(\delta\right)}{1-A\left(\delta\right)}<\frac{1}{1-A\left(\delta\right)}<\frac{1}{1-2\delta}<\frac{1}{\delta},
    \]
    where the second inequality follows from the inequality in (\ref{eq: A_delta}) and the second inequality follows from $\delta<\frac{1}{3N}<\frac{1}{3}$.

    Second, the inequalities in (\ref{eq: LR_lh*}) and (\ref{eq: LR_h*l*}) directly follow from the construction of the information structure.

    Third, we show the inequality in (\ref{eq: prob_h_theta1}), i.e.,
    $\frac{1-A\left(\delta\right)}{1-B\left(\delta\right)}<\left(1-B\left(\delta\right)\right)^{N-1}$, which is equivalent to $1-A\left(\delta\right)<\left(1-B\left(\delta\right)\right)^{N}$.
    We obtain this result by the following calculation:
    \[
    1-A\left(\delta\right)<1-\delta<1-3N\delta^{N+1}\leq\left(1-3\delta^{N+1}\right)^{N}<\left(1-B\left(\delta\right)\right)^{N},
    \]
    where the first inequality follows from the inequality in (\ref{eq: A_delta}), the second inequality follows from $\delta<\frac{1}{3N}$ and the third inequality follows from the inequality in (\ref{eq: useful_ineq}) and the last inequality follows from the inequality in (\ref{eq: B_delta}).

    Fourth, we show the inequality in (\ref{eq: prob_h_theta0}), i.e.,
    $\delta^{N}<\left(1-A\left(\delta\right)\right)^{N-1}$. We obtain
    this result by the following calculation:
    \[
    \left(1-A\left(\delta\right)\right)^{N-1}>\left(1-2\delta\right)^{N-1}\geq1-2\left(N-1\right)\delta>\delta^{N},
    \]
    where the first inequality follows from the inequality in (\ref{eq: A_delta}), the second inequality follows from the inequality in (\ref{eq: useful_ineq}) and the last inequality follows from $\delta<\frac{1}{3N}$. 

        Fifth, we show the inequality in (\ref{eq: prob_l_theta0}), i.e.,
    $\left(1-\left(1-B\left(\delta\right)\right)^{N-1}\right)\frac{1-B\left(\delta\right)}{1-A\left(\delta\right)}<\delta^{N-1}.$
    Note that $\frac{1-B\left(\delta\right)}{1-A\left(\delta\right)}<2$ since 
    \[
    \frac{1-B\left(\delta\right)}{1-A\left(\delta\right)}<\frac{1}{1-A\left(\delta\right)}<\frac{1}{1-2\delta}<2,
    \]
    where the second inequality follows from the inequalities in (\ref{eq: A_delta}) and the last inequality follows from $\delta<\frac{1}{3N}<\frac{1}{4}$.
    Thus, it suffices to show $1-\left(1-B\left(\delta\right)\right)^{N-1}<\frac{1}{2}\delta^{N-1}$.
    We obtain this result by the following calculation:
    \[
    1-\left(1-B\left(\delta\right)\right)^{N-1}<1-\left(1-3\delta^{N+1}\right)^{N-1}\leq3\left(N-1\right)\delta^{N+1}<\delta^{N}<\frac{1}{2}\delta^{N-1},
    \]
    where the first inequality follows from the inequality in (\ref{eq: B_delta}), the second inequality follows from the inequality in (\ref{eq: useful_ineq}) and the third and last inequalities follows from $\delta<\frac{1}{3N}$.

    Sixth, we show the inequality in (\ref{eq: prob_l_theta1}), i.e.,
    $\left(1-\left(1-A\left(\delta\right)\right)^{N-1}\right)\delta^{N^{2}}<\delta^{\left(N+1\right)\left(N-1\right)}$.
    It suffices to show $1-\left(1-A\left(\delta\right)\right)^{N-1}<\delta^{-1}$.
    This inequality follows because $1-\left(1-A\left(\delta\right)\right)^{N-1}<1<\delta^{-1}$.

    Seventh, we show the inequality in (\ref{eq: prob_h*_theta1}), i.e.,
    $\left(1-\left(1-\delta^{\alpha}-\delta^{\alpha+1}\right)^{N-1}\right)\delta^{-N}<\delta^{\left(N+1\right)\left(N-1\right)}$.
    It suffices to show $1-\left(1-\delta^{\alpha}-\delta^{\alpha+1}\right)^{N-1}<\delta^{N^{2}+N-1}$.
    We obtain this result by the following calculation:
    \[
    1-\left(1-\delta^{\alpha}-\delta^{\alpha+1}\right)^{N-1}
    <
    1-\left(1-2\delta^{\alpha}\right)^{N-1}
    \leq
    2\left(N-1\right)\delta^{\alpha}
    <
    \delta^{N^{2}+N-1},
    \]
    where the second inequality follows from the inequality in (\ref{eq: useful_ineq}) and the last inequality follows from $\delta<\frac{1}{3N}$.
\end{proof}

For $S\subset S^\ast$, $\sigma_S:S^\ast \to A$ denotes a function defined by $\sigma_S(s)=1$ if $s\in S$ and $\sigma_S(s)=0$ if $s\not\in S$.  In particular, $\sigma_\emptyset(s)=0$ for all $s\in S^\ast$.

\begin{lemma}\label{lem: equilibrium characterization}
    The optimal actions under the information structure $\Pi^\ast$ are always unique and characterized as follows. Take $i\geq 2$ and $\bm{S}\subset S^{\ast i-1}$ arbitrarily. 
    \begin{enumerate}[(i)]
        \item
        Suppose that $\bm{S}\cap \{h\}^{i-1}\neq\emptyset$. Then, $\sigma_{i}^\ast \left(\bm{S}, \cdot \right) = \sigma_{\{ h,h^\ast \}}(\cdot)$
            
        \item
        Suppose that $\bm{S}\cap \{l,h\}^{i-1}\neq\emptyset$ and $\bm{S}\cap \{h\}^{i-1} = \emptyset$. Then, $\sigma_{i}^\ast \left(\bm{S}, \cdot \right) = \sigma_{\{h^\ast \}}(\cdot)$
            
        \item
        Suppose that $\bm{S}\cap \{l,h,h^\ast \}^{i-1}\neq\emptyset$ and $\bm{S}\cap \{l,h\}^{i-1} = \emptyset$. Then, $\sigma_{i}^\ast \left(\bm{S}, \cdot \right) = \sigma_{\{l,h,h^\ast \}}(\cdot)$
            
        \item
        Suppose that $\bm{S}\cap \{l,h, h^\ast\}^{i-1} = \emptyset$. Then, $\sigma_{i}^\ast \left(\bm{S}, \cdot \right) = \sigma_\emptyset(\cdot)$
    \end{enumerate}
\end{lemma}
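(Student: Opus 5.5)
The plan is to reduce each case to a comparison of the posterior likelihood ratio $LR(\bm S)LR(s_i)$ with one, using the criterion $\sigma_i^\ast(\bm S,s_i)=1\iff LR(\bm S)LR(s_i)>1$ from the notation subsection. I will bound $LR(\bm S)=\pi^{\ast\otimes(i-1)}(\bm S\mid H)/\pi^{\ast\otimes(i-1)}(\bm S\mid L)$ from above and below. For each case, the lower bound comes from the single "best" profile guaranteed to lie in $\bm S$. The upper bound comes from the fact that $\bm S$ is disjoint from some cylinder set. Uniqueness (no ties) follows because every inequality I derive will be strict.

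\textbf{Case (i).} Since $\bm S\ni(h,\dots,h)$, I take the numerator to be at least $\pi^\ast(h\mid H)^{i-1}$ and the denominator at most $1$. Hence $LR(\bm S)\ge \pi^\ast(h\mid H)^{i-1}>1/LR(h)$ by (\ref{eq: prob_h_theta1}), so $h$ induces action $1$. Because $LR(h^\ast)>LR(h)$, the signal $h^\ast$ also induces action $1$.

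For $l$, I will use the crude bound $LR(\bm S)\le \max_{\bm s}LR(\bm s)$. Since no single profile controls this, I instead split $\bm S$ as $\{(h,\dots,h)\}\cup \bm S'$. Every $\bm s\in\bm S'$ contains a non-$h$ coordinate, so $\pi^\ast(\bm S'\mid H)\le 1-\pi^\ast(h\mid H)^{i-1}$, which is tiny. Writing $\pi(\bm S\mid L)\ge \pi^\ast(h\mid L)^{i-1}$, I get
\[
LR(\bm S)\le \frac{\pi^\ast(h\mid H)^{i-1}+1-\pi^\ast(h\mid H)^{i-1}}{\pi^\ast(h\mid L)^{i-1}} = \frac{1}{\pi^\ast(h\mid L)^{i-1}} < \frac{1}{LR(l)}
\]
by (\ref{eq: prob_h_theta0}). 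So $l$ gives action $0$, and $l^\ast$ does too since $LR(l^\ast)<LR(l)$.

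\textbf{Case (ii).} Here $\bm S$ avoids $(h,\dots,h)$ and contains some profile in $\{l,h\}^{i-1}$ with at least one $l$. For the upper bound, $\pi(\bm S\mid H)\le 1-\pi^\ast(h\mid H)^{i-1}$. For the lower bound on the $L$-mass, the contained profile gives at least $\pi^\ast(l\mid L)^{i-1}$, using that $l$ has the smallest $L$-probability among $l,h$ (true for small $\delta$). Then (\ref{eq: prob_l_theta0}) yields $LR(\bm S)LR(h)<1$, so $h$, $l$, and $l^\ast$ all give action $0$.

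For $h^\ast$, I need a lower bound on $LR(\bm S)$. The numerator is at least $\pi^\ast(l\mid H)^{i-1}$, the worst $\{l,h\}$-profile under $H$. The denominator I bound by splitting into $\{l,h\}$-profiles, each with ratio at least $LR(l)^{i-1}$, and profiles containing $h^\ast$ or $l^\ast$, whose total $L$-mass is at most $1-\pi^\ast(\{l,h\}\mid L)^{i-1}$. I then combine this with (\ref{eq: LR_lh*}) and (\ref{eq: prob_l_theta1}) to get $LR(\bm S)LR(h^\ast)>1$. Getting this denominator split right is the delicate step, and I expect to need (\ref{eq: prob_l_theta1}) with a slightly different arrangement than the one above.

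\textbf{Cases (iii) and (iv).} These are analogous, one level up the lexicographic order. In case (iii), $\bm S$ contains a profile with some $h^\ast$ and no $l^\ast$, and avoids $\{l,h\}^{i-1}$. The numerator is at least $\pi^\ast(h^\ast\mid H)^{i-1}$, and the $L$-mass is at most $1-\pi^\ast(\{l,h\}\mid L)^{i-1}$, so (\ref{eq: prob_h*_theta1}) gives $LR(\bm S)LR(l)>1$. Hence $l$, $h$, and $h^\ast$ give action $1$. For $l^\ast$, every profile avoiding $l^\ast$ has ratio at most $LR(h^\ast)^{i-1}$, and profiles containing $l^\ast$ have even smaller ratios, so (\ref{eq: LR_h*l*}) gives action $0$.

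In case (iv), every profile in $\bm S$ contains $l^\ast$. Its ratio is at most $LR(h^\ast)^{i-2}LR(l^\ast)$, so $LR(\bm S)\le LR(h^\ast)^{i-2}LR(l^\ast)$. Multiplying by $LR(h^\ast)$ and applying (\ref{eq: LR_h*l*}) shows that even $h^\ast$ gives action $0$.

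I expect the main obstacle to be case (ii)'s $h^\ast$ bound, together with keeping track of which mixed profiles dominate each mass. If the given inequalities do not fit directly, I will fall back on coordinatewise ratio bounds (a ratio is a max over the profiles it contains).
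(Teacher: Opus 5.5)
Your scheme is the paper's. In each case you lower-bound one state's mass by the cheapest profile that must lie in $\bm S$, and upper-bound the other state's mass by the complement of the cylinder that $\bm S$ avoids. Cases (i), (iii), (iv), and the $h$-half of (ii) go through exactly as you wrote them, using the same inequalities the paper uses.

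The one step you left open, $LR(\bm S)LR(h^\ast)>1$ in case (ii), does not close along the split you sketch. Suppose you bound the $\{l,h\}$-part of $\pi^{\ast\otimes(i-1)}(\bm S\mid L)$ via (\ref{eq: LR_lh*}) and the remainder via (\ref{eq: prob_l_theta1}). Each piece is then dominated by the same quantity $\pi^{\ast\otimes(i-1)}(\bm S\cap\{l,h\}^{i-1}\mid H)\,LR(h^\ast)$. Adding them gives only $\pi^{\ast\otimes(i-1)}(\bm S\mid L)<2\,\pi^{\ast\otimes(i-1)}(\bm S\cap\{l,h\}^{i-1}\mid H)\,LR(h^\ast)$, which is off by a factor of two. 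The construction has ample slack to absorb this, but not through any of the eight listed inequalities.

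The fix is not to split. Since $(h,\dots,h)\notin\bm S$, you have $\pi^{\ast\otimes(i-1)}(\bm S\mid L)\le 1-\pi^\ast(h\mid L)^{i-1}$. This is the mirror image of the $H$-mass bound you already used for the $h$-half of (ii). Combine it with your numerator bound $\pi^{\ast\otimes(i-1)}(\bm S\mid H)\ge\pi^\ast(l\mid H)^{i-1}$, which holds because $\pi^\ast(l\mid H)<\pi^\ast(h\mid H)$. Then (\ref{eq: prob_l_theta1}) gives $LR(\bm S)LR(h^\ast)>1$ directly, and (\ref{eq: LR_lh*}) is not needed. This is exactly the paper's argument.
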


\begin{proof}[Proof of Lemma \ref{lem: equilibrium characterization}]
    \proofpart{Part (i)}
    Fix $\bm{S}\subset S^{\ast i-1}$ with $\bm{S}\cap\{h\}^{i-1}\neq\emptyset$ arbitrarily. To show $\sigma_{\{ h,h^\ast \}}(s_i)$ is uniquely optimal under $\bm{S}$ for any $s_i\in S^\ast$, it suffices to show that $LR(\bm{S})LR(l)<1<LR(\bm{S})LR(h)$. Let $S_h:=\bm{S}\cap\{h\}^{i-1}$ and $\bar{S}_h:=\bm{S}\setminus\{h\}^{i-1}$.
    
    First, we show $LR(\bm{S})LR(l)<1$, which is equivalent to
    \begin{align*}
        &\ \pi^{\ast\otimes(i-1)}(\bm{S}\mid H)\,LR(l) < \pi^{\ast\otimes(i-1)}(\bm{S}\mid L) \\
        \Leftrightarrow&\ \pi^{\ast\otimes(i-1)}(\bar{S}_h\mid H)\,LR(l) - \pi^{\ast\otimes(i-1)}(\bar{S}_h\mid L) < \pi^{\ast\otimes(i-1)}(S_h\mid L) - \pi^{\ast\otimes(i-1)}(S_h\mid H)\,LR(l).
    \end{align*}
    The following calculation establishes this inequality:
    \begin{align*}
        \pi^{\ast\otimes(i-1)}(\bar{S}_h\mid H)\,LR(l) - \pi^{\ast\otimes(i-1)}(\bar{S}_h\mid L)
        &\leq \pi^{\ast\otimes(i-1)}(\bar{S}_h\mid H)\,LR(l) \\
        &\leq (1-\pi^{\ast\otimes(i-1)}(S_h\mid H))\,LR(l) \\
        &< \pi^{\ast\otimes(i-1)}(S_h\mid L) - \pi^{\ast\otimes(i-1)}(S_h\mid H)\,LR(l),
    \end{align*}
    where the third inequality follows from (\ref{eq: prob_h_theta0}).
    
    Next, we show $LR(\bm{S})LR(h)>1$, which is equivalent to
    \begin{align*}
        &\ \pi^{\ast\otimes(i-1)}(\bm{S}\mid L) < \pi^{\ast\otimes(i-1)}(\bm{S}\mid H)\,LR(h) \\
        \Leftrightarrow&\ \pi^{\ast\otimes(i-1)}(\bar{S}_h\mid L) - \pi^{\ast\otimes(i-1)}(\bar{S}_h\mid H)\,LR(h) < \pi^{\ast\otimes(i-1)}(S_h\mid H)\,LR(h) - \pi^{\ast\otimes(i-1)}(S_h\mid L).
    \end{align*}
    The following calculation establishes this inequality:
    \begin{align*}
        \pi^{\ast\otimes(i-1)}(\bar{S}_h\mid L) - \pi^{\ast\otimes(i-1)}(\bar{S}_h\mid H)\,LR(h)
        &\leq \pi^{\ast\otimes(i-1)}(\bar{S}_h\mid L) \\
        &\leq 1-\pi^{\ast\otimes(i-1)}(S_h\mid L) \\
        &< \pi^{\ast\otimes(i-1)}(S_h\mid H)\,LR(h) - \pi^{\ast\otimes(i-1)}(S_h\mid L),
    \end{align*}
    where the third inequality follows from (\ref{eq: prob_h_theta1}).

    \proofpart{Part (ii)}
    Fix $\bm{S}\subset S^{\ast i-1}$ with $\bm{S}\cap\{l,h\}^{i-1}\neq\emptyset$ and $\bm{S}\cap\{h\}^{i-1}=\emptyset$ arbitrarily. To show $\sigma_{\{ h^\ast \}}(s_i)$ is uniquely optimal under $\bm{S}$ for any $s_i\in S^\ast$, it suffices to show that $LR(\bm{S})LR(h)<1<LR(\bm{S})LR(h^\ast)$. Let $S_{lh}:=\bm{S}\cap\{l,h\}^{i-1}$ and $\bar{S}_{lh}:=\bm{S}\setminus\{l,h\}^{i-1}$.
    
    First, we show $LR(\bm{S})LR(h)<1$, which is equivalent to
    \begin{align*}
        &\ \pi^{\ast\otimes(i-1)}(\bm{S}\mid H)\,LR(h) < \pi^{\ast\otimes(i-1)}(\bm{S}\mid L) \\
        \Leftrightarrow&\ \pi^{\ast\otimes(i-1)}(\bar{S}_{lh}\mid H)\,LR(h) - \pi^{\ast\otimes(i-1)}(\bar{S}_{lh}\mid L) < \pi^{\ast\otimes(i-1)}(S_{lh}\mid L) - \pi^{\ast\otimes(i-1)}(S_{lh}\mid H)\,LR(h).
    \end{align*}
    The following calculation establishes this inequality:
    \begin{align*}
        \pi^{\ast\otimes(i-1)}(\bar{S}_{lh}\mid H)\,LR(h) - \pi^{\ast\otimes(i-1)}(\bar{S}_{lh}\mid L)
        &\leq \pi^{\ast\otimes(i-1)}(\bar{S}_{lh}\mid H)\,LR(h) \\
        &\leq (1-\pi^{\ast\otimes(i-1)}(\{h\}^{i-1}\mid H)-\pi^{\ast\otimes(i-1)}(S_{lh}\mid H))\,LR(h) \\
        &< \pi^{\ast\otimes(i-1)}(\{l\}^{i-1}\mid L) - \pi^{\ast\otimes(i-1)}(S_{lh}\mid H)\,LR(h) \\
        &\leq \pi^{\ast\otimes(i-1)}(S_{lh}\mid L) - \pi^{\ast\otimes(i-1)}(S_{lh}\mid H)\,LR(h),
    \end{align*}
    where the second inequality follows because $\{h\}^{i-1}$, $S_{lh}$, and $\bar{S}_{lh}$ are disjoint by $\bm{S}\cap\{h\}^{i-1}=\emptyset$, the third inequality follows from (\ref{eq: prob_l_theta0}), and the fourth inequality follows because $S_{lh}\neq\emptyset$ and $\pi^{\ast\otimes(i-1)}(\{l\}^{i-1}\mid L)\leq \pi^{\ast\otimes(i-1)}(s_1,\dots,s_{i-1}\mid L)$ for any $(s_1,\dots,s_{i-1})\in S_{lh}$.
    
    Next, we show $LR(\bm{S})LR(h^\ast)>1$, which is equivalent to
    \begin{align*}
        &\ \pi^{\ast\otimes(i-1)}(\bm{S}\mid L) < \pi^{\ast\otimes(i-1)}(\bm{S}\mid H)\,LR(h^\ast) \\
        \Leftrightarrow&\ \pi^{\ast\otimes(i-1)}(\bar{S}_{lh}\mid L) - \pi^{\ast\otimes(i-1)}(\bar{S}_{lh}\mid H)\,LR(h^\ast) < \pi^{\ast\otimes(i-1)}(S_{lh}\mid H)\,LR(h^\ast) - \pi^{\ast\otimes(i-1)}(S_{lh}\mid L).
    \end{align*}
    The following calculation establishes this inequality:
    \begin{align*}
        \pi^{\ast\otimes(i-1)}(\bar{S}_{lh}\mid L) - \pi^{\ast\otimes(i-1)}(\bar{S}_{lh}\mid H)\,LR(h^\ast)
        &\leq \pi^{\ast\otimes(i-1)}(\bar{S}_{lh}\mid L) \\
        &\leq 1-\pi^{\ast\otimes(i-1)}(\{h\}^{i-1}\mid L)-\pi^{\ast\otimes(i-1)}(S_{lh}\mid L) \\
        &< \pi^{\ast\otimes(i-1)}(\{l\}^{i-1}\mid H)\,LR(h^\ast) - \pi^{\ast\otimes(i-1)}(S_{lh}\mid L) \\
        &\leq \pi^{\ast\otimes(i-1)}(S_{lh}\mid H)\,LR(h^\ast) - \pi^{\ast\otimes(i-1)}(S_{lh}\mid L),
    \end{align*}
    where the second inequality follows by the same 
    argument as above, the third inequality follows from (\ref{eq: prob_l_theta1}), and the fourth inequality follows by the same argument as in the previous calculation.

    \proofpart{Part (iii)}
    Fix $\bm{S}\subset S^{\ast i-1}$ with $\bm{S}\cap\{l,h,h^\ast\}^{i-1}\neq\emptyset$ and $\bm{S}\cap\{l,h\}^{i-1}=\emptyset$ arbitrarily. To show $\sigma_{\{l,h, h^\ast \}}(s_i)$ is uniquely optimal under $\bm{S}$ for any $s_i\in S^\ast$, it suffices to show that $LR(\bm{S})LR(l^\ast)<1<LR(\bm{S})LR(l)$. Let $S_{lhh^\ast}:=\bm{S}\cap\{l,h,h^\ast\}^{i-1}$ and $\bar{S}_{lhh^\ast}:=\bm{S}\setminus\{l,h,h^\ast\}^{i-1}$.
    
    First, we show $LR(\bm{S})LR(l^\ast)<1$. The inequality in (\ref{eq: LR_h*l*}) implies that $LR(s_1,\dots,s_{i-1})LR(l^\ast)<1$ for all $(s_1,\dots,s_{i-1})\in S^{\ast i-1}$. Hence $LR(\bm{S})LR(l^\ast)<1$ holds.
    
    Next, we show $LR(\bm{S})LR(l)>1$, which is equivalent to
    \begin{align*}
        &\ \pi^{\ast\otimes(i-1)}(\bm{S}\mid L) < \pi^{\ast\otimes(i-1)}(\bm{S}\mid H)\,LR(l) \\
        \Leftrightarrow&\ \pi^{\ast\otimes(i-1)}(\bar{S}_{lhh^\ast}\mid L) - \pi^{\ast\otimes(i-1)}(\bar{S}_{lhh^\ast}\mid H)\,LR(l) < \pi^{\ast\otimes(i-1)}(S_{lhh^\ast}\mid H)\,LR(l) - \pi^{\ast\otimes(i-1)}(S_{lhh^\ast}\mid L).
    \end{align*}
    The following calculation establishes this inequality:
    \begin{align*}
        \pi^{\ast\otimes(i-1)}(\bar{S}_{lhh^\ast}\mid L) - \pi^{\ast\otimes(i-1)}(\bar{S}_{lhh^\ast}\mid H)\,LR(l)
        &\leq \pi^{\ast\otimes(i-1)}(\bar{S}_{lhh^\ast}\mid L) \\
        &\leq 1-\pi^{\ast\otimes(i-1)}(\{l,h\}^{i-1}\mid L)-\pi^{\ast\otimes(i-1)}(S_{lhh^\ast}\mid L) \\
        &< \pi^{\ast\otimes(i-1)}(\{h^\ast\}^{i-1}\mid H)\,LR(l) - \pi^{\ast\otimes(i-1)}(S_{lhh^\ast}\mid L) \\
        &\leq \pi^{\ast\otimes(i-1)}(S_{lhh^\ast}\mid H)\,LR(l) - \pi^{\ast\otimes(i-1)}(S_{lhh^\ast}\mid L),
    \end{align*}
    where the second inequality follows because $\{l,h\}^{i-1}$, $S_{lhh^\ast}$, and $\bar{S}_{lhh^\ast}$ are disjoint by $\bm{S}\cap\{l,h\}^{i-1}=\emptyset$, the third inequality follows from (\ref{eq: prob_h*_theta1}), and the fourth inequality follows because $S_{lhh^\ast}\neq\emptyset$ and $\pi^{\ast\otimes(i-1)}(\{h^\ast\}^{i-1}\mid L)\leq\pi^{\ast\otimes(i-1)}(s_1,\dots,s_{i-1}\mid L)$ for any $(s_1,\dots,s_{i-1})\in S_{lhh^\ast}$.

    \proofpart{Part (iv)}
    Fix $\bm{S}\subset S^{\ast i-1}$ with $\bm{S}\cap\{l,h,h^\ast\}^{i-1}=\emptyset$ arbitrarily. To show $\sigma_\emptyset(s_i)$ is uniquely optimal under $\bm{S}$ for any $s_i\in S^\ast$, it suffices to show $LR(\bm{S})LR(h^\ast)<1$. Take $(s_1,\dots,s_{i-1})\in\bm{S}$ arbitrarily. Since $\bm{S}\cap\{l,h,h^\ast\}^{i-1}=\emptyset$, there exists $j$ such that $s_j=l^\ast$. Thus, $LR(s_1,\dots,s_{i-1})\leq LR(l^\ast)LR(h^\ast)^{i-2}$ for all $(s_1,\dots,s_{i-1})\in\bm{S}$. Therefore,
    \begin{align*}
        LR(\bm{S})LR(h^\ast)\leq LR(l^\ast)LR(h^\ast)^{i-1}<1,
    \end{align*}
    where the second inequality follows from (\ref{eq: LR_h*l*}).
\end{proof}

For each signal profile $(s_1,\dots,s_{j})\in S^{\ast\,j}$, define
\[
\underline{i}_1 := \min\bigl\{i \in \{1,\dots,j\} \mid s_i \in \{l,l^\ast\}\bigr\},
\qquad
\underline{i}_2 := \min\bigl\{i \in \{ \underline{i}_1+1,\dots, j \} \mid s_i = h^\ast\bigr\},
\]
whenever they exist.
Note that $\underline{i}_1$ and $\underline{i}_2$ depend on the signal profile $(s_1,\dots,s_{j})$,
but we omit this dependence for notational simplicity.
We introduce a partition of $S^{\ast\,j}$ as follows:
\[
\begin{aligned}
\bm{S}_1^{j} &:= \bigl\{(s_1,\dots,s_{j})\in S^{\ast\,j}
  \mid s_i\in\{h,h^\ast\}\ \text{for all } i=1,\dots,j \bigr\},\\[4pt]
\bm{S}_2^{j} &:= \bigl\{(s_1,\dots,s_{j})\in S^{\ast\,j}
  \mid \underline{i}_1 \text{ exists and }
  s_i\neq h^\ast\ \text{for all } i>\underline{i}_1 \bigr\},\\[4pt]
\bm{S}_3^{j} &:= \bigl\{(s_1,\dots,s_{j})\in S^{\ast\,j}
  \mid \underline{i}_2 \text{ exists and }
  s_i\neq l^\ast\ \text{for all } i>\underline{i}_2 \bigr\},\\[4pt]
\bm{S}_4^{j} &:= \bigl\{(s_1,\dots,s_{j})\in S^{\ast\,j}
  \mid \underline{i}_2 \text{ exists and }
  s_i = l^\ast\ \text{for some } i>\underline{i}_2 \bigr\}.
\end{aligned}
\]
\begin{lemma}\label{lem: equilibrium outcomes under particular signals}
    Under the information structure $\Pi^\ast$, the equilibrium strategy under each signal profile is derived as follows.
    \begin{enumerate}[(i)]
        \item Suppose $j=2,\dots,N$. For any signal profile $(s_1,\dots,s_{j-1}) \in \bm{S}_1^{j-1}$, $\sigma^\ast_j(s_1,\dots,s_{j-1},\cdot\mid\mathcal{N},\Pi^\ast) = \sigma_{\{h,h^\ast\}}(\cdot)$ for any $\mathcal{N}$.
        \item 
        Suppose $j=2,\dots,N$ and $\mathcal{N}_j=\{1,\dots,j-1 \}$. 
        For any signal profile $(s_1,\dots,s_{j-1}) \in \bm{S}_2^{j-1}$,
        $\sigma^\ast_j(s_1,\dots,s_{j-1},\cdot\mid\mathcal{N},\Pi^\ast) = \sigma_{\{h^\ast\}}(\cdot)$.
        
        \item
        Suppose $j=3,\dots,N$, $\mathcal{N}_j=\{1,\dots,j-1 \}$, and $j^\prime\in \mathcal{N}_{j^\prime+1}$ for all $j^\prime<j$. 
        For any signal profile $(s_1,\dots,s_{j-1}) \in \bm{S}_3^{j-1}$,
        $\sigma^\ast_j(s_1,\dots,s_{j-1},\cdot\mid\mathcal{N},\Pi^\ast) = \sigma_{\{l,h,h^\ast\}}(\cdot)$.
        
        \item
        Suppose $j=3,\dots,N$ and $\mathcal{N}_{j^\prime}=\{1,\dots,j^\prime-1\}$ for all $j^\prime\in\{2,\dots,j \}$.
        For any signal profile $(s_1,\dots,s_{j-1}) \in \bm{S}_4^{j-1}$,
        $\sigma^\ast_j(s_1,\dots,s_{j-1},\cdot\mid\mathcal{N},\Pi^\ast) = \sigma_{\emptyset}(\cdot)$.   
    \end{enumerate}
\end{lemma}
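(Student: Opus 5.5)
The approach is to reduce each part to Lemma~\ref{lem: equilibrium characterization}. Since $\sigma_j^\ast(\bm a,s_j\mid\mathcal N,\Pi^\ast)=\sigma_j^\ast(\sigma_j^{\ast-1}(\bm a\mid\mathcal N,\Pi^\ast),s_j)$, it suffices to identify, for the realized profile $(s_1,\dots,s_{j-1})$, the set $\bm S:=\sigma_j^{\ast-1}(\bm a\mid\mathcal N,\Pi^\ast)$ of profiles consistent with the observed history $\bm a$. We then only need to check which of the four intersection conditions of that lemma $\bm S$ satisfies. Throughout, $\bm S$ contains the true profile, so the true profile always witnesses nonemptiness of the relevant intersection. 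The work is to show emptiness of the others.

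For part (i), I would proceed by strong induction on $j$, simultaneously for all networks. Take $(s_1,\dots,s_{j-1})\in\bm S_1^{j-1}$. By the induction hypothesis (and agent $1$ following $\sigma_{\{h,h^\ast\}}$ by the base case $LR(l)<1<LR(h)$), every earlier agent, each facing a prefix in $\bm S_1$, plays $1$. Then the all-$h$ profile $(h,\dots,h)$ also lies in $\bm S_1$ and generates the same actions for every agent in $\mathcal N_j$. Hence $\bm S\cap\{h\}^{j-1}\neq\emptyset$, and Lemma~\ref{lem: equilibrium characterization}(i) gives $\sigma_{\{h,h^\ast\}}$.

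For part (ii), with $\mathcal N_j$ complete, the history is the full action vector. Using (i), agents before $\underline i_1$ played $1$ and agent $\underline i_1$ played $0$. I would show by induction on $j$ that, for complete observation at $j$ and profiles in $\bm S_2$, this full history rules out $\{h\}^{j-1}$. The reason is that under an all-$h$ profile every agent plays $1$ in any network by (i), while here $a_{\underline i_1}=0$. Nonemptiness of $\bm S\cap\{l,h\}^{j-1}$ needs a witness: replace $l^\ast$ at $\underline i_1$ by $l$ and every later signal by $l$ or $h$ suitably. I expect this to need the observation that in the relevant networks the later agents' actions depend only on whether their signal is in $\{h^\ast\}$ or in the $\{l,h\}$ class. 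This is delicate for agents $k<j$ whose own neighborhoods are not complete; for those, I would argue that an $l$/$h$ replacement preserving the class under $\sigma_{\{h,h^\ast\}}$ or $\sigma_{\{h^\ast\}}$ exists, choosing $l$ where the true signal gives $0$ and $h$ where it gives $1$ under the rule the agent uses. Then Lemma~\ref{lem: equilibrium characterization}(ii) applies.

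For parts (iii) and (iv), the added hypotheses (each $j'$ observes $j'-1$, and full completeness for (iv)) ensure the overturns are observed in sequence. In (iii), agent $\underline i_2$ plays $1$ after a $0$ (by (ii)), and a $1$ following a $0$ in a chain where the overturner saw the $0$ is impossible under any $\{l,h\}$ profile, since by (ii) such an agent plays $1$ only on $h^\ast$. This forces $\bm S\cap\{l,h\}^{j-1}=\emptyset$, while modifying the profile to replace $l^\ast$ by $l$ gives a witness in $\{l,h,h^\ast\}^{j-1}$. In (iv), the later $l^\ast$ produces a $0$ after the $h^\ast$ overturn; by (iii), such a $0$ is inconsistent with any $\{l,h,h^\ast\}$ profile, so Lemma~\ref{lem: equilibrium characterization}(iv) yields $\sigma_\emptyset$. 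The main obstacle is the witness and emptiness bookkeeping in (ii)–(iii) for non-complete intermediate neighborhoods. I would handle it by a joint induction over $j$ that tracks, for each agent, which class rule among $\sigma_{\{h,h^\ast\}}$, $\sigma_{\{h^\ast\}}$, and $\sigma_{\{l,h,h^\ast\}}$ she uses.
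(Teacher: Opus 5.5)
Your proposal is correct and takes essentially the same route as the paper. You reduce each part to Lemma~\ref{lem: equilibrium characterization}: the all-$h$ witness in (i); excluding $\{h\}^{j-1}$ via $a_{\underline i_1}=0$ and relabeling each signal as $l$ or $h$ according to the realized action in (ii); the chain $j'\in\mathcal N_{j'+1}$ to obtain the pattern $0,\dots,0,1$ and the $l^\ast\to l$ witness in (iii); and (iii) applied at the first $l^\ast$ after $\underline i_2$ in (iv). One correction to your overview: the true profile does not in general witness the required nonemptiness, since it may contain $h^\ast$ or $l^\ast$ outside $\{h\}^{j-1}$, $\{l,h\}^{j-1}$ or $\{l,h,h^\ast\}^{j-1}$, which is exactly why your constructed witnesses are needed. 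Also, in (ii) the $l$-relabeling works only because no agent after $\underline i_1$ receives $h^\ast$ under a profile in $\bm S_2^{j-1}$, so every agent using $\sigma_{\{h^\ast\}}$ plays $0$; your proposal should state this explicitly.
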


\begin{proof}[Proof of Lemma \ref{lem: equilibrium outcomes under particular signals}]
    \proofpart{Part (i)}
    Take  $j=2,\dots,N$, $\mathcal{N}$, and $(\tilde{s}_1,\dots,\tilde{s}_{j-1}) \in \bm{S}_1^{j-1}$ arbitrarily. Let $\bm{a}=(a_i)_{i\in \mathcal{N}_j}  :=(\sigma^\ast_i(\tilde{s}_1,\dots,\tilde{s}_{i}) \mid \mathcal{N}, \Pi^\ast)_{i\in \mathcal{N}_j}$ and $\bm{S}:=\sigma_j^{\ast-1}(\bm{a}\mid\mathcal{N},\Pi^\ast)$. 
    By Lemma \ref{lem: equilibrium characterization} (i), it suffices to show that $\bm{S}\cap\{h\}^{j-1}\neq\emptyset$, that is, $\sigma_i^\ast(h,\dots,h)=a_i$ for all $i\in\mathcal{N}_j$. 

    To this end, we show that $\sigma_i^\ast(s_1,\dots, s_i \mid \mathcal{N}, \Pi^\ast) = 1$ for all $(s_1,\dots, s_i) \in \{h,h^\ast \}^{i}$ and $i<j$ by induction on $i$. For $i=1$, since $LR(s_1)>1$ for any $s_1\in\{h,h^\ast\}$, we have $\sigma_1^\ast(s_1)=1$. 
    
    Next, fix $2\leq i<j$ and assume $\sigma_k^\ast(s_1,\dots,s_k\mid\mathcal{N},\Pi^\ast)=1$ for all $(s_1,\dots,s_k)\in\{h,h^\ast\}^k$ and all $k<i$. Fix any $(s_1,\dots,s_i)\in\{h,h^\ast\}^i$. Let $\bm{S}_i:=\sigma_i^{\ast-1}((a_k)_{k\in\mathcal{N}_i}\mid\mathcal{N},\Pi^\ast)$, where $a_k:=\sigma_k^\ast(s_1,\dots,s_k\mid\mathcal{N},\Pi^\ast)$ and $a_k = 1$ for all $k\in\mathcal{N}_i$ by the assumption.
    Since $(h,\dots,h)\in\{h,h^\ast\}^k$ for each $k\in\mathcal{N}_i$, the assumption also gives $\sigma_k^\ast(h,\dots,h)=1=a_k$, so $(h,\dots,h)\in\bm{S}_i$. Therefore, Lemma \ref{lem: equilibrium characterization} (i) implies $\sigma_i^\ast(\bm{S}_i,s_i)=1$, which completes the induction.

    \proofpart{Part (ii)}
    Take  $j=2,\dots,N$, $\mathcal{N}$ with $\mathcal{N}_j = \{1,\dots,j-1 \}$, and $(\tilde{s}_1,\dots,\tilde{s}_{j-1}) \in \bm{S}_2^{j-1}$ arbitrarily. 
    Let $\bm{a}=(a_i)^{j-1}_{i=1}  :=(\sigma^\ast_i(\tilde{s}_1,\dots,\tilde{s}_{i}) \mid \mathcal{N}, \Pi^\ast)^{j-1}_{i=1}$ and $\bm{S}:=\sigma_j^{\ast-1}(\bm{a}\mid\mathcal{N},\Pi^\ast)$. 
    By Lemma \ref{lem: equilibrium characterization} (ii), it suffices to show that $\bm{S}\cap\{h\}^{j-1}=\emptyset$ and $\bm{S}\cap\{l,h\}^{j-1}\neq\emptyset$. Since $\sigma_j^\ast(h,\dots,h)=1$ for all $i=1,\dots,j-1$ by Part (i), we have $\bm{S}\cap\{h\}^{j-1}=\emptyset$.

    We proceed to show $\bm{S}\cap\{l,h\}^{j-1}\neq\emptyset$. 
    Define $\tilde{\bm{s}}^\prime\in\{l,h\}^{j-1}$ by $\tilde{s}_i^\prime=l$ if $a_i=0$ and $\tilde{s}_i^\prime=h$ if $a_i=1$. We establish $\tilde{\bm{s}}^\prime \in \bm{S}$ by showing $\sigma_k^\ast(\tilde{s}_1^\prime,\dots,\tilde{s}_k^\prime)=a_k$ for all $k\in\{1,\dots,j-1\}$ by induction on $k$. First, $\sigma_k^\ast(\tilde{s}_k^\prime)=a_k$ if $\mathcal{N}_k=\emptyset$, since $\sigma_k^\ast(l)=0$ and $\sigma_k^\ast(h)=1$ and the definition of $\tilde{s}_k^\prime$.
    Next, assume $\mathcal{N}_k\neq\emptyset$ and $\sigma_m^\ast(\tilde{s}_1^\prime,\dots,\tilde{s}_m^\prime)=a_m$ for all $m<k$.
    Let $\bm{S}_k:=\sigma_k^{\ast-1}((a_m)_{m\in\mathcal{N}_k}\mid\mathcal{N},\Pi^\ast)$.
    By assumption, $(\tilde{s}_1^\prime,\dots,\tilde{s}_{k-1}^\prime)\in\bm{S}_k$, so $\bm{S}_k\cap\{l,h\}^{k-1}\neq\emptyset$.
    We begin with a case where $a_m=1$ for all $m\in\mathcal{N}_k$.
    By Part (i), $\sigma_m^\ast(h,\dots,h)=1=a_m$ for all $m\in\mathcal{N}_k$, so $(h,\dots,h)\in\bm{S}_k$, which implies $\sigma_k^\ast(\bm{S}_k,\tilde{s}_k^\prime)=1$ if $\tilde{s}_k^\prime=h$ and $\sigma_k^\ast(\bm{S}_k,\tilde{s}_k^\prime)=0$ if $\tilde{s}_k^\prime=l$ by Lemma \ref{lem: equilibrium characterization} (i), 
    By the definition of $\tilde{\bm{s}}^\prime$, we have $\sigma_k^\ast(\tilde{s}_1^\prime,\dots,\tilde{s}_k^\prime) = \sigma_k^\ast(\bm{S}_k,\tilde{s}_k^\prime) = a_k$.
    We proceed to a case where $a_m=0$ for some $m\in\mathcal{N}_k$.
    Since $\sigma_m^\ast(h,\dots,h)=1\neq 0=a_m$, we have $(h,\dots,h)\notin\bm{S}_k$ and thus $\bm{S}_k\cap\{h\}^{k-1}=\emptyset$.
    Also, since $(\tilde{s}_1,\dots,\tilde{s}_{j-1})\in\bm{S}_2^{j-1}$, we have $\tilde{s}_k\neq h^\ast$.
    Thus, Lemma \ref{lem: equilibrium characterization} (ii) implies $a_k=\sigma_k^\ast(\bm{S}_k,\tilde{s}_k^\prime)=0$, so $\tilde{s}_k^\prime=l$.
    Therefore, we have $\sigma_k^\ast(\tilde{s}_1^\prime,\dots,\tilde{s}_k^\prime)=\sigma_k^\ast(\bm{S}_k,l)=0=a_k$.

    \proofpart{Part (iii)}
    Take $j=3,\dots,N$, $\mathcal{N}$ with $\mathcal{N}_j=\{1,\dots,j-1\}$ and $j'\in\mathcal{N}_{j'+1}$ for all $j'<j$, and $(\tilde{s}_1,\dots,\tilde{s}_{j-1})\in\bm{S}_3^{j-1}$ arbitrarily.
    Let $\bm{a}=(a_i)_{i=1}^{j-1}:=(\sigma_i^\ast(\tilde{s}_1,\dots,\tilde{s}_i\mid\mathcal{N},\Pi^\ast))_{i=1}^{j-1}$ and $\bm{S}:=\sigma_j^{\ast-1}(\bm{a}\mid\mathcal{N},\Pi^\ast)$.
    By Lemma~\ref{lem: equilibrium characterization}~(iii), it suffices to show that $\bm{S}\cap\{l,h,h^\ast\}^{j-1}\neq\emptyset$ and $\bm{S}\cap\{l,h\}^{j-1}=\emptyset$.

    We first show $\bm{S}\cap\{l,h\}^{j-1}=\emptyset$.
    Since $(\tilde{s}_1,\dots,\tilde{s}_{j-1})\in\bm{S}_3^{j-1}$, the index $\underline{i}_1$ and $\underline{i}_2$ exists. Hereafter, we fix $\underline{i}_1$ and $\underline{i}_2$ as the indexes for the signal profile $(\tilde{s}_1,\dots,\tilde{s}_{j-1})$. 
    We begin by establishing $a_k=0$ for all $k=\underline{i}_1,\dots,\underline{i}_2-1$ and $a_{\underline{i}_2}=1$ by induction on $k$. 
    First, we have $a_{\underline{i}_1}=0$. This is because, since $\tilde{s}_k\in\{h,h^\ast\}$ for all $k<\underline{i}_1$ and $\tilde{s}_{\underline{i}_1} = l$, Part (i) implies $a_{\underline{i}_1}=0$. 
    Next, suppose $a_k=0$ for all $k=\underline{i}_1,\dots,m-1$ where $\underline{i}_1<m\leq \underline{i}_2$. Since $a_{m-1}=0$, $\sigma^{\ast-1}_{m}((a_i)_{i\in \mathcal{N}_m}\mid \mathcal{N}, \Pi^\ast) \cap \{h \}^{m-1} = \emptyset$. 
    On the other hand, signal profile $(s_1,\dots,s_{\underline{i}_1-1},s_{\underline{i}_1},s_{\underline{i}_1+1}, \dots,s_{m-1}) = (h,\dots,h,l,h,\dots,h)$ belongs to $\sigma^{\ast-1}_{m}((a_i)_{i\in \mathcal{N}_m}\mid \mathcal{N}, \Pi^\ast)$ by the assumption and Lemma \ref{lem: equilibrium characterization} (ii).
    Therefore, we have $\sigma^{\ast-1}_{m}((a_i)_{i\in \mathcal{N}_m}\mid \mathcal{N}, \Pi^\ast) \cap \{l, h \}^{m-1} \neq \emptyset$. 
    These two results imply $\sigma^{\ast}_{m}((a_i)_{i\in \mathcal{N}_m}, \tilde{s}_m \mid \mathcal{N}, \Pi^\ast)= \sigma_{\{ h^\ast\}}(\tilde{s}_m)$ by Lemma \ref{lem: equilibrium characterization} (ii).
    Thus, $a_{m} = \sigma_{\{ h^\ast\}}(\tilde{s}_m) =1$ by $\tilde{s}_m \neq h^\ast$. By the same argument, we have $a_{\underline{i}_2} = \sigma_{\{ h^\ast\}}(h^\ast) = 1$.

    Next, suppose for contradiction that there exists a signal profile $(s^\prime_1,\dots,s^\prime_j) \in \{l,h \}^j \cap  \bm{S}$. Since $a_{\underline{i}_1}=0$, $(s^\prime_1,\dots,s^\prime_{\underline{i}_2-1})\not\in\bm{S}^{\underline{i}_2-1}_1$. 
    Also, since $s_i^\prime  \neq h^\ast$ for all $i=1,\dots,\underline{i}_2-1$, $(s^\prime_1,\dots,s^\prime_{\underline{i}_2-1})$ does not belong to $\bm{S}^{\underline{i}_2-1}_3$ or $\bm{S}^{\underline{i}_2-1}_4$. 
    Therefore, $(s^\prime_1,\dots,s^\prime_{\underline{i}_2-1}) \in \bm{S}^{\underline{i}_2-1}_2$. By Lemma \ref{lem: equilibrium characterization} (ii), $\sigma_{\underline{i}_2}^\ast (s^\prime_1,\dots,s^\prime_{\underline{i}_2-1},s^\prime_{\underline{i}_2})=\sigma_{\{h^\ast \}}(s^\prime_{\underline{i}_2})=0$ as $s^\prime_{\underline{i}_2} \in \{l,h \}$. This contradicts to $a_{\underline{i}_2} = 1$.
    Now, we complete to establish $\bm{S}\cap\{l,h\}^{j-1}=\emptyset$.
    
    We next show $\bm{S}\cap\{l,h,h^\ast\}^{j-1}\neq\emptyset$ by establishing $\tilde{\bm{s}}^\prime = (\tilde{s}_1^\prime, \dots, \tilde{s}_{j-1}^\prime) \in \bm{S}\cap\{l,h,h^\ast\}^{j-1}$, where $\tilde{\bm{s}}^\prime$ is defined by $\tilde{s}_k^\prime=l$ if $\tilde{s}_k=l^\ast$, and $\tilde{s}_k^\prime=\tilde{s}_k$ otherwise.
    We show $\tilde{\bm{s}}^\prime\in\bm{S}$ by verifying $\sigma_k^\ast(\tilde{s}_1^\prime,\dots,\tilde{s}_k^\prime)=a_k$ for all $k=1,\dots,j-1$.
    For $k<\underline{i}_1$, since $\tilde{s}_k\in\{h,h^\ast\}$,  $\tilde{s}_k^\prime=\tilde{s}_k$, which implies $\sigma_k^\ast(\tilde{s}_1^\prime,\dots,\tilde{s}_k^\prime)=a_k=1$.
    The inductive argument above shows that $\sigma_k^\ast(\tilde{s}_1^\prime,\dots,\tilde{s}_k^\prime)=0=a_k$ for $k=\underline{i}_1,\dots,\underline{i}_2-1$, 
    and $\sigma_{\underline{i}_2}^\ast(\tilde{s}_1^\prime,\dots,\tilde{s}_{\underline{i}_2}^\prime)=1=a_{\underline{i}_2}$, as $\tilde{s}_{\underline{i}_2}^\prime = \tilde{s}_{\underline{i}_2} = h^\ast$.
    For $k>\underline{i}_2$, $\tilde{s}_k^\prime=\tilde{s}_k$ and the induced action profile up to agent $\underline{i}_2$ is the same for $\tilde{\bm{s}}^\prime$ as for $\tilde{\bm{s}}$, so $\sigma_k^\ast(\tilde{s}_1^\prime,\dots,\tilde{s}_k^\prime)=a_k$.
    Hence $\tilde{\bm{s}}^\prime\in\bm{S}\cap\{l,h,h^\ast\}^{j-1}$, which implies that $\bm{S}\cap\{l,h,h^\ast\}^{j-1}\neq\emptyset$.
    
    \proofpart{Part (iv)}
    Take $j=3,\dots,N$, $\mathcal{N}$ with $\mathcal{N}_{j'}=\{1,\dots,j'-1\}$ for all $j'\in\{2,\dots,j\}$, and $(\tilde{s}_1,\dots,\tilde{s}_{j-1})\in\bm{S}_4^{j-1}$ arbitrarily.
    Let $\bm{a}=(a_i)_{i=1}^{j-1}:=(\sigma_i^\ast(\tilde{s}_1,\dots,\tilde{s}_i\mid\mathcal{N},\Pi^\ast))_{i=1}^{j-1}$ and $\bm{S}:=\sigma_j^{\ast-1}(\bm{a}\mid\mathcal{N},\Pi^\ast)$.
    By Lemma~\ref{lem: equilibrium characterization}~(iv), it suffices to show $\bm{S}\cap\{l,h,h^\ast\}^{j-1}=\emptyset$.
    Since $(\tilde{s}_1,\dots,\tilde{s}_{j-1})\in\bm{S}_4^{j-1}$, there exists $i>\underline{i}_2$ such that $\tilde{s}_{i}=l^\ast$. Let $i^{\ast}:=\min\{i>\underline{i}_2\mid \tilde{s}_i=l^\ast\}$.
    Since $\tilde{s}_k\neq l^\ast$ for all $\underline{i}_2 < k < i^\ast$, we have $(\tilde{s}_1,\dots,\tilde{s}_{i^\ast-1})\in\bm{S}_3^{i^\ast-1}$.
    Part (iii) implies that $a_{i^\ast}=\sigma_{\{l,h,h^\ast\}}(l^\ast)=0$. Moreover, the argument in the proof of Part~(iii) gives $a_{\underline{i}_1}=0$ and $a_{\underline{i}_2}=1$.
    
    Now suppose for contradiction that there exists $\bm{s}^\prime=(s^\prime_1,\dots,s^\prime_{j-1})\in\{l,h,h^\ast\}^{j-1}\cap\bm{S}$.
    Since $a_{\underline{i}_1}=0$, $s^\prime_i  = l$ for some $i \leq \underline{i}_1$, which implies that $(s^\prime_1,\dots,s^\prime_{i^\ast-1}) \not\in \bm{S}_1^{i^\ast-1}$. 
    By the argument in the proof of Part (iii), $a_{\underline{i}_1}=0$ and $a_{\underline{i}_2}=1$ imply that $s^\prime_i = h^\ast$ for some $i=\underline{i}_1+1,\dots,\underline{i}_2$. Thus, $(s^\prime_1,\dots,s^\prime_{i^\ast-1}) \not\in \bm{S}_2^{i^\ast-1}$. 
    Also, since $(s^\prime_1,\dots,s^\prime_{i^\ast-1}) \in \{l,h,h^\ast \}^{i^\ast-1}$, $(s^\prime_1,\dots,s^\prime_{i^\ast-1}) \not\in \bm{S}_4^{i^\ast-1}$. Therefore, $(s^\prime_1,\dots,s^\prime_{i^\ast-1}) \in \bm{S}_3^{i^\ast-1}$. 
    Part (iii) implies $\sigma_{i^\ast}^\ast(s^\prime_1,\dots,s^\prime_{i^\ast-1}, s^\prime_{i^\ast} \mid\mathcal{N},\Pi^\ast)=\sigma_{\{l,h,h^\ast\}}(s^\prime_{i^\ast}) = 1$ for any $s^\prime_{i^\ast}\neq l^\ast$, which contradicts $a_{i^\ast}=0$.
\end{proof}

Lemma \ref{lem: equilibrium outcomes under particular signals} characterizes the equilibrium strategy under the complete network and information structure $\Pi^\ast$.
\begin{corollary}\label{cor: equilibrium strategy under N^C}
    $\sigma^\ast_N(s_1\dots,s_N \mid \mathcal{N}^C, \Pi^\ast)$ is characterized as follows.
    \[
    \sigma^\ast_N(s_1\dots,s_N \mid \mathcal{N}^C, \Pi^\ast) = 
    \begin{cases}
            \sigma_{\{h,h^\ast\}}(s_N) & \text{if } (s_1,\dots,s_{N-1})\in \bm{S}_1^{N-1}, \\
            \sigma_{\{h^\ast\}}(s_N) & \text{if } (s_1,\dots,s_{N-1})\in \bm{S}_2^{N-1}, \\
            \sigma_{\{l,h,h^\ast\}}(s_N) & \text{if } (s_1,\dots,s_{N-1})\in \bm{S}_3^{N-1},\\
            \sigma_{\emptyset}(s_N) & \text{if } (s_1,\dots,s_{N-1})\in \bm{S}_4^{N-1}.
        \end{cases}
    \]
\end{corollary}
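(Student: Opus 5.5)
The corollary follows by specializing Lemma \ref{lem: equilibrium outcomes under particular signals} to $j=N$ and $\mathcal N=\mathcal N^C$. The plan has three steps.

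First, I check that the complete network satisfies the hypotheses of all four parts of that lemma at $j=N$. Since $\mathcal N^C_{j'}=\{1,\dots,j'-1\}$ for every $j'$, we have $\mathcal N^C_N=\{1,\dots,N-1\}$. Moreover, $j'\in\mathcal N^C_{j'+1}$ for all $j'<N$, and $\mathcal N^C_{j'}=\{1,\dots,j'-1\}$ for all $j'\in\{2,\dots,N\}$. Parts (iii) and (iv) require $j\ge 3$, which holds because $N\ge3$.

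Second, I show that $\bm S_1^{N-1},\dots,\bm S_4^{N-1}$ partition $S^{\ast\,N-1}$. This is what makes the four-case display exhaustive and unambiguous.
\begin{itemize}
\item If no coordinate lies in $\{l,l^\ast\}$, the profile is in $\bm S_1^{N-1}$. Otherwise $\underline i_1$ exists, so the profile is not in $\bm S_1^{N-1}$.
\item Given that $\underline i_1$ exists, either no $h^\ast$ appears after $\underline i_1$, which is exactly $\bm S_2^{N-1}$, or $\underline i_2$ exists.
\item Given that $\underline i_2$ exists, either no $l^\ast$ appears after $\underline i_2$ ($\bm S_3^{N-1}$) or some $l^\ast$ does ($\bm S_4^{N-1}$). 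These two alternatives are complementary.
\end{itemize}
One point needs care: in $\bm S_2^{N-1}$, the condition ``$s_i\neq h^\ast$ for all $i>\underline i_1$'' is precisely the statement that $\underline i_2$ fails to exist. Hence $\bm S_2^{N-1}$ is disjoint from $\bm S_3^{N-1}\cup\bm S_4^{N-1}$.

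Third, for each cell I invoke the corresponding part of Lemma \ref{lem: equilibrium outcomes under particular signals}. Each part gives $\sigma^\ast_N(s_1,\dots,s_{N-1},\cdot\mid\mathcal N^C,\Pi^\ast)$ as the stated $\sigma_S(\cdot)$. Evaluating at $s_N$ yields the displayed formula, because equilibrium actions are unique by Lemma \ref{lem: equilibrium characterization} and the tie-breaking rule.

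I expect no step to be a real obstacle. The only thing to verify carefully is the exhaustiveness and disjointness of the partition in the second step, since the corollary's case split relies on it.
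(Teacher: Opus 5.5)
Your proposal is correct and follows the paper's route exactly. The paper presents the corollary as an immediate consequence of Lemma~\ref{lem: equilibrium outcomes under particular signals} with $j=N$ and $\mathcal N=\mathcal N^C$, and the complete network satisfies all four sets of hypotheses. Your explicit check that $\bm S_1^{N-1},\dots,\bm S_4^{N-1}$ partition $S^{\ast\,N-1}$ only spells out what the paper asserts when it introduces these sets.
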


\begin{lemma}\label{lem: N^C detects rare signals the most}
    Under the information structure $\Pi^\ast$, the following results hold for any network $\mathcal{N}$.
    \begin{enumerate}[(i)]
        \item Suppose $(s_1,\dots,s_{i-1}) \in \bm{S}^{i-1}_1$ and $a_j=\sigma^\ast_j(s_1,\dots,s_j\mid \mathcal{N}, \Pi^\ast)$ for each $j=1,\dots,i-1$. Then, $\sigma_i^{\ast -1} ((a_j)_{j\in\mathcal{N}_i} \mid \Pi^\ast, \mathcal{N}) \cap \{h \}^{i-1} \neq \emptyset$ 
        
        \item Suppose $(s_1,\dots,s_{i-1}) \in \bm{S}^{i-1}_2$ and $a_j=\sigma^\ast_j(s_1,\dots,s_j\mid \mathcal{N}, \Pi^\ast)$ for each $j=1,\dots,i-1$. Then, $\sigma_i^{\ast -1} ((a_j)_{j\in\mathcal{N}_i} \mid \Pi^\ast, \mathcal{N}) \cap \{l, h \}^{i-1} \neq \emptyset$ 
    
        \item Suppose $(s_1,\dots,s_{i-1}) \in \bm{S}^{i-1}_3$ and $a_j=\sigma^\ast_j(s_1,\dots,s_j\mid \mathcal{N}, \Pi^\ast)$ for each $j=1,\dots,i-1$. Then, $\sigma_i^{\ast -1} ((a_j)_{j\in\mathcal{N}_i} \mid \Pi^\ast, \mathcal{N}) \cap \{l,h,h^\ast \}^{i-1} \neq \emptyset$  
    \end{enumerate}
\end{lemma}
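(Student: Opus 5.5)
Each part asks us to exhibit one explicit "proxy" signal profile lying in the relevant coarse set and consistent with the observed history. The plan is a single forward induction over agents $k=1,\dots,i-1$, in the style of the proofs of Lemma~\ref{lem: equilibrium outcomes under particular signals}(ii)--(iii). We fix the true profile $(s_1,\dots,s_{i-1})$ and the induced equilibrium actions $a_k=\sigma_k^\ast(s_1,\dots,s_k\mid\mathcal N,\Pi^\ast)$ under the arbitrary network $\mathcal N$. We then define a modified profile $\bm s'$ by coarsening signals and show $\sigma_k^\ast(s'_1,\dots,s'_k\mid\mathcal N,\Pi^\ast)=a_k$ for every $k<i$. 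This gives $\bm s'\in\sigma_i^{\ast-1}((a_j)_{j\in\mathcal N_i}\mid\Pi^\ast,\mathcal N)$, and by construction $\bm s'$ lies in the required set.

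\textbf{Part (i).} Take $\bm s'=(h,\dots,h)$. The induction in Part (i) of Lemma~\ref{lem: equilibrium outcomes under particular signals} shows that every agent takes action $1$ under any profile in $\{h,h^\ast\}^k$, in every network. Hence $a_k=1=\sigma_k^\ast(h,\dots,h)$ for all $k<i$, which is what we need.

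\textbf{Part (ii).} Define $s'_k=h$ if $a_k=1$ and $s'_k=l$ if $a_k=0$, and argue inductively on $k$. Let $\bm S_k$ be the inverse image of agent $k$'s observation.
\begin{itemize}
\item If $\mathcal N_k=\emptyset$, agent $k$ follows $\sigma_{\{h,h^\ast\}}$, so $s'_k$ reproduces $a_k$ (note $\sigma(l^\ast)=\sigma(l)=0$ and $\sigma(h^\ast)=\sigma(h)=1$).
\item If all observed actions equal $1$, Part (i) gives $(h,\dots,h)\in\bm S_k$. Lemma~\ref{lem: equilibrium characterization}(i) then applies to both the true and the proxy profile, and again $s'_k$ reproduces $a_k$.
\item If some observed action equals $0$, then $\bm S_k\cap\{h\}^{k-1}=\emptyset$. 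By the induction hypothesis $\bm S_k\cap\{l,h\}^{k-1}\neq\emptyset$, so Lemma~\ref{lem: equilibrium characterization}(ii) makes agent $k$ play $\sigma_{\{h^\ast\}}$.
\end{itemize}
In the third case we must rule out $a_k=1$, which would need $s_k=h^\ast$. Here we use the defining property of $\bm S_2$: every $h^\ast$ occurs before $\underline i_1$. Any agent who sees a $0$ comes after some $0$-action, and the first $0$-action occurs at an index $\ge\underline i_1$, because before $\underline i_1$ all signals are in $\{h,h^\ast\}$ and all actions are $1$. So $s_k\neq h^\ast$, hence $a_k=0$ and $s'_k=l$ works.

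\textbf{Part (iii).} Define $s'_k=l$ if $s_k=l^\ast$ and $s'_k=s_k$ otherwise. The induction has the same case split, now using all three cases (i)--(iii) of Lemma~\ref{lem: equilibrium characterization}, plus case (iv) to cover the possibility that the true inverse image contains no $\{l,h,h^\ast\}$ profile. This last possibility is excluded by the induction hypothesis, since $\bm s'_{<k}$ lies in it. The key observation is that replacing $l^\ast$ by $l$ changes no action at any information set where the agent plays $\sigma_{\{h,h^\ast\}}$, $\sigma_{\{h^\ast\}}$, or $\sigma_\emptyset$, because $l$ and $l^\ast$ induce the same action there.

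The delicate situation, and the main obstacle, is an agent who plays $\sigma_{\{l,h,h^\ast\}}$ under the proxy. For such an agent, $l^\ast$ and $l$ induce different actions. The induction shows that the agent's inverse image is the same set under both profiles (the two profiles generate the same past actions). So the true behaviour at $k$ is determined by the same case of Lemma~\ref{lem: equilibrium characterization} as the proxy behaviour. Case (iii) of that lemma requires $\bm S_k\cap\{l,h\}^{k-1}=\emptyset$, i.e.\ the agent has already deduced that some predecessor received $h^\ast$ after an $l$-type signal. I would try to show that, since $\bm s\in\bm S_3$, no $l^\ast$ appears after $\underline i_2$, while any agent before $\underline i_2$ cannot be in case (iii), since the profile truncated there lies in $\{l,h\}$-compatible sets. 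Thus no agent facing case (iii) has $s_k=l^\ast$, and the substitution never changes an action.

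Verifying this "no case (iii) before $\underline i_2$" claim carefully in an arbitrary network is what I expect to require the most care. I would prove it by showing, via the Part (ii) argument, that the profile obtained by replacing every $h^\ast$ after $\underline i_1$ with $h$ reproduces all actions up to $\underline i_2-1$. This would put an $\{l,h\}$ profile in each such agent's inverse image.
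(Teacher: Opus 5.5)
Your proposal is correct and is essentially the paper's own argument: a forward induction with a proxy profile, the same $l^\ast\to l$ proxy in Part (iii), and the same split at $\underline{i}_2$. The one variation is Part (ii), where you index the proxy by actions (as in the proof of Lemma~\ref{lem: equilibrium outcomes under particular signals}(ii)) while the paper coarsens $h^\ast\to h$, $l^\ast\to l$; both work.

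One slip in your last paragraph: there is no $h^\ast$ strictly between $\underline{i}_1$ and $\underline{i}_2$, so the replacement you describe does nothing. What you actually need is that each truncation $(s_1,\dots,s_{k-1})$ with $k\le\underline{i}_2$ lies in $\bm{S}_1^{k-1}\cup\bm{S}_2^{k-1}$, so Parts (i)--(ii) applied at index $k$ give $\bm{S}_k\cap\{l,h\}^{k-1}\neq\emptyset$.
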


\begin{proof}[Proof of Lemma \ref{lem: N^C detects rare signals the most}]
    \proofpart{Part (i)}
    Take any $(\tilde{s}_1,\dots,\tilde{s}_{i-1}) \in \bm{S}^{i-1}_1$. We show $a_j:=\sigma_j^\ast(\tilde{s}_1,\dots,\tilde{s}_j\mid\mathcal{N},\Pi^\ast)=\sigma_j^\ast(h,\dots,h\mid\mathcal{N},\Pi^\ast)$ for any $j<i$. 
    
    Since $LR(h^\ast) > LR(h) > 1$, we have $\sigma_1^\ast(\tilde{s}_1\mid\mathcal{N},\Pi^\ast)=\sigma_1^\ast(h\mid\mathcal{N},\Pi^\ast) =1$.
    
    For induction, fix $j\in\{2,\dots,i-1\}$ and suppose $\sigma_k^\ast(\tilde{s}_1,\dots,\tilde{s}_k\mid\mathcal{N},\Pi^\ast)=\sigma_k^\ast(h,\dots,h\mid\mathcal{N},\Pi^\ast)$ for all $k<j$. Thus, $(h,\dots, h)\in\sigma_j^{\ast-1}((a_k)_{k\in\mathcal{N}_j}\mid\Pi^\ast,\mathcal{N})$.
    Lemma~\ref{lem: equilibrium characterization}~(i) implies that $\sigma_j^\ast(\tilde{s}_1,\dots,\tilde{s}_j\mid\mathcal{N},\Pi^\ast) = \sigma_{\{h, h^\ast\}}(\tilde{s}_j) =1$, because $\tilde{s}_j\in \{h,h^\ast \}$ by $\tilde{\bm{s}} \in \bm{S}^{i-1}_1$.
    Since $\sigma_j^\ast(h, \dots, h\mid\mathcal{N},\Pi^\ast) =1$ by Lemma~\ref{lem: equilibrium characterization}~(i), we have $\sigma_k^\ast(\tilde{s}_1,\dots,\tilde{s}_k\mid\mathcal{N},\Pi^\ast)=1=\sigma_k^\ast(h,\dots,h\mid\mathcal{N},\Pi^\ast)$,
    which completes the proof.
    
    \proofpart{Part (ii)}
    Take any $(\tilde{s}_1,\dots,\tilde{s}_{i-1}) \in \bm{S}^{i-1}_2$.
    Let $\tilde{\bm{s}}^\prime$ be the signal profile obtained from $\tilde{\bm{s}}$ by replacing each $h^\ast$ with $h$ and each $l^\ast$ with $l$, so that $\tilde{\bm{s}}^\prime\in\{l,h\}^{i-1}$.
    We show $a_j:=\sigma_j^\ast(\tilde{s}_1,\dots,\tilde{s}_j\mid\mathcal{N},\Pi^\ast)=\sigma_j^\ast(\tilde{s}_1^\prime,\dots,\tilde{s}_j^\prime\mid\mathcal{N},\Pi^\ast)$ for any $j<i$. 
    Since $\sigma_1^\ast(\tilde{s}_1\mid\mathcal{N},\Pi^\ast)=1$ if $\tilde{s}_1\in\{h,h^\ast \}$ and $\sigma_1^\ast(\tilde{s}_1\mid\mathcal{N},\Pi^\ast)=0$ if $\tilde{s}_1\in\{l,l^\ast \}$, we have $\sigma_1^\ast(\tilde{s}_1\mid\mathcal{N},\Pi^\ast)=\sigma_1^\ast(\tilde{s}_1^\prime \mid\mathcal{N},\Pi^\ast)$.
    
    For induction, fix $j\in\{2,\dots,i-1\}$ and suppose $\sigma_k^\ast(\tilde{s}_1,\dots,\tilde{s}_k\mid\mathcal{N},\Pi^\ast)=\sigma_k^\ast(\tilde{s}_1^\prime,\dots,\tilde{s}_k^\prime\mid\mathcal{N},\Pi^\ast)$ for all $k<j$. 
    First, consider a case in which $j\leq\underline{i}_1$. Then, $(\tilde{s}_1,\dots,\tilde{s}_{j-1}),(\tilde{s}_1^\prime,\dots,\tilde{s}_{j-1}^\prime)\in\bm{S}_1^{j-1}$. 
    By Part (i) and Lemma \ref{lem: equilibrium characterization}, we have $\sigma_j^\ast(\tilde{s}_1,\dots,\tilde{s}_j\mid\mathcal{N},\Pi^\ast) =\sigma_{\{h,h^\ast \}}(\tilde{s}_j)$ and $\sigma_j^\ast(\tilde{s}_1^\prime,\dots,\tilde{s}_j^\prime\mid\mathcal{N},\Pi^\ast) = \sigma_{\{h,h^\ast \}}(\tilde{s}_j^\prime)$. 
    Thus, $\sigma_j^\ast(\tilde{s}_1,\dots,\tilde{s}_j\mid\mathcal{N},\Pi^\ast) = \sigma_j^\ast(\tilde{s}_1^\prime,\dots,\tilde{s}_j^\prime\mid\mathcal{N},\Pi^\ast)$.
    
    Next, consider a case in which $\underline{i}_1 < j < i$. 
    Since $\sigma_k^\ast(\tilde{s}_1,\dots,\tilde{s}_k\mid\mathcal{N},\Pi^\ast)=\sigma_k^\ast(\tilde{s}_1^\prime,\dots,\tilde{s}_k^\prime\mid\mathcal{N},\Pi^\ast)$ for all $k<j$, $\sigma_j^{\ast -1}( (a_m)_{m\in\mathcal{N}_j} \mid\mathcal{N},\Pi^\ast) \cap \{l,h \}^{j-1} \neq \emptyset$. Thus, Lemma~\ref{lem: equilibrium characterization} places agent $j$ in case (i) or (ii). 
    In case (i), we have $\sigma_j^\ast(\tilde{s}_1,\dots,\tilde{s}_j\mid\mathcal{N},\Pi^\ast) = \sigma_{\{h,h^\ast\}}(\tilde{s}_j)=\sigma_{\{h,h^\ast\}}(\tilde{s}_j^\prime) = \sigma_j^\ast(\tilde{s}_1^\prime,\dots,\tilde{s}_j^\prime\mid\mathcal{N},\Pi^\ast)$.
    In case (ii), agent $j$'s equilibrium strategy is $\sigma_{h^\ast}$ under both signal profiles. Since $\tilde{\bm{s}}\in\bm{S}_2^{i-1}$ and $j>\underline{i}_1$ implies $\tilde{s}_j \neq h^\ast$, we have $\sigma_j^\ast(\tilde{s}_1,\dots,\tilde{s}_j\mid\mathcal{N},\Pi^\ast) = 0 = \sigma_j^\ast(\tilde{s}_1^\prime,\dots,\tilde{s}_j^\prime\mid\mathcal{N},\Pi^\ast)$.
    
    Therefore, we have established $\sigma_j^\ast(\tilde{s}_1,\dots,\tilde{s}_j\mid\mathcal{N},\Pi^\ast)=\sigma_j^\ast(\tilde{s}_1^\prime,\dots,\tilde{s}_j^\prime\mid\mathcal{N},\Pi^\ast)$ for any $j<i$, which completes the proof.
    
    \proofpart{Part (iii)}
    Take any $(\tilde{s}_1,\dots,\tilde{s}_{i-1}) \in \bm{S}^{i-1}_3$.
    Let $\tilde{\bm{s}}^\prime$ be the signal profile obtained from $\tilde{\bm{s}}$ by replacing each $l^\ast$ with $l$, so that $\tilde{\bm{s}}^\prime\in\{l,h,h^\ast\}^{i-1}$.
    We show $a_j:=\sigma_j^\ast(\tilde{s}_1,\dots,\tilde{s}_j\mid\mathcal{N},\Pi^\ast)=\sigma_j^\ast(\tilde{s}_1^\prime,\dots,\tilde{s}_j^\prime\mid\mathcal{N},\Pi^\ast)$ for any $j<i$. 
    
    Since $\sigma_1^\ast(\tilde{s}_1\mid\mathcal{N},\Pi^\ast)=1$ if $\tilde{s}_1\in\{h,h^\ast \}$ and $\sigma_1^\ast(\tilde{s}_1\mid\mathcal{N},\Pi^\ast)=0$ if $\tilde{s}_1\in\{l,l^\ast \}$, we have $\sigma_1^\ast(\tilde{s}_1\mid\mathcal{N},\Pi^\ast)=\sigma_1^\ast(\tilde{s}_1^\prime \mid\mathcal{N},\Pi^\ast)$.
    
    For induction, fix $j\in\{2,\dots,i-1\}$ and suppose $\sigma_k^\ast(\tilde{s}_1,\dots,\tilde{s}_k\mid\mathcal{N},\Pi^\ast)=\sigma_k^\ast(\tilde{s}_1^\prime,\dots,\tilde{s}_k^\prime\mid\mathcal{N},\Pi^\ast)$ for all $k<j$. 
    First, consider a case in which $j\leq\underline{i}_1$. 
    In this case, we can show $\sigma_k^\ast(\tilde{s}_1,\dots,\tilde{s}_k\mid\mathcal{N},\Pi^\ast)=\sigma_k^\ast(\tilde{s}_1^\prime,\dots,\tilde{s}_k^\prime\mid\mathcal{N},\Pi^\ast)$ by the same argument as that in Part (ii). 
    
    Next, consider a case in which $\underline{i}_1<j\leq\underline{i}_2$.
    Since $\sigma_k^\ast(\tilde{s}_1,\dots,\tilde{s}_k\mid\mathcal{N},\Pi^\ast)=\sigma_k^\ast(\tilde{s}_1^\prime,\dots,\tilde{s}_k^\prime\mid\mathcal{N},\Pi^\ast)$ for all $k<j$
    and $j\leq\underline{i}_2$,
    $\sigma_j^{\ast -1}( (a_m)_{m\in\mathcal{N}_j} \mid\mathcal{N},\Pi^\ast) \cap \{l,h \}^{j-1} \neq \emptyset$. 
    Thus, Lemma~\ref{lem: equilibrium characterization} places agent $j$ in case (i) or (ii).     
    Then, by the same argument as that in Part (ii), we have $\sigma_j^\ast(\tilde{s}_1,\dots,\tilde{s}_j\mid\mathcal{N},\Pi^\ast)=\sigma_j^\ast(\tilde{s}_1^\prime,\dots,\tilde{s}_j^\prime\mid\mathcal{N},\Pi^\ast)$.
    
    Finally, consider a case in which $\underline{i}_2 < j < i$.
    Since $\tilde{\bm{s}}\in\bm{S}_3^{i-1}$ and $j > \underline{i}_2$ implies $\tilde{s}_j\neq l^\ast$,
    we have $\tilde{s}_j = \tilde{s}_j^\prime$.
    Thus, $\sigma_j^\ast(\tilde{s}_1,\dots,\tilde{s}_j\mid\mathcal{N},\Pi^\ast)=\sigma_j^\ast(\tilde{s}_1^\prime,\dots,\tilde{s}_j^\prime\mid\mathcal{N},\Pi^\ast)$.
    
    Therefore, we have established $\sigma_j^\ast(\tilde{s}_1,\dots,\tilde{s}_j\mid\mathcal{N},\Pi^\ast)=\sigma_j^\ast(\tilde{s}_1^\prime,\dots,\tilde{s}_j^\prime\mid\mathcal{N},\Pi^\ast)$ for any $j<i$, which completes the proof.
\end{proof}

\subsection{Lemma}

For simplicity, we abbreviate $\sigma_i^\ast(s_1,\dots,s_i\mid\mathcal{N},\Pi^\ast)$ by $\sigma_i^{\ast\mathcal{N}}(s_1,\dots,s_i)$ and $\sigma_i^\ast(s_1,\dots,s_i\mid\mathcal{N}^C,\Pi^\ast)$ by $\sigma_i^{\ast C}(s_1,\dots,s_i)$. Similar abbreviations apply to other related notation.

First, we provide a lemma to establish weak optimality.

\begin{lemma}\label{lem: lemma for weak inequality}
    The following results hold for any network $\mathcal{N}$.
    \begin{enumerate}[(i)]
        \item
            For any history $\bm{a}=(a_i)_{i\in\mathcal{N}_N}$ satisfying $\sigma^{\ast\mathcal{N}-1}_N (\bm{a}) \cap \bm{S}_1^{N-1} \neq \emptyset$ and any $s_N\in S^\ast$,
            $\mathbb{E}_\theta[u(\sigma_N^{\ast\mathcal{N}}(s_1,\dots,s_N), \theta) \mid \sigma^{\ast\mathcal{N}-1}_N (\bm{a}) \cap \bm{S}_1^{N-1}, s_N] \leq \mathbb{E}_\theta[u(\sigma_N^{\ast C}(s_1,\dots,s_N), \theta) \mid \sigma^{\ast\mathcal{N}-1}_N (\bm{a}) \cap \bm{S}_1^{N-1}, s_N]$
        
        \item
            For any history $\bm{a}=(a_i)_{i\in\mathcal{N}_N}$ satisfying $\sigma^{\ast\mathcal{N}-1}_N (\bm{a}) \cap \bm{S}_2^{N-1} \neq \emptyset$ and any $s_N\in S^\ast$,
            $\mathbb{E}_\theta[u(\sigma_N^{\ast\mathcal{N}}(s_1,\dots,s_N), \theta) \mid \sigma^{\ast\mathcal{N}-1}_N (\bm{a}) \cap \bm{S}_2^{N-1}, s_N] \leq \mathbb{E}_\theta[u(\sigma_N^{\ast C}(s_1,\dots,s_N), \theta) \mid \sigma^{\ast\mathcal{N}-1}_N (\bm{a}) \cap \bm{S}_2^{N-1}, s_N]$
        
        \item
            For any history $\bm{a}=(a_i)_{i\in\mathcal{N}_N}$ satisfying $\sigma^{\ast\mathcal{N}-1}_N (\bm{a}) \cap \bm{S}_3^{N-1} \neq \emptyset$ and any $s_N\in S^\ast$,
            $\mathbb{E}_\theta[u(\sigma_N^{\ast\mathcal{N}}(s_1,\dots,s_N), \theta) \mid \sigma^{\ast\mathcal{N}-1}_N (\bm{a}) \cap \bm{S}_3^{N-1}, s_N] \leq \mathbb{E}_\theta[u(\sigma_N^{\ast C}(s_1,\dots,s_N), \theta) \mid \sigma^{\ast\mathcal{N}-1}_N (\bm{a}) \cap \bm{S}_3^{N-1}, s_N]$
        
        \item
            For any history $\bm{a}=(a_i)_{i\in\mathcal{N}_N}$ satisfying $\sigma^{\ast\mathcal{N}-1}_N (\bm{a}) \cap \bm{S}_4^{N-1} \neq \emptyset$ and any $s_N\in S^\ast$,
            $\mathbb{E}_\theta[u(\sigma_N^{\ast\mathcal{N}}(s_1,\dots,s_N), \theta) \mid \sigma^{\ast\mathcal{N}-1}_N (\bm{a}) \cap \bm{S}_4^{N-1}, s_N] \leq \mathbb{E}_\theta[u(\sigma_N^{\ast C}(s_1,\dots,s_N), \theta) \mid \sigma^{\ast\mathcal{N}-1}_N (\bm{a}) \cap \bm{S}_4^{N-1}, s_N]$ 
    \end{enumerate}
\end{lemma}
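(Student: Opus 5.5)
The plan is to prove each part by comparing, on the conditioning set $\bm{T}_k:=\sigma^{\ast\mathcal{N}-1}_N(\bm{a})\cap\bm{S}_k^{N-1}$, the action agent $N$ takes under $\mathcal{N}^C$ with the Bayes-optimal action given $(\bm{T}_k,s_N)$. On each cell $\bm{S}_k^{N-1}$, Corollary~\ref{cor: equilibrium strategy under N^C} says that $\sigma_N^{\ast C}$ depends only on $s_N$, through the rule $\sigma_{\{h,h^\ast\}}$, $\sigma_{\{h^\ast\}}$, $\sigma_{\{l,h,h^\ast\}}$, or $\sigma_\emptyset$, respectively. Since $\sigma_N^{\ast\mathcal{N}}$ is constant in $(s_1,\dots,s_{N-1})$ on $\sigma^{\ast\mathcal{N}-1}_N(\bm{a})$, both sides of each inequality are conditional payoffs of actions that are constant on $\bm{T}_k$ given $s_N$. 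It therefore suffices to show that the complete-network action $\sigma_N^{\ast C}$ is optimal conditional on the event $(\bm{T}_k,s_N)$. Equivalently, $LR(\bm{T}_k)LR(s_N)>1$ exactly when the corresponding rule prescribes action $1$, with ties resolved toward action $0$. Then any constant action, in particular $\sigma_N^{\ast\mathcal{N}}(\bm{a},s_N)$, yields weakly less.

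To obtain these cutoffs I will reuse Lemma~\ref{lem: equilibrium characterization} with $i=N$ and $\bm{S}=\bm{T}_k$. That lemma only needs to know which of the sets $\{h\}^{N-1}$, $\{l,h\}^{N-1}$, $\{l,h,h^\ast\}^{N-1}$ the set $\bm{S}$ meets.

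For part (iv), every profile in $\bm{S}_4^{N-1}$ contains $l^\ast$, so $\bm{T}_4\cap\{l,h,h^\ast\}^{N-1}=\emptyset$. Case (iv) of the lemma then gives $\sigma_\emptyset$.

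For part (i), I need $\bm{T}_1\cap\{h\}^{N-1}\neq\emptyset$. Lemma~\ref{lem: N^C detects rare signals the most}(i) shows that any profile in $\bm{S}_1^{N-1}$ produces the same actions as $(h,\dots,h)$ under $\mathcal{N}$. Hence $(h,\dots,h)\in\bm{T}_1$, and case (i) of the lemma gives $\sigma_{\{h,h^\ast\}}$.

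For part (ii), $\bm{T}_2\subseteq\bm{S}_2^{N-1}$ contains no all-$h$ profile, since each of its profiles has some $l$ or $l^\ast$. The proof of Lemma~\ref{lem: N^C detects rare signals the most}(ii) replaces $h^\ast\mapsto h$ and $l^\ast\mapsto l$ without changing $\mathcal{N}$-actions, and the resulting profile stays in $\bm{S}_2^{N-1}$. This gives $\bm{T}_2\cap\{l,h\}^{N-1}\neq\emptyset$, so case (ii) yields $\sigma_{\{h^\ast\}}$.

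For part (iii), $\bm{S}_3^{N-1}$ profiles contain some $l$ or $l^\ast$ and also an $h^\ast$. Hence $\bm{T}_3\cap\{l,h\}^{N-1}=\emptyset$ trivially. The replacement $l^\ast\mapsto l$ from Lemma~\ref{lem: N^C detects rare signals the most}(iii) keeps the profile in $\bm{S}_3^{N-1}$, which gives a point of $\bm{T}_3\cap\{l,h,h^\ast\}^{N-1}$. Case (iii) then yields $\sigma_{\{l,h,h^\ast\}}$.

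The main obstacle is confirming that these replaced profiles remain inside both $\bm{S}_k^{N-1}$ and $\sigma^{\ast\mathcal{N}-1}_N(\bm{a})$. Membership in the second set follows because the replacement preserves all $\mathcal{N}$-actions, including those of agents in $\mathcal{N}_N$. Membership in the first set must be checked against the definitions of $\underline{i}_1$ and $\underline{i}_2$. In part (ii), the map $h^\ast\mapsto h$ could destroy an $h^\ast$ before $\underline{i}_1$, but that is harmless because no $h^\ast$ may follow $\underline{i}_1$. In part (iii), $l^\ast\mapsto l$ preserves $\underline{i}_1$, $\underline{i}_2$, and the no-$l^\ast$-after-$\underline{i}_2$ condition. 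Once these membership checks are done, the inequality in each part follows from the optimality of the complete-network action on $(\bm{T}_k,s_N)$, since the left side uses a feasible alternative action on the same event.
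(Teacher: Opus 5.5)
Your proposal is correct and follows essentially the same route as the paper. On each cell $\sigma^{\ast\mathcal{N}-1}_N(\bm{a})\cap\bm{S}_k^{N-1}$ you use (the proof of) Lemma~\ref{lem: N^C detects rare signals the most} to place the set in the appropriate case of Lemma~\ref{lem: equilibrium characterization}, match the resulting Bayes-optimal action with $\sigma_N^{\ast C}$ via Corollary~\ref{cor: equilibrium strategy under N^C}, and conclude by optimality because $\sigma_N^{\ast\mathcal{N}}$ is constant on the conditioning event. Your explicit check in parts (ii) and (iii) that the replaced witness profiles remain in $\bm{S}_k^{N-1}$ fills in what the paper leaves as ``similar arguments.''
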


\begin{proof}[Proof of Lemma \ref{lem: lemma for weak inequality}]
    \proofpart{Part (i)}
        Let $\tilde{\bm{S}}:=\sigma^{\ast\mathcal{N}-1}_N (\bm{a}) \cap \bm{S}_1^{N-1}$. 
        By Lemma \ref{lem: N^C detects rare signals the most} (i), $\tilde{\bm{S}}$ satisfies the condition of Lemma \ref{lem: equilibrium characterization} (i). 
        Thus, $\sigma_N^{\ast\mathcal{N}}(\tilde{\bm{S}},s_N) =  \sigma_{\{ h,h^\ast \}} (s_N)$. 
        Therefore, 
    \begin{align*}
    \mathbb{E}_\theta\bigl[u(\sigma_N^{\ast\mathcal{N}}(s_1,\dots,s_N),\theta)\mid \tilde{\bm{S}},s_N\bigr]
    &\leq \mathbb{E}_\theta\bigl[u(\sigma_N^\ast(\tilde{\bm{S}},s_N),\theta)\mid \tilde{\bm{S}},s_N\bigr]\\
    &= \mathbb{E}_\theta\bigl[u(\sigma_{\{h,h^\ast\}}(s_N),\theta)\mid \tilde{\bm{S}},s_N\bigr]\\
    &= \mathbb{E}_\theta\bigl[u(\sigma_N^{\ast C}(s_1,\dots,s_N),\theta)\mid \tilde{\bm{S}},s_N\bigr],
    \end{align*}
    where the inequality holds because $\sigma_N^\ast(\tilde{\bm{S}},s_N)$ maximizes the expected payoff given $\tilde{\bm{S}}$ and $s_N$, the first equality follows from $\sigma_N^\ast(\tilde{\bm{S}},s_N)=\sigma_{\{h,h^\ast\}}(s_N)$, and the second equality follows from Corollary~\ref{cor: equilibrium strategy under N^C}~(i) and $\tilde{\bm{S}}\subset\bm{S}_1^{N-1}$.
    
    \proofpart{Parts (ii) and (iii)} Similar arguments to the proof of Part (i) establish Parts (ii) and (iii). 

    \proofpart{Part (iv)}
            Let $\tilde{\bm{S}}:=\sigma^{\ast\mathcal{N}-1}_N (\bm{a}) \cap \bm{S}_4^{N-1}$.
            Since $\tilde{\bm{S}}\subset\bm{S}_4^{N-1}$, every element of $\tilde{\bm{S}}$ contains $l^\ast$, so $\tilde{\bm{S}}\cap\{l,h,h^\ast\}^{N-1}=\emptyset$.
            Thus, Lemma~\ref{lem: equilibrium characterization}~(iv) implies $\sigma_N^\ast(\tilde{\bm{S}},s_N)=\sigma_\emptyset(s_N)=0$.
            Therefore,
            \begin{align*}
            \mathbb{E}_\theta\bigl[u(\sigma_N^{\ast\mathcal{N}}(s_1,\dots,s_N),\theta)\mid\tilde{\bm{S}},s_N\bigr]
            &\leq \mathbb{E}_\theta\bigl[u(\sigma_N^\ast(\tilde{\bm{S}},s_N),\theta)\mid\tilde{\bm{S}},s_N\bigr]\\
            &= \mathbb{E}_\theta\bigl[u(\sigma_\emptyset(s_N),\theta)\mid\tilde{\bm{S}},s_N\bigr]\\
            &= \mathbb{E}_\theta\bigl[u(\sigma_N^{\ast C}(s_1,\dots,s_N),\theta)\mid\tilde{\bm{S}},s_N\bigr],
            \end{align*}
            where the inequality holds because $\sigma_N^\ast(\tilde{\bm{S}},s_N)$ maximizes the expected payoff given $\tilde{\bm{S}}$ and $s_N$, the first equality follows from $\sigma_N^\ast(\tilde{\bm{S}},s_N)=\sigma_\emptyset(s_N)=0$, and the second equality follows from Corollary~\ref{cor: equilibrium strategy under N^C} and $\tilde{\bm{S}}\subset\bm{S}_4^{N-1}$.

\end{proof}

Next, we provide a lemma to establish strict optimality.

\begin{lemma}\label{lem: lemma for strict inequality}
    \begin{enumerate}[(i)]
        \item 
        Suppose that there exists $i\neq 1,N$ such that $i-1\not\in\mathcal{N}_i$ under network $\mathcal{N}$.
        There exists $(s_1,\dots,s_{N-1})\in S^{\ast N-1}$ and $s_N\in S^\ast$ such that $\sigma_N^{\ast\mathcal{N}}(s_1,\dots,s_N)\neq\sigma_N^{\ast C}(s_1,\dots,s_N)$.

        \item 
        Suppose that $i-1\in\mathcal{N}_i$ for all $i\neq 1,N$ and there exists $i\neq N$ such that $i\not\in\mathcal{N}_N$ under network $\mathcal{N}$.
        There exists $(s_1,\dots,s_{N-1})\in S^{\ast N-1}$ and $s_N\in S^\ast$ such that $\sigma_N^{\ast\mathcal{N}}(s_1,\dots,s_N)\neq\sigma_N^{\ast C}(s_1,\dots,s_N)$.

        \item 
        Suppose that $i-1\in\mathcal{N}_i$ for all $i\neq 1,N$ and $i\in\mathcal{N}_N$ for all $i\neq N$ under network $\mathcal{N}$ and $\mathcal{N}\neq \mathcal{N}^C$.
        There exists $(s_1,\dots,s_{N-1})\in S^{\ast N-1}$ and $s_N\in S^\ast$ such that $\sigma_N^{\ast\mathcal{N}}(s_1,\dots,s_N)\neq\sigma_N^{\ast C}(s_1,\dots,s_N)$. 
    \end{enumerate}
\end{lemma}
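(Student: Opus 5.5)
The plan for each part is to exhibit one explicit signal profile $(s_1,\dots,s_N)$ on which agent $N$'s action under $\mathcal N$ differs from her action under $\mathcal N^C$. Under $\mathcal N^C$ her action is read off Corollary~\ref{cor: equilibrium strategy under N^C}. Under $\mathcal N$ I will compute it from Lemma~\ref{lem: equilibrium characterization}, which requires determining which of its four cases the information set $\sigma_N^{\ast\mathcal N-1}(\bm a)$ falls into. The recurring tool is a \emph{substitution argument}: replacing a strong signal ($h^\ast$ or $l^\ast$) by its ordinary counterpart at some position leaves every realized action unchanged under $\mathcal N$. I will prove this by induction over agents using Lemma~\ref{lem: equilibrium characterization} and Lemma~\ref{lem: N^C detects rare signals the most}. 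It then shows that agent $N$'s information set under $\mathcal N$ contains a profile she cannot distinguish from the true one, so she treats $h^\ast$ as $h$ (or $l^\ast$ as $l$).

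For part (i), fix $i\in\{2,\dots,N-1\}$ with $i-1\notin\mathcal N_i$. Take $s_k=h$ for $k<i-1$, $s_{i-1}=l$, $s_i=h^\ast$, $s_k=h$ for $i<k<N$, and $s_N=l$. This profile lies in $\bm S_3^{N-1}$, so under $\mathcal N^C$ agent $N$ plays $\sigma_{\{l,h,h^\ast\}}(l)=1$.

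Under $\mathcal N$, agents $k<i-1$ play $1$ and agent $i-1$ plays $0$ by Lemma~\ref{lem: equilibrium outcomes under particular signals}(i). Agent $i$ observes only agents among $1,\dots,i-2$, who all played $1$, so the all-$h$ profile is in her information set. By Lemma~\ref{lem: equilibrium characterization}(i) she plays $\sigma_{\{h,h^\ast\}}$, and she would play $1$ with $s_i=h$ as well. Since agents after $i$ see identical histories and have identical signals under the two profiles, the substituted profile $(h,\dots,h,l,h,h,\dots,h)\in\{l,h\}^{N-1}$ is in agent $N$'s information set. Because $a_{i-1}=0$, Lemma~\ref{lem: equilibrium outcomes under particular signals}(i) excludes all-$h$ profiles (agent $N$ observes $i-1$; if she does not, I would instead place the $l$ at an observed agent, which is the first thing to check). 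Lemma~\ref{lem: equilibrium characterization}(ii) then gives $\sigma_{\{h^\ast\}}(l)=0\neq1$.

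For part (ii), let $i$ be the largest index with $i\notin\mathcal N_N$. If $i=N-1$, take $s_1=l$, $s_k=h$ for $1<k<N-1$, $s_{N-1}=h^\ast$, and $s_N=l$. Under $\mathcal N^C$ this profile lies in $\bm S_3^{N-1}$, so agent $N$ plays $1$. Under $\mathcal N$, agent $N$ sees $a_1=\dots=a_{N-2}=0$, which is consistent with the profile $(l,h,\dots,h)\in\{l,h\}^{N-2}$ and with no all-$h$ profile, so she plays $\sigma_{\{h^\ast\}}(l)=0$.

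If $i<N-1$, agent $i$'s action reaches $N$ only through $a_{i+1},\dots,a_{N-1}$. I would try to make the overturning at $i$ invisible by choosing $s_{i+1},\dots,s_{N-1}$ so that the observed tail of actions can be equally explained by an $h^\ast$ at $i+1$ rather than at $i$. Concretely, the candidate is $s_1=l$, $s_k=h$ up to $i-1$, $s_i=h^\ast$, $s_{i+1}=l^\ast$, with $s_N$ chosen so that the two explanations lead to opposite optimal actions. This is the step I expect to be the main obstacle: which explanations agent $N$ can rule out depends on the exact pattern of $\mathcal N_{i+1},\dots,\mathcal N_{N-1}$, and I would first check the sub-case in which all of $i+1,\dots,N-1$ are observed via the chain before attempting the general case.

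For part (iii), $\mathcal N\neq\mathcal N^C$ forces some $j<i-1$ with $j\notin\mathcal N_i$ for some $i<N$. I would place the first $l$ at $j$ and an $h^\ast$ at $j+1$, so that agent $j+1$ overturns. Agents $j+2,\dots,i-1$ then receive $l^\ast$ and reverse to $0$, so that agent $i$, lacking $a_j$, misreads the pattern $a_{j+1}=1$ followed by zeros as arising from ordinary signals. With $s_i$ chosen appropriately, agent $i$'s action then differs from the one under $\mathcal N^C$. Because agent $N$ observes everyone, Lemma~\ref{lem: equilibrium outcomes under particular signals}(iv) applies under $\mathcal N^C$ to the $\bm S_4$ profile, while under $\mathcal N$ the deviant action of agent $i$ changes the information set, and I would verify the resulting action difference via Lemma~\ref{lem: equilibrium characterization}.
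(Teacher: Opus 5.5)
Your part (i) is the paper's argument with the same profile. The hedge about agent $N$ not observing $i-1$ is unnecessary: in the proof of Theorem~\ref{thm: complete}, part (i) is only invoked for networks with $\mathcal N_N=\{1,\dots,N-1\}$, and the paper's proof uses this implicitly too. The genuine gaps are in (ii) and (iii).

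\textbf{Part (ii).} Your construction for $N-1\notin\mathcal N_N$ is correct. The case of an unobserved $i\le N-2$, however, is left open. The missing idea is to hide the \emph{negative} signal rather than the overturning.

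Take $s_k=h$ for $k<i$, $s_i=l$, $s_k=h^\ast$ for $i<k<N$, and $s_N=l$.
\begin{enumerate}
\item Agents before $i$ play $1$ by Lemma~\ref{lem: equilibrium outcomes under particular signals}(i).
\item Each agent $k>i$ has the true prefix, which lies in $\{l,h,h^\ast\}^{k-1}$, in her information set. So she is in case (i), (ii) or (iii) of Lemma~\ref{lem: equilibrium characterization} and plays $1$ on $h^\ast$.
\item Since $i\notin\mathcal N_N$, every action agent $N$ observes is $1$. The all-$h$ profile is therefore in her information set, and she plays $\sigma_{\{h,h^\ast\}}(l)=0$.
\item Since an $h^\ast$ follows the $l$, the profile lies in $\bm S_3^{N-1}$, so under $\mathcal N^C$ she plays $1$.
\end{enumerate}
This is the paper's construction. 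The paper applies it to the \emph{smallest} unobserved index. When that index is $N-1$, its profile $(h,\dots,h,l)$ lies in $\bm S_2^{N-1}$ and both networks play $0$. That is exactly the case your own construction handles, so the two together prove (ii).

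\textbf{Part (iii).} Your mechanism is to make agent $i$'s \emph{action} deviate from its complete-network counterpart. This does not produce a difference at agent $N$, because $N$ observes $a_j,a_{j+1},a_{j+2}$ directly.

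For instance, let $N=5$, $\mathcal N_2=\{1\}$, $\mathcal N_3=\{1,2\}$, $\mathcal N_4=\{2,3\}$, $\mathcal N_5=\{1,2,3,4\}$. Then $i=4$, $j=1$, and your profile is $(l,h^\ast,l^\ast,s_4)$.
\begin{enumerate}
\item Under $\mathcal N$, $(a_1,a_2,a_3)=(0,1,0)$ and agent $4$ is in case (ii). With $s_4=h^\ast$ she indeed plays $1$ instead of $0$.
\item But agent $3$ has a complete neighborhood and is in case (iii), so $a_3=0$ certifies $s_3=l^\ast$.
\item Hence agent $5$'s information set is disjoint from $\{l,h,h^\ast\}^4$. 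She is in case (iv) and plays $0$ for every $s_5$, exactly as under $\mathcal N^C$.
\end{enumerate}

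The paper does the opposite. Let $j^\ast$ be the first agent whose neighborhood is incomplete, and $i^\ast\le j^\ast-2$ the first agent she misses. Set $s_{i^\ast}=l$, $s_k=h^\ast$ for $i^\ast<k<j^\ast$ (so all these agents play $1$), $s_{j^\ast}=l^\ast$, $h$ elsewhere, and $s_N=h^\ast$.
\begin{enumerate}
\item Agent $j^\ast$ sees only $1$'s, is in case (i), and plays $0$ after both $l$ and $l^\ast$.
\item Her action is thus the \emph{same} as under $\mathcal N^C$, but under $\mathcal N$ it pools $l^\ast$ with $l$.
\item Agent $N$'s information set therefore contains the profile with $l$ at $j^\ast$, which lies in $\{l,h,h^\ast\}^{N-1}$, and she plays $1$ on $h^\ast$.
\item Under $\mathcal N^C$ the profile is in $\bm S_4^{N-1}$, and she plays $0$.
\end{enumerate}
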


\begin{proof}[Proof of Lemma \ref{lem: lemma for strict inequality}]
    \proofpart{Part (i)}
    Take $\mathcal{N}$ satisfying $i-1\not\in\mathcal{N}_i$ for some $i\neq 1,N$ arbitrarily.
    Let $i^\ast:=\min\{i\neq 1,N \mid i-1\notin\mathcal{N}_i\}$ and consider the signal profile $\hat{\bm{s}}\in\bm{S}_3^{N-1}$ defined by $\hat{s}_{i^\ast-1}=l$, $\hat{s}_{i^\ast}=h^\ast$, and $\hat{s}_k=h$ for all other $k$.
    We show $\sigma_N^{\ast C}(\hat{\bm{s}}, l) = 1$ and $\sigma_N^{\ast\mathcal{N}}(\hat{\bm{s}}, l) = 0$, which establishes the claim with $(s_1,\dots,s_{N-1},s_N)=(\hat{\bm{s}},l)$.
    
    Since $\hat{\bm{s}}\in\bm{S}_3^{N-1}$, Corollary \ref{cor: equilibrium strategy under N^C} (iii) implies $\sigma_N^{\ast C}(\hat{\bm{s}},l)=1$.
    
    We show $\sigma_N^{\ast\mathcal{N}}(\hat{\bm{s}},l)=0$. Let $\hat{\bm{a}}$ be agent $N$'s history generated by $\hat{\bm{s}}$ under $\mathcal{N}$, and let $\hat{\bm{S}}:=\sigma_N^{\ast\mathcal{N}-1}(\hat{\bm{a}})$. It suffices to show $\hat{\bm{S}}\cap\{h\}^{N-1}=\emptyset$ and $\hat{\bm{S}}\cap\{l,h\}^{N-1}\neq\emptyset$, which together imply $\sigma_N^{\ast\mathcal{N}}(\hat{\bm{s}},l)=\sigma_{h^\ast}(l)=0$ by Lemma~\ref{lem: equilibrium characterization}~(ii).
    
    We first show $\hat{\bm{S}}\cap\{h\}^{N-1}=\emptyset$. Since $(\hat{s}_1,\dots,\hat{s}_{i^\ast-2})\in\{h\}^{i^\ast-2}\subset\bm{S}_1^{i^\ast-2}$, Lemma~\ref{lem: equilibrium outcomes under particular signals}~(i) gives $\hat{a}_{i^\ast-1}=\sigma_{\{h,h^\ast\}}(l)=0$. For any $(s_1,\dots,s_{N-1})\in\{h\}^{N-1}$, however, the same lemma gives $\sigma_{i^\ast-1}^{\ast\mathcal{N}}(h,\dots,h)=\sigma_{\{h,h^\ast\}}(h)=1\neq 0=\hat{a}_{i^\ast-1}$, so $(h,\dots,h)\notin\hat{\bm{S}}$ and hence $\hat{\bm{S}}\cap\{h\}^{N-1}=\emptyset$.
    
    We next show $\hat{\bm{S}}\cap\{l,h\}^{N-1}\neq\emptyset$ by verifying $\hat{\bm{s}}^\prime\in\hat{\bm{S}}$, where $\hat{\bm{s}}^\prime$ is defined by $\hat{s}^\prime_{i^\ast}=h$ and $\hat{s}^\prime_k=\hat{s}_k$ for all $k\neq i^\ast$. 
    We show $\sigma_k^{\ast\mathcal{N}}(\hat{\bm{s}}^\prime)=\hat{a}_k$ for all $k=1,\dots,N-1$. 
    By Lemma~\ref{lem: equilibrium outcomes under particular signals}~(i), agent $k<i^\ast-1$ chooses action $1$ and agent $i^\ast-1$ chooses action $0$ under both $\hat{\bm{s}}$ and $\hat{\bm{s}}^\prime$. 
    Since $i^\ast-1\notin\mathcal{N}_{i^\ast}$, agent $i^\ast$ observes only agents who chose $1$, so by Lemma~\ref{lem: equilibrium characterization}~(i) she chooses action $1$ under both $\hat{\bm{s}}$ and $\hat{\bm{s}}^\prime$. 
    To show $\sigma_k^{\ast\mathcal{N}}(\hat{\bm{s}}^\prime)=\hat{a}_k$ for all $k>i^\ast$ by induction on $k$, suppose that agent $m$ chooses action $\hat{a}_m$ under $\hat{\bm{s}}^\prime$ for all $m<k$; then agent $k$ observes the same history and receives the same signal $h$ under both $\hat{\bm{s}}$ and $\hat{\bm{s}}^\prime$, so she chooses the same action, i.e., $\sigma_k^{\ast\mathcal{N}}(\hat{\bm{s}}^\prime)=\hat{a}_k$. 
    Therefore $\hat{\bm{s}}^\prime\in\hat{\bm{S}}\cap\{l,h\}^{N-1}$, which completes the proof.

    \proofpart{Part (ii)}
    Take $\mathcal{N}$ satisfying $i-1\in\mathcal{N}_i$ for all $i\neq 1,N$ and $i\notin\mathcal{N}_N$ for some $i\neq N$ arbitrarily.
    Let $i^\ast:=\min\{i\neq N\mid i\notin\mathcal{N}_N\}$ and consider the signal profile $\hat{\bm{s}}\in\bm{S}_3^{N-1}$ defined by $\hat{s}_k=h$ for $k<i^\ast$, $\hat{s}_{i^\ast}=l$, and $\hat{s}_k=h^\ast$ for $k>i^\ast$.
    We show $\sigma_N^{\ast C}(\hat{\bm{s}}, l) = 1$ and $\sigma_N^{\ast\mathcal{N}}(\hat{\bm{s}}, l) = 0$, which establishes the claim with $(s_1,\dots,s_{N-1},s_N)=(\hat{\bm{s}},l)$.

Since $\hat{\bm{s}}\in\bm{S}_3^{N-1}$, Corollary~\ref{cor: equilibrium strategy under N^C}~(iii) implies $\sigma_N^{\ast C}(\hat{\bm{s}},l)=1$.

    We show $\sigma_N^{\ast\mathcal{N}}(\hat{\bm{s}},l)=0$.
    Let $\hat{\bm{a}}=(\hat{a}_j)_{j\in\mathcal{N}_N}$ be agent $N$'s history generated by $\hat{\bm{s}}$ under $\mathcal{N}$, and let $\hat{\bm{S}}:=\sigma_N^{\ast\mathcal{N}-1}(\hat{\bm{a}})$.
    It suffices to show $\hat{\bm{S}}\cap\{h\}^{N-1}\neq\emptyset$, which by Lemma~\ref{lem: equilibrium characterization}~(i) implies $\sigma_N^{\ast\mathcal{N}}(\hat{\bm{s}},l)=\sigma_{\{h,h^\ast\}}(l)=0$.

    We first show $\hat{a}_j=1$ for all $j\in\mathcal{N}_N$.
    For $j\in\mathcal{N}_N$ with $j<i^\ast$,  $(\hat{s}_1,\dots,\hat{s}_{j-1})=(h,\dots,h)\in\bm{S}_1^{j-1}$, so Lemma~\ref{lem: equilibrium outcomes under particular signals}~(i) gives $\hat{a}_j=\sigma_{\{h,h^\ast\}}(h)=1$.
    For $j\in\mathcal{N}_N$ with $j>i^\ast$, $\hat{s}_j=h^\ast$.
    Since  $(\hat{s}_1,\dots,\hat{s}_{j-1})\in\{l,h,h^\ast\}^{j-1}$,  $\sigma_j^{\ast-1}((\hat{a}_k)_{k\in\mathcal{N}_j})\cap\{l,h,h^\ast\}^{j-1}\neq\emptyset$.
    Lemma~\ref{lem: equilibrium characterization} therefore places agent $j$ in case (i), (ii), or (iii), each of which gives action $1$ for signal $h^\ast$, so $\hat{a}_j=1$.
    Since $i^\ast\notin\mathcal{N}_N$ by definition, $\hat{a}_j=1$ for all $j\in\mathcal{N}_N$.

    By Lemma~\ref{lem: equilibrium outcomes under particular signals}~(i), $(h,\dots,h)\in\{h\}^{N-1}$ also generates the same history for agent $N$ in $\mathcal{N}$, so $(h,\dots,h)\in\hat{\bm{S}}$ and hence $\hat{\bm{S}}\cap\{h\}^{N-1}\neq\emptyset$.
    Lemma~\ref{lem: equilibrium characterization}~(i) then implies $\sigma_N^{\ast\mathcal{N}}(\hat{\bm{s}},l)=\sigma_{\{h,h^\ast\}}(l)=0$, which completes the proof.

    \proofpart{Part (iii)}
    Take $\mathcal{N}$ satisfying $i-1\in\mathcal{N}_i$ for all $i\neq 1,N$, $i\in\mathcal{N}_N$ for all $i\neq N$, and $\mathcal{N}\neq\mathcal{N}^C$ arbitrarily.
    Since $\mathcal{N}\neq\mathcal{N}^C$ and $\mathcal{N}_N=\{1,\dots,N-1\}$, there exists $j\in\{2,\dots,N-1\}$ with $\mathcal{N}_j\neq\{1,\dots,j-1\}$.
    Let $j^\ast:=\min\{j\in\{2,\dots,N-1\}\mid\mathcal{N}_j\neq\{1,\dots,j-1\}\}$ and $i^\ast:=\min\{k<j^\ast\mid k\notin\mathcal{N}_{j^\ast}\}$.
    Since $j^\ast-1\in\mathcal{N}_{j^\ast}$ by assumption, we have $i^\ast\leq j^\ast-2$, and by minimality of $j^\ast$, all agents $k<j^\ast$ satisfy $\mathcal{N}_k=\{1,\dots,k-1\}$.
    Consider the signal profile $\hat{\bm{s}}\in\bm{S}_4^{N-1}$ defined by $\hat{s}_{i^\ast}=l$, $\hat{s}_{j^\ast}=l^\ast$, $\hat{s}_k=h^\ast$ for $i^\ast<k<j^\ast$, and $\hat{s}_k=h$ otherwise.
    We show $\sigma_N^{\ast C}(\hat{\bm{s}},h^\ast)=0$ and $\sigma_N^{\ast\mathcal{N}}(\hat{\bm{s}},h^\ast)=1$, which establishes the claim with $(s_1,\dots,s_{N-1},s_N)=(\hat{\bm{s}},h^\ast)$. 
    
    Since $\hat{\bm{s}}\in\bm{S}_4^{N-1}$, Corollary~\ref{cor: equilibrium strategy under N^C} implies $\sigma_N^{\ast C}(\hat{\bm{s}},h^\ast)=\sigma_\emptyset(h^\ast)=0$.

    We show $\sigma_N^{\ast\mathcal{N}}(\hat{\bm{s}},h^\ast)=1$.
    Let $\hat{\bm{a}}=(a_i)_{i=1}^{N-1}$ be the action profile generated by $\hat{\bm{s}}$ under $\mathcal{N}$, and let $\hat{\bm{S}}:=\sigma_N^{\ast\mathcal{N}-1}(\hat{\bm{a}})$.
    It suffices to show $\hat{\bm{S}}\cap\{l,h,h^\ast\}^{N-1}\neq\emptyset$ by Lemma~\ref{lem: equilibrium characterization}.

    We first derive $\hat{a}_i$ for $i < j^\ast$. 
    Since $(\hat{s}_1,\dots,\hat{s}_{i})=(h,\dots,h)\in\bm{S}_1^{i}$ for any $i<i^\ast$,
    Lemma~\ref{lem: equilibrium outcomes under particular signals}~(i) implies $\hat{a}_{i}=\sigma_{\{h,h^\ast\}}(h)=1$ for any $i<i^\ast$ and $\hat{a}_{i^\ast}=\sigma_{\{h,h^\ast\}}(l)=0$. 
    For $i$ with $i^\ast < i < j^\ast$, since $(\hat{s}_1,\dots,\hat{s}_{i-1})\notin\bm{S}_4^{i-1}$, 
     Lemma \ref{lem: equilibrium characterization} and \ref{lem: N^C detects rare signals the most} imply $\hat{a}_{i}=1$ as agent $i$ receives signal $h^\ast$.

    We now show $\hat{\bm{S}}\cap\{l,h,h^\ast\}^{N-1}\neq\emptyset$ by verifying $\hat{\bm{s}}^{\prime}\in\hat{\bm{S}}$, where $\hat{\bm{s}}^{\prime} \in \{l,h,h^\ast\}^{N-1}$ is defined by $\hat{s}^{\prime}_{j^\ast}=l$ and $\hat{s}^{\prime}_k=\hat{s}_k$ for all $k\neq j^\ast$.
    We show $\sigma_k^{\ast\mathcal{N}}(\hat{\bm{s}}^{\prime})=\hat{a}_k$ for all $k=1,\dots,N-1$.
    For $k<j^\ast$, $\hat{s}_k=\hat{s}_k^\prime$, so $\sigma_k^{\ast\mathcal{N}}(\hat{\bm{s}}^{\prime})=\hat{a}_k$.
    For $k=j^\ast$, agent $j^\ast$ observes the same history $(\hat{a}_m)_{m\in\mathcal{N}_{j^\ast}}$.
    Recall that $\hat{a}_m =1$ for all $m\in\{ 1,\dots,i^\ast -1,i^\ast +1,\dots,j^\ast-1 \}$.
    Thus, we have $\hat{a}_m=1$ for all $m\in\mathcal{N}_j^\ast$. 
    Since this history is also generated by $(h,\dots,h)$, by Lemma~\ref{lem: equilibrium outcomes under particular signals}~(i). Lemma~\ref{lem: equilibrium characterization}~(i) then implies $\sigma_{j^\ast}^{\ast\mathcal{N}}(\hat{\bm{s}}^{\prime})=\sigma_{\{h,h^\ast\}}(l)=0=\hat{a}_{j^\ast}$. 
    To show $\sigma_{k}^{\ast\mathcal{N}}(\hat{\bm{s}}^{\prime})=\hat{a}_{k}$ for all $k>j^\ast$ by induction, fix $k>j^\ast$ and suppose that this equation holds for all $m<k$. Then, agent $k$ observes the same history and receives the same signal under both $\hat{\bm{s}}$ and $\hat{\bm{s}}^{\prime}$. Thus, $\sigma_{k}^{\ast\mathcal{N}}(\hat{\bm{s}}^{\prime})=\hat{a}_{k}$.
    Now, we complete to establish $\hat{\bm{s}}^{\prime}\in\hat{\bm{S}}$, which completes the proof.
\end{proof}

\subsection{Proof of Theorem \ref{thm: complete}}
\begin{proof}[Proof of Theorem \ref{thm: complete}]
    Take any $\mathcal{N}^\prime \neq\mathcal{N}^C$ satisfying the following two conditions: (I) there exists $i\neq 1,N$ such that $\mathcal{N}^\prime_i \neq \{1,\dots,i-1 \}$; (II) $\mathcal{N}^\prime_N \neq \{1,\dots,N-1 \}$. 
    Then, we can define $\mathcal{N} \neq \mathcal{N}^\prime, \mathcal{N}^C$ by $\mathcal{N}_N=\{1,\dots,N-1\}$ and $\mathcal{N}_i=\mathcal{N}^\prime_i$ for all $i<N$.
    Lemma~\ref{lem: ignoring principle} implies $V_N(\mathcal{N}^\prime,\Pi^\ast)\leq V_N(\mathcal{N},\Pi^\ast)$.
    Therefore, it suffices to establish $V_N(\mathcal{N},\Pi^\ast)<V_N(\mathcal{N}^C,\Pi^\ast)$ for any $\mathcal{N}$ that violates at least one of Condition (I) and (II). Fix such a network $\mathcal{N}$ arbitrarily.

    Let $A_\mathcal{N}$ denote the set of agent $N$'s histories $\bm{a}=(a_i)_{i \in \mathcal{N}_N}$ that can arise under $\mathcal{N}$, and for each $\bm{a}\in A_\mathcal{N}$ and $k\in\{1,2,3,4\}$ let $\tilde{\bm{S}}_k(\bm{a}):=\sigma^{\ast\mathcal{N}-1}_N(\bm{a})\cap\bm{S}_k^{N-1}$.
    The collection $\{\tilde{\bm{S}}_k(\bm{a}):\bm{a}\in A_\mathcal{N},\,k\in\{1,2,3,4\},\,\tilde{\bm{S}}_k(\bm{a})\neq\emptyset\}$ forms a partition of $S^{\ast N-1}$.
    Therefore,
    \begin{align*}
    V_N(\mathcal{N},\Pi^\ast)
    &= \sum_{s_N\in S^\ast}\sum_{\bm{a}\in A_\mathcal{N}}\sum_{k=1}^{4}
       \mathbb{P}\bigl(\tilde{\bm{S}}_k(\bm{a}),s_N\bigr)\cdot
       \mathbb{E}_\theta\bigl[u(\sigma_N^{\ast\mathcal{N}}(s_1,\dots,s_N),\theta)\mid\tilde{\bm{S}}_k(\bm{a}),s_N\bigr]\\
    &\leq \sum_{s_N\in S^\ast}\sum_{\bm{a}\in A_\mathcal{N}}\sum_{k=1}^{4}
       \mathbb{P}\bigl(\tilde{\bm{S}}_k(\bm{a}),s_N\bigr)\cdot
       \mathbb{E}_\theta\bigl[u(\sigma_N^{\ast C}(s_1,\dots,s_N),\theta)\mid\tilde{\bm{S}}_k(\bm{a}),s_N\bigr]\\
    &= V_N(\mathcal{N}^C,\Pi^\ast),
    \end{align*}
    where the inequality follows from Lemma~\ref{lem: lemma for weak inequality} applied to each term, and the first and last equalities hold by the law of iterated expectations.

    To establish the strict inequality $V_N(\mathcal{N},\Pi^\ast) < V_N(\mathcal{N}^C,\Pi^\ast)$, it suffices to find a signal profile $(s_1,\dots,s_{N-1})\in S^{\ast N-1}$ and $s_N\in S^\ast$ such that $\sigma_N^{\ast\mathcal{N}}(s_1,\dots,s_N)\neq\sigma_N^{\ast C}(s_1,\dots,s_N)$.
    Indeed, suppose such a profile exists, and let $k\in\{1,2,3,4\}$ be the index satisfying $(s_1,\dots,s_{N-1})\in\bm{S}_k^{N-1}$, and let $\bm{a}\in A_\mathcal{N}$ be the history generated by $(s_1,\dots,s_{N-1})$ under $\mathcal{N}$.
    Since $\sigma_N^{\ast\mathcal{N}}(s_1,\dots,s_N)\neq\sigma_N^{\ast C}(s_1,\dots,s_N)=\sigma_N^\ast(\tilde{\bm{S}}_k(\bm{a}),s_N)$ by Corollary \ref{cor: equilibrium strategy under N^C} and Lemma~\ref{lem: equilibrium characterization} guarantees that $\sigma_N^\ast(\tilde{\bm{S}}_k(\bm{a}),s_N)$ uniquely  maximizes the expected payoff given $\tilde{\bm{S}}_k(\bm{a})$ and $s_N$, we have
    \begin{align*}
    \mathbb{E}_\theta\bigl[u(\sigma_N^{\ast\mathcal{N}}(s_1,\dots,s_N),\theta)\mid\tilde{\bm{S}}_k(\bm{a}),s_N\bigr]
    &< \mathbb{E}_\theta\bigl[u(\sigma_N^\ast(\tilde{\bm{S}}_k(\bm{a}),s_N),\theta)\mid\tilde{\bm{S}}_k(\bm{a}),s_N\bigr]\\
    &= \mathbb{E}_\theta\bigl[u(\sigma_N^{\ast C}(s_1,\dots,s_N),\theta)\mid\tilde{\bm{S}}_k(\bm{a}),s_N\bigr].
    \end{align*}
    Since $\mathbb{P}(\tilde{\bm{S}}_k(\bm{a}),s_N)>0$, this strict inequality for one term implies $V_N(\mathcal{N},\Pi^\ast) < V_N(\mathcal{N}^C,\Pi^\ast)$.

    Since $\mathcal{N}$ violates at least one of Condition (I) and (II), we can apply one of Lemma \ref{lem: lemma for strict inequality} (i), (ii), and (iii) to find a signal profile $(s_1,\dots,s_{N-1})\in S^{\ast N-1}$ and $s_N\in S^\ast$ such that $\sigma_N^{\ast\mathcal{N}}(s_1,\dots,s_N)\neq\sigma_N^{\ast C}(s_1,\dots,s_N)$, which completes the proof.
\end{proof}

\subsection{Proof of Proposition \ref{prop: pareto dominate}}
\begin{proof}[Proof of Proposition \ref{prop: pareto dominate}]
    Fix any agent $j\leq N$ and network $\mathcal{N}$.
    Let $A_\mathcal{N}$ denote the set of agent $j$'s histories $\bm{a}=(a_i)_{i \in \mathcal{N}_j}$ that can arise under $\mathcal{N}$, and for each $\bm{a}\in A_\mathcal{N}$ and $k\in\{1,2,3,4\}$.
    Let $\tilde{\bm{S}}_k(\bm{a}):=\sigma^{\ast\mathcal{N}-1}_j(\bm{a})\cap\bm{S}_k^{j-1}$.
    The collection $\{\tilde{\bm{S}}_k(\bm{a})\mid \bm{a}\in A_\mathcal{N},\,k\in\{1,2,3,4\},\,\tilde{\bm{S}}_k(\bm{a})\neq\emptyset\}$ forms a partition of $S^{\ast j-1}$.
    Since the corresponding result in Lemma \ref{lem: lemma for weak inequality} for agent $j$ for agent $j$ can be established by the same argument as in the proof of that lemma, we have
    \begin{align*}
    V_j(\mathcal{N},\Pi^\ast)
    &= \sum_{s_j\in S^\ast}\sum_{\bm{a}\in A_\mathcal{N}}\sum_{k=1}^{4}
       \mathbb{P}\bigl(\tilde{\bm{S}}_k(\bm{a}),s_j\bigr)\cdot
       \mathbb{E}_\theta\bigl[u(\sigma_j^{\ast\mathcal{N}}(s_1,\dots,s_j),\theta)\mid\tilde{\bm{S}}_k(\bm{a}),s_j\bigr]\\
    &\leq \sum_{s_j\in S^\ast}\sum_{\bm{a}\in A_\mathcal{N}}\sum_{k=1}^{4}
       \mathbb{P}\bigl(\tilde{\bm{S}}_k(\bm{a}),s_j\bigr)\cdot
       \mathbb{E}_\theta\bigl[u(\sigma_j^{\ast C}(s_1,\dots,s_j),\theta)\mid\tilde{\bm{S}}_k(\bm{a}),s_j\bigr]\\
    &= V_j(\mathcal{N}^C,\Pi^\ast).
    \end{align*} 
\end{proof}

\section{Proof of Proposition \ref{prop:independent-optimum}}
\label{app:guinea-pigs-proof}

Write $q=(1-\varepsilon)(1-p)$. Each guinea pig independently chooses $1$ with probability $1-q$ in state $H$ and $q$ in state $L$.
Thus, conditional on $H$, the number of actions equal to $1$ among the first $M$ agents has distribution $\operatorname{Bin}(M,1-q)$.

\begin{lemma} \label{lem:independent-payoff}
For $M\in\{1,\ldots,N-1\}$, under $\mathcal N^M$,
\begin{align*}
    V_N(\mathcal N^M,\Pi^{\varepsilon,p})
    =1-(1-\varepsilon)^{N-M}\left[
    \sum_{j=0}^{\lfloor\frac{M-1}{2}\rfloor}\binom{M}{j}(1-q)^j q^{M-j}
    +(1-p)\binom{M}{M/2}[q(1-q)]^{M/2}\right].
\end{align*}
Here $\binom{M}{M/2}=0$ for odd $M$.
For every integer $k\geq1$ with $2k+1\leq N-1$,
\begin{align*}
    V_N(\mathcal N^{2k},\Pi^{\varepsilon,p})-V_N(\mathcal N^{2k+1},\Pi^{\varepsilon,p})
    =\varepsilon(1-\varepsilon)^{N-2k-1}\sum_{j=0}^{k-1}\binom{2k}{j}(1-q)^j q^{2k-j}>0.
\end{align*}
\end{lemma}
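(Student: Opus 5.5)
The plan is to describe the equilibrium path under $\mathcal N^M$ explicitly and then read off the probability that agent $N$ errs. By the symmetry of $\Pi^{\varepsilon,p}$ and the prior $1/2$, it suffices to compute the error probability conditional on $H$.

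\textbf{Step 1: guinea pigs and their likelihood ratios.} Each guinea pig $i\le M$ chooses action $1$ if and only if $s_i\in\{h,h^\ast\}$. So action $1$ has probability $1-q$ under $H$ and $q$ under $L$, and carries likelihood ratio $(1-q)/q$; action $0$ carries the reciprocal. The history of the first $M$ agents therefore has public likelihood ratio $\bigl((1-q)/q\bigr)^d$, where $d$ is the number of ones minus the number of zeros. The key inequality is
\[
\frac{1-q}{q}=\frac{(1-\varepsilon)p+\varepsilon}{(1-\varepsilon)(1-p)}>\frac{p}{1-p}.
\]

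\textbf{Step 2: dynamics after agent $M$.} Suppose $|d|\ge1$. Agent $M+1$'s ordinary signal cannot overturn the public ratio, and every inequality is strict, so tie-breaking never binds. She therefore follows the majority unless she holds the conclusive signal of the opposite state. By induction on later agents, two facts hold.
\begin{enumerate}[(a)]
\item Along a cascade, conforming actions are uninformative: every signal except the contrary conclusive one produces the same action.
\item A deviation reveals a conclusive signal, after which every later agent knows $\theta$.
\end{enumerate}
Suppose instead $d=0$, which requires $M$ even. Then agent $M+1$ acts on her own signal exactly like a guinea pig, and from agent $M+2$ on the situation is the $|d|=1$ cascade.

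\textbf{Step 3: when agent $N$ errs.} A wrong cascade is broken only by a conclusive signal of the true state. No conclusive signal of the wrong state exists, so agents never receive a signal that reinforces the wrong cascade while being truthful. Conditional on $H$, agent $N$ errs exactly in two cases.
\begin{enumerate}[(i)]
\item The guinea pigs have a strict wrong majority, i.e.\ $j\le\lfloor (M-1)/2\rfloor$ ones, which has probability $\binom{M}{j}(1-q)^jq^{M-j}$. In addition, none of agents $M+1,\dots,N$ draws $h^\ast$, which has probability $(1-\varepsilon)^{N-M}$.
\item There is a tie, with probability $\binom{M}{M/2}[q(1-q)]^{M/2}$. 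Agent $M+1$ draws $l$, with probability $q=(1-\varepsilon)(1-p)$. None of agents $M+2,\dots,N$ draws $h^\ast$, with probability $(1-\varepsilon)^{N-M-1}$.
\end{enumerate}
Multiplying out gives $(1-p)(1-\varepsilon)^{N-M}$ times the tie probability in case (ii). Summing the two cases yields the displayed formula for $V_N$.

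\textbf{Step 4: the difference $V_N(\mathcal N^{2k})-V_N(\mathcal N^{2k+1})$.} Write
\[
W=\sum_{j\le k-1}\binom{2k}{j}(1-q)^jq^{2k-j},\qquad T=\binom{2k}{k}[q(1-q)]^k .
\]
Condition the $2k+1$ guinea pigs on the first $2k$ of them. A wrong strict majority among $2k+1$ occurs either when the first $2k$ already have a wrong majority, or when they tie and the last one is wrong. Hence the bracket for $M=2k+1$ equals $W+qT$. Therefore
\[
1-V_N(\mathcal N^{2k+1})=(1-\varepsilon)^{N-2k-1}W+(1-\varepsilon)^{N-2k}(1-p)T,
\]
while
\[
1-V_N(\mathcal N^{2k})=(1-\varepsilon)^{N-2k}\bigl[W+(1-p)T\bigr].
\]
The tie terms cancel, and the difference is $\varepsilon(1-\varepsilon)^{N-2k-1}W$, which is positive.

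\textbf{Main obstacle.} The delicate part is Step 2. One must verify, inductively for every possible history, that conforming actions in a cascade reveal nothing and that deviations are correctly interpreted as conclusive. This includes the off-path-free check that every posterior inequality is strict, so that the tie-breaking rule is irrelevant.
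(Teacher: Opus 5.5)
Your proposal is correct and follows essentially the same route as the paper: you describe the cascade dynamics under $\mathcal N^M$, compute agent $N$'s error conditional on $H$ from the wrong-majority and tie cases (using strictness of all choices to justify the state symmetry despite the tie-breaking rule), and obtain the parity comparison by conditioning the $(2k+1)$st guinea pig on the first $2k$. One small correction: conforming actions in a cascade are not uninformative---because the contrary conclusive signal is absent, each one multiplies the public likelihood ratio by $1/(1-\varepsilon)$ (or by $1-\varepsilon$)---but this only strengthens the strict cascade inequality, which is exactly how the paper runs the induction, so your conclusions are unaffected.
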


\begin{proof}[Proof of Lemma \ref{lem:independent-payoff}]
Fix $M$. By conditional independence, an initial history with $j$ actions equal to $1$ gives public likelihood ratio $[(1-q)/q]^{2j-M}$ for each $j\in[0,M]$.
Since $(1-q)/q>p/(1-p)$, a strict initial majority determines the next action unless the agent receives the conclusive signal for the opposite state.
If the majority favors $1$, another action $1$ multiplies the public likelihood ratio by $1/(1-\varepsilon)$; if it favors $0$, another action $0$ multiplies it by $1-\varepsilon$.
Thus, following the majority strengthens the same strict inequality, while the first contrary action reveals the state and makes all subsequent actions correct.

Conditional on $H$, an incorrect initial majority therefore leads to a terminal error with probability $(1-\varepsilon)^{N-M}$, and a correct majority leads to no error.
After an initial tie, agent $M+1$ acts on her private signal and chooses $0$ with probability $q$.
Her action creates public likelihood ratio $(1-q)/q$ or its reciprocal, and hence the same argument gives error probability $q(1-\varepsilon)^{N-M-1}=(1-p)(1-\varepsilon)^{N-M}$.
All choices at positive-probability information sets are strict, and thus interchanging the states, signals, and actions preserves the strategy profile and gives $\alpha_N^H(0)=\alpha_N^L(1)$.
Weighting these cases by the binomial probabilities proves the payoff formula.

For the parity comparison, under $\mathcal N^{2k+1}$ a majority of zeros arises exactly when the first $2k$ actions have such a majority, or they tie and agent $2k+1$ chooses $0$. Hence,
\begin{align*}
    \sum_{j=0}^{k}\binom{2k+1}{j}(1-q)^j q^{2k+1-j}
    =\sum_{j=0}^{k-1}\binom{2k}{j}(1-q)^j q^{2k-j}
    +q\binom{2k}{k}[q(1-q)]^k.
\end{align*}
Substituting into the payoff formula and using $q=(1-\varepsilon)(1-p)$ cancels the tie terms and yields the stated difference.
It is strictly positive because $\varepsilon>0$ and the all-zero initial history has probability $q^{2k}>0$ in state $H$.
\end{proof}

Set $z=1/(1-\varepsilon)-(1-p)=(1-q)/(1-\varepsilon)>p$, and define, for $k\geq0$,
\begin{align*}
    J_k(z;p)=\sum_{j=0}^{k-1}\binom{2k}{j}z^j(1-p)^{2k-j}+(1-p)\binom{2k}{k}[z(1-p)]^k.
\end{align*}
Empty sums are zero, and hence $J_0(z;p)=1-p$.
Substituting $q=(1-\varepsilon)(1-p)$ and $1-q=(1-\varepsilon)z$ into the payoff formula gives
\begin{align*}
    1-V_N(\mathcal N^1,\Pi^{\varepsilon,p})&=(1-\varepsilon)^N J_0(z;p),\\
    1-V_N(\mathcal N^{2k},\Pi^{\varepsilon,p})&=(1-\varepsilon)^N J_k(z;p),\qquad 1\leq k\leq K,
\end{align*}
where $K=\lfloor(N-1)/2\rfloor$.
Thus, $k=0$ represents $M=1$, and $k\geq1$ represents $M=2k$.

\begin{lemma} \label{lem:independent-ratios}
Fix $p\in(1/2,1)$.
For $z>p$, the sequence $J_k(z;p)$ is strictly log-convex: $[J_{k+1}(z;p)]^2<J_k(z;p)J_{k+2}(z;p)$ for $k\geq0$.
For each $k\geq0$, $J_{k+1}(z;p)/J_k(z;p)$ is strictly increasing in $z>0$, is below one at $z=p$, and diverges as $z\to\infty$.
For $z>p$, $\lim_{k\to\infty}J_{k+1}(z;p)/J_k(z;p)=4(1-p)z$.
\end{lemma}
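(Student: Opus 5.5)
The plan is to abbreviate $x:=z$ and $y:=1-p$, so that $x>p>1/2>y$ whenever $z>p$. Write
\[
J_k=A_k+yT_k,\qquad A_k=\sum_{j=0}^{k-1}\binom{2k}{j}x^jy^{2k-j},\qquad T_k=\binom{2k}{k}(xy)^k .
\]
Everything will be driven by a two-step recursion in $k$. Appending two further draws with weights $x$ and $y$ to a sequence of length $2k$ and tracking the strict-minority event gives
\[
A_{k+1}=(x+y)^2A_k+y^2T_k-xy\binom{2k}{k-1}(xy)^{k-1}y^2 .
\]
The last term removes the histories that move from $k-1$ successes up to $k$. Combined with $T_{k+1}=\frac{2(2k+1)}{k+1}xy\,T_k$, this yields an explicit first-order recursion for the pair $(A_k,T_k)$. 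I would first verify this identity carefully, since every later step uses it.

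\emph{Endpoint behaviour.} At $z=p$ we have $x+y=1$, so $J_k$ is exactly the error probability of a majority vote over $2k$ independent binary signals of accuracy $p$, with ties broken by one further signal. The claim $J_{k+1}<J_k$ is then the classical fact that adding two independent signals to a majority rule strictly lowers the error probability. From the recursion at $x+y=1$ I expect $J_{k+1}-J_k=-T_k\,y(x-y)\cdot(\text{positive})$, which is negative because $x>y$. For divergence as $z\to\infty$, note that $J_k$ is a polynomial in $z$ of degree exactly $k$: the tie term contributes $y^{k+1}\binom{2k}{k}z^k$ and every term of $A_k$ has lower degree in $z$. Hence $J_{k+1}/J_k$ grows linearly in $z$. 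For the limit in $k$, when $x>y$ the terms of $A_k$ concentrate near $j=k-1$ and form a geometric tail, so $A_k\sim C\,T_k$ with $C=\frac{y}{x-y}$. Consequently $J_{k+1}/J_k\sim T_{k+1}/T_k\to4xy=4(1-p)z$.

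\emph{Monotonicity in $z$.} I would show that $R_k(z):=J_{k+1}/J_k$ satisfies $(\log J_{k+1})'>(\log J_k)'$. One route is to rewrite $J_k=y^{2k}\sum_{j\le k}c_{k,j}(x/y)^j$ with nonnegative coefficients $c_{k,j}$ (for $j<k$ these are $\binom{2k}{j}$, and the top coefficient is $y\binom{2k}{k}$). Monotonicity then reduces to the coefficient arrays $c_{k+1,\cdot}$ and $c_{k,\cdot}$ being TP2 in $(k,j)$, i.e.\ $c_{k+1,j+1}c_{k,j}\ge c_{k+1,j}c_{k,j+1}$. This holds for ratios of binomial coefficients $\binom{2k+2}{j}/\binom{2k}{j}$, which are increasing in $j$. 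The boundary term with the extra factor $y<1$ needs separate checking, and I expect it still goes in the right direction because it sits at the top index.

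\emph{Log-convexity in $k$.} This is equivalent to $R_k$ being strictly increasing in $k$. I would derive it from the recursion by expressing $R_{k+1}$ as a Möbius-type function of $R_k$ and $\gamma_k:=T_k/J_k$, and then showing that $\gamma_k$ decreases in $k$. This last step is where I expect the main obstacle: the boundary tie term breaks the clean binomial structure. The fallback is to prove $R_{k+1}-R_k>0$ directly as a polynomial inequality in $x$ and $y$ on the region $x>y$, using the closed forms from the recursion.
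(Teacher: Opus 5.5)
Most of your plan is sound. The $z$-monotonicity via coefficient ratios increasing in $j$ is exactly the paper's argument; the top-index factor $1/y$ only helps. The $z\to\infty$ claim is fine. Your limit argument ($A_k/T_k\to y/(x-y)$ by dominated convergence, since $\binom{2k}{k-r}/\binom{2k}{k}\uparrow 1$ and $y/x<1$) is a correct and simpler substitute for the paper's two-sided ratio bounds.

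Your recursion, however, is misstated. The discarded histories go from $k-1$ to $k+1$ successes (both appended draws carry weight $x$). So the subtracted term is $\binom{2k}{k-1}(xy)^{k+1}=\frac{k}{k+1}xy\,T_k$, not $\binom{2k}{k-1}x^ky^{k+2}$; already at $k=1$ your version fails to give $A_2=y^4+4xy^3$. With the correction, at $x+y=1$ one gets $J_{k+1}-J_k=-\frac{2k+1}{k+1}xy(x-y)T_k$, as you anticipated.

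The genuine gap is strict log-convexity, which you leave open, and the routes you sketch cannot close it. The corrected recursion gives $R_k=(x+y)^2+\beta_k\gamma_k$, where $\beta_k=-(x+y)^2y+y^2-\frac{k}{k+1}xy+\frac{2(2k+1)}{k+1}xy^2$ is negative for $z\ge p$ and satisfies $\beta_{k+1}-\beta_k=-\frac{xy(1-2y)}{(k+1)(k+2)}<0$. So a falling $\gamma_k$ has to outpace a rising $|\beta_k|$. Moreover, $\gamma_k$ falls for \emph{every} $z>0$, since $J_k/T_k=y+\sum_{r=1}^{k}\prod_{i=1}^{r}\frac{k-i+1}{k+i}(y/x)^r$ increases in $k$. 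Yet log-convexity can fail with $x>y$: for $p=0.6$, $z=0.5$ one has $J_0=0.4$, $J_1=0.32$, $J_2=0.2496$, so $J_1^2>J_0J_2$. Hence your fallback inequality ``on the region $x>y$'' is false, and any proof must use $x+y>1$ (i.e.\ $z>p$) in an essential way.

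The missing idea is the paper's mixture representation. Let $q=y/(x+y)<1/2$, let $C_k=\binom{2k}{k}[q(1-q)]^k$, and let $P_k$ be the probability that at least $k+1$ of $2k+1$ independent Bernoulli$(q)$ draws equal one. Then $(x+y)^{-2k}J_k=P_k+(1-p-q)C_k$, and $P_k-P_{k+1}=\frac{1-2q}{2}C_{k+1}$ with $P_k\to0$. Telescoping gives $(x+y)^{-2k}J_k=\sum_{r\ge0}w_rC_{k+r}$ with $k$-independent weights $w_0=\frac{y(x+y-1)}{x+y}$ and $w_r=\frac{1-2q}{2}$ for $r\ge1$. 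Nonnegativity of $w_0$ is exactly $z\ge p$.

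Since $C_{k+1}/C_k=(4-\frac{2}{k+1})q(1-q)$ strictly increases in $k$, the sequence $(C_k)$ is strictly log-convex. Cauchy--Schwarz transfers this to any nonnegative combination of its shifts, and the geometric factor $(x+y)^{-2k}$ does not affect log-convexity.
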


\begin{proof}[Proof of Lemma \ref{lem:independent-ratios}]
Fix $z>p$, set $\varepsilon=1-1/[z+(1-p)]$, and write $q=(1-\varepsilon)(1-p)\in(0,1-p)$.
For $k\geq0$, define
\begin{align*}
    T_k(q)=\sum_{j=k+1}^{2k+1}\binom{2k+1}{j}q^j(1-q)^{2k+1-j},\qquad
    C_k(q)=\binom{2k}{k}[q(1-q)]^k.
\end{align*}
Conditioning the last of $2k+1$ independent Bernoulli draws on the first $2k$ draws gives
\begin{align*}
    (1-\varepsilon)^{2k}J_k(z;p)=T_k(q)+(1-p-q)C_k(q).
\end{align*}
Adding two draws changes whether successes form a majority only when the original count is $k$ or $k+1$. Thus,
\begin{align*}
    T_k(q)-T_{k+1}(q)
    &=\binom{2k+1}{k+1}q^{k+1}(1-q)^{k+2}-\binom{2k+1}{k}q^{k+2}(1-q)^{k+1}\\
    &=\frac{1-2q}{2}C_{k+1}(q).
\end{align*}
Summing these differences from $k$ to $m\geq k$ cancels the intermediate terms:
\begin{align*}
    T_k(q)-T_{m+1}(q)
    =\sum_{j=k}^{m}[T_j(q)-T_{j+1}(q)]
    =\frac{1-2q}{2}\sum_{j=k+1}^{m+1}C_j(q).
\end{align*}
Since $q<1/2$, the law of large numbers gives $T_{m+1}(q)\to0$ as $m\to\infty$. Taking this limit yields
\begin{align*}
    T_k(q)=\frac{1-2q}{2}\sum_{j=k+1}^{\infty}C_j(q).
\end{align*}
Consequently,
\begin{align*}
    (1-\varepsilon)^{2k}J_k(z;p)
    &=(1-p-q)C_k(q)+\frac{1-2q}{2}\sum_{r=1}^{\infty}C_{k+r}(q)\\
    &=\sum_{r=0}^{\infty}w_r C_{k+r}(q),
\end{align*}
where $w_0=1-p-q=\varepsilon(1-p)>0$ and $w_r=(1-2q)/2>0$ for $r\geq1$.
These weights do not depend on $k$.
Direct calculation gives
\begin{align*}
    \frac{C_{k+1}(q)}{C_k(q)}
    =\frac{(2k+2)(2k+1)}{(k+1)^2}q(1-q)
    =\left(4-\frac{2}{k+1}\right)q(1-q).
\end{align*}
This ratio strictly increases in $k$ and is below $4q(1-q)<1$.
Thus, $C_{k+1}(q)^2<C_k(q)C_{k+2}(q)$, and the weighted sums above converge.
Suppressing the argument $q$, this strict inequality and Cauchy--Schwarz give
\begin{align*}
    \bigl[(1-\varepsilon)^{2k+2}J_{k+1}\bigr]^2
    &=\left(\sum_{r=0}^{\infty}w_r C_{k+r+1}\right)^2\\
    &<\left(\sum_{r=0}^{\infty}w_r\sqrt{C_{k+r}C_{k+r+2}}\right)^2\\
    &\leq\left(\sum_{r=0}^{\infty}w_r C_{k+r}\right)
    \left(\sum_{r=0}^{\infty}w_r C_{k+r+2}\right)=(1-\varepsilon)^{4k+4}J_kJ_{k+2}.
\end{align*}
Thus, $J_{k+1}^2<J_kJ_{k+2}$, and hence $J_{k+1}(z;p)/J_k(z;p)$ strictly increase in $k$.

To obtain their limit, note that, for every $r\geq0$,
\begin{align*}
    \left(4-\frac{2}{k+1}\right)q(1-q)
    \leq\frac{C_{k+r+1}}{C_{k+r}}<4q(1-q).
\end{align*}
Multiplying by $w_r C_{k+r}$, summing over $r\geq0$, and dividing by $\sum_{r=0}^{\infty}w_r C_{k+r}$ yields
\begin{align*}
    \left(4-\frac{2}{k+1}\right)q(1-q)
    \leq(1-\varepsilon)^2\frac{J_{k+1}}{J_k}<4q(1-q).
\end{align*}
Since $q(1-q)/(1-\varepsilon)^2=(1-p)z$, we obtain
\begin{align*}
    \left(4-\frac{2}{k+1}\right)(1-p)z
    \leq\frac{J_{k+1}}{J_k}<4(1-p)z.
\end{align*}
Both bounds converge to $4(1-p)z$ as $k\to\infty$, proving the stated limit.

Next, we establish monotonicity in $z$.
For $k=0$, $J_1/J_0=(1-p)(1+2z)$.
For $k\geq1$, let $a_j$ and $b_j$ be the coefficients of $z^j$ in $J_{k+1}$ and $J_k$, respectively, with $b_{k+1}=0$.
For $0\leq j\leq k$,
\begin{align*}
    \frac{a_j}{b_j}=\begin{cases}
        (1-p)^2\dfrac{(2k+2)(2k+1)}{(2k+2-j)(2k+1-j)}, & 0\leq j<k,\\[4pt]
        (1-p)\dfrac{(2k+2)(2k+1)}{(k+2)(k+1)}, & j=k.
    \end{cases}
\end{align*}
These ratios strictly increase with $j$; the last exceeds the preceding one by the factor $(k+3)/[(1-p)(k+1)]>1$.
Since $a_{k+1}>0$, it follows that, for $z>0$,
\begin{align*}
    (\partial_z J_{k+1})J_k-J_{k+1}(\partial_z J_k)
    =\sum_{0\leq j<i\leq k+1}(i-j)(a_i b_j-a_j b_i)z^{i+j-1}>0.
\end{align*}
Hence, $J_{k+1}/J_k$ is strictly increasing in $z$.
At $z=p$, the binomial-tail decomposition gives $J_k(p;p)=T_k(1-p)$, and hence
\begin{align*}
    J_k(p;p)-J_{k+1}(p;p)=\frac{2p-1}{2}C_{k+1}(1-p)>0.
\end{align*}
This evaluates the polynomials at the boundary and does not require an equilibrium calculation at $\varepsilon=0$.
Finally, $J_{k+1}$ and $J_k$ have degrees $k+1$ and $k$ with positive leading coefficients, and thus their ratio diverges as $z\to\infty$.
\end{proof}

\begin{proof}[Proof of Proposition \ref{prop:independent-optimum}]
For fixed $p$, $z=1/(1-\varepsilon)-(1-p)$ increases continuously from $p$ to infinity as $\varepsilon$ increases from zero to one.
Lemma \ref{lem:independent-ratios} therefore gives a unique $\varepsilon_k(p)\in(0,1)$ at which $J_{k+1}(z;p)=J_k(z;p)$, with $J_{k+1}<J_k$ below the cutoff and $J_{k+1}>J_k$ above it.
At $\varepsilon=\varepsilon_k(p)$, strict log-convexity gives $J_{k+2}/J_{k+1}>J_{k+1}/J_k=1$, and hence $\varepsilon_{k+1}(p)<\varepsilon_k(p)$.
Solving $J_1/J_0=(1-p)(1+2z)=1$ gives $z=p/[2(1-p)]$ and the stated formula for $\varepsilon_0(p)$.

For any $N\geq3$, let $K=\lfloor(N-1)/2\rfloor$.
Lemma \ref{lem:independent-payoff} excludes every odd $M>1$, and the error formulas above reduce the remaining comparison to minimizing $J_k$ over $0\leq k\leq K$.
Since the adjacent ratios strictly increase in $k$, $J_k$ decreases while these ratios are below one and increases after they exceed one.
If $J_{k+1}/J_k=1$ for $k<K$, exactly $k$ and $k+1$ minimize $J_k$.
The ordered cutoffs therefore give the stated optimal sets, including the ties at the cutoffs.
For odd $N$, $2K=N-1$; for even $N$, $2K=N-2$, which also proves the extension in the footnote.

We next prove (i).
Using the binomial-tail decomposition and difference in the proof of Lemma \ref{lem:independent-ratios}, we obtain
\begin{align*}
    D_k(\varepsilon,p)
    &:=\frac{(1-\varepsilon)^{2k+2}[J_{k+1}(z;p)-J_k(z;p)]}{q\binom{2k}{k}[q(1-q)]^k}\\
    &=\varepsilon(2-\varepsilon)\sum_{j=0}^k\frac{\binom{2k+1}{k+1+j}}{\binom{2k}{k}}\left(\frac{q}{1-q}\right)^j
    -(2p-1)\frac{2k+1}{k+1}(1-q)-\varepsilon(1-\varepsilon).
\end{align*}
The denominator is positive, and hence $D_k$ has the sign of $J_{k+1}-J_k$.
For fixed $\varepsilon$, increasing $p$ decreases $q$, weakly decreases the finite sum, and strictly increases $(2p-1)(1-q)$.
Thus, $D_k$ strictly decreases in $p$.
Starting at $D_k(\varepsilon_k(p),p)=0$, an increase in $p$ makes this expression negative; the unique cutoff must therefore increase.

For the endpoint limits, fix any $\varepsilon\in(0,1)$.
The finite sum is at least one, and thus
\begin{align*}
    & \lim_{p\searrow1/2}D_k(\varepsilon,p)\geq\varepsilon>0,\\
    & \lim_{p\nearrow1}D_k(\varepsilon,p)=-(1-\varepsilon)^2\frac{2k+1}{k+1}-\varepsilon(1-\varepsilon)<0.
\end{align*}
Hence, $\varepsilon_k(p)<\varepsilon$ near $p=1/2$ and $\varepsilon_k(p)>\varepsilon$ near $p=1$.
Since $\varepsilon$ was arbitrary, the cutoff converges to zero and one at the respective endpoints.

Finally, we prove (ii).
At $z=1/[4(1-p)]>p$, the adjacent ratios strictly increase to one, and hence every finite ratio is below one.
The corresponding parameter is
\begin{align*}
    \varepsilon_\infty(p)=1-\frac{4(1-p)}{1+4(1-p)^2}
    =\frac{(2p-1)^2}{1+4(1-p)^2},
\end{align*}
which is below every $\varepsilon_k(p)$.
For any $\varepsilon>\varepsilon_\infty(p)$, the ratio limit $4(1-p)z$ exceeds one. Thus, $J_{k+1}/J_k>1$ and hence $\varepsilon_k(p)<\varepsilon$ for all sufficiently large $k$.
These bounds prove $\varepsilon_k(p)\to\varepsilon_\infty(p)$.

At or below $\varepsilon_\infty(p)$, every adjacent ratio is below one, and the largest feasible index $K$ uniquely minimizes $J_k$.
Above $\varepsilon_\infty(p)$, the ratios eventually exceed one, and thus the minimizers over all $k\geq0$ form a fixed set of one index or two adjacent indices.
Once $K$ includes these indices, increasing $N$ does not change the optimal set.
\end{proof}

\singlespacing 
\bibliography{Reference}

@article{smith2000pathological,
  title={Pathological outcomes of observational learning},
  author={Smith, Lones and S{\o}rensen, Peter},
  journal={Econometrica},
  volume={68},
  number={2},
  pages={371--398},
  year={2000},
  publisher={Wiley Online Library}
}

@article{banerjee1992simple,
  title={A simple model of herd behavior},
  author={Banerjee, Abhijit V},
  journal={The Quarterly Journal of Economics},
  volume={107},
  number={3},
  pages={797--817},
  year={1992},
  publisher={MIT Press}
}

@article{bikhchandani1992theory,
  title={A theory of fads, fashion, custom, and cultural change as informational cascades},
  author={Bikhchandani, Sushil and Hirshleifer, David and Welch, Ivo},
  journal={Journal of Political Economy},
  volume={100},
  number={5},
  pages={992--1026},
  year={1992},
  publisher={The University of Chicago Press}
}

@article{ccelen2004observational,
  title={Observational learning under imperfect information},
  author={{\c{C}}elen, Bo{\u{g}}a{\c{c}}han and Kariv, Shachar},
  journal={Games and Economic Behavior},
  volume={47},
  number={1},
  pages={72--86},
  year={2004},
  publisher={Elsevier}
}

@article{sgroi2002optimizing,
  title={Optimizing information in the herd: Guinea pigs, profits, and welfare},
  author={Sgroi, Daniel},
  journal={Games and Economic Behavior},
  volume={39},
  number={1},
  pages={137--166},
  year={2002},
  publisher={Elsevier}
}

@article{acemoglu2011bayesian,
  title={Bayesian learning in social networks},
  author={Acemoglu, Daron and Dahleh, Munther A and Lobel, Ilan and Ozdaglar, Asuman},
  journal={The Review of Economic Studies},
  volume={78},
  number={4},
  pages={1201--1236},
  year={2011},
  publisher={Oxford University Press}
}

@article{foster1995learning,
  title={Learning by doing and learning from others: Human capital and technical change in agriculture},
  author={Foster, Andrew D and Rosenzweig, Mark R},
  journal={Journal of Political Economy},
  volume={103},
  number={6},
  pages={1176--1209},
  year={1995},
  publisher={The University of Chicago Press}
}

@article{munshi2004social,
  title={Social learning in a heterogeneous population: Technology diffusion in the Indian Green Revolution},
  author={Munshi, Kaivan},
  journal={Journal of Development Economics},
  volume={73},
  number={1},
  pages={185--213},
  year={2004},
  publisher={Elsevier}
}

@article{slezak2000effect,
  title={The effect of organizational form on information flow and decision quality: Informational cascades in group decision making},
  author={Slezak, Steve L and Khanna, Naveen},
  journal={Journal of Economics \& Management Strategy},
  volume={9},
  number={1},
  pages={115--156},
  year={2000},
  publisher={Wiley Online Library}
}

@article{rosenberg2019efficiency,
  title={On the efficiency of social learning},
  author={Rosenberg, Dinah and Vieille, Nicolas},
  journal={Econometrica},
  volume={87},
  number={6},
  pages={2141--2168},
  year={2019},
  publisher={Wiley Online Library}
}

@article{ali2018herding,
  title={Herding with costly information},
  author={Ali, S Nageeb},
  journal={Journal of Economic Theory},
  volume={175},
  pages={713--729},
  year={2018},
  publisher={Elsevier}
}

@article{kartik2024beyond,
  title={Beyond unbounded beliefs: How preferences and information interplay in social learning},
  author={Kartik, Navin and Lee, SangMok and Liu, Tianhao and Rappoport, Daniel},
  journal={Econometrica},
  volume={92},
  number={4},
  pages={1033--1062},
  year={2024},
  publisher={Wiley Online Library}
}

@article{gale2003bayesian,
  title={Bayesian learning in social networks},
  author={Gale, Douglas and Kariv, Shachar},
  journal={Games and Economic Behavior},
  volume={45},
  number={2},
  pages={329--346},
  year={2003},
  publisher={Elsevier}
}

@article{arieli2019multidimensional,
  title={Multidimensional social learning},
  author={Arieli, Itai and Mueller-Frank, Manuel},
  journal={The Review of Economic Studies},
  volume={86},
  number={3},
  pages={913--940},
  year={2019},
  publisher={Oxford University Press}
}

@article{arieli2021general,
  title={A general analysis of sequential social learning},
  author={Arieli, Itai and Mueller-Frank, Manuel},
  journal={Mathematics of Operations Research},
  volume={46},
  number={4},
  pages={1235--1249},
  year={2021},
  publisher={INFORMS}
}

@article{lobel2015information,
  title={Information diffusion in networks through social learning},
  author={Lobel, Ilan and Sadler, Evan},
  journal={Theoretical Economics},
  volume={10},
  number={3},
  pages={807--851},
  year={2015},
  publisher={Wiley Online Library}
}

@article{bikhchandani2024information,
  title={Information cascades and social learning},
  author={Bikhchandani, Sushil and Hirshleifer, David and Tamuz, Omer and Welch, Ivo},
  journal={Journal of Economic Literature},
  volume={62},
  number={3},
  pages={1040--1093},
  year={2024},
  publisher={American Economic Association 2014 Broadway, Suite 305, Nashville, TN 37203-2425}
}

@article{hann2018speed,
  title={The speed of sequential asymptotic learning},
  author={Hann-Caruthers, Wade and Martynov, Vadim V and Tamuz, Omer},
  journal={Journal of Economic Theory},
  volume={173},
  pages={383--409},
  year={2018},
  publisher={Elsevier}
}

@incollection{golub2016,
  author = {Golub, Benjamin and Sadler, Evan},
  isbn = {9780199948277},
  title = "Learning in social networks",
  booktitle = "The Oxford Handbook of the Economics of Networks",
  publisher = {Oxford University Press},
  address = "Oxford",
  year = {2016},
  month = {04},
  Pages = "504–542",
}

@article{mossel2015strategic,
  title={Strategic learning and the topology of social networks},
  author={Mossel, Elchanan and Sly, Allan and Tamuz, Omer},
  journal={Econometrica},
  volume={83},
  number={5},
  pages={1755--1794},
  year={2015},
  publisher={Wiley Online Library}
}

@article{dasaratha2019aggregative,
  title={Aggregative efficiency of bayesian learning in networks},
  author={Dasaratha, Krishna and He, Kevin},
  journal={arXiv preprint arXiv:1911.10116},
  year={2019}
}

@article{ali2018role,
  title={On the role of responsiveness in rational herds},
  author={Ali, S Nageeb},
  journal={Economics Letters},
  volume={163},
  pages={79--82},
  year={2018},
  publisher={Elsevier}
}

@article{golub2010naive,
  title={Naive learning in social networks and the wisdom of crowds},
  author={Golub, Benjamin and Jackson, Matthew O},
  journal={American Economic Journal: Microeconomics},
  volume={2},
  number={1},
  pages={112--149},
  year={2010},
  publisher={American Economic Association}
}

@article{dasaratha2023learning,
  title={Learning from neighbours about a changing state},
  author={Dasaratha, Krishna and Golub, Benjamin and Hak, Nir},
  journal={Review of Economic Studies},
  volume={90},
  number={5},
  pages={2326--2369},
  year={2023},
  publisher={Oxford University Press US}
}

@article{ellison1993rules,
  title={Rules of thumb for social learning},
  author={Ellison, Glenn and Fudenberg, Drew},
  journal={Journal of Political Economy},
  volume={101},
  number={4},
  pages={612--643},
  year={1993},
  publisher={The University of Chicago Press}
}

@article{ellison1995word,
  title={Word-of-mouth communication and social learning},
  author={Ellison, Glenn and Fudenberg, Drew},
  journal={The Quarterly Journal of Economics},
  volume={110},
  number={1},
  pages={93--125},
  year={1995},
  publisher={Oxford University Press}
}

@article{bala1998learning,
  title={Learning from neighbours},
  author={Bala, Venkatesh and Goyal, Sanjeev},
  journal={The Review of Economic Studies},
  volume={65},
  number={3},
  pages={595--621},
  year={1998},
  publisher={Oxford University Press}
}

@article{bala2001conformism,
  title={Conformism and diversity under social learning},
  author={Bala, Venkatesh and Goyal, Sanjeev},
  journal={Economic Theory},
  volume={17},
  number={1},
  pages={101--120},
  year={2001},
  publisher={Springer}
}

@article{demarzo2003persuasion,
  title={Persuasion bias, social influence, and unidimensional opinions},
  author={DeMarzo, Peter M and Vayanos, Dimitri and Zwiebel, Jeffrey},
  journal={The Quarterly Journal of Economics},
  volume={118},
  number={3},
  pages={909--968},
  year={2003},
  publisher={Oxford University Press}
}

@article{jadbabaie2012nonbayesian,
  title={Non-Bayesian social learning},
  author={Jadbabaie, Ali and Molavi, Pooya and Sandroni, Alvaro and Tahbaz-Salehi, Alireza},
  journal={Games and Economic Behavior},
  volume={76},
  number={1},
  pages={210--225},
  year={2012},
  publisher={Elsevier}
}

@article{molavi2018theory,
  title={A theory of non-Bayesian social learning},
  author={Molavi, Pooya and Tahbaz-Salehi, Alireza and Jadbabaie, Ali},
  journal={Econometrica},
  volume={86},
  number={2},
  pages={445--490},
  year={2018},
  publisher={Wiley Online Library}
}

@article{eyster2010naive,
  title={Naive herding in rich-information settings},
  author={Eyster, Erik and Rabin, Matthew},
  journal={American Economic Journal: Microeconomics},
  volume={2},
  number={4},
  pages={221--243},
  year={2010},
  publisher={American Economic Association}
}

@article{bohren2016informational,
  title={Informational herding with model misspecification},
  author={Bohren, J Aislinn},
  journal={Journal of Economic Theory},
  volume={163},
  pages={222--247},
  year={2016},
  publisher={Elsevier}
}

@article{dasaratha2020naive,
  title={Network structure and naive sequential learning},
  author={Dasaratha, Krishna and He, Kevin},
  journal={Theoretical Economics},
  volume={15},
  number={2},
  pages={415--444},
  year={2020},
  publisher={Wiley Online Library}
}

@article{dasaratha2021experiment,
  title={An experiment on network density and sequential learning},
  author={Dasaratha, Krishna and He, Kevin},
  journal={Games and Economic Behavior},
  volume={128},
  pages={182--192},
  year={2021},
  publisher={Elsevier}
}

@article{frick2020misinterpreting,
  title={Misinterpreting others and the fragility of social learning},
  author={Frick, Mira and Iijima, Ryota and Ishii, Yuhta},
  journal={Econometrica},
  volume={88},
  number={6},
  pages={2281--2328},
  year={2020},
  publisher={Wiley Online Library}
}

@article{frick2023belief,
  title={Belief convergence under misspecified learning: A martingale approach},
  author={Frick, Mira and Iijima, Ryota and Ishii, Yuhta},
  journal={The Review of Economic Studies},
  volume={90},
  number={2},
  pages={781--814},
  year={2023},
  publisher={Oxford University Press}
}

@article{arieli2025hazards,
  title={The hazards and benefits of condescension in social learning},
  author={Arieli, Itai and Babichenko, Yakov and M{\"u}ller, Stephan and Pourbabaee, Farzad and Tamuz, Omer},
  journal={Theoretical Economics},
  volume={20},
  number={1},
  pages={27--56},
  year={2025},
  publisher={Wiley Online Library}
}

@article{golub2012homophily,
  title={How homophily affects the speed of learning and best-response dynamics},
  author={Golub, Benjamin and Jackson, Matthew O},
  journal={The Quarterly Journal of Economics},
  volume={127},
  number={3},
  pages={1287--1338},
  year={2012},
  publisher={Oxford University Press}
}

@article{bohren2021learning,
  title={Learning with heterogeneous misspecified models: Characterization and robustness},
  author={Bohren, J Aislinn and Hauser, Daniel N},
  journal={Econometrica},
  volume={89},
  number={6},
  pages={3025--3077},
  year={2021},
  publisher={Wiley Online Library}
}

@article{lorenz2011social,
  title={How social influence can undermine the wisdom of crowd effect},
  author={Lorenz, Jan and Rauhut, Heiko and Schweitzer, Frank and Helbing, Dirk},
  journal={Proceedings of the National Academy of Sciences},
  volume={108},
  number={22},
  pages={9020--9025},
  year={2011},
  doi={10.1073/pnas.1008636108}
}

@article{muchnik2013social,
  title={Social influence bias: A randomized experiment},
  author={Muchnik, Lev and Aral, Sinan and Taylor, Sean J.},
  journal={Science},
  volume={341},
  number={6146},
  pages={647--651},
  year={2013},
  doi={10.1126/science.1240466}
}

@article{becker2017network,
  title={Network dynamics of social influence in the wisdom of crowds},
  author={Becker, Joshua and Brackbill, Devon and Centola, Damon},
  journal={Proceedings of the National Academy of Sciences},
  volume={114},
  number={26},
  pages={E5070--E5076},
  year={2017},
  doi={10.1073/pnas.1615978114}
}

@article{cai2009observational,
  title={Observational Learning: Evidence from a Randomized Natural Field Experiment},
  author={Cai, Hongbin and Chen, Yuyu and Fang, Hanming},
  journal={American Economic Review},
  volume={99},
  number={3},
  pages={864--882},
  year={2009},
  doi={10.1257/aer.99.3.864}
}

@article{eyster2014extensive,
  title={Extensive imitation is irrational and harmful},
  author={Eyster, Erik and Rabin, Matthew},
  journal={The Quarterly Journal of Economics},
  volume={129},
  number={4},
  pages={1861--1898},
  year={2014},
  doi={10.1093/qje/qju021}
}
\addcontentsline{toc}{section}{Reference}

\end{document}